\theoremstyle{plain}
 \newtheorem{thm}{Theorem}[section]
  \newtheorem{prop}[thm]{Proposition}
    \newtheorem{lemm}[thm]{Lemma}
  \newtheorem{conj}[thm]{Conjecture}
 \theoremstyle{definition}    
  \newtheorem{dfn}[thm]{Definition}
  \newtheorem{ass}[thm]{Assumption}
  \newtheorem{prob}[thm]{Problem}
  \newtheorem{exa}[thm]{Example}
\theoremstyle{remark}
  \newtheorem{rem}[thm]{Remark}
\numberwithin{equation}{section}\numberwithin{figure}{section}
\def\rchi{{\hbox{\raise1.5pt\hbox{$\chi$}}}}
\def\isom{\cong}
\newcommand{\bea}{\begin{eqnarray}}
\newcommand{\eea}{\end{eqnarray}}
\newcommand{\be}{\begin{equation}}
\newcommand{\ee}{\end{equation}}
\newcommand{\Res}{\mathop{\rm Res}}
\title[Topological recursion and uncoupled BPS structures I]
{Topological recursion and uncoupled BPS structures I: 
BPS~spectrum and free energies}
\author{Kohei Iwaki}%
\address{Graduate School of Mathematical Sciences, 
The University of Tokyo, 
3-8-1 Komaba, Meguro-ku, Tokyo, 153-8914, Japan}
\email{iwaki@ms.u-tokyo.ac.jp}
\author{Omar Kidwai}%
\address{Graduate School of Mathematical Sciences, 
The University of Tokyo, 
3-8-1 Komaba, Meguro-ku, Tokyo, 153-8914, Japan}
\email{kidwai@ms.u-tokyo.ac.jp}
\date{}
\begin{document}

\large
\setcounter{section}{0}
\maketitle
\begin{abstract}  For the hypergeometric spectral curve and its confluent degenerations (spectral curves of ``hypergeometric type"), we obtain a simple formula expressing the topological recursion free energies as a sum over BPS states (degenerate spectral networks) for a corresponding quadratic differential $\varphi$. In doing so, we generalize Gaiotto-Moore-Neitzke's construction of BPS structures to include the case where $\varphi$ has simple poles or supports a degenerate ring domain. For the nine spectral curves of hypergeometric type, we provide a complete description of the corresponding BPS structures over a generic locus in the relevant parameter space; in particular, we prove the existence of saddles trajectories at the expected parameter values. We determine the corresponding BPS cycles, central charges, and BPS invariants, and verify our formula in each case. We conjecture that a similar relation should hold more generally whenever the corresponding BPS structure is uncoupled, and provide experimental evidence in two simple higher rank examples.
\end{abstract}

\tableofcontents

\section{Introduction}

The purpose of this paper and its sequel is to 
describe a relationship between two enumerative theories of \emph{a priori} different origins --- the theory of \emph{BPS structures} and the formalism of \emph{topological recursion} --- in the special case where the former is \emph{uncoupled}.  Throughout, our testing ground will be concrete examples coming from spectral curves related to the Gauss hypergeometric equation and its confluent degenerations, which we refer to as \emph{spectral curves of hypergeometric type}, depicted in Figure \ref{fig:confdiag} below. In short, we will find that in all examples we consider:
\begin{enumerate}[(1)]
        \item The topological recursion free energies $F_g$ can be written as a sum of powers of central charges of BPS cycles $\gamma_{\rm BPS}$, weighted by their BPS indices $\Omega(\gamma_{\rm BPS})$, times a simple explicit expression involving Bernoulli numbers,
        \item The Borel-resummed topological recursion partition function coincides with Bridgeland's $\tau$-function associated to the solution of a BPS Riemann-Hilbert problem, up to a simple factor.
\end{enumerate}

The present paper deals with the former of these two results. To make the statement meaningful, let us begin by recalling the notions of BPS structures and topological recursion.

\subsection{BPS structures}
The notion of BPS structure was introduced by Bridgeland \cite{Bri19} by axiomatizing the structure of Donaldson-Thomas (DT) invariants of a Calabi-Yau 3 triangulated category with a stability condition.  
This is a special case of a stability structure in the sense of Kontsevich-Soibelman \cite{KS08}.  
A BPS structure $(\Gamma,Z,\Omega)$ consists of
\begin{itemize}
\item 
a lattice $\Gamma$ equipped with an anti-symmetric pairing $\langle \cdot , \cdot \rangle$, 
\item 
a group homomorphism $Z : \Gamma \to {\mathbb C}$ called the \emph{central charge}, and
\item 
a map $\Omega : \Gamma \to {\mathbb Q}$ or ${\mathbb Z}$, called the \emph{BPS invariants} which has a close relationship to DT invariants, 
\end{itemize}
satisfying certain conditions. We may also discuss a \emph{variation} of BPS structures by considering a family of these data parametrized by a complex manifold, which also axiomatizes the behavior of DT invariants/BPS invariants under a variation of the stability condition (i.e., the Kontsevich-Soibelman wall-crossing formula).

Given a BPS structure, Bridgeland also associated a Riemann-Hilbert type problem on ${\mathbb C}^\ast$, which we call the \emph{BPS Riemann-Hilbert problem}. 
The Riemann-Hilbert jump contour here is a collection of straight half-lines, called \emph{BPS rays}, of the form $\ell = {\mathbb R}_{> 0} e^{i \vartheta} \subset {\mathbb C}^\ast$ for some $\vartheta \in {\mathbb R}$.  
In sufficiently nice situations, the jump factor across a BPS ray $\ell$, which is called the \emph{BPS automorphism}, is given by a cluster-like birational transformation of a (twisted) algebraic torus, and encodes the BPS invariants $\Omega(\gamma)$ of those $\gamma \in \Gamma$ satisfying $Z(\gamma) \in \ell$. 
The BPS Riemann-Hilbert problem is closely related to the one studied by Gaiotto-Moore-Neitzke in \cite{GMN08, GMN09}, where the BPS spectra for a class of four-dimensional ${\mathcal N} = 2$ supersymmetric gauge theories (class ${\mathcal S}$ theories) were investigated. We note also that similar Riemann-Hilbert problems were studied in \cite{BTL, FFS}.

\begin{figure}[t]
$$
\xymatrix@!C=35pt@R7pt{
&&
&& \underset{\rm (Web)}
{\text{Weber}} \ar@{~>}[rrd]
&&
\\
&& \underset{\rm (Kum)}
{\text{Kummer}} \ar@{->}[rru] \ar@{->}[rrd] \ar@{~>}[rrddd]
&&
&& \underset{\rm (Ai)}
{\text{Airy}}
\\
&&
&& \underset{\rm (Whi)}
{\text{Whittaker}} \ar@{~>}[rru] \ar@{~>}[rrddd]
&&
&&
\\
\underset{\rm (HG)}
{\text{Gauss}} \ar@{->}[rruu] \ar@{~>}[rrdd]
&&
&&
&&
&&
\\
&&
&& \underset{\rm (Bes)}
{\text{Bessel}} 
\ar@{->}[rruuu]|(0.32)\hole \ar@{~>}[rrd]
&&
&&
\\
&& \underset{\rm (dHG)}
{\text{Degenerate Gauss}} 
\ar@{~>}[rru] \ar@{~>}[rrd] \ar@{~>}[rruuu]|(0.67)\hole
&&
&& \underset{\rm (dBes)}
{\text{Degenerate Bessel}}
\\
&&
&& \underset{\rm (Leg)}
{\text{Legendre}}\ar@{~>}[rru]
&&
\\
}
$$
\caption{Confluence diagram of the Gauss hypergeometric equation/spectral curves.
A straight line (resp., a wiggly line) in the figure
denotes the confluence of singular points
(resp., the coalescence of a brach point and a singular point).
The number of mass parameters is reduced by 1 at each confluence.
}
\label{fig:confdiag}
\end{figure}
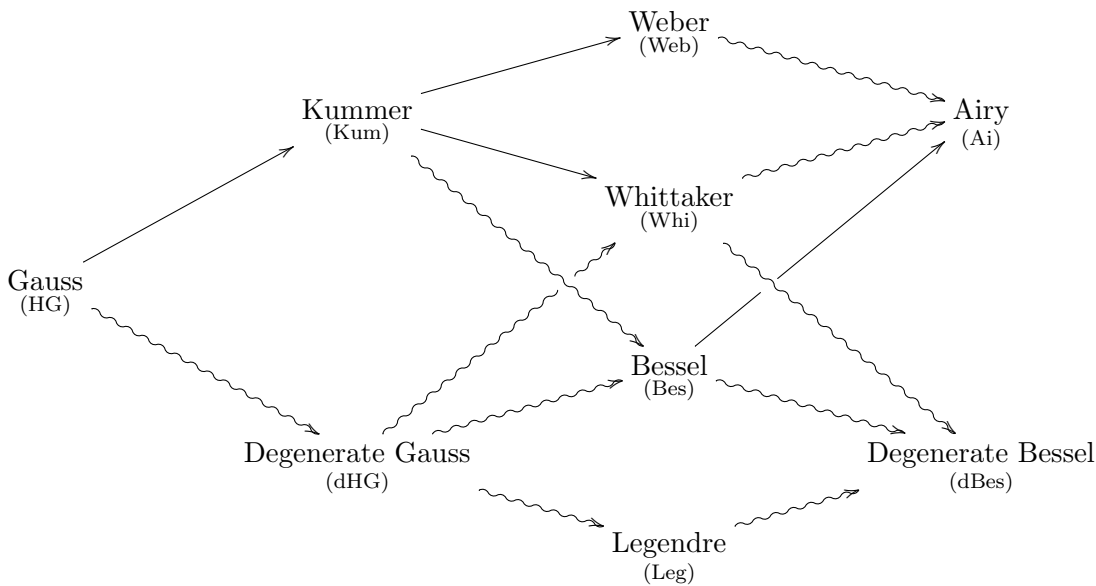

\begin{table}[h]
\begin{center}
\begin{tabular}{ccc}\hline
$\underset{\rm (label)}{\rm Equation}$ & $Q_{\bullet}(x)$  & Assumption
\\\hline\hline
\parbox[c][3.25em][c]{0em}{}
${\rm Gauss \; (HG)}$
&
\begin{minipage}{.35\textwidth}
\begin{center}
$\dfrac{{m_\infty}^2 x^2 
- ({m_\infty}^2 + {m_0}^2 - {m_1}^2)x 
+ {m_0}^2}{x^2 (x-1)^2}$
\end{center}
\end{minipage}
& ~~~\quad
\begin{minipage}{.35\textwidth}
\begin{center}
$m_0, m_1, m_\infty \neq 0$,\\
$m_0 \pm m_1 \pm m_\infty \ne 0$.
\end{center}
\end{minipage}
\\\hline
\parbox[c][2.75em][c]{0em}{}
${\rm Degenerate~Gauss \; (dHG)}$
&
\begin{minipage}{.35\textwidth}
\begin{center}
$\dfrac{{m_\infty}^2 x + {m_1}^2 - {m_\infty}^2}{x(x-1)^2}$
\end{center}
\end{minipage}
&
\begin{minipage}{.35\textwidth}
\begin{center}
$m_1, m_\infty \neq 0$,\\
$m_1 \pm m_\infty \neq 0$.
\end{center}
\end{minipage}
\\\hline
\parbox[c][2.75em][c]{0em}{}
${\rm Kummer \;(Kum)}$ 
&
\begin{minipage}{.3\textwidth}
\begin{center}
$\dfrac{x^2 + 4 m_\infty x + 4 {m_0}^2}{4x^2}$
\end{center}
\end{minipage}
&
\begin{minipage}{.3\textwidth}
\begin{center}
$m_0\neq 0$,\\
$m_0 \pm m_\infty \neq 0$.
\end{center}
\end{minipage}
\\\hline
\parbox[c][2.75em][c]{0em}{}
${\rm Legendre \; (Leg)}$ 
& $\dfrac{m_\infty^2}{x^2-1}$
&
$m_\infty \neq 0$.
\\\hline
\parbox[c][2.75em][c]{0em}{}
${\rm Bessel \; (Bes)}$ 
& $\dfrac{x + 4m_0^2}{4x^2}$
&
$m_0 \neq 0$.
\\\hline
\parbox[c][2.75em][c]{0em}{}
${\rm Whittaker \; (Whi)}$ 
& $\dfrac{x - 4m_\infty}{4 x}$
& $m_\infty \neq 0$.
\\\hline
\parbox[c][2.75em][c]{0em}{}
${\rm Weber \; (Web)}$ 
& $\dfrac{1}{4} x^2 - m_\infty$
& $m_\infty \neq 0$.
\\\hline
\parbox[c][2.75em][c]{0em}{}
${\rm Degenerate~Bessel \; (dBes)}$ 
& $\dfrac{1}{x}$
& --
\\\hline
\parbox[c][2.75em][c]{0em}{}
${\rm Airy \; (Ai)}$ 
& $x$
& --
\\\hline
\end{tabular}
\end{center}
\caption{Quadratic differentials corresponding to hypergeometric type spectral curves, where 
$\bullet \in \{ {\rm HG}, {\rm dHG}, {\rm Kum}, 
{\rm Leg}, {\rm Bes}, {\rm Whi}, {\rm Web}, {\rm dBes},  {\rm Ai} \}$ 
labels the equation in Figure \ref{fig:confdiag} to which $Q_\bullet$ is related.} 
\label{table:classical}
\end{table}

In the present paper, we will focus on the BPS structure itself, rather than the Riemann-Hilbert problem, which will be the subject of the second paper. Furthermore, we will consider a specific class of BPS structures arising in a natural way from meromorphic quadratic differentials $\varphi = Q(x) dx^2$ on the Riemann surface $X={\mathbb P}^1$ (c.f., \cite{Bri19, All19}).
The corresponding BPS structure is described through the geometry of the associated \emph{spectral cover}\footnote{In the literature, $\Sigma$ is usually called a \emph{spectral curve}, owing to its role in the theory of Hitchin systems. It is closely related but distinct from the notion of spectral curve in topological recursion. We stick with the terminology of ``spectral cover" to keep the distinction clear.}
\begin{equation}
\Sigma = \{\lambda ~|~ \lambda^2 - \varphi = 0 \} \subset T^\ast X, 
\end{equation}
which is the Riemann surface on which the square root $\sqrt{\varphi}$ is well-defined. 
The lattice $\Gamma$ is a sublattice of the homology group $H_1(\Sigma; {\mathbb Z})$ of the spectral cover, and the central charge $Z$ is the period map; i.e., 
$Z(\gamma) =  \oint_\gamma \lambda  = \oint_\gamma \sqrt{\varphi}$
for $\gamma \in \Gamma$.
The BPS invariants are given by a weighted counting of saddle trajectories or ring domains on $X$ associated to $\varphi$; this is based on the idea of Gaiotto-Moore-Neitzke \cite{GMN09} in which they find a combinatorial algorithm to describe the  BPS spectra of certain four-dimensional supersymmetric QFTs through a \emph{spectral network} (known as a \emph{Stokes graph} in WKB literature such as \cite{KT98, IN14}). 

 The algorithm developed by Gaiotto-Moore-Neitzke outputs BPS structures given a quadratic (or higher) differential. However, they do not consider the case when $\varphi$ has simple poles. In $\S \ref{subsection:constructing-BPS-str}$, we generalize their definition of $\Omega$ to include BPS cycles coming from more general degenerations in order to include these cases, as well as degenerate ring domains\footnote{Due to the behaviour of the foliation in the presence of two simple poles (e.g. the Legendre case), it is necessary to view these as degenerations of the entire foliation rather than just the spectral network. We explain this in further detail in \S \ref{section:Legendre}}. In \S \ref{sec:maincomputation}, we will give a complete description of the BPS structures arising in this fashion for all nine examples of hypergeometric type, over generic loci in their respective parameter spaces. In particular, we give an explicit description of parameter values for which the spectral network of $\varphi({\bm m})$ degenerates, and prove the existence of the saddle trajectory at these values. We also determine the corresponding BPS cycles, central charges, and BPS invariants, and summarize our computation in Tables \ref{table:bpsangles-Web}-\ref{table:bpsangles-Leg}.

These examples all correspond to spectral covers of degree 2. Although the theory of Gaiotto-Moore-Neitzke \cite{GMN12} is on much less solid footing mathematically when the spectral cover has degree 3 or higher (corresponding to $k$-differentials for $k>2$), we will also consider some higher degree examples experimentally. We find evidence that in both of these cases, our main result continues to hold.

\subsection{Topological recursion}
Topological recursion (TR) was introduced by Eynard-Orantin \cite{EO} and Chekhov-Eynard-Orantin \cite{CEO} as a generalization of the loop equations in the theory of matrix models, describing random matrices and physical applications thereof. 
To a given algebraic curve, TR constructs a doubly-indexed sequence $\{ W_{g,n} \}_{g \ge 0, n \ge 1}$ of meromorphic multidifferentials on the curve, called the \emph{Eynard-Orantin correlators}, satisfying a number of nice properties. 
In the TR formalism, the curve taken as the input for TR is called the \emph{spectral curve} since it generalizes the spectral curve of matrix models. 
The Eynard-Orantin correlator $W_{g,n}$ is closely related to the coefficients of the topological expansion 
of $n$-point correlation functions of the matrix model. 
The analogue of the genus $g$ contribution $F_g$ to the topological expansion of the \emph{free energy} is also introduced by integrating the Eynard-Orantin correlators. We call the generating formal series 
$Z_{\rm TR}(\hbar) := \exp(\sum_{g \ge 0} \hbar^{2g-2} F_g)$ the \emph{TR partition function} for the given spectral curve, which is the most fundamental output of TR. Note that if there is a family of spectral curves parametrized by a complex manifold $M$, then the TR invariants (i.e., $W_{g,n}$ and $F_g$) depend also on ${\bm m} \in M$; in that situation, we will write the genus $g$ free energy as $F_g({\bm m})$ to indicate the parameter dependence. 

Since its introduction, it has been noted that TR often produces various enumerative invariants such as Gromov-Witten invariants, Hurwitz numbers, Mirzakhani's Weil-Petersson volume, etc \cite{EO2, Eynard-11, DMNPS13, DN16}; see the review article \cite{EO-08} for more information. 
In the context of the mirror symmetry, the topological recursion is regarded as the higher genus theory of the $B$-model. 
Due to its wide range of applications, TR continues to attract many researchers including both mathematicians and physicists.

As represented by the celebrated Kontsevich-Witten theorem \cite{Kon, Wit}, the generating series of such enumerative invariants are expected to be related to the $\tau$-functions of various integrable systems. In fact, at the level of formal power series of $\hbar$, KdV and Painlev\'e $\tau$-functions have been constructed via the TR partition function for certain spectral curves; see \cite{EO, IS, IM, IMS, MO18, I19, EG19, MO19} for example. The actual $\tau$-function should then be given as the Borel sum of those formal series. 
Although the Borel summability of the TR partition function is not proved in full generality (see \cite{Eynard-19} for the growth estimate of the coefficients), those which appear in this paper are Borel summable except for finitely many singular directions, and we will discuss their properties in the sequel to this paper.

On the other hand, the formalism of \emph{quantum curves}, which was developed relatively recently in \cite{GS12, DM13, DMNPS13, DN16, BE-16, KSTR} etc., relates topological recursion to WKB analysis. 
It claims that, under certain conditions, a generating series of (integrals of) Eynard-Orantin correlators gives a WKB solution of a Schr\"odinger-type differential (or difference) equation whose classical limit recovers the original spectral curve; therefore, the resulting Schr\"odinger-type equation is called a quantum (spectral) curve. To avoid overloading the reader, we omit a detailed discussion of quantum curves in the present paper, but we emphasize that the relation between BPS structures and TR which we will observe may be viewed as a consequence of the quantum curve machinery (both theories are related through the {\em exact WKB analysis} \cite{Voros83, KT98}). We will return to this in the second paper.

\subsection{Main results for the hypergeometric spectral curve}

Let us state the main result of the present paper for the family of \emph{(Gauss) hypergeometric spectral curves} 
$\Sigma_{\rm HG}$ defined by the quadratic differentials $\varphi_{\rm HG} = Q_{\rm HG}(x) dx^2$ on $X={\mathbb P}^1$ with
\begin{equation}
Q_{\rm HG}(x) = \frac{m_{\infty}^2 x^2 - (m_\infty^2 - m_1^2 + m_0^2) x + m_0^2}{x^2(x-1)^2}.
\end{equation}
Here ${\bm m} = (m_0, m_1, m_\infty)$ is a tuple of complex numbers parametrizing $\Sigma_{\rm HG}$. 
We assume that ${\bm m}$ lies in $M_{\rm HG} \subset {\mathbb C}^3$ defined by the condition

\begin{equation} \label{eq:generic-mass-intro}
m_0 m_1 m_\infty (m_0+m_1+m_\infty)(m_0+m_1-m_\infty)
(m_0-m_1+m_\infty)(m_0-m_1-m_\infty) \ne 0.
\end{equation}

The curve $\Sigma_{\rm HG}$ is equivalent to the classical limit of the hypergeometric equation with a small parameter $\hbar$ which is introduced appropriately (see \cite{Aoki-Tanda, IKoT-II}). 

Let us summarize the properties of the BPS structure/TR invariants arising from $\Sigma_{\rm HG}$.

\subsubsection*{\bf BPS side}
We define the lattice
\begin{equation}
\Gamma = \{ \gamma \in H_1(\Sigma_{\rm HG}; \mathbb Z) ~|~ \sigma_\ast \gamma = - \gamma \},
\end{equation} where 
$\sigma_\ast$ is the action induced by the covering involution of $\Sigma_{\rm HG}$. 
$\Gamma$ is a rank three lattice whose generators are written as a ${\mathbb Z}$-linear combinations of $\{\gamma_{0_+}, \gamma_{0_-}, \gamma_{1_+}, \gamma_{1_-}, \gamma_{\infty_+}, \gamma_{\infty_-}\}$, where $\gamma_{s_+}$ and $\gamma_{s_-}$ denote the homology classes represented by two preimages of a positively oriented small circle around $s \in \{0, 1, \infty\}$ by the natural projection $\Sigma_{\rm HG} \to {\mathbb P}^1$. 
The corresponding central charge is given explicitly by 
$Z(\gamma_{s_\pm}) =  
\pm 2 \pi i  m_s$. 

Several properties of the spectral networks (Stokes graphs) defined by $\Sigma_{\rm HG}$ were 
studied by Aoki-Tanda \cite{Aoki-Tanda}. 
If we look at spectral networks with phase $\vartheta \in [0, 2\pi)$, 
we can show that, for any generic ${\bm m} \in M$, 
there are always 8 (resp., 6) phases for which the spectral networks of $e^{-2i \vartheta}$ contain 
a saddle trajectory between distinct branch points (resp., a saddle trajectory of loop-type). 
We assign the BPS invariants $\Omega(\gamma) = 1$ (resp., $\Omega(\gamma) = -1$)  
for the class $\gamma \in \Gamma$ which encircles the saddle trajectory 
(resp., the saddle trajectory of loop-type).
More concretely, the eight cycles  
\begin{equation} \label{eq:saddle-class-intro}
\pm(\gamma_{0_+}+\gamma_{1_+}+\gamma_{\infty_+}), ~
\pm(\gamma_{0_+}+\gamma_{1_+}+\gamma_{\infty_-}), ~
\pm(\gamma_{0_+}+\gamma_{1_-}+\gamma_{\infty_+}), ~
\pm(\gamma_{0_+}+\gamma_{1_-}+\gamma_{\infty_-}) ~ \in \Gamma
\end{equation}
give $\Omega(\gamma) = 1$, and the six cycles
\begin{equation}  \label{eq:loop-class-intro}
\pm (\gamma_{0_+} - \gamma_{0_-}), ~
\pm (\gamma_{1_+} - \gamma_{1_-}), ~
\pm (\gamma_{\infty_+} - \gamma_{\infty_-}) ~ \in \Gamma
\end{equation}
give $\Omega(\gamma) = -1$, and 
$\Omega(\gamma) = 0$ for other classes $\gamma \in \Gamma$.
Like all of our examples in this paper, this BPS structure is \emph{uncoupled}; i.e., any pair $\gamma, \gamma' \in \Gamma$
with $\Omega(\gamma), \Omega(\gamma') \ne 0$ satisfy $\langle \gamma, \gamma' \rangle = 0$, which follows in this case from the triviality of the intersection pairing.

The associated BPS structure 
has fourteen BPS rays in ${\mathbb C}^\ast$, and among them, there are seven BPS rays on 
any generic half plane 
(four of them come from the cycles in \eqref{eq:saddle-class-intro} which give $\Omega(\gamma) = 1$, 
while the other three come from \eqref{eq:loop-class-intro} which give $\Omega(\gamma) = -1$). 

\subsubsection*{\bf TR side}
Let $W_{g,n}^{\rm HG}$ and $F_g^{\rm HG}$ be the TR invariants of of the spectral curve $\Sigma_{\rm HG}$.
Properties of these TR invariants (and those for the spectral curves appearing 
in Table \ref{table:classical}) were studied in \cite{IKoT-I, IKoT-II}. 
This example also appears in \cite{CPT19}.
One of the main results of these papers gives an explicit expression for all free energies of hypergeometric type spectral curves. In particular, $F_g^{\rm HG}$ was obtained: 
\begin{align} \label{eq:Fg-HG-intro}
F^{\rm HG}_g({\bm m}) 
= \dfrac{B_{2g}}{2g(2g-2)} 
&\biggl( \dfrac{1}{(m_{0} + m_{1} + m_{\infty})^{2g-2}} 
+ \dfrac{1}{(m_{0} + m_{1} - m_{\infty})^{2g-2}} \notag \\
& 
\hspace{3mm}  + \dfrac{1}{(m_{0} - m_{1} + m_{\infty})^{2g-2}} 
+ \dfrac{1}{(m_{0} - m_{1} - m_{\infty})^{2g-2}} \notag \\
&\hspace{1.7cm} 
- \dfrac{1}{(2m_0)^{2g-2}} 
- \dfrac{1}{(2m_1)^{2g-2}} 
- \dfrac{1}{(2m_\infty)^{2g-2}} 
\biggr).
\end{align}
Here, $B_{2g}$ denotes the $2g$-th Bernoulli number, whose definition is recalled below. This expression is valid for $g \ge 2$ and ${\bm m}$ satisfying \eqref{eq:generic-mass-intro}, and we refer the reader to \cite[Theorem 3.1]{IKoT-II} for further details. The explicit formulas of the free energies for the other spectral curves 
are summarized in Table \ref{table:free-energy-0}.


Once we know the BPS structure for $\Sigma_{\rm HG}$, comparing both sides we obtain the result:

\begin{thm}[{Theorem \ref{thm:maintheorem-2}}]  \label{thm:main-intro}
For $g\geq 2$, the $g$-th free energy $F_g^{\rm HG}$ for $\Sigma_{\rm HG}$ has the following expression
\begin{equation}  \label{eq:main-result2-intro}
F^{\rm HG}_{g}({\bm m})=\dfrac{B_{2g}}{2g(2g-2)} 
\sum_{\substack{ \gamma \in \Gamma \\ Z(\gamma)\in  \mathbb{H}}} 
\Omega(\gamma) \left( 
{\dfrac{2 \pi i}{Z(\gamma)}} \right)^{2g-2},
\end{equation}
where $\mathbb{H}$ is any half plane whose boundary rays are not BPS.
\end{thm}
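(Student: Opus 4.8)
The plan is to prove \eqref{eq:main-result2-intro} by a direct comparison of two explicit descriptions that are already available: on the one hand, the closed formula \eqref{eq:Fg-HG-intro} for $F_g^{\rm HG}$ (from \cite[Theorem 3.1]{IKoT-II}), valid for $g \ge 2$ and ${\bm m}\in M_{\rm HG}$; on the other hand, the complete description of the BPS structure $(\Gamma, Z, \Omega)$ attached to $\Sigma_{\rm HG}$ obtained in \S\ref{sec:maincomputation}, namely that the only classes with $\Omega(\gamma)\ne 0$ are the eight ``saddle'' cycles in \eqref{eq:saddle-class-intro} (each with $\Omega=1$) and the six ``loop'' cycles in \eqref{eq:loop-class-intro} (each with $\Omega=-1$), together with the central charge values $Z(\gamma_{s_\pm}) = \pm 2\pi i\, m_s$ for $s\in\{0,1,\infty\}$.

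First I would observe that the right-hand side of \eqref{eq:main-result2-intro} is manifestly independent of the choice of admissible half-plane $\mathbb{H}$: since $2g-2$ is even, $(2\pi i/Z(-\gamma))^{2g-2} = (2\pi i/Z(\gamma))^{2g-2}$, so each of the seven $\pm$-pairs of active classes contributes the same amount regardless of which representative happens to satisfy $Z(\gamma)\in\mathbb{H}$ (such an $\mathbb{H}$ exists because there are only finitely many BPS rays). Fixing one such $\mathbb{H}$, the sum reduces to four terms coming from the saddle cycles \eqref{eq:saddle-class-intro} weighted by $+1$ and three terms from the loop cycles \eqref{eq:loop-class-intro} weighted by $-1$.

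Next I would evaluate the central charges in each term by linearity of $Z$. For a saddle class $\gamma_{0_+}+\gamma_{1_\pm}+\gamma_{\infty_\pm}$ this gives $Z(\gamma) = 2\pi i\,(m_0\pm m_1\pm m_\infty)$, hence $(2\pi i/Z(\gamma))^{2g-2} = (m_0\pm m_1\pm m_\infty)^{-(2g-2)}$; for a loop class $\gamma_{s_+}-\gamma_{s_-}$ one finds $Z(\gamma) = 4\pi i\,m_s$, hence $(2\pi i/Z(\gamma))^{2g-2} = (2m_s)^{-(2g-2)}$. Substituting these into the reduced sum, together with the prefactor $B_{2g}/(2g(2g-2))$, reproduces precisely the seven-term expression \eqref{eq:Fg-HG-intro} for $F_g^{\rm HG}$, which completes the proof.

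Since both sides are already known in closed form, there is no serious analytic obstacle within the scope of this statement; the genuine content --- identifying exactly which classes carry nonzero $\Omega$, proving the existence of the relevant saddle trajectories at the predicted parameter values, and computing the corresponding central charges --- is carried out in \S\ref{sec:maincomputation} and may be invoked here as a black box. The only point requiring care in the present argument is the bookkeeping: verifying that the parity of $2g-2$ really does render the choice of $\mathbb{H}$ immaterial, that the four sign choices $m_0\pm m_1\pm m_\infty$ biject with the four saddle $\pm$-pairs, and that no further active class has been overlooked, so that the counts $4$ and $3$ --- and the relative sign between them --- match \eqref{eq:Fg-HG-intro} exactly.
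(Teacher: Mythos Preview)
Your proposal is correct and follows essentially the same approach as the paper: the paper's proof of Theorem~\ref{thm:maintheorem-2} is the one-line statement that the claim follows immediately by comparing the free-energy table (Table~\ref{table:free-energy-0}, coming from \cite{IKoT-II}) with the BPS spectrum tables computed in \S\ref{sec:maincomputation}. Your argument simply spells out this comparison explicitly for the hypergeometric case, including the useful remark that the parity of $2g-2$ makes the choice of $\mathbb{H}$ immaterial.
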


In other words we find that for BPS structures corresponding to spectral curves of hypergeometric type
(in Table \ref{table:classical}), we may express the ($g\geq2$) free energy as a sum over BPS states, times some well-known sequence (in this case, $\frac{B_{2g}}{2g(2g-2)}$). We will also give similar statements in {Theorem \ref{thm:maintheorem-2}} for the other spectral curves 
appearing in Table \ref{table:classical}.

We will make a conjecture in {Section \ref{sec:conjectures}} which claims that 
our result {(Theorem \ref{thm:maintheorem-2})} for the spectral curves of hypergeometric type should be generalized to spectral curves (possibly with higher degrees)  whose corresponding BPS structure is \emph{uncoupled}. 
See also \cite[\S 10.2]{CLT20} for a related discussion in the context of topological strings.

Although TR is applicable to higher degree spectral curves (\cite{BHLMR-12, BE-12, BE-16}), the existence of the associated BPS structure remains conjectural: we expect that Gaiotto-Moore-Neitzke's approach in \cite{GMN12} allows us to define a natural BPS structure for spectral curves which may come from higher differentials (e.g., $SL(N, {\mathbb C})$ Hitchin spectral curves with $N \ge 3$).  
In particular, our conjecture suggests that, at least for uncoupled BPS structures, it should be possible to deduce the BPS invariants without ever drawing a spectral network, avoiding the intensive computations required by the algorithm of \cite{GMN12}. 
While other such methods have been known previously \cite{bpsquivers1, bpsquivers2}, our approach is distinct, and we hope it may provide new insight. 
In this direction, we will look at two higher degree examples (whose free energy was recently computed in \cite{YM-Takei20} by Y.\,M.\,Takei) 
for which the conjecture is testable and appears to hold.

\subsection{Organization} 
The paper is structured as follows. In \S \ref{sec:TR} we review the topological recursion formalism, and review its application to the spectral curves of hypergeometric type. We will recall the relevant results on the free energies from \cite{IKoT-I, IKoT-II}. In \S \ref{sec:BPS} we recall the notion of BPS structures and spectral networks, and describe the construction of BPS structures from meromorphic quadratic differentials. Our main technical results computing the BPS structure are contained 
in \S \ref{sec:maincomputation}. 
Finally, in \S \ref{sec:final}, we will assemble the pieces into the formula for $F_g$. We will also propose a conjectural statement generalizing our main result for a wider class of spectral curves, and consider two examples of higher degree spectral curves experimentally as evidence.

\subsection{Acknowledgements} 
We thank 
Dylan Allegretti, 
Tom Bridgeland,  
Aaron Fenyes and
Yumiko Takei
for helpful conversations. 
O.K.'s work was supported by 
a JSPS Postdoctoral Fellowship for Research in Japan (Standard). 
This work was also supported by JSPS KAKENHI Grant Numbers 
16K17613, 16H06337, 17H06127, 19F19738 and 20K14323.

Many figures in this paper were produced using computer programs written by Andy Neitzke \cite{swnplotter} and Takashi Aoki \cite{aokiprogram}. We thank them for sharing these with us.

\section{Topological recursion for spectral curves of hypergeometric type}
\label{sec:TR}

Here we briefly recall several facts from Eynard-Orantin's theory (\cite{EO}), 
and recall the explicit form of the free energies for our examples computed recently in \cite{IKoT-I, IKoT-II}.

\subsection{Definition of correlators and partition function}

Let us recall the notion of spectral curves which are input for the topological recursion. 
\begin{dfn} \label{def:spectral-curve-TR}
A \emph{spectral curve} is a tuple $({\mathcal C}, x, y, B)$ of the following data:
\begin{itemize}
\item
a compact Riemann surface ${\mathcal C}$, 
\item 
a pair of non-constant meromorphic functions $x,y$ on ${\mathcal C}$ such that 
$dx$ and $dy$ never vanish simultaneously, and 
\item
a symmetric meromorphic bidifferential\footnote{
A bidifferential is a multidifferential with $n = 2$.} 
$B$ on ${\mathcal C}$ 
having a second order pole with no residue along the diagonal, and holomorphic elsewhere.
\end{itemize}
\end{dfn}

We usually denote by $z$ a local coordinate of ${\mathcal C}$, 
and by $z_i$ a copy of the coordinate when we consider {\em multidifferentials}.  
Here, a meromorphic multidifferential is a meromorphic section of the line bundle 
$\pi_1^\ast(T^\ast {\mathcal C}) \otimes \cdots \otimes \pi_n^\ast(T^\ast {\mathcal C})$
on ${\mathcal C}^n$, where $\pi_j : {\mathcal C}^n \to {\mathcal C}$ is the $j$-th projection map
\cite{DN16}. We often drop the symbol $\otimes$ when we express multidifferentials.

We denote by ${\mathcal R}$ the set of \emph{ramification points} of $x$, 
that is, ${\mathcal R}$ consists of zeros of $dx$;
here we consider $x$ as a branched covering map
${\mathcal C} \to \mathbb{P}^1$, and we call the images of ramification points
by $x$ the {\em branch points}.
We also assume that all ramification points of $x$ 
are simple so that the local conjugate map $z \mapsto \bar{z}$ 
near each ramification point is well-defined. 
Then, the \emph{topological recursion} (TR) is formulated as follows. 

\begin{dfn}[{\cite[Definition 4.2]{EO}}]
The \emph{Eynard-Orantin correlator} 
$W_{g, n}(z_1, \cdots, z_n)$ for $g \geq 0$ and $n \geq 1$
is defined as a meromorphic multidifferential on ${\mathcal C}$ 
by the recursive relation
\begin{align}
\label{eq:TR}
W_{g, n+1}(z_0, z_1, \cdots, z_n)
&:= \sum_{r \in {\mathcal R}}
\Res_{z = r} K_r(z_0, z)
\Bigg[
W_{g-1, n+2} (z, \overline{z}, z_1, \cdots, z_n)
\\
&\qquad\qquad
+
\sum_{\substack{g_1 + g_2 = g \\ I_1 \sqcup I_2 = \{1, 2, \cdots, n\}}}'
W_{g_1, |I_1| + 1} (z, z_{I_1})
W_{g_2, |I_2| + 1} (\overline{z}, z_{I_2})
\Bigg]
\notag
\end{align}
for $2g + n \geq 2$ with initial conditions given by
\begin{align}
W_{0, 1}(z_0) &:= y(z_0) dx(z_0),
\quad
W_{0, 2}(z_0, z_1) := B(z_0, z_1).
\end{align}
Here we set $W_{g,n} \equiv 0$ for a negative $g$, 
$K_r$ denotes the so-called ``recursion kernel" given by
\begin{equation}
\label{eq:RecursionKernel}
K_r(z_0, z)
:= \frac{1}{2\big(y(z) - y(\overline{z})\big) dx(z)}
\int^{\zeta = z}_{\zeta = \overline{z}} B(z_0, \zeta)
\end{equation}
defined near a ramification point $r \in {\mathcal R}$,
$\sqcup$ denotes the disjoint union,
and the prime $'$ on the summation symbol in \eqref{eq:TR}
means that we exclude terms for
$(g_1, I_1) = (0, \emptyset)$
and
$(g_2, I_2) = (0, \emptyset)$
(so that $W_{0, 1}$ does not appear) in the sum.
We have also used the multi-index notation:
for $I = \{i_1, \cdots, i_m\} \subset \{1, 2, \cdots, n\}$
with $i_1 < i_2 < \cdots < i_m$, $z_I:= (z_{i_1}, \cdots, z_{i_m})$.
\end{dfn}

For $2g-2+n \ge 1$, it is known that the correlators 
$W_{g,n}(z_1,\dots, z_n)$ are symmetric meromorphic 
multidifferential on ${\mathcal C}^n$. 
As a differential in each variable $z_i$, it has poles only at ramification points with no residue; 
that is, it is a second kind differential on ${\mathcal C}$. 
We refer the reader to \cite{EO} for further properties of $W_{g,n}$. 
See also \cite{BHLMR-12, BE-12} for the {\em generalized topological recursion} 
which admits non-simple ramification points. 

One of the central objects in Eynard-Orantin's theory is 
the \emph{genus $g$ free energy} 
$F_g$ ($g\geq 0$) associated to the spectral curve:
\begin{dfn}[{\cite[Definition 4.3]{EO}}]
For $g \geq 2$, the \emph{genus $g$ free energy} $F_g$ is defined by
\begin{equation}
\label{def:Fg2}
F_g := \frac{1}{2- 2g} \sum_{r \in {\mathcal R}} \Res_{z = r}
\big[\Phi(z) W_{g, 1}(z) \big],
\end{equation}
where $\Phi(z)$ is any primitive of $y(z) dx(z)$.
\end{dfn}
The free energies $F_0$ and $F_1$ for $g=0$ and $1$ 
are also defined, but in a different manner which we omit
(see \cite[\S 4.2.2 and \S 4.2.3]{EO} for the definition). Their expressions in the case of hypergeometric type spectral curves are given in \S\ref{sec:free-energies}.
We note that the right-hand side of \eqref{def:Fg2} indeed does not
depend on the choice of the primitive $\Phi(z)$. 

\begin{dfn}
The generating series
\begin{equation} \label{eq:total-free-energy}
F_{\rm TR}(\hbar) := \sum_{g = 0}^{\infty} \hbar^{2g-2} F_g
\end{equation}
of $F_g$ is called the \emph{free energy} of the spectral curve. 
Its exponential 
\begin{equation}
Z_\mathrm{TR}(\hbar) := e^{F_{\rm TR}(\hbar)}
\end{equation} 
is called the \emph{partition function} of the spectral curve. 
\end{dfn}

The free energy and the partition functions are regarded as formal series in $\hbar$. 
It is known that, usually, these formal series are divergent 
(see \cite{Eynard-19} for the growth estimate for the coefficients). Nonetheless, the series arising in our examples turn out to be {\em Borel summable} 
in any direction except for finitely many singular directions.

\subsection{Structure of hypergeometric type curves}
\label{section:hypergeometric-curvs-structure}

Our main examples are the {\em spectral curves of hypergeometric type} which are 
tabulated in Table \ref{table:classical}. 
Each of them is the classical limit of a certain 2nd order Schr\"odinger-type ODE which is equivalent to one of the members in the confluence diagram in Figure \ref{fig:confdiag}. 
In this section, we recall relevant facts about these spectral curves which were already 
found in \cite[Section 2.3]{IKoT-II}, though we use slightly different notations from \cite{IKoT-I, IKoT-II}.
We also sometimes borrow terminology from {\em exact WKB analysis}; 
see \cite{KT98} for background.

In the rest of Section \ref{sec:TR}, we denote by $Q(x)$ one of the rational functions 
$Q_{\bullet}(x)$ for 
$\bullet \in \{{\rm HG}, {\rm dHG}, {\rm Kum}, {\rm Leg}, {\rm Bes}, {\rm Whi}, {\rm Web}, 
{\rm dBes}, {\rm Ai} \}$, and by $\varphi = Q(x) dx^2$ the associated 
meromorphic quadratic differential on $X={\mathbb P}^1$.
We also introduce the following important loci:
\begin{itemize}
\item 
$P \subset X$ is the set of poles of $\varphi$.
It has a decomposition 
${P} = {P}_{\rm ev} \cup {P}_{\rm od}$
into the subsets of even order poles and odd order poles of $\varphi$. 
Note that all our examples in Table \ref{table:classical} satisfy $\infty \in P$.
\item 
$T \subset X$ is the set of zeros and simple poles of $\varphi$.
Borrowing terminology from the WKB literature, 
we call elements in $T$ {\em turning points}\,\footnote{
In the WKB analysis, $Q(x)$ plays the role of the potential function of 
a Schr\"odinger-type ODE. Traditionally, zeros of $\varphi$ are called 
turning points in WKB literature. 
It was pointed by \cite{Ko2} that simple poles of $\varphi$
also play similar roles to the usual turning points (called 
{\em turning points of simple pole type}). 
Taking this into account, 
in this paper, by a turning point we mean a zero or a simple pole of $\varphi$. 
}.

\item
${\rm Crit} = P \cup T$ is called the set of {\em critical points} 
of $\varphi$. Its subset ${\rm Crit}_{\infty} = P \setminus T$ consists 
of {\em infinite critical points} of $\varphi$; 
that is, the poles of $\varphi$ of order $\ge 2$. 
The points in $T = {\rm Crit} \setminus {\rm Crit}_\infty$ are also 
called {\em finite critical points} in this context. 

\end{itemize}
We will use the notation $P_{\bullet}$ etc. when we consider a specific 
example in Table \ref{table:classical}.

The quadratic differential $\varphi$ is parametrized by a tuple of parameters 
${\bm m} = ( m_s )_{s \in {P_{\rm ev}}}$ which we call {\em mass parameters} 
(or {\em temperatures} in TR language).  
In what follows, we assume the following condition on the mass parameters:

\begin{ass} \label{ass:genericity}
The mass parameters lie in the following set $M = M_\bullet$: 
\begin{align}
M_{\rm HG} & := \{(m_0, m_1, m_\infty) \in {\mathbb C}^3 ~|~ 
m_0 m_1 m_\infty \Delta_{\rm HG}(\bm m) \ne 0 \}. 
\\
M_{\rm dHG} & := \{(m_1, m_\infty) \in {\mathbb C}^2 ~|~ 
m_1 m_\infty (m_1 + m_\infty)(m_1 - m_\infty) \ne 0 \}. 
\\
M_{\rm Kum} & := \{(m_0, m_\infty) \in {\mathbb C}^2 ~|~ 
m_0 (m_0 + m_\infty)(m_0 - m_\infty) \ne 0 \}.
\\
M_{\rm Leg} & := \{m_\infty \in {\mathbb C} ~|~ m_\infty \ne 0 \}.
\\
M_{\rm Bes} & := \{m_0 \in {\mathbb C} ~|~ m_0 \ne 0 \}. 
\\
M_{\rm Whi} & := \{m_\infty \in {\mathbb C} ~|~ m_\infty \ne 0 \}.
\\
M_{\rm Web} & := \{m_\infty \in {\mathbb C} ~|~ m_\infty \ne 0 \}.
\end{align}
where we set 
\begin{equation} \label{eq:HG-Delta}
\Delta_{\rm HG}(\bm m) = (m_0 + m_1 + m_{\infty})(m_0 + m_1 -m_{\infty})
(m_0 - m_1 + m_{\infty})(m_0 - m_1 - m_{\infty}).
\end{equation}
For the degenerate Bessel curve (${\rm dBes}$) and Airy curve (${\rm Ai}$), 
we set $M_{\rm dBes} = M_{\rm Ai} = \emptyset$ since they have no mass parameters.
\end{ass}

We can verify that no collision of a pair of turning points or a pair of a turning point and a pole of $\varphi$ occur on $M$. Namely, a turning point is either a simple zero or a simple pole of $\varphi$ under the Assumption \ref{ass:genericity}.

Let us consider the corresponding \emph{spectral cover}:
\begin{equation} \label{eq:spectral-curve-x-y}
\Sigma := \{ (x, y) \in {\mathbb C}^2 ~|~ y^2 - Q(x) = 0 \}, 
\end{equation}
associated with the quadratic differential $\varphi$. This is a branched double cover of $X \setminus P$ 
with the projection map $\pi : (x, y) \mapsto x$. 
We denote by $\sigma : (x,y) \mapsto (x,- y)$ the covering involution of $\Sigma$.  
We will mainly use $x$ as a local coordinate\footnote{
Technically speaking, we must fix branch cuts and identify the coordinate $x$ of 
the base ${\mathbb P}^1$ with a coordinate of $\Sigma \setminus \pi^{-1}(T)$ on the first sheet 
(then, $\sigma(x)$ gives a coordinate in the second sheet).} of $\Sigma \setminus \pi^{-1}(T)$.
We note that, the Riemann surface $\Sigma$ for all the examples in Table \ref{table:classical} 
are of {\em genus $0$}, and hence topologically it is a sphere with several punctures, corresponding to poles of $\varphi$. 
As we will see below, $\Sigma$ has a parametrization by a pair of rational functions, making $\Sigma$ into a spectral curve in the sense above, which allows us to apply the TR formalism.

\begin{rem}
It is natural to regard $\Sigma$ as a subvariety of the holomorphic cotangent bundle 
$T^\ast X$ of $X$ defined by the equation $\lambda^2 - \varphi = 0$, 
where $\lambda$ is the fiber coordinate of $T^\ast X$. 
The presentation \eqref{eq:spectral-curve-x-y} can be understood as a local expression 
in a coordinate chart $(x,y) \in {\mathbb C}^2$ of $T^\ast X$ on which 
the tautological 1-form is expressed as $\lambda = y dx$. 
$T^\ast X$ has another coordinate chart $(\tilde{x}, \tilde{y}) \in {\mathbb C}^2$
on which $\Sigma$ is expressed as 
$\{(\tilde{x}, \tilde{y}) \in {\mathbb C}^2 ~|~ \tilde{y}^2 - \tilde{Q}(\tilde{x}) = 0 \}$ 
according to the gluing rule $(\tilde{x}, \tilde{y}) = (x^{-1}, -x^{2} y)$ in $T^\ast X$.  
Here $\tilde{Q}(\tilde{x}) = Q(\tilde{x}^{-1}) \tilde{x}^{-4}$ appears in the expression 
of $\varphi=\tilde{Q}(\tilde{x}) d\tilde{x}^2$ with respect to the coordinate 
$\tilde{x} = x^{-1}$ on $X = {\mathbb P}^1$. This perspective will be important on the BPS side of our correspondence.
\end{rem}

We denote by $\overline{\Sigma}$ the compactification of $\Sigma$, obtained by adding preimages of poles of $\varphi$, on which the square root $\sqrt{\varphi} = \sqrt{Q(x)} \, dx$ gives a meromorphic differential. 
More precisely, to obtain $\overline{\Sigma}$, we add to $\Sigma$ a single point for each $s \in P_{\rm od}$, and add a pair $s_{\pm}$ of two points for each $s \in P_{\rm ev}$:  
\begin{equation}
\overline{\Sigma} = \Sigma \cup P_{\rm od} \cup \{s_+, s_- ~|~ s \in P_{\rm ev} \}.
\end{equation}
Here we identify $s \in P_{\rm od}$ with its unique lift on $\overline{\Sigma}$. 
We keep using the same notations $\pi : \overline{\Sigma} \to X$ and ${\sigma}: \overline{\Sigma} \to \overline{\Sigma}$ for the projection map and the covering involution. 
The sign in the notation for the two points $s_{\pm}$, both of which are mapped to $s \in P_{\rm ev}$ by $\pi$, is chosen so that the following residue formula holds:
\begin{equation} \label{eq:sign-convention-preimages}
\Res_{x = s_{\pm}} \sqrt{Q(x)} \, dx = \pm m_s. 
\end{equation}
{For later use, we also introduce a partial compactification $\widetilde{\Sigma}$ of $\Sigma$ obtained by filling in punctures corresponding to simple poles of $\varphi$:  
\begin{equation}
\widetilde{\Sigma} := \Sigma \cup (P_{\rm od} \cap T) 
= \overline{\Sigma} \setminus D_{\infty}.
\end{equation}
Here we set 
\begin{equation}
D_\infty := \pi^{-1} ({\rm Crit}_{\infty}) ~\subset \overline{\Sigma}.
\end{equation}
If $\varphi$ has no simple poles, then $\widetilde{\Sigma} = \Sigma$. 
The homology classes on $\widetilde{\Sigma}$ are important in constructing a BPS structure associated with $\varphi$ in \S \ref{sec:maincomputation}. They will also be used when we define the Voros coefficients in the second part.
}

To apply TR to these curves, we should regard them as spectral curves in the sense of 
Definition \ref{def:spectral-curve-TR}. This was done in \cite[Section 2.3]{IKoT-II}, where an explicit rational parametrization of $\Sigma$ was given. 
That is, for each $\bullet$ in Table \ref{table:classical}, 
there exists a pair of rational functions 
$(x(z),y(z)) = (x_{\bullet}(z), y_{\bullet}(z))$ such that 
we have an isomorphism 
\begin{equation} \label{eq:meromorphic-parametrization}
\begin{array}{ccc}
{\mathcal C} \setminus {\mathcal P}
& \stackrel{\sim}{\longrightarrow} & \Sigma \\
\rotatebox{90}{$\in$} & & \rotatebox{90}{$\in$} \\
z & \longmapsto & (x(z), y(z))
\end{array}
\end{equation} 
of punctured Riemann surfaces, 
with the choice ${\mathcal C} = {\mathbb P}^1$ and the set ${\mathcal P}$ of poles of $x(z)$ and $y(z)$. We will give examples of these rational functions in Example \ref{exa:Gauss} and Example \ref{exa:dGauss} for Gauss and degenerate Gauss curves, respectively (see \cite[Section 2.3]{IKoT-II} for other examples). To avoid confusions, we also keep using the symbol 
${\mathcal C}$ for ${\mathbb P}^1$ to distinguish it from the target $X = {\mathbb P}^1$ of the map $x(z)$.
 
The set ${\mathcal P}$ is of the form
\begin{equation}
{\mathcal P} = \{ p_s ~|~ s \in P_{\rm od} \} \cup 
\{ p_{s_{+}}, p_{s_{-}} ~|~ s \in P_{\rm ev} \} \subset {\mathcal C},
\end{equation}
where the notation means that, for each $s \in P$, either $x(p_s) = s$ or $x(p_{s_\pm}) = s$ is satisfied depending on the parity of the pole order of $\varphi$ at $s$.  
{  We chose the sign so that 
\begin{equation} \label{eq:residue-and-mass-parameters}
\Res_{~~z = p_{s_\pm}} y(z) dx(z) = 
\pm m_{s} = \Res_{~x = s_{\pm}} \sqrt{Q(x)} \, dx
\end{equation}
holds for any $s \in P_{\rm ev}$.}
Hence, the set ${\mathcal P}$ is in bijection with the set $\overline{\Sigma} \setminus \Sigma$, and hence, \eqref{eq:meromorphic-parametrization} can be extended to an isomorphism ${\mathcal C} \isom \overline{\Sigma}$ of compact Riemann surfaces.
Together with the canonical choice 
\begin{equation} \label{eq:Bergman-P1}
B(z_1, z_2) = \frac{dz_1 dz_2}{(z_1 - z_2)^2}
\end{equation} 
of the bidifferential $B$ when ${\mathcal C} = {\mathbb P}^1$,
we obtain a spectral curve $({\mathcal C}, x, y, B)$ 
in the sense of Definition \ref{def:spectral-curve-TR}.
We will identify $\Sigma$ defined in \eqref{eq:spectral-curve-x-y}
(or its compactification $\overline{\Sigma}$) 
with the spectral curve $({\mathcal C}, x, y, B)$ through the isomorphism. 
Under the isomorphism, the set ${\mathcal R} \subset {\mathcal C}$ of ramification points 
is mapped to $\pi^{-1}(T) \subset \overline{\Sigma}$ bijectively.

\begin{exa}[Gauss curve {\cite[Section 2.3.1]{IKoT-II}}] \label{exa:Gauss}
The Gauss curve $\Sigma_{\rm HG}$ is defined from the meromorphic quadratic differential 
$\varphi_{\rm HG} = Q_{\rm HG}(x) dx^2$ with the rational function 
\begin{equation} \label{eq:Sigma-HG}
Q_{\rm HG}(x) := \frac{m_{\infty}^2 x^2 - (m_\infty^2 - m_1^2 + m_0^2) x + m_0^2}{x^2(x-1)^2}.
\end{equation}
Under the assumption ${\bm m} = (m_0, m_1, m_\infty) \in M_{\rm HG}$, 
$P_{\rm HG} = P_{{\rm HG, ev}} = \{0, 1, \infty\}$
and $T_{\rm HG} = \{b_1, b_2 \}$ consists of two simple zeros of $Q_{\rm HG}(x)$. 
Topologically, $\Sigma_{\rm HG}$ is a sphere with six punctures, and the  
compactification is given by 
\begin{equation}
\overline{\Sigma}_{\rm HG} := \Sigma_{\rm HG} \cup 
\{0_+, 0_-, 1_+, 1_-, \infty_+, \infty_- \}.  
\end{equation}
{  We note that $\widetilde{\Sigma}_{\rm HG} = \Sigma_{\rm HG}$ holds, 
and we have $D_{{\rm HG}, \infty} = \{0_+, 0_-, 1_+, 1_-, \infty_+, \infty_- \}$.}
A rational parametrization \eqref{eq:meromorphic-parametrization}
of $\Sigma_{\rm HG}$ is given by the pair of explicit rational functions 
\begin{equation}
\label{eq:Gauss_parameterization}
\begin{cases}
\displaystyle
x(z) = x_{\rm HG}(z)
:= \frac{\sqrt{\Delta_{\rm HG}({\bm m})}}{4 {m_{\infty}}^2} 
\left( z + z^{-1} \right)
+ \frac{{m_{\infty}}^2 + {m_0}^2 
- {m_1}^2}{2 {m_{\infty}}^2}, \\[10pt]
\displaystyle
y(z) = y_{\rm HG}(z)
:= \frac{4 {m_{\infty}}^3 z^2 \left(z - z^{-1} \right)}
{\sqrt{\Delta_{\rm HG}({\bm m})} \, (z - p_{0_+})(z - p_{0_-})(z - p_{1_+})(z - p_{1_-})},
\end{cases}
\end{equation}
where $\Delta_{\rm HG}(\bm m)$ is given in \eqref{eq:HG-Delta}, 
and the set 
${\mathcal P}_{\rm HG} = \{p_{0_+}, p_{0_-}, p_{1_+}, p_{1_-}, p_{\infty_+}, p_{\infty_-}  \}$
of poles consists of 
\begin{equation} 
p_{0_\pm} :=
- \frac{(m_{0} \pm m_\infty)^2 
- {m_1}^2}{\sqrt{\Delta_{\rm HG}({\bm m})}},
\quad
p_{1_\pm} := \frac{(m_{1} \pm m_\infty)^2 
- {m_0}^2}{\sqrt{\Delta_{\rm HG}({\bm m})}}, 
\quad
p_{\infty_+}=\infty, ~~ p_{\infty_-}=0.
\end{equation}
The points $p_{s_\pm} \in {\mathcal C}$ are the two preimages of $s \in \{0, 1, \infty \}$ 
by $x_{\rm HG}(z)$, and the labels are chosen so that the residue formula 
\eqref{eq:residue-and-mass-parameters} holds. 
The point $p_{s_\pm}$ is mapped to $s_\pm$ through the isomorphism 
${\mathcal C} \simeq \overline{\Sigma}_{\rm HG}$.
The set of ramification points is given by ${\mathcal R}_{\rm HG} = \{\pm 1 \} \subset {\mathcal C}$, and these two points are mapped to the turning points $b_{1}, b_{2} \in {\mathbb P}^1$ by $x_{\rm HG}(z)$.  The conjugate map is given by $\overline{z} = 1/z$, which corresponds to the covering involution of $\overline{\Sigma}_{\rm HG}$.
\end{exa}

\begin{exa}[Degenerate Gauss curve {\cite[Section 2.3.2]{IKoT-II}}] \label{exa:dGauss}
The degenerate Gauss curve $\Sigma_{\rm dHG}$ is defined from  
\begin{equation} 
Q_{\rm dHG}(x) := \frac{m_{\infty}^2 x - (m_\infty^2 - m_1^2)}{x(x-1)^2}.
\end{equation}
We have $P_{\rm dHG} = \{0, 1, \infty\}$, $P_{{\rm dHG}, {\rm od}} = \{0 \}$ 
and $P_{{\rm dHG}, {\rm ev}} = \{1, \infty \}$
under the assumption ${\bm m} = (m_1, m_\infty) \in M_{\rm dHG}$. 
The turning point set $T_{\rm dHG} = \{(m_\infty^2 - m_1^2)/ m_\infty^2, 0 \}$ consists of two points; 
$(m_\infty^2 - m_1^2)/ m_\infty^2$ is a simple zero, while $0$ is a simple pole of 
$\varphi_{\rm dHG} = Q_{\rm dHG}(x) dx^2$. 
Topologically, $\Sigma_{\rm dHG}$ is a sphere with 5 punctures, and we have
\begin{equation}
{  
\widetilde{\Sigma}_{\rm dHG} := \Sigma_{\rm dHG} \cup 
\{0 \}
~~\subset~~
}
\overline{\Sigma}_{\rm dHG} := \Sigma_{\rm dHG} \cup 
\{0, 1_+, 1_-, \infty_+, \infty_- \}.  
\end{equation}
The set $D_{{\rm dHG}, \infty}$
is given by $\{1_+, 1_-, \infty_+, \infty_- \}$.
The rational parametrization \eqref{eq:meromorphic-parametrization}
of $\Sigma_{\rm dHG}$ is given by the pair 
\begin{equation}
\label{eq:d-Gauss_parameterization}
\begin{cases}
\displaystyle
x(z) = x_{\rm dHG}(z)
:= \frac{m_1^2 - m_\infty^2}{z^2 - m_\infty^2}, \\[10pt]
\displaystyle
y(z) = y_{\rm dHG}(z)
:= - \frac{z(z^2 - m_\infty^2)}{z^2 - m_1^2},
\end{cases}
\end{equation}
where  ${\mathcal P}_{\rm dHG} = \{p_{0}, p_{1_+}, p_{1_-}, p_{\infty_+}, p_{\infty_-}  \}$ with
\begin{equation} 
p_{0} := \infty, 
\quad
p_{1_\pm} := \pm m_1, 
\quad
p_{\infty_\pm} := \pm m_\infty.
\end{equation}
Again, the set ${\mathcal P}_{\rm dHG}$ is bijectively mapped to the set 
$\overline{\Sigma}_{\rm dHG} \setminus \Sigma_{\rm dHG} = \{0, 1_+, 1_-, \infty_+, \infty_- \}$ 
through the isomorphism ${\mathcal C} \simeq \overline{\Sigma}_{\rm dHG}$. 
The labels are chosen so that the residue formula \eqref{eq:residue-and-mass-parameters} 
holds for each $s \in P_{\rm dHG, ev}$. 
The set of ramification points is given by ${\mathcal R}_{\rm dHG} = \{0, \infty \} \subset {\mathcal C}$, 
and these two points are mapped to the turning points by $x_{\rm dHG}(z)$. 
The conjugate map is given by $\overline{z} = -z$, 
which corresponds to the covering involution of $\overline{\Sigma}_{\rm dHG}$.
\end{exa}

\subsection{Free energy of hypergeometric type spectral curves}
\label{sec:free-energies}
In \cite{IKoT-I, IKoT-II}, an explicit expression of free energies of hypergeometric type spectral curves were obtained. For example, the $g$-th free energy $F_g^{\rm HG}$ of the Gauss hypergeometric curve is given as follows.

\begin{thm}[{\cite[Theorem 3.1 (iii)]{IKoT-II}}] \label{thm:Borel-sum-free-energy}
The $g$-th free energy of the Gauss hypergeometric curve $\Sigma_{\rm HG}$ are given explicitly as follows: 
\begin{align}
F^{\rm HG}_0({\bm m}) & = 
\sum_{\epsilon, \epsilon' \in \{\pm \}} \frac{(m_0 + \epsilon \, m_1 + \epsilon' m_\infty)^2}{2} \log(m_0 + \epsilon \, m_1 + \epsilon' m_\infty) - \sum_{s \in \{0,1,\infty \}} \frac{(2m_s)^2}{2} \log(2 m_s),
\\ 
F^{\rm HG}_1({\bm m}) & = - \frac{1}{12} \log\left( 
\frac{ \Delta_{\rm HG}(\bm m)
}
{ 
m_0 m_1 m_\infty
} 
\right), 
\\ 
F^{\rm HG}_g({\bm m}) & 
= \frac{B_{2g}}{2g(2g-2)} 
\biggl( 
\sum_{\epsilon, \epsilon' \in \{\pm \}} 
\frac{1}{(m_{0} + \epsilon \, m_{1} + \epsilon' m_{\infty})^{2g-2}} 
- \sum_{s \in \{0,1,\infty \}} \frac{1}{(2m_s)^{2g-2}} \biggr) \qquad (g \ge 2).
\end{align}
Here $B_{k}$ is the $k$-th Bernoulli number defined through its generating series
\begin{equation}
\label{def:Bernoulli-number}
\frac{w}{e^w - 1} = \sum_{k = 0}^{\infty} B_k \frac{w^k}{k!}.
\end{equation}
\end{thm}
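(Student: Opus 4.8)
The plan is to treat the three cases separately, in increasing order of difficulty. For $g=0$ I would integrate the classical special-geometry relations $\partial F_0^{\rm HG}/\partial m_s = v_s({\bm m})$, where $v_s({\bm m})$ is the (suitably regularized) value at the puncture $p_{s_\pm}$ of a primitive $\Phi$ of $y\,dx$; using the rational parametrization \eqref{eq:Gauss_parameterization} these regularized periods are elementary and are $\mathbb{Q}$-linear combinations of $\log(m_0\pm m_1\pm m_\infty)$ and $\log(2m_s)$, whose antiderivative in ${\bm m}$ is exactly the stated quadratic-times-logarithm expression (the integration constant being pinned down by the homogeneity $\Phi\mapsto t\,\Phi$ under ${\bm m}\mapsto t{\bm m}$). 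For $g=1$ I would invoke Eynard's genus-one formula for a rational spectral curve, including its correction terms at the simple poles of $y\,dx$: since ${\mathcal C}={\mathbb P}^1$ the Bergman $\tau$-function is trivial, so $F_1$ collapses to a combination of logarithms of local data at the two ramification points ${\mathcal R}_{\rm HG}=\{\pm1\}$ and at the punctures $p_{s_\pm}$, which a direct computation with \eqref{eq:Gauss_parameterization} should reorganize into $-\tfrac1{12}\log\bigl(\Delta_{\rm HG}({\bm m})/(m_0m_1m_\infty)\bigr)$. The substantive case is $g\ge 2$.

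For $g\ge2$ the plan is \emph{not} to run the topological recursion by brute force, but to import the exact WKB analysis of the associated \emph{quantum curve} --- the Gauss hypergeometric ODE carrying a parameter $\hbar$, whose classical limit is $\Sigma_{\rm HG}$. Its WKB solutions furnish, for each $s\in\{0,1,\infty\}$, a \emph{Voros coefficient} $V_{\gamma_s}(\hbar,{\bm m})=\sum_{k\ge1}V_{\gamma_s}^{(k)}({\bm m})\,\hbar^{k}$ obtained by integrating the regularized odd part of the WKB symbol along a loop $\gamma_s$ encircling $s$. I would then use two ingredients. The first is a \emph{bridge formula} relating Voros coefficients and free energies: the quantum-curve/topological-recursion correspondence yields a difference relation of the shape
\[
V_{\gamma_s}(\hbar,{\bm m})
=\Bigl(e^{\frac{\hbar}{2}\partial_{m_s}}-e^{-\frac{\hbar}{2}\partial_{m_s}}\Bigr)\,F_{\rm TR}(\hbar,{\bm m})\ +\ (\text{explicit elementary terms}),
\]
i.e.\ $V_{\gamma_s}$ equals $2\sinh(\tfrac\hbar2\partial_{m_s})$ applied to the total free energy, up to corrections coming only from $F_0$ and $F_1$. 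Expanding this in $\hbar$ and using that $F_0,F_1$ are already known gives, for each $s$, a relation that recursively produces $\partial_{m_s}F_g$ from the lower free energies $F_{g'}$ ($g'<g$) and the coefficients of $V_{\gamma_s}$; running over $s\in\{0,1,\infty\}$ this determines $\nabla_{\bm m}F_g$, hence $F_g$, up to an additive ${\bm m}$-independent constant --- which vanishes because the topological-recursion $F_g$ is homogeneous of degree $2-2g$ in ${\bm m}$ and so has no constant part. The second ingredient is the closed form of $V_{\gamma_s}$: because the hypergeometric equation is explicitly solvable, its WKB symbol is governed by the Stirling-type asymptotic expansion of $\log\Gamma$, and one should find $V_{\gamma_s}^{(2g-1)}$ to be an explicit $B_{2g}$-multiple of a linear combination of $(m_0\pm m_1\pm m_\infty)^{1-2g}$ and $(2m_s)^{1-2g}$. (The prefactor $\tfrac{B_{2g}}{2g(2g-2)}$ in the final formula is precisely the coefficient in the asymptotic expansion of the \emph{integrated} Stirling series --- equivalently, of the logarithm of the Barnes $G$-function --- which is the conceptual reason it appears there.)

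The last step is bookkeeping: substitute the closed form of $V_{\gamma_s}$ into the bridge relation and check that the candidate
\[
F_g^{\rm HG}({\bm m})=\frac{B_{2g}}{2g(2g-2)}\Bigl(\sum_{\epsilon,\epsilon'\in\{\pm\}}(m_0+\epsilon m_1+\epsilon' m_\infty)^{2-2g}-\sum_{s\in\{0,1,\infty\}}(2m_s)^{2-2g}\Bigr)
\]
solves the recursion, using $\partial_{m_s}(m_0+\epsilon m_1+\epsilon' m_\infty)^{2-2g}=(2-2g)(\pm1)(m_0+\epsilon m_1+\epsilon' m_\infty)^{1-2g}$ together with the generating identity $\sum_{j\ge0}\tfrac{B_j}{j!}t^j=\tfrac{t}{e^t-1}$, after which the $\hbar$-series telescopes; the uniqueness from the previous paragraph then closes the argument. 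I expect the main obstacle to be the closed-form evaluation of the Voros coefficients and the precise form of the bridge relation (in particular, controlling the $\hbar^{-1}$- and $\hbar^{1}$-corrections and recognizing the $\hbar$-series as a combination of $\log\Gamma$ asymptotic expansions); if one instead attempted the direct topological-recursion route via $F_g=\tfrac1{2-2g}\sum_{r=\pm1}\Res_{z=r}[\Phi\,W_{g,1}]$, the difficulty would migrate to proving by induction on $2g-2+n$ that the correlators $W_{g,n}$ of this particular rational curve stay rigid enough for these residues to be summed in closed form.
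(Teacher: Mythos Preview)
The paper does not itself prove this theorem; it is quoted verbatim from \cite[Theorem~3.1~(iii)]{IKoT-II} (note the citation in the theorem header), and no argument is given here beyond the reference. That said, your proposed strategy for $g\ge2$ --- deriving a difference relation of the shape $V_{\gamma_s}=2\sinh(\tfrac\hbar2\,\partial_{m_s})F_{\rm TR}+(\text{corrections})$ from the quantum-curve/TR correspondence, computing the hypergeometric Voros coefficients in closed form via $\log\Gamma$-type asymptotics, and then solving the resulting recursion for $F_g$ using homogeneity to kill the integration constant --- is precisely the method of \cite{IKoT-I,IKoT-II}, so you have correctly reconstructed the approach of the cited source. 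Your treatments of $F_0$ and $F_1$ are also in the right spirit (in \cite{IKoT-II} these are obtained more directly from the Eynard--Orantin definitions, but the content is the same up to the ambiguities discussed in the Remark following Table~\ref{table:free-energy-0}).
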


\begin{table}[h]
\begin{center}
\begin{tabular}{cc}\hline
Spectral curve & $F_0$, $F_1$ and $F_g$ ($g \ge 2$)
\\\hline\hline
\parbox[c][3.5em][c]{0em}{}
\begin{minipage}{.15\textwidth}
Degenerate Gauss
\end{minipage}
&
{
$\displaystyle
F^{\rm dHG}_0 = \sum_{\epsilon \in \{\pm \}} 
(m_1+ \epsilon m_\infty)^2 \log(m_1 + \epsilon m_\infty) 
- \sum_{s \in \{1,\infty\}}  \dfrac{(2{m_s})^2}{2} \log{(2 m_s)}
$
}
\\[-.em]
\parbox[c][4.5em][c]{0em}{}
&
\hspace{-3.7cm}{$ 
F_1^{\rm dHG} = 
-  \dfrac{1}{12} \log \left( \dfrac{ (m_1+ m_\infty)^2(m_1 - m_\infty)^2}
{m_1 m_\infty} \right)
$
}
\\[-.em]
\parbox[c][3.5em][c]{0em}{}
&
{$ \displaystyle
F_g^{\rm dHG} = 
\dfrac{B_{2g}}{2g (2g-2)}
\left(
\sum_{\epsilon \in \{\pm \}}
\dfrac{2}{(m_1 + \epsilon m_\infty)^{2g-2}}
- 
\sum_{s \in \{1, \infty\}} 
\dfrac{1}{(2 m_s)^{2g-2}}
\right)
$
}
\\\hline
\parbox[c][3.5em][c]{0em}{}
Kummer 
&
{
$ \displaystyle 
F^{\rm Kum}_0 = 
\sum_{\epsilon \in \{ \pm \}}
\dfrac{(m_0+ \epsilon m_\infty)^2}{2} \log(m_0 + \epsilon m_\infty)
- \dfrac{(2 {m_0})^2}{2} \log{(2 m_0)}
$
}
\\[-.em]
\parbox[c][3.5em][c]{0em}{}
&
{
\hspace{-3.2cm}$F^{\rm Kum}_1 = 
- \dfrac{1}{12} \log \left(\dfrac{(m_0 + m_\infty)(m_0 - m_\infty)}{m_0}\right)$
}
\\[-.em]
\parbox[c][3.5em][c]{0em}{}
&
{\hspace{-0.3cm} $ \displaystyle 
F_g^{\rm Kum} = 
\dfrac{B_{2g}}{2g (2g-2)}
\left(
\sum_{\epsilon \in \{\pm \}}
\dfrac{1}{(m_0 + \epsilon m_\infty)^{2g-2}}
- 
\dfrac{1}{(2 m_0)^{2g-2}}
\right)
$
}
\\\hline
\parbox[c][3.5em][c]{0em}{}
Legendre 
& $\displaystyle  
F_0^{\rm Leg} = 2 m_\infty^2 \log(m_\infty) - 
\frac{(2m_\infty)^2}{2} \log(2 m_\infty)$
\qquad\quad~
$F_1^{\rm Leg} = - \dfrac{1}{12} \log \left( \dfrac{m_\infty^4}{2m_\infty} \right)$
\\[-.5em]
\parbox[c][3.5em][c]{0em}{}
&
{$ 
F_g^{\rm Leg} = 
\dfrac{B_{2g}}{2g(2g-2)} \left( \dfrac{4}{m_\infty^{2g-2}} 
 - \dfrac{1}{(2 m_\infty)^{2g-2}} \right)
$
}
\\\hline
\parbox[c][3.5em][c]{0em}{}
Bessel 
& $F_0^{\rm Bes} = - \dfrac{(2m_0)^2}{2} \log \left(2 m_0 \right)$ 
\qquad\quad~~
$F^{\rm Bes}_1 = \dfrac{1}{12} \log\left( 2m_0 \right)$
\\[-.5em]
\parbox[c][3.5em][c]{0em}{}
&
{$ 
F_g^{\rm Bes} = 
- \dfrac{B_{2g}}{2g (2g-2)} \dfrac{1}{(2m_0)^{2g-2}}
$
}
\\\hline
\parbox[c][3.5em][c]{0em}{}
Whittaker 
& $F_0^{\rm Whi} = 
m_\infty^2 \log (m_\infty)$  
\qquad\quad~~
$F^{\rm Whi}_1 = - \dfrac{1}{6} \log\left(m_\infty \right)$
\\[-.5em]
\parbox[c][3.5em][c]{0em}{}
&
{$ 
F_g^{\rm Whi} = \dfrac{B_{2g}}{2g (2g-2)} \dfrac{2}{m_\infty^{2g-2}}
$
}
\\\hline
\parbox[c][3.5em][c]{0em}{}
Weber 
& {$F_0^{\rm Web} = 
\dfrac{m_\infty^2}{2} \log m_\infty$  
\qquad\quad~~ 
$F^{\rm Web}_1 = - \dfrac{1}{12} \log\left( m_\infty \right)$
}
\\[-.5em]
\parbox[c][3.5em][c]{0em}{}
&
{$ 
F_g^{\rm Web} = \dfrac{B_{2g}}{2g (2g-2)} \dfrac{1}{m_\infty^{2g-2}}
$
}
\\\hline
\end{tabular}
\end{center}
\caption{$F_0$, $F_1$ and $F_g$ ($g \ge 2$) for the spectral curves $\Sigma$ 
in Table \ref{table:classical}. $B_{2g}$ denotes the $2g$-th Bernoulli number 
(see \eqref{def:Bernoulli-number} for the definition).}
\label{table:free-energy-0}
\end{table}

In Table \ref{table:free-energy-0}, we summarize the full expression of $F_g$'s for the other examples in Table \ref{table:classical}. 
We have excluded the Airy and degenerate Bessel cases since the free energy is trivial.
We note that the Weber curve appears as the spectral curve of the Gaussian matrix model, where the $g$-th free energy $F_g^{\rm Web}$ computes the Euler characteristic of the moduli space of genus $g$ Riemann surfaces \cite{HZ, Penner}.

\begin{rem}
In \cite{EO}, the genus $0$ free energy $F_0$ was defined through a regularization of a divergent integral of $W_{0,1}(z)$. 
The regularization process has an ambiguity because we must choose branches of logarithms in the quantity $\mu_\alpha$ appearing in the definition of $F_0$ (see \cite[eq.\,(4.14)]{EO}). 
For our spectral curves of hypergeometric type, we can verify that the ambiguity disappears if we mod out $F_0$ by polynomials of mass parameters at most degree two. 
Similarly, the genus $1$ free energy $F_1$ is defined up to additive constants since the Bergman $\tau$-function appearing in its definition is too. 
Our Table \ref{table:free-energy-0} shows expressions of $F_0$ and $F_1$ modulo these ambiguities, hence there are discrepancies between our presentations and the ones in \cite{IKoT-I, IKoT-II}. 
\end{rem}

From Table \ref{table:free-energy-0}, we can observe that the free energies $F_{\rm TR}({\bm m}, \hbar)$ have a superposition structure. That is, the free energy is described in the following schematic form 
\begin{equation}  \label{eq:free-energy-decomposition}
F_{\rm TR}({\bm m}, \hbar) \equiv 
\sum_{j_{\rm Web}} F^{\rm Web}(m_{j_{\rm Web}}, \hbar) 
+ \sum_{j_{\rm Whi}} F^{\rm Whi}(m_{j_{\rm Whi}}, \hbar)
+ \sum_{j_{\rm Bes}} F^{\rm Bes}(m_{j_{\rm Bes}}, \hbar).
\end{equation}
modulo the ambiguities in the first two terms $\hbar^{-2} F_0 + \hbar^{0} F_1$. 
For example, we have
\begin{equation}
F^{\rm HG}_{\rm TR}({\bm m}, \hbar) \equiv 
\sum_{\epsilon, \epsilon' \in \{\pm \}} 
F^{\rm Web}_{\rm TR}(m_0 + \epsilon \, m_1 + \epsilon' m_{\infty}, \hbar)
+ \sum_{s \in \{0,1,\infty \}} F^{\rm Bes}_{\rm TR}(m_s, \hbar). 
\end{equation}
This expression will be effectively used when we compute the Borel sum of the free energy of the partition function in the second paper.


\section{BPS structures and spectral networks}
\label{sec:BPS}

We now turn to the BPS side of the story. Let us recall several facts regarding BPS structures which are relevant for our paper, following \cite{Bri19}.

\subsection{Definition of BPS structure}

\begin{dfn}[{\cite[Definition 2.1]{Bri19}}]
 A \emph{BPS structure} is a tuple $(\Gamma,Z,\Omega)$ of the following data:
\begin{itemize}
\item a free abelian group of finite rank $\Gamma$ equipped with an antisymmetric pairing 
\begin{equation}
\langle \cdot , \cdot \rangle : \Gamma \times \Gamma \rightarrow \mathbb{Z},
\end{equation} 
\item a homomorphism of abelian groups $Z \colon \Gamma \rightarrow \mathbb{C}$, and
\item a map $\Omega : \Gamma \to {\mathbb Q}$,
\end{itemize}
satisfying the conditions
\begin{itemize}
\item \emph{Symmetry}: $\Omega(\gamma)=\Omega(-\gamma)$ for all $\gamma \in \Gamma$.
\item \emph{Support property}: for some (equivalently, any) choice of norm $\|\cdot\|$ on 
$\Gamma \otimes \mathbb{R}$, there is some $C>0$ such that
\begin{equation} \label{eq:support-property}
\Omega(\gamma)\neq  0 \implies |Z(\gamma)|>C\cdot \|\gamma\|.
\end{equation}
\end{itemize}
We call $\Gamma$ the \emph{charge lattice}. The homomorphism $Z$ is called the \emph{central charge}. 
The rational numbers $\Omega(\gamma)$ are called the \emph{BPS indices} or \emph{BPS invariants}.
\end{dfn}

Let us recall some useful terminology \cite{Bri19} for discussing BPS structures. 
\begin{itemize} 
\item
An \emph{active class} is an element $\gamma \in \Gamma$ which has 
nonzero BPS invariant, $\Omega(\gamma)\neq 0$.
\item 
A \emph{BPS ray} (or an \emph{active ray}) is a subset of ${\mathbb C}^\ast$  
which can be written as $\ell_\gamma = Z(\gamma) \, {\mathbb R}_{> 0}$ for some active class $\gamma$.


\end{itemize}

We note that the central charge $Z(\gamma)$ for an active class $\gamma$ is nonzero 
due to the support property. 

We can also consider certain classes of BPS structures with nice properties:
A BPS structure $(\Gamma, Z, \Omega)$ is said to be
\begin{itemize} 
\item 
\emph{finite} if there are only finitely many active classes,
\item
\emph{uncoupled} if $\langle \gamma_{1},\gamma_{2}\rangle=0$ holds whenever 
$\gamma_{1},\gamma_{2}$ are both active classes (otherwise, it is \emph{coupled})
\item
\emph{integral} if the BPS invariant $\Omega$ takes values in ${\mathbb Z}$.
\end{itemize}

In order to formulate the BPS Riemann-Hilbert problem in the next section, we may weaken these conditions. We call a BPS structure \emph{ray-finite} if there are finitely many active classes $\gamma$ with $Z(\gamma)\in \ell$ for a given BPS ray $\ell$, and \emph{generic} if $\langle \gamma_1,\gamma_2\rangle=0$ whenever $\gamma_1, \gamma_2$ are active and $Z(\gamma_1)$ and $Z(\gamma_2)$ lie on the same BPS ray.

The analysis of the BPS structures and the corresponding Riemann-Hilbert problem given in the next section is much more difficult in 
the coupled case, and all our calculations in this paper will be for 
finite, uncoupled and integral BPS structures.

\subsection{BPS Riemann-Hilbert problem}
\label{section:BPS-RHP}

Given a BPS structure, we may consider a certain Riemann-Hilbert type problem on ${\mathbb C}^\ast$, which we call the \emph{BPS Riemann-Hilbert problem}, following \cite{Bri19} (see also \cite{Bri20, Bar, BBS, Stoppa, Bri19-2, All19, Bri20-2} for further studies). 
Roughly speaking, this involves seeking a collection of functions $X_\gamma$, one for each $\gamma \in \Gamma$, that jumps whenever $\hbar$ crosses a BPS ray. The jump factor on a BPS ray $\ell$ is given by a certain {\em BPS automorphism} which encodes the BPS invariants $\Omega(\gamma)$ of those $\gamma \in \Gamma$ whose central charge $Z(\gamma)$ lies along $\ell$.
More precisely, we seek a function $X$ with values in the {\em twisted torus}
\begin{align}
\mathbb{T}_- := \left\{g : \Gamma \rightarrow \mathbb{C}^\ast  \; | \; 
g{(\gamma_1+\gamma_2)}= (-1)^{\langle\gamma_1,\gamma_2\rangle}g(\gamma_1)g(\gamma_2)\right\}
\end{align}
and denote $X_\gamma:=X(\gamma)$.

Here we give a condensed formulation of the BPS Riemann-Hilbert problem associated with a ray-finite, generic, integral BPS structure, following \cite{Bri19}. Defining the BPS automorphism needs careful analysis of convergence issues, and we need to work with some completion of the twisted torus in general. However, thanks to \cite[Proposition 4.2]{Bri19}, 
the BPS automorphisms for ray-finite, generic, and integral BPS structures are given by explicit birational automorphisms.

\begin{prob}[{\cite[Problem 3.1]{Bri19}}]
Let $(\Gamma, Z, \Omega)$ be a ray-finite, generic, and integral BPS structure, and fix a ``constant" $\xi\in\mathbb{T}_{-}$. Find a piecewise holomorphic map $X : {\mathbb C}^\ast \to {\mathbb T}_-$
such that for any $\gamma \in \Gamma$, we have:
\begin{itemize}
    \item[(i)] \emph{Jumping.} 
    As $\hbar\in \mathbb{C}^*$ crosses a BPS ray $\ell$ in the clockwise direction, $X_\gamma$ jumps according to the \emph{BPS automorphisms}:
       \begin{equation} \label{eq:RHP-jump-explicit-simplified}
    X^{+}_{\gamma} = 
    X^{-}_{\gamma} \,
    \prod_{\substack{\gamma' \in \Gamma \\ Z(\gamma') \in \ell}} 
    (1 - X^{-}_{\gamma'})^{\Omega(\gamma') \langle \gamma', \gamma \rangle},
    \end{equation}
     for $0 < |\hbar|\ll 1$, where $X^\pm_\gamma$ denotes the function before and after the jump.
    
    \item[(ii)] \emph{Asymptotics at 0.} 
    As $\hbar\rightarrow 0$, the leading asymptotics are controlled by $Z(\gamma)$:
    \begin{equation}
   e^{Z(\gamma)/\hbar} \, X_{\gamma}(\hbar)\rightarrow \xi(\gamma)
    \end{equation}
    
    \item[(iii)] \emph{Growth at $\infty$.} 
    The function $X_\gamma$ grows at most polynomially as $\hbar \rightarrow \infty$.
\end{itemize}
\end{prob}

This type of Riemann-Hilbert problem has been considered in Gaiotto-Moore-Neitzke's works \cite{GMN08, GMN09, GMN12}, where the {\em Fock-Goncharov coordinates} of the moduli space of framed local system solve the jump condition. In this context, the solution was used to construct semi-explicitly the hyperkahler metric on a Hitchin moduli space associated to the given theory. As is mentioned in \cite[\S 7]{Bri19}, the BPS Riemann-Hilbert problem is also closely related to the Stokes structure of the {\em Voros symbols} in the theory of exact WKB analysis of a Schr\"odinger-type ODE discussed in \cite{IN14, IN15}. 
See \cite{I16, All18, Kuw20} for a relation between Fock-Goncharov coordinates and Voros symbols, and see also \cite{All19} for the further development in this direction. In our context, the BPS structure for the BPS Riemann-Hilbert problem arises from the meromorphic quadratic differential which appears in the classical limit of the ODE. 

In the rest of this section, we will focus on how the BPS structures are constructed from a given meromorphic quadratic differential. We will discuss in greater detail the corresponding BPS Riemann-Hilbert problem (with a precise formulation) together with its relation to the Voros coefficient of the quantum curves in the sequel to this paper.


\subsection{WKB spectral networks (Stokes graphs)}

We recall the notion of a \emph{WKB spectral network} \cite{GMN12} associated to a quadratic differential\footnote{A more general construction can be made for arbitrary tuples of meromorphic $k$-differentials, but it is more complicated. We only consider quadratic differentials in this paper, so we omit the details.} on a compact Riemann surface $X$, also called a \emph{Stokes graph} in WKB literature (e.g., \cite{KT98}). 

\subsubsection{\bf Trajectories of meromorphic quadratic differentials}

Fix a meromorphic quadratic differential $\varphi = Q(x)dx^2$ on $X$ with at least one pole of order greater than one. 
We also assume that $\varphi$ has only simple zeros, and has at least one zero or one simple pole (i.e., we assume that $\varphi$ is a GMN differential in the sense of \cite[Definition 2.1]{BS13}). 
We will also use the same notations ($\Sigma, P, T, {\rm Crit}$ etc.) for the notions used in \S \ref{section:hypergeometric-curvs-structure}. 

\begin{dfn}
For a given $\vartheta \in {\mathbb R}$, the equation 
\begin{equation}
\mathrm{Im}\, e^{-i \vartheta} \int^{x}{ \sqrt{Q(x)} dx} = {\rm constant} 
\label{eq:trajectories}
\end{equation}
defines a foliation $\mathcal{F}_{\vartheta}$ on $X \setminus {\rm Crit}$.
A {\em trajectory of phase $\vartheta$} is any maximal leaf of the foliation $\mathcal{F}_{\vartheta}$. 
The trajectories of phase $\vartheta=0$ are called \emph{horizontal trajectories}, and trajectories of phase $\vartheta=\pi/2$ are called \emph{vertical trajectories}. 
\end{dfn}

{ 
At any point on $X \setminus {\rm Crit}$, there exists a distinguished local coordinate $w$ defined (up to the sign) by 
\begin{equation} \label{eq:distinguished-coordinate}
w(x) := \int^{x} \sqrt{Q(x)} \, dx,
\end{equation}
and trajectories are pullbacks of straight lines of phase $\vartheta$ in the $w$-plane by \eqref{eq:distinguished-coordinate}. 
In other words, trajectories of phase $\vartheta$ are the curves along which the integrand of \eqref{eq:distinguished-coordinate} has constant phase; $\sqrt{Q(x)}\, dx = e^{i \vartheta} |\sqrt{Q(x)}|\, |dx|$.
}
The general structure of trajectories of quadratic differentials is described in detail in the classic book \cite{St84} by K. Strebel (see also \cite{BS13}). 
Note that two non-overlapping trajectories of the same phase may never intersect; furthermore, any two trajectories of phases $\vartheta_1,\vartheta_2$ that intersect will always do so at an angle of $\vartheta_2-\vartheta_1$ for some choice of orientation.

It is known that every trajectory falls into exactly one of the following types
(\cite{St84}; see also \cite[\S 3.4]{BS13}):
\begin{enumerate}[i)]
\item 
{\em saddle trajectories} approache finite critical points at both ends;
\item 
{\em separating trajectories} approach critical points at each end, 
one finite and one infinite;
\item 
{\em generic trajectories} approach infinite critical points at both ends;
\item
{\em closed trajectories} are simple closed curves in $X \setminus {\rm Crit}$; 
\item 
{\em recurrent trajectories} are ``recurrent" in at least one direction.
\end{enumerate}

Since $X = {\mathbb P}^1$ and the number of poles of $\varphi$ are at most three in our examples, the Jenkins three pole theorem guarantees that recurrent trajectories do not appear (c.f., \cite{Jenkins}, \cite[Theorem 15.2]{St84}), so we omit their precise definition.

We will be primarily interested in trajectories with at least one end approaching a turning point (i.e., saddle and separating trajectories). We call these \emph{critical trajectories}.

It turns out the local structure of trajectories around any given point in $X$ can be classified fully. Here we summarize these local structures, following \cite{St84}. If $x$ is not a critical point of $\varphi$, we call it a \emph{regular point}. We consider a zero of order $k$ to be a critical point of order $k$, a pole of order $k$ 
a critical point of order $-k$, and may regard a regular point as a ``critical point" of order $0$.
Then, we have the following normal forms:

\begin{prop}[\cite{St84}]
Around any critical point (or regular point) of order $k$, there exists a neighbourhood $U$ and a holomorphic coordinate $z$ such that locally on $U$, 
\begin{equation}
\varphi 
=\begin{cases}
    z^k dz^2, \quad k \geq -1 \text{ or } k \text{ odd }\\
    \frac{r}{z^2}dz^2, \quad k=-2 \\ 
    (z^{k/2}+\frac{s}{z})^2 dz^2, \quad k <-2 \text{ and } k \text{ even}\quad 
    \end{cases}
\end{equation}
where $r \in \mathbb{C}^\ast$, $s\in \mathbb{C}$.
\end{prop}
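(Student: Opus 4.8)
The plan is to reduce the classification to the standard local normal form of a meromorphic quadratic differential near a point, which is an elementary computation in a local holomorphic coordinate. Near a point $x_0 \in X$ choose a holomorphic coordinate $t$ vanishing at $x_0$. Writing $\varphi = Q(x)\,dx^2$, the order-$k$ hypothesis means $Q = t^k \cdot u(t)\,dt^2$ where $u$ is holomorphic and nonvanishing at $t=0$ when $k \geq -1$, and more care is needed when $k \leq -2$ because then $\sqrt{\varphi}$ has a pole of order $\geq 1$ and its residue becomes a genuine invariant.

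\emph{Step 1 (regular points and zeros, $k \geq 0$).} Here $u(0) \neq 0$, so $\sqrt{\varphi} = t^{k/2}\sqrt{u(t)}\,dt$. Define a new coordinate by integrating: $z := \left( \tfrac{k+2}{2} \int_0^t \tau^{k/2}\sqrt{u(\tau)}\,d\tau \right)^{2/(k+2)}$, which makes sense as a holomorphic function of $t$ (the integral is $t^{(k+2)/2}$ times a nonvanishing holomorphic function, so raising to the $2/(k+2)$ power and then taking the appropriate branch gives something holomorphic with nonvanishing derivative at $0$). A direct differentiation shows $z^k\,dz^2 = \varphi$ in this coordinate. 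For $k$ a nonnegative even integer one must choose a branch of the $(k+2)$-th root, but this only affects $z$ by a sign, which is harmless.

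\emph{Step 2 (simple poles, $k = -1$).} Same as Step 1: $\int_0^t \tau^{-1/2}\sqrt{u(\tau)}\,d\tau$ converges and equals $t^{1/2}$ times a holomorphic nonvanishing function; setting $z = \left(\tfrac12 \int_0^t \tau^{-1/2}\sqrt{u(\tau)}\,d\tau\right)^{2}$ yields $\varphi = z^{-1}\,dz^2$, i.e. the $k$-odd branch of the formula with $k=-1$.

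\emph{Step 3 (double poles, $k = -2$).} Now $\varphi = t^{-2} u(t)\,dt^2$ with $u(0) \neq 0$. The residue $r := u(0)$ (equivalently $\left(\Res_{t=0}\sqrt{\varphi}\right)^2$, up to normalization) is an invariant. Write $\sqrt{u(t)} = \sqrt{r}\,(1 + \text{h.o.t.})$; then $\sqrt{\varphi} = \sqrt{r}\,\tfrac{dt}{t}(1+\text{h.o.t.})$, so $w := \exp\!\left(\tfrac{1}{\sqrt r}\int^t \sqrt{\varphi}\right)$ is single-valued, vanishes simply at $t=0$, and has nonvanishing derivative; in the coordinate $z=w$ one gets $\varphi = r\,z^{-2}\,dz^2$.

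\emph{Step 4 (higher even poles, $k \leq -2$ even).} Set $k = -2n$ with $n \geq 2$. Write $\varphi = t^{-2n} u(t)\,dt^2$, $u(0)\neq 0$. The leading behaviour of $\sqrt{\varphi}$ is $\sqrt{u(t)}\,t^{-n}\,dt$, whose antiderivative is $\tfrac{-1}{n-1}\sqrt{u(0)}\,t^{-(n-1)} + \dots + s\log t + (\text{holomorphic})$ for a uniquely determined coefficient $s$ (the residue of $\sqrt{\varphi}$ at $t_0$). The claim is that one can find a coordinate $z$ with $\sqrt{\varphi} = \pm(z^{k/2} + s z^{-1})\,dz$, i.e. $\int\sqrt{\varphi} = \tfrac{1}{(k/2)+1}z^{(k/2)+1} + s\log z + \text{const}$. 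One produces $z$ by matching the singular-plus-logarithmic part of the distinguished coordinate $w(t) = \int^t\sqrt{\varphi}$: since $w$ has an expansion of the form $c_{-(n-1)}t^{-(n-1)} + \dots + s\log t + O(t)$ with $c_{-(n-1)} \neq 0$, there is a local biholomorphism $t \mapsto z$ (solve for $z$ from $w = \tfrac{-1}{n-1}z^{-(n-1)} + s\log z$ by the implicit function theorem / formal inversion, using $c_{-(n-1)}\neq 0$ to invert the leading term) carrying $w$ to exactly $\tfrac{-1}{n-1}z^{-(n-1)} + s\log z$; squaring $dw$ gives the stated normal form. The sign $\pm$ records the two choices of $\sqrt{\varphi}$.

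\emph{Main obstacle.} The genuinely delicate case is Step 4: one must check that after removing the leading pole term, the remaining freedom in the distinguished coordinate can be used to kill \emph{all} of the intermediate negative powers $t^{-(n-2)}, \dots, t^{-1}$ in $\int\sqrt{\varphi}$ simultaneously while preserving the logarithmic coefficient $s$ — i.e. that the only surviving invariants are $r$ (or here $s$) together with the integer $k$. This is a formal-power-series manipulation: expand $w(t)$, posit $z = t(1 + a_1 t + a_2 t^2 + \cdots)$, and solve order by order; the equations are triangular in the $a_j$ with invertible leading coefficients, so they have a unique formal solution, which converges by the standard argument (or one simply invokes Strebel \cite{St84}, as the statement is attributed there). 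Steps 1--3 are routine once one notices the ``integrate $\sqrt{\varphi}$ and take a root'' trick, and the even/odd dichotomy in the statement is exactly the dichotomy between ``$\int\sqrt{\varphi}$ is (a root of) a single-valued function'' and ``$\int\sqrt{\varphi}$ genuinely picks up a residue''.
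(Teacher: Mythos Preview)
The paper itself gives no proof of this proposition: it is simply quoted from Strebel \cite{St84} and used as a black box, so there is no ``paper's own proof'' to compare against. Your sketch is essentially the standard argument one finds in Strebel (integrate $\sqrt{\varphi}$ to produce the distinguished local coordinate, then absorb the holomorphic unit into a coordinate change), and the division into cases according to whether $\sqrt{\varphi}$ has a residue is exactly the right organizing principle.

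One small gap worth noting: your Steps 1--4 explicitly cover $k \geq 0$, $k = -1$, $k = -2$, and even $k \leq -4$, but you never treat odd $k \leq -3$ (poles of odd order $\geq 3$), which also falls under the first branch $\varphi = z^k\,dz^2$ of the statement. The same ``integrate and take a root'' trick from Step~1 works here too --- since $k$ is odd, $\sqrt{\varphi} = t^{k/2}\sqrt{u(t)}\,dt$ is double-valued with no residue, its antiderivative is $\tfrac{2}{k+2}t^{(k+2)/2}(\text{unit})$ (now with $(k+2)/2$ a negative half-integer), and setting $z = \bigl(\tfrac{k+2}{2}\int^t\sqrt{\varphi}\bigr)^{2/(k+2)}$ again gives a local biholomorphism with $\varphi = z^k\,dz^2$ --- but you should say so explicitly, since as written your case analysis is not exhaustive. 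Your closing remark about the even/odd dichotomy shows you understand why this case is unobstructed, but it does not substitute for actually carrying it out.
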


It is easy to see the structure of the trajectories of these normal forms, some examples of which are depicted in Figures \ref{fig:turningpoints}, \ref{fig:secondorder} and \ref{fig:higherorder}. We depict the behaviour of both horizontal (green) and vertical (blue) foliations, and draw the critical trajectories in black. At poles of order greater than $k >2$, all horizontal trajectories asymptote to one of $k-2$ \emph{asymptotic directions} (which are themselves trajectories), which we depict in red.

\begin{figure}[h]
    \centering
    \begin{subfigure}[t]{.37\textwidth}
        \centering
        \includegraphics[width=\linewidth]{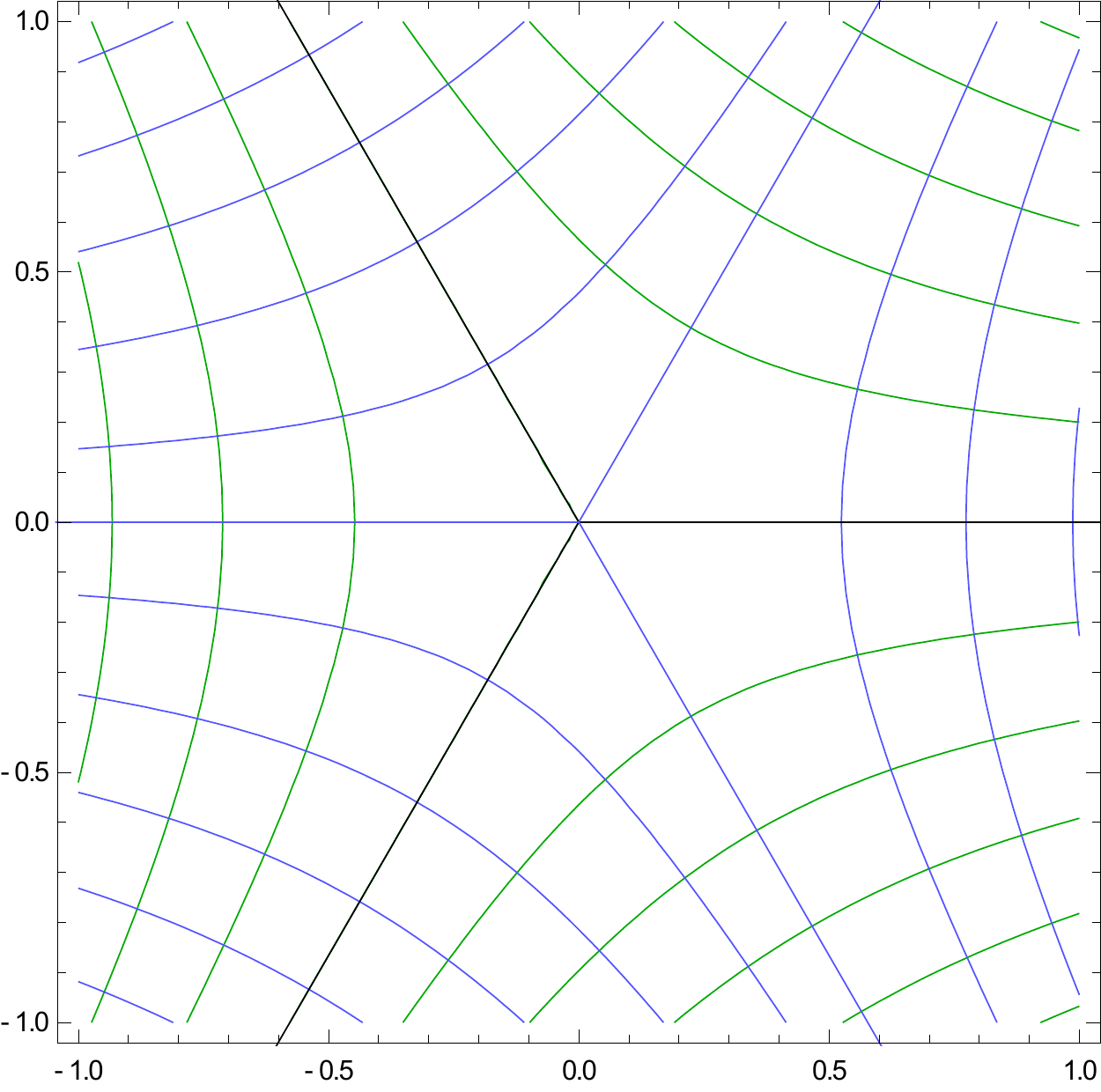}
           \caption{Local behaviour around a simple zero, where three critical trajectories emanate. $Q(x)=x$ is plotted here.}
      \label{fig:simplezero}
    \end{subfigure}
    \hspace{1.5cm}                  
    \begin{subfigure}[t]{.37\textwidth}
        \centering
        \includegraphics[width=\linewidth]{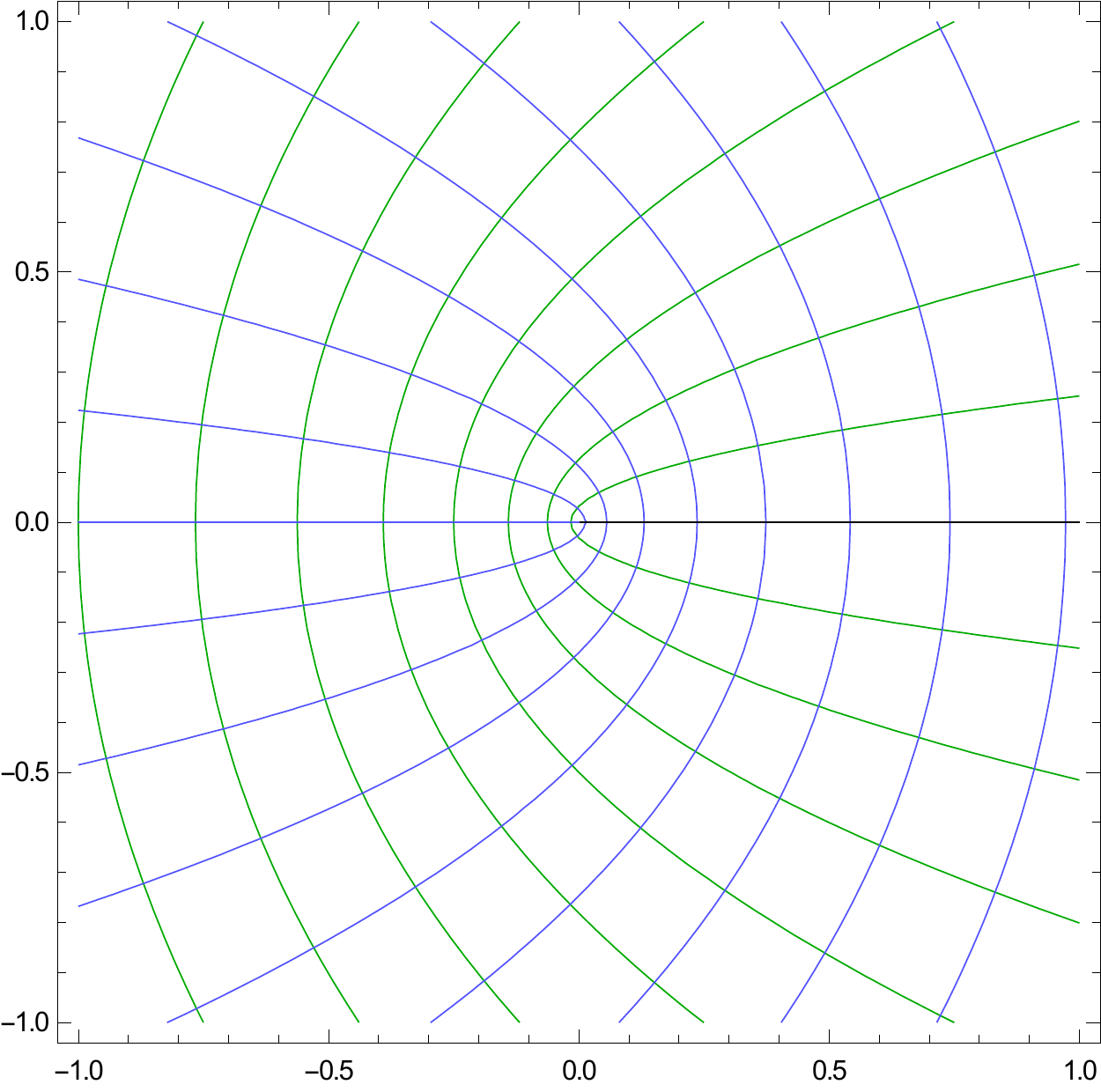}
        \caption{Trajectories around a simple pole, where a single critical trajectory emanates. $Q(x)=1/x$ is plotted here.}
      \label{fig:simplepole}
    \end{subfigure}
    \caption{Trajectory structure around simple zeroes and simple poles of $Q(x)$. Green and blue curves are horizontal and vertical trajectories respectively, the black curve are critical trajectories.}
    \label{fig:turningpoints}
\end{figure}

\begin{figure}[h]
    \centering
    \begin{subfigure}[t]{.20\textwidth}
        \centering
        \includegraphics[width=\linewidth]{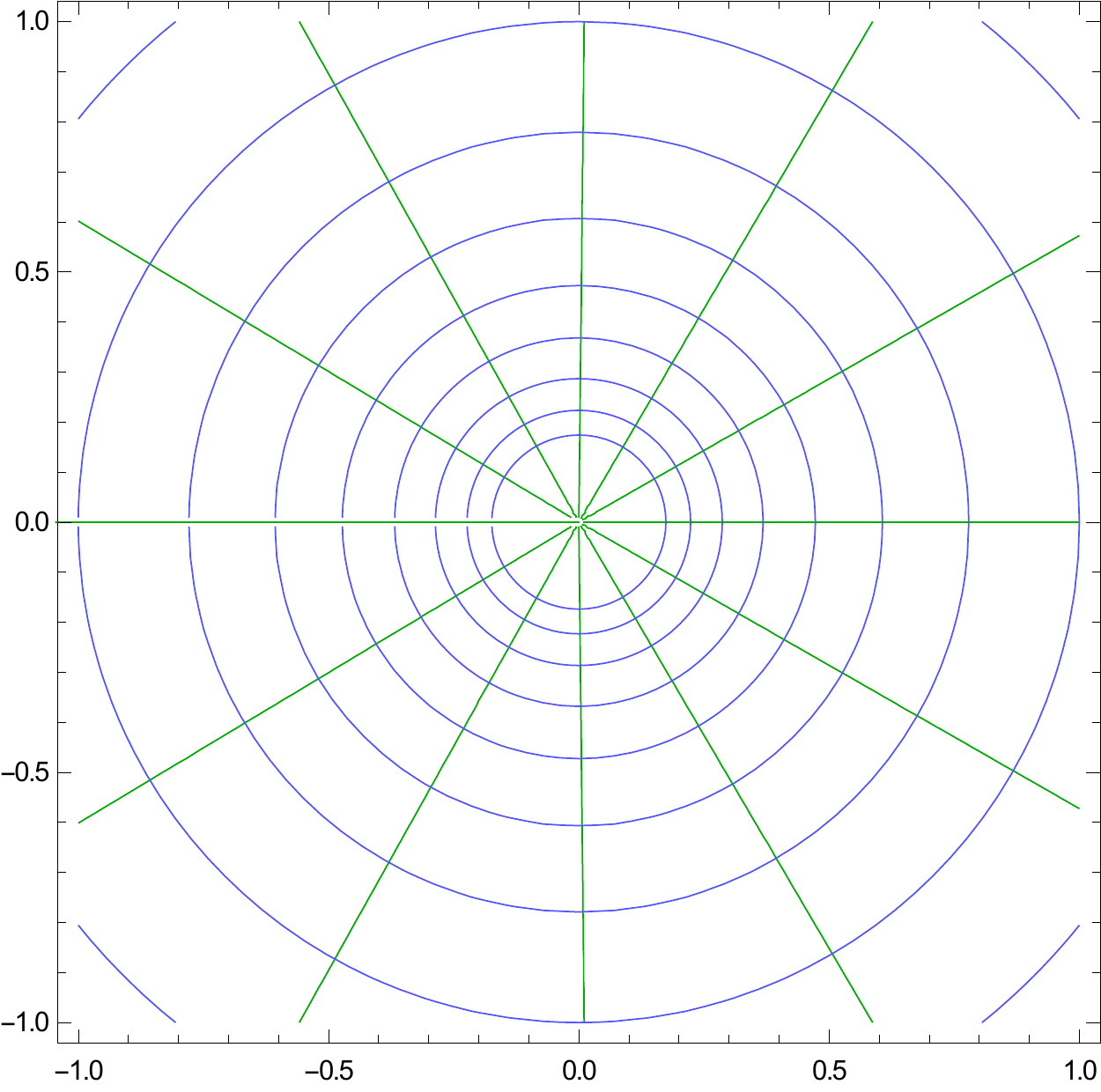}
           \caption{$r \in \mathbb{R}_{>0}$}
      \label{fig:secondorder1}
    \end{subfigure}
    \hfill
    \begin{subfigure}[t]{.20\textwidth}
        \centering
        \includegraphics[width=\linewidth]{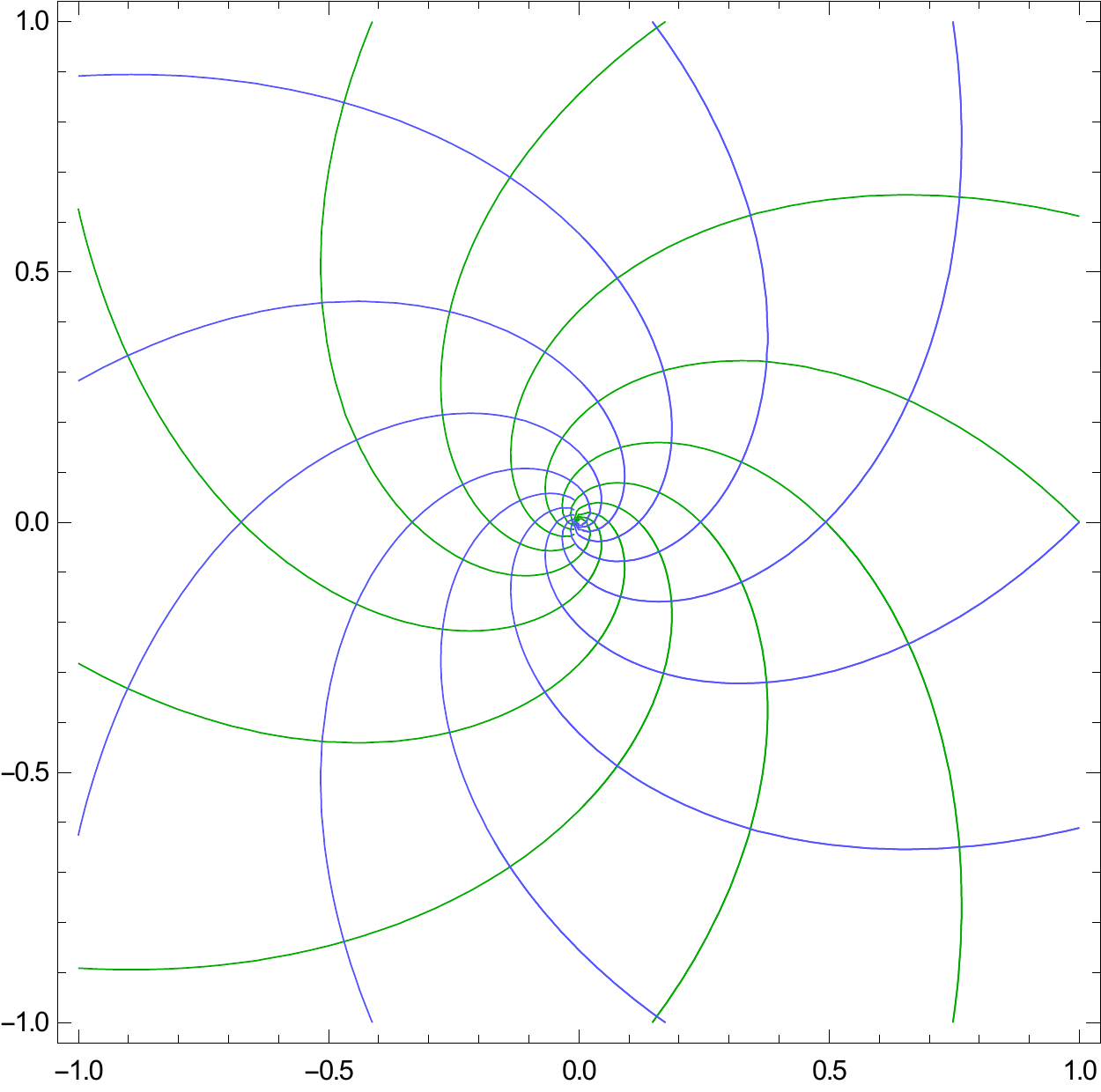}
           \caption{$r \in i\mathbb{R}_{>0}$}
      \label{fig:secondorder2}
    \end{subfigure}
    \hfill
    \begin{subfigure}[t]{.20\textwidth}
        \centering
        \includegraphics[width=\linewidth]{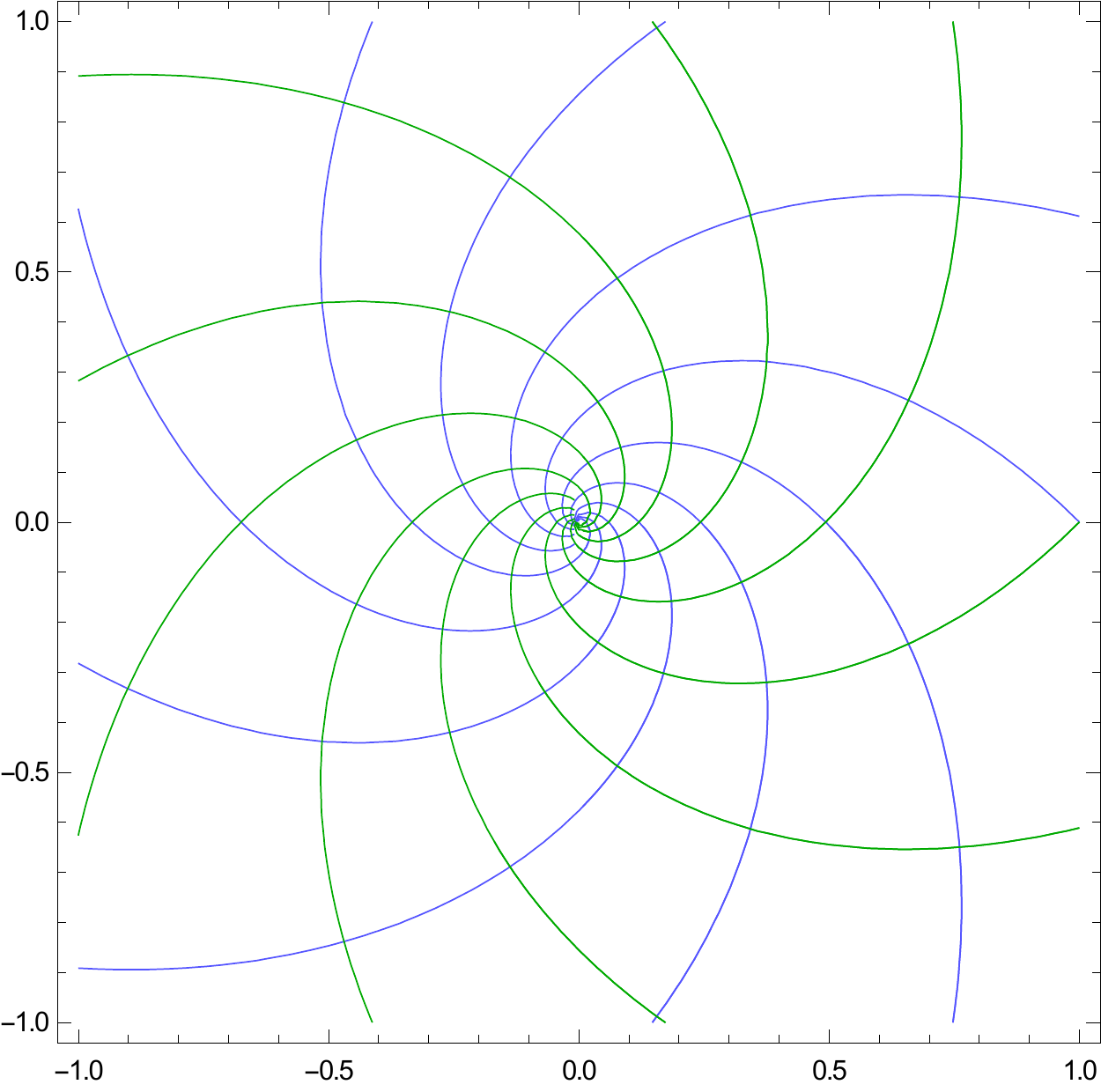}
           \caption{$r \in -i\mathbb{R}_{>0}$}
      \label{fig:secondorder3}
    \end{subfigure}
    \hfill
       \begin{subfigure}[t]{.20\textwidth}
        \centering
        \includegraphics[width=\linewidth]{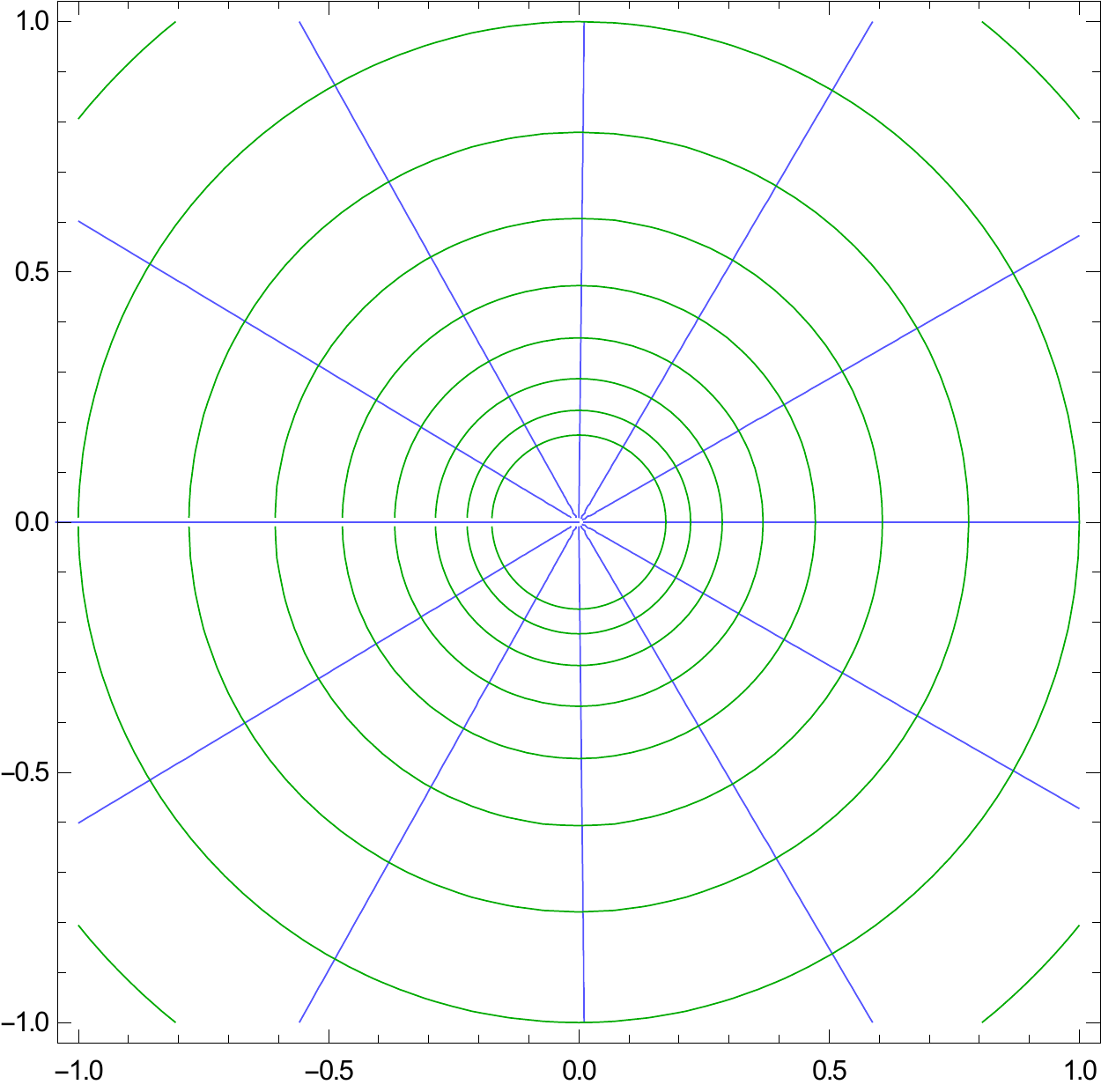}
           \caption{$r\in -\mathbb{R}_{>0}$}
      \label{fig:secondorder4}
    \end{subfigure}
            \caption{Trajectory structures around a second order pole.}
    \label{fig:secondorder}
\end{figure}

\begin{figure}[h]
    \centering
    \begin{subfigure}[t]{.43\textwidth}
        \centering
        \includegraphics[width=\linewidth]{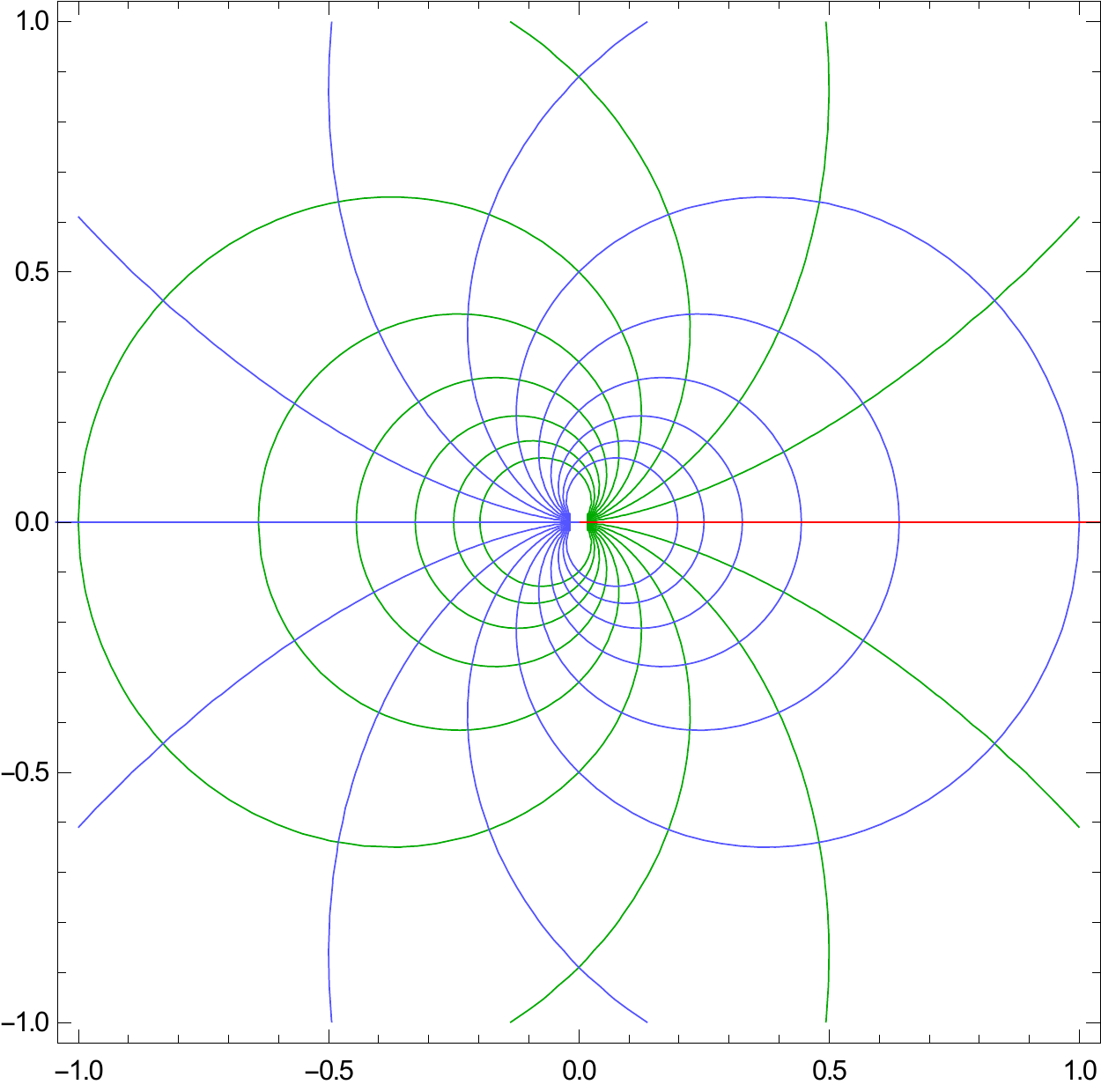}
           \caption{Trajectory structure around a pole of order $3$. The single asymptotic (horizontal) direction is shown in red.}
      \label{fig:thirdorder}
    \end{subfigure}
    \hspace{1.5cm}                  
    \begin{subfigure}[t]{.43\textwidth}
        \centering
        \includegraphics[width=\linewidth]{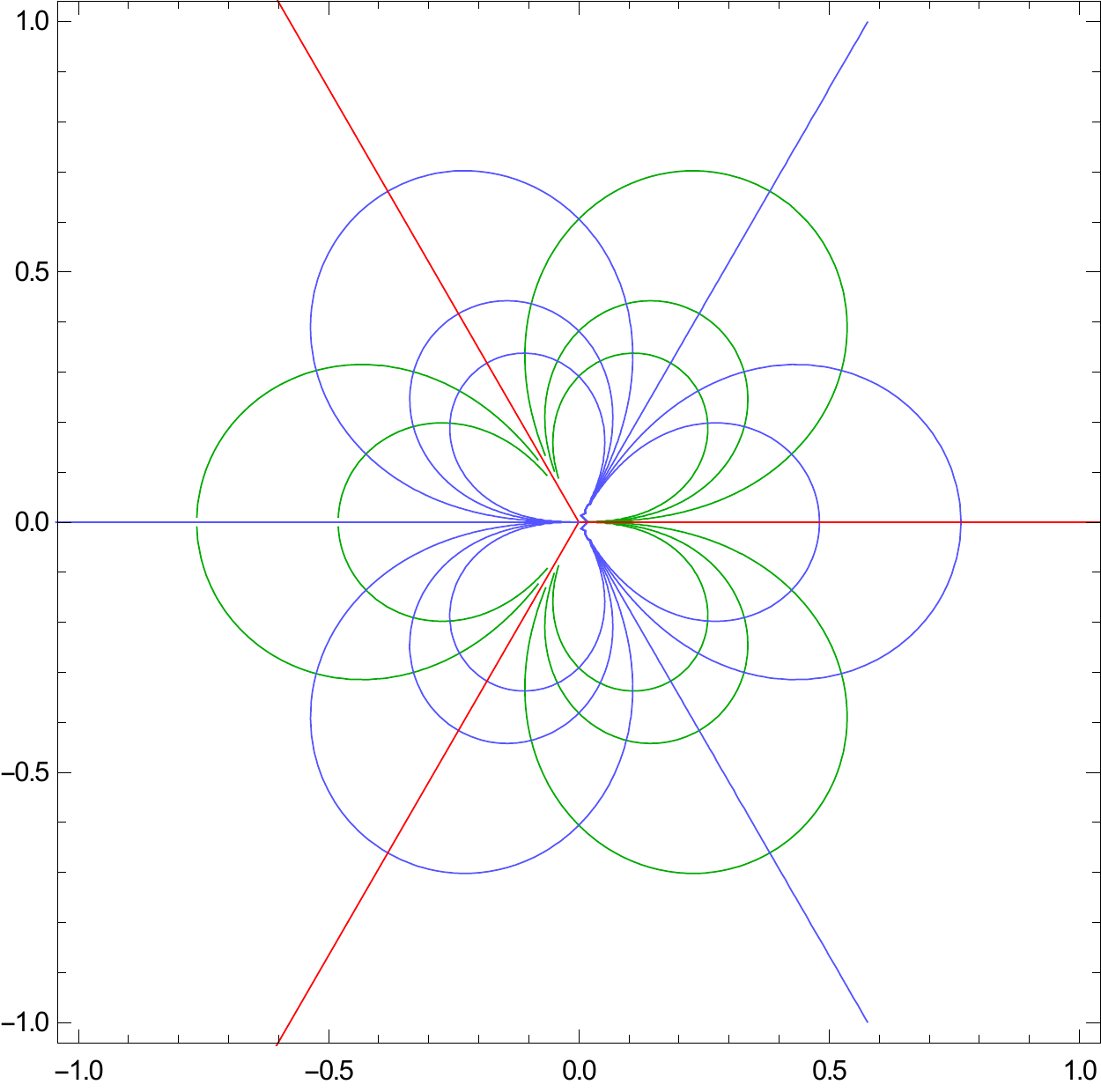}
        \caption{Trajectory structure around a pole of order $5$. The three asymptotic (horizontal) directions are shown in red.}
      \label{fig:fifthorder}
    \end{subfigure}
    \caption{Trajectory structure around higher order poles.}
    \label{fig:higherorder}
\end{figure}

Given a quadratic differential $\varphi$ on a Riemann surface $X$, call a \emph{$\varphi$-polygon} any polygon whose sides are trajectories (of possibly different phases $\vartheta$) and whose vertices are points in $X$ (so they may be poles, zeroes, or regular points). Then

\begin{prop}[Teichm\"uller's lemma {\cite[Theorem 14.1]{St84}}]
Let $(X,\varphi)$ be a compact Riemann surface equipped with a meromorphic quadratic differential $\varphi$.
For any $\varphi$-polygon with vertices $\{v_k\}$, denote by $\beta_k$ the following quantity associated to a vertex $v_k$:
\begin{equation}
    \beta_k:= 1 - \frac{\theta_k}{2\pi} (n_k+2)
\end{equation}
where $\theta_k$ is the interior angle of the $\varphi$-polygon at the vertex and $n_k$ is the order of the (critical or regular) point $v_k$. Then we have
\begin{equation}
\sum_{i \in \mathrm{verts}}{\beta_i} = 2 + \sum_{j \in \mathrm{int}}{n_j}
\end{equation}
where the sum on the left goes over all vertices of the $\varphi$-polygon, and the sum on the right over all (critical or regular) points on the interior of the $\varphi$-polygon.
\label{lem:teich}
\end{prop}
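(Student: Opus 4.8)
The plan is to read the identity as an instance of the Gauss--Bonnet theorem for the flat metric with conical singularities determined by $|\varphi|$. Away from $\mathrm{Crit}$ the metric $|\varphi| = |Q(x)|\,|dx|^{2}$ is flat: in the distinguished coordinate $w = \int^{x}\sqrt{Q}\,dx$ it is the Euclidean metric $|dw|^{2}$, so trajectories of every phase are its geodesics (preimages of straight lines in the $w$-plane) and, in particular, each side of a $\varphi$-polygon is a geodesic segment. Near a point of order $n \ge -1$ one has $\varphi = z^{n}\,dz^{2}$ in a suitable coordinate $z$, the coordinate $w$ is a branch of $\tfrac{2}{n+2}z^{(n+2)/2}$, and the metric acquires a conical singularity of total angle $(n+2)\pi$ (so regular points, $n=0$, are smooth). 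If $\theta_{k}$ denotes the angle of $P$ at $v_{k}$ measured, as in \cite{St84}, in this coordinate $z$ --- so that the genuine metric angle there is $\tfrac{n_{k}+2}{2}\theta_{k}$ --- then $\pi\beta_{k} = \pi - \tfrac{n_{k}+2}{2}\theta_{k}$ is precisely the exterior (turning) angle of $\partial P$ at the corner $v_{k}$.

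Granting this, the proof is immediate. The $\varphi$-polygon $P$ is a topological closed disk; it carries a flat metric ($\int_{P}K\,dA = 0$), its boundary consists of geodesic arcs (vanishing geodesic curvature) meeting at the corners $v_{k}$ with exterior angles $\pi\beta_{k}$, and its interior contains the critical (or regular) points $w_{j}$ of order $n_{j}$, each a conical point contributing a curvature defect $2\pi - (n_{j}+2)\pi = -\pi n_{j}$. Gauss--Bonnet for flat cone surfaces then reads $\sum_{k}\pi\beta_{k} + \sum_{j}(-\pi n_{j}) = 2\pi\,\chi(P) = 2\pi$, which is exactly the assertion. If one prefers to avoid invoking a conical Gauss--Bonnet, the same conclusion follows by triangulating $P$ into finitely many $\varphi$-triangles: this is possible because any two sufficiently close points of $X$ are joined by a trajectory arc of a suitable phase (from a turning point there is a separatrix in every local direction), and one may perturb to avoid passing through other critical points. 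After subdividing so that every metric angle is $< \pi$, each $\varphi$-triangle develops isometrically onto an ordinary Euclidean triangle, whose three angles obey $\sum_{v}\tfrac{n_{v}+2}{2}\theta_{v} = \pi$, i.e. $\sum_{v}\beta_{v}^{(\Delta)} = 2$. Summing over the $N = V + 2I - 2$ triangles (with $V$ the number of vertices of $P$ and $I$ the number of interior marked points), reorganizing the double sum by vertex --- the triangle angles around an interior $w_{j}$ fill its full cone angle $(n_{j}+2)\pi$, those around a boundary $v_{k}$ sum to $\theta_{k}$ --- and using that the numbers $d_{v}$ of triangles meeting each vertex satisfy $\sum_{v}d_{v} = 3N$, all the auxiliary terms cancel and one is left with $\sum_{k}\beta_{k} = 2 + \sum_{j}n_{j}$.

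The main (and essentially only) point requiring real care is the treatment of vertices $v_{k}$ of $P$ lying at poles of $\varphi$ of order $\ge 2$, which sit at infinite distance in the $|\varphi|$-metric so that the ``cone of angle $(n_{k}+2)\pi$'' description fails. For these one reads off the local contribution $\beta_{k}$ from the normal forms $\tfrac{r}{z^{2}}dz^{2}$ and $(z^{k/2}+s/z)^{2}dz^{2}$ of the Proposition above --- equivalently, one checks directly that the developing map sends a neighbourhood of such a vertex in $P$ onto a Euclidean half-strip or sector whose boundary turning is exactly $\pi\beta_{k}$ --- after which the argument proceeds verbatim. This result is classical, due to Teichm\"uller; a proof along precisely these lines is given in Strebel \cite[Theorem 14.1]{St84}, and we have stated it because it is the sole global input on trajectory structure that we use in \S\ref{sec:maincomputation}.
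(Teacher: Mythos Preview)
The paper does not give its own proof of this proposition: it is stated with a citation to Strebel \cite[Theorem 14.1]{St84} and then used as a black box in \S\ref{sec:maincomputation}. Your Gauss--Bonnet argument is correct and is essentially the classical proof one finds in Strebel --- the flat cone-metric interpretation of $|\varphi|$, the identification of $\pi\beta_{k}$ with the exterior turning angle, and the delta-function curvature at interior critical points are exactly the ingredients, and your triangulation count $N = V + 2I - 2$ is the combinatorial shadow of the same Euler-characteristic computation. The one caveat you already flag --- vertices at poles of order $\ge 2$ --- is genuine, since such points lie at infinite $|\varphi|$-distance and the naive cone-angle picture breaks down; Strebel handles this by a separate local analysis (effectively by excising a small neighbourhood and tracking the boundary contribution directly from the normal form), and your sketch of reading off $\beta_{k}$ from the developing map of a half-strip or sector is the right idea, though in a self-contained write-up it would deserve a couple more lines.
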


We will use the result in determining whether a given set of curves can occur as trajectories or not.

\subsubsection{\bf Definition of spectral networks and its properties}

\begin{dfn}
For any fixed $\vartheta \in {\mathbb R}$, we define
the \emph{(WKB-) spectral network} \emph{$\mathcal{W}_\vartheta(\varphi)$ 
of phase $\vartheta$} as the subset of $X$ which consists of all critical trajectories of $\varphi$ of phase $\vartheta$. 
\end{dfn}

The spectral network agrees with the notion of \emph{Stokes graph} in the WKB literature (c.f., \cite{KT98}), and the critical trajectories are called {\em Stokes curves} (these turn out to be the locus where the Borel resummed WKB solutions have a discontinuity). 
Note that we have
\begin{equation} \label{eq:rotation-of-spectral-network}
\mathcal{W}_\vartheta(\varphi) = \mathcal{W}_0(e^{-2 i\vartheta}\varphi), 
\end{equation} 
and hence, varying $\vartheta$ is equivalent to staying at a fixed 
$\vartheta$ and moving in a certain family of quadratic differentials. 
Thus, when we discuss properties of spectral networks, 
we may assume $\vartheta = 0$ without loss of generality.

It is well known that the each connected component of the complement
$X \setminus {\mathcal W}_{0}(\varphi)$ of the spectral network 
are one of the following (c.f., \cite[\S 3.4]{BS13}):
\begin{enumerate}[i)]
\item 
A {\em half plane} is equivalent to a domain
\begin{equation}
\{ w \in {\mathbb C} ~|~ {\rm Im} \, w > c \}
\end{equation}
for some $c \in {\mathbb R}$
equipped with the quadratic differential $dw^2$, 
through the map \eqref{eq:distinguished-coordinate}. 
A half plane only appears around a pole of $\varphi$ of order $\ge 3$; 
there are always $k-3$ half planes around a pole of order $k$.
Its boundary consists of saddle and separating trajectories.

\item
A {\em horizontal strip} is equivalent to a domain
\begin{equation} \label{eq:horizontal-strip}
\{ w \in {\mathbb C} ~|~ c_1 < {\rm Im} \, w < c_2 \}
\end{equation} 
for some $c_1, c_2 \in {\mathbb R}$ ($c_1 < c_2$)
equipped with the quadratic differential $dw^2$, 
through the map \eqref{eq:distinguished-coordinate}. 
Its boundary consists of saddle and separating trajectories.

\item 
A {\em ring domain} is a domain consisting of any point 
$x \in X \setminus {\rm Crit}$ such that the trajectory 
passing through $x$ is a closed trajectory. 
It is equivalent to $\{ z \in {\mathbb C} ~|~ c_1 < |z| < c_2 \}$
for some $c_1, c_2 \in {\mathbb R}$ ($a < b$) 
equipped with the quadratic differential $r dz^2 / z^2$ 
for some $r \in {\mathbb C}^\ast$. 
We call a ring domain {\em degenerate} if $a = 0$, and 
{\em nondegenerate} otherwise.
The boundary of a ring domain consists of unions of saddle trajectories, or saddle trajectories and isolated points when the ring domain is degenerate. 

\item 
A {\em spiral domain} is defined to be the interior of 
the closure of a recurrent trajectory.
\end{enumerate}

{ 
A degenerate ring domain has two boundary components; a (chain of) saddle trajectory and a single point which must be a second order pole $s$ of $\varphi$ due to the Teichm\"uller's lemma (Proposition \ref{lem:teich}).  
It follows from the definition of trajectories that, if a degenerate ring domain appears in the spectral network ${\mathcal W}_\vartheta (\varphi)$ around a second order pole $s$, then 
\begin{equation} \label{eq:loop-condition}
e^{- i \vartheta}\Res_{x=s} \sqrt{Q(x)} \, dx \in i  {\mathbb R}_{\ne 0}
\end{equation}
must be satisfied. Conversely, if $s$ is a double pole of $\varphi$ satisfying the condition \eqref{eq:loop-condition}, then $s$ is one of the boundary component of a degenerate ring domain (c.f., \cite[\S 3.4]{BS13}). 
}

For our examples, we can conclude the following:
\begin{lemm} \label{lemm:no-recurrent-trajectory}
Let $\varphi_\bullet = Q_\bullet(x) \, dx^2$ 
be the one of the quadratic differentials in Table \ref{table:classical}. 
Then, the nondegenerate ring domain and spiral domain never appear 
in the complement of the spectral networks 
${\mathcal W}_\vartheta(\varphi)$ for any $\vartheta \in {\mathbb R}$.
\end{lemm}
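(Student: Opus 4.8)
The plan is to treat spiral domains and nondegenerate ring domains separately. Spiral domains occur only as interiors of closures of recurrent trajectories, and since each $\varphi_\bullet$ lives on $X={\mathbb P}^1$ with at most three poles, the Jenkins three-pole theorem recalled above forbids recurrent trajectories for every $\vartheta$; hence spiral domains never appear. The substantive point is the ring domains, and here I would argue by contradiction, counting branch points of the double cover $\pi\colon\overline{\Sigma}\to X$. The crucial numerical input is that, for each of the nine curves, $\overline{\Sigma}$ has genus $0$, so by Riemann--Hurwitz $\pi$ is branched over exactly two points of $X$.

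So suppose $\mathcal{W}_\vartheta(\varphi_\bullet)$ contains a nondegenerate ring domain $R$. By the classification of complementary domains recalled above, its closure is an annulus whose two --- necessarily disjoint --- boundary components $C_{\rm in},C_{\rm out}$ are chains of saddle trajectories; since a saddle trajectory approaches finite critical points at both ends, each of $C_{\rm in},C_{\rm out}$ passes through at least one turning point, and every turning point is a branch point of $\pi$ (being a simple zero or a simple pole of $\varphi_\bullet$, hence of odd order, under Assumption \ref{ass:genericity}). The first key step is to observe that the core curve $\gamma$ of $R$ has trivial monodromy for $\sqrt{Q_\bullet}$: on a ring domain $\varphi_\bullet$ takes the standard form $r\,du^2/u^2$, on which $\sqrt{\varphi_\bullet}=\sqrt{r}\,du/u$ is single-valued; hence $\gamma$ lifts to two disjoint closed curves on $\Sigma_\bullet$, and each of the two disks into which $\gamma$ divides ${\mathbb P}^1$ contains an even number of branch points of $\pi$.

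To conclude, write ${\mathbb P}^1=D_{\rm in}\sqcup C_{\rm in}\sqcup R\sqcup C_{\rm out}\sqcup D_{\rm out}$, with $D_{\rm in},D_{\rm out}$ the two complementary disks (so no turning point lies in $R$). The branch points of $\pi$ lying ``inside $\gamma$ on the $D_{\rm in}$ side'' are precisely those in $D_{\rm in}$ together with the turning-point corners of $C_{\rm in}$; this set is nonempty (at least one corner) and of even cardinality, hence has at least two elements. Symmetrically the branch points on the $D_{\rm out}$ side number at least two, and since $C_{\rm in}$ and $C_{\rm out}$ are disjoint these two collections of branch points are disjoint. Thus $\pi$ has at least four branch points, contradicting that it has exactly two. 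The same bookkeeping is consistent with \emph{degenerate} ring domains: there one boundary component is a double pole, which is not a branch point, so only the estimate ``$\ge 2$'' remains and no contradiction arises --- which is exactly why the lemma asserts nothing about those.

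I expect the most delicate part to be the ``even number of branch points'' step: one must be slightly careful that the local model $r\,du^2/u^2$ of a ring domain genuinely exhibits $\sqrt{\varphi_\bullet}$ as single-valued along the core, and hence that $\gamma$ bounds on each side a region free of $\sqrt{Q_\bullet}$-monodromy, and one must note that a turning point lying \emph{on} $C_{\rm in}$ is necessarily one of its corners rather than an interior point of a saddle trajectory (which it cannot be, since trajectories avoid critical points). Everything else --- the disjointness of $C_{\rm in}$ and $C_{\rm out}$, the genus-$0$ and hence two-branch-point statement, and the check that Assumption \ref{ass:genericity} keeps all zeros simple and all turning points mutually distinct --- is routine case-by-case inspection of Table \ref{table:classical}.
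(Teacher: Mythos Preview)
Your argument is correct, and it follows a genuinely different route from the paper's own proof. The treatment of spiral domains is identical (both invoke the Jenkins three-pole theorem). For nondegenerate ring domains, however, the paper argues via Teichm\"uller's lemma: since each $\varphi_\bullet$ has at most two turning points, each of the two complementary regions $D_1,D_2$ of a putative nondegenerate ring domain would be forced to contain poles of total order three, which one then checks is impossible by inspecting the pole configurations in Table \ref{table:classical}. Your argument instead exploits the global constraint coming from Riemann--Hurwitz --- that a genus-$0$ double cover of ${\mathbb P}^1$ has exactly two branch points --- together with the observation that the core of a ring domain carries trivial $\sqrt{\varphi}$-monodromy, forcing each side to enclose an even (and nonzero) number of branch points. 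This is arguably cleaner: it avoids Teichm\"uller's lemma and the angle bookkeeping that goes with it, and the only case-checking needed is the verification that $\overline{\Sigma}_\bullet$ has genus $0$, which the paper already records. The paper's approach, on the other hand, would generalize more readily to situations where the spectral cover has higher genus.

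Two minor points of care in your write-up, both of which you anticipate: the phrase ``its closure is an annulus'' is a bit loose, since $\overline{R}\subset X$ need not be a manifold with boundary --- what you actually use, and what is true, is that the core curve $\gamma$ separates ${\mathbb P}^1$ into two disks with $C_{\rm in}$ and $C_{\rm out}$ lying in different disks. Also note that in the one-turning-point cases (Bessel, Airy, degenerate Bessel) the contradiction already arises at the step ``each $C_i$ contains a turning point, and the $C_i$ are disjoint'', before one ever invokes parity; your argument still goes through, but it is worth being explicit that the parity step is only reached when there are at least two turning points.
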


\begin{proof}
If there is a non-degenerate ring domain, its complement consists of 
two domains which we denote $D_1$ and $D_2$. 
Since the quadratic differentials in our examples have at most 
two turning points, both of $D_1$ and $D_2$ must contain poles 
so that the total pole orders in each $D_1$ and $D_2$ is three 
due to Teichm\"uller's lemma (Proposition \ref{lem:teich}). 
This cannot happen in our examples in Table \ref{table:classical}. 

On the other hand, Jenkins' three pole theorem guarantees that recurrent trajectories do not appear in our examples, so we can conclude that the spiral domains never appear either.
\end{proof}

\subsection{Central charge}
\label{sec:centralcharge}
In constructing a BPS structure, we must define the lattice $\Gamma$ as well as the central charge $Z$. 
We define them following \cite{GMN09, BS13}. Recall in \S\ref{section:hypergeometric-curvs-structure} we defined a branched double cover $\pi: \Sigma \rightarrow X $, the \emph{spectral cover} $\Sigma$ associated to $\varphi$ as 
\begin{equation}
    \Sigma = \left \{\lambda \in T^*X \; | \; \lambda^2 - \varphi=0 \right\} \subset T^*X
\end{equation}
and we denoted $\widetilde{\Sigma}$ as the spectral cover with simple poles filled in.

First, we define the central charge $Z(\gamma)$ for any $\gamma \in H_1(\widetilde{\Sigma},\mathbb{Z})$. It is given by the period integral of $\sqrt{\varphi}$: 
\begin{equation}
Z(\gamma) := \oint_{\gamma} \sqrt{Q(x)} \, dx.
\end{equation}
Then we take $\Gamma$ as the sublattice 
of $H_1(\widetilde{\Sigma}, {\mathbb Z})$ given by anti-invariant cycles,
\begin{equation}
\Gamma := \{ \gamma \in H_1(\widetilde{\Sigma}, {\mathbb Z}) 
~|~ \sigma_\ast \gamma = - \gamma \}
\end{equation}
equipped with the intersection pairing $\langle \cdot , \cdot \rangle : \Gamma \times \Gamma \to {\mathbb Z}$. 
The lattice $\Gamma$ is called the hat-homology group in \cite{BS13}. 
We then define the central charge as the restriction of $Z$ to $\Gamma$, and continue to denote it by the same letter.


Suppose we have a horizontal strip $D$ in the spectral network ${\mathcal W}_0(\varphi)$ whose boundary consists of only separating trajectories. 
For such $D$, we associate a homology class $\gamma_D \in H_1(\widetilde{\Sigma}, {\mathbb Z})$, which we call the {\em dual cycle}\footnote{This is called a {\em standard saddle class} in \cite{BS13}}, as follows. 
First, in the description \eqref{eq:horizontal-strip} of $D$, we take the straight line $l$ on the $w$-plane connecting the images of turning points on the different sides of the strip.
$l$ is realized as a path on $X$ connecting two (possibly the same) turning points lying on the boundary of $D$, and its pullback by $\pi: \widetilde{\Sigma} \to X$ defines a closed cycle, up to its orientation. 
Then, the dual cycle $\gamma_D$ is defined (up to the sign) to be the homology class represented by the cycle (see Figure \ref{fig:lift} (A)).
The dual cycle $\gamma_D$ is an element in $\Gamma$.

\begin{figure}[h]
    \centering
    \begin{subfigure}[t]{.35\textwidth}
        \centering
        \includegraphics[width=0.85\linewidth]{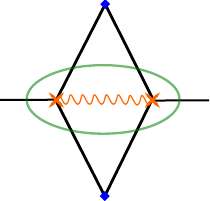}
           \caption{The dual cycle $\gamma_D$ associated with 
           a horizontal strip $D$.}
      \label{fig:gammadiamond}
    \end{subfigure}
    \hspace{2cm}
    \begin{subfigure}[t]{.35\textwidth}
        \centering
        \includegraphics[width=0.85\linewidth,trim={0cm -0.3cm 0cm 0cm}]{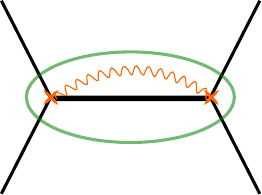}
        \caption{The BPS cycle $\gamma_{\rm BPS}$ associated with 
        a saddle trajectory.}
      \label{fig:sadd}
    \end{subfigure}
    \caption{Elements in $\Gamma$ obtained from a horizontal strip 
    or a saddle trajectory. These homology classes are defined 
    up to sign (orientation of the representatives).}
    \label{fig:lift}
\end{figure}

\begin{lemm}
Suppose ${\mathcal W}_{0}(\varphi)$ is non-degenerate. 
Then, the collection $\{ \gamma_D \}$ of dual cycles for all horizontal strips in ${\mathcal W}_{0}(\varphi)$ forms a basis of $\Gamma$. 
Moreover, for each dual cycle $\gamma_D$, the central charge $Z(\gamma_D)$ has a non-zero imaginary part. 
\label{lem:diamondlemma}
\end{lemm}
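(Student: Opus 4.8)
The plan is to extract topological information from the structure theory of trajectories. I would first recall that, since $\mathcal{W}_0(\varphi)$ is non-degenerate, Lemma \ref{lemm:no-recurrent-trajectory} tells us that the complement $X \setminus \mathcal{W}_0(\varphi)$ decomposes into half planes and horizontal strips only (ring and spiral domains are excluded). Each half plane sits at a pole of order $\geq 3$, while each horizontal strip $D$ contributes a dual cycle $\gamma_D \in \Gamma$; in the non-degenerate case every horizontal strip has boundary consisting of separating trajectories only, so the construction of $\gamma_D$ in \S\ref{sec:centralcharge} applies. The first step is therefore a careful bookkeeping: count the horizontal strips and half planes in terms of the pole data of $\varphi$, using the normal forms around critical points and the combinatorics of how separating trajectories emanate from zeros and simple poles.

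Next I would establish that the dual cycles are linearly independent and span. For this, the cleanest route is to exhibit a dual basis: for each horizontal strip $D$, choose a closed loop $\delta_D$ on $X \setminus \mathrm{Crit}$ which crosses the strip $D$ transversally once (a vertical segment through the strip, closed up) and crosses no other strip in an essential way. Its lift to $\widetilde{\Sigma}$ pairs with $\gamma_D$ by $\pm 1$ and with $\gamma_{D'}$ by $0$ for $D' \neq D$ --- essentially because the horizontal trajectory foliation restricted to a strip is a product, so the intersection numbers are controlled by which strip a transversal passes through. This shows the $\gamma_D$ are part of a basis; to see they form a \emph{full} basis of $\Gamma$, I would count: the rank of $\Gamma = H_1(\widetilde{\Sigma}, \mathbb{Z})^{-}$ can be computed from the genus and puncture data of $\widetilde{\Sigma}$ via Riemann-Hurwitz applied to $\pi: \widetilde{\Sigma} \to X$, and this number must match the number of horizontal strips. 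The matching of these two counts is the combinatorial heart of the argument; it can be done uniformly using Euler characteristic / Teichm\"uller's lemma (Proposition \ref{lem:teich}) bookkeeping rather than case-by-case, or case-by-case over the nine examples in Table \ref{table:classical} if a uniform argument is awkward.

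For the second assertion, that $Z(\gamma_D)$ has non-zero imaginary part, I would argue directly from the geometry of the strip. In the distinguished coordinate $w$ on $D$, the strip is $\{c_1 < \mathrm{Im}\, w < c_2\}$ and $\sqrt{\varphi} = dw$, so a path realizing $\gamma_D$ connects a point with $\mathrm{Im}\, w = c_1$ to one with $\mathrm{Im}\, w = c_2$ (the turning points on the two boundary components, which lie on the two distinct horizontal lines bounding the strip). Hence $\int \sqrt{\varphi} = \Delta w$ has imaginary part $c_2 - c_1 \neq 0$ along the relevant segment; accounting for the lift to $\widetilde{\Sigma}$ (where the class is $\gamma_D$ up to sign and the $\sigma$-anti-invariance doubles or fixes the period depending on conventions), $\mathrm{Im}\, Z(\gamma_D) \neq 0$ follows. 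The only subtlety is when both boundary turning points of $D$ coincide, but even then the two sides of the strip correspond to the two distinct values $c_1 < c_2$, so the argument goes through.

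\textbf{Main obstacle.} I expect the genuine difficulty to be the global counting/spanning claim --- verifying that the dual cycles number exactly $\mathrm{rk}\,\Gamma$ and generate $\Gamma$ over $\mathbb{Z}$ (not just over $\mathbb{Q}$, and not merely spanning a finite-index sublattice). Transversality and the local intersection computation give linear independence and a lower bound on the span easily, but pinning down that there are no further relations or ``missing'' generators requires either a clean Euler-characteristic argument tying the number of strips to $H_1(\widetilde{\Sigma})^{-}$, or an explicit check in each of the nine hypergeometric-type cases using the parametrizations in \S\ref{section:hypergeometric-curvs-structure}. The integrality (that it is a basis, not just a rational basis) should follow from the unimodularity of the intersection matrix between the $\gamma_D$ and the transversal loops $\delta_D$, but this needs to be stated carefully.
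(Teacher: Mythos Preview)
The paper does not actually prove this lemma: its entire proof is the sentence ``See for example \cite[Section 3.6]{BS13}.'' So there is no in-paper argument to compare against; you are effectively proposing a self-contained substitute for what Bridgeland--Smith do in the cited reference.

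Your outline is broadly in line with how \cite{BS13} proceeds, with one correction and one simplification worth noting. First, you invoke Lemma~\ref{lemm:no-recurrent-trajectory} to rule out ring and spiral domains, but that lemma is specific to the nine hypergeometric-type examples, whereas the present statement is meant (and used) in the generality of GMN differentials. In \cite{BS13} the exclusion of closed and recurrent trajectories is instead deduced directly from the saddle-free hypothesis: the closure of any recurrent trajectory, and the boundary of any ring domain, necessarily contains a saddle trajectory (this is their Lemma~3.1), so non-degeneracy already forces the complement to consist only of half planes and horizontal strips. You should replace your appeal to Lemma~\ref{lemm:no-recurrent-trajectory} by this observation. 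Second, your plan to produce dual transversals $\delta_D$ is more than \cite{BS13} actually does: they obtain the basis property by a pure Euler-characteristic count (their Lemma~3.2 gives the number of strips, and comparison with a direct computation of the rank of the hat-homology group finishes the argument). Your transversal-pairing idea would also work and has the virtue of giving integrality (basis over $\mathbb{Z}$, not just $\mathbb{Q}$) for free, which is precisely the subtlety you flagged as the main obstacle; in \cite{BS13} this integrality is handled separately. Your argument for $\mathrm{Im}\,Z(\gamma_D)\neq 0$ via the distinguished coordinate on the strip is correct and is exactly the standard one.
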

\begin{proof}
See for example \cite[Section 3.6]{BS13}.
\end{proof}

\subsection{Saddle trajectories and BPS indices}

For generic $\vartheta$, saddle trajectories are absent and 
the spectral network $\mathcal{W}_\vartheta(\varphi)$ consists only
of separating trajectories of phase $\vartheta$. 
We say the spectral network $\mathcal{W}_\vartheta(\varphi)$  
\emph{degenerate} if it contains a saddle trajectory of phase $\vartheta$. 
These degenerate spectral networks are of paramount importance in this paper and many applications. In the physics of 4d $\mathcal{N}=2$ QFTs, they correspond to BPS states in the spectrum of the theory \cite{GMN08, GMN12}. 
From a mathematical point of view, they correspond to stable objects in a 3-Calabi-Yau category associated with a quiver with potential determined by $\varphi$ \cite{BS13}.

The saddle trajectories can be classified into the following five types: 
\begin{itemize}
\item 
A {\em type I} saddle connects two distinct simple zeros of $\varphi$. 
\item 
A {\em type II} saddle connects a simple zero and a simple pole of $\varphi$.
\item 
A {\em type III} saddle connects two distinct simple poles of $\varphi$.
\item 
A {\em type IV} saddle is a closed curve which forms a boundary of 
a degenerate ring domain.
\item 
A {\em type V} saddle is a closed curve which forms one of 
the boundary components of a non-degenerate ring domain.
\end{itemize}

Since the type IV and type V saddles are closed curves, we simply call them {\em loop-type saddles}. Lemma \ref{lemm:no-recurrent-trajectory} guarantees that, among the loop-type saddles, type V saddles never appear in the spectral network ${\mathcal W}_{\vartheta}(\varphi)$ associated with the quadratic differentials $\varphi = \varphi_\bullet$ in Table \ref{table:classical}. 
Hence the loop-type saddles discussed in this paper are always of type IV, and they appear in the spectral network ${\mathcal W}_{\vartheta}(\varphi)$ if and only if $\varphi$ has a second order pole $s$ satisfying \eqref{eq:loop-condition}.

Given a degenerate spectral network with a saddle trajectory, we can associate a homology class $\gamma_{\rm BPS} \in H_1(\widetilde{\Sigma}, {\mathbb Z})$ represented by the closed cycle (up to orientation) on $\widetilde{\Sigma}$ obtained as the pullback by $\pi : \widetilde{\Sigma} \to X$ of the saddle trajectory (see Figure \ref{fig:lift} (B)). 

In particular, if the degeneration is a type IV saddle around a second order pole $s$ of $\varphi$, then the associated class is $\gamma_{\rm BPS} = \gamma_{s_\pm} - \gamma_{s_\mp}$, where $\gamma_{s_\pm}$ are the residue cycles around the distinct preimages $s_\pm \in \overline{\Sigma}$ of $s$ by the projection map $\pi: \overline{\Sigma} \to X$.
We regard that the cycle $\gamma_{\rm BPS}$ thus obtained is associated with the degenerate ring domain itself (i.e., not the loop-type saddle). 
This viewpoint will be important when we discuss the Legendre example in \S \ref{section:Legendre}.

The cycles $\gamma_{\rm BPS}$ obtained in this manner are also anti-invariant under 
the action induced by the covering involution of $\widetilde{\Sigma}$ (that is, $\gamma_{\rm BPS} \in \Gamma$; see \cite[\S 3.2]{BS13}).

\begin{dfn}
We refer to homology classes $\gamma_{\rm BPS} \in \Gamma$ obtained from saddle trajectories or ring domains as above as \emph{BPS cycles}, and the collection of all BPS cycles as the \emph{BPS spectrum}.
\end{dfn}

Since the real-valued function ${\rm Re} \left( e^{- i \vartheta} \int^{x} \sqrt{Q(x)} \, dx \right)$ is monotone along trajectories of phase $\vartheta$, the central charge $Z(\gamma_{\rm BPS})$ for BPS cycles never vanish. 
By definition, if a BPS cycle $\gamma_{\rm BPS}$ is associated with 
a saddle trajectory in ${\mathcal W}_{\vartheta}(\varphi)$, 
then the phase must satisfy $\vartheta \equiv \arg Z(\gamma_{\rm BPS})$ 
mod $\pi$, and we sometimes say that $\gamma_{\rm BPS}$ 
``appears at the phase $\arg Z(\gamma_{\rm BPS})$".
Note that if $\gamma_{\rm BPS}$ is a BPS cycle appearing at 
$\vartheta \, (\mathrm{mod}\, 2\pi)$, then $-\gamma_{\rm BPS}$ is also 
a BPS cycle and appears at 
$\vartheta = \vartheta+\pi \, (\mathrm{mod}\, 2\pi)$ since 
${\mathcal W}_\vartheta(\varphi) = {\mathcal W}_{\vartheta+\pi}(\varphi)$ 
holds. 
Thus, when determining the BPS spectrum, we may safely restrict 
our attention to those $\gamma_{\rm BPS}$ appearing in 
the range $\vartheta \in [0,\pi)$.

\subsection{Construction of BPS structures} \label{subsection:constructing-BPS-str}
So far, we have summarized a several properties of trajectories of general meromorphic quadratic differentials and spectral networks. We now restrict our attention to the examples in Table \ref{table:classical}, that is, the ``hypergeometric type" quadratic differentials which appear in the WKB analysis of the Gauss hypergeometric differential equation and its confluent degenerations, and explain how BPS structures are constructed from these examples. 
In what follows, the symbol $\bullet$ denotes any of: $\rm{HG}$, $\rm{dHG}$, $\rm{Kum}$, $\rm{Leg}$, $\rm{Bes}$, $\rm{Whi}$, $\rm{Web}$, $\rm{dBes}$, or $\rm{Ai}$. 
Let $\varphi_\bullet({\bm m}) = Q_\bullet(x)dx^2$ denote the corresponding quadratic differential as in Table \ref{table:classical}, for a given value of the parameters ${\bm m} \in M_\bullet$ satisfying Assumption \ref{ass:genericity}. 
We also denote by $\Gamma_\bullet$ and $Z_\bullet$ the lattice and the central charge defined from $\varphi_\bullet$ in \S\ref{sec:centralcharge}.
We sometimes write the central charge as $Z_{\bm m}(\gamma)$ when we emphasize the dependence on the mass parameter ${\bm m}$.

Since $\widetilde{\Sigma}_\bullet$ is a sphere with several punctures, 
the homology group $H_1(\widetilde{\Sigma}_\bullet, {\mathbb Z})$ is generated 
by the {\em residue classes} $\gamma_a$ 
(i.e., the class represented by a positively oriented small circle) around the puncture 
$a \in \overline{\Sigma}_\bullet \setminus \widetilde{\Sigma}_\bullet = D_{\bullet, \infty}$. 
They satisfy
\begin{equation} \label{eq:relation-among-cycles}
\sum_{s \in P_{\bullet, \rm od} \cap D_{\bullet, \infty}} \gamma_s + \sum_{s \in P_{\bullet, \rm ev}} (\gamma_{s_+} + \gamma_{s_-}) = 0.
\end{equation}
See \eqref{eq:sign-convention-preimages} for the sign convention, and recall that we identify a point in $P_{\bullet, \rm od}$ with its unique preimage in $\overline{\Sigma}_\bullet$. 
We note that the intersection pairing $\langle \cdot, \cdot \rangle$ on $\Gamma_\bullet$ is identically $0$ since the residue cycles do not intersect each other.

\label{lemm:miniversal}


In some part of our discussion, we will require that all our parameters ${\bm m}$ in this paper satisfy an additional genericity condition

\begin{dfn} \label{def:generic-locus}
We say the parameter ${\bm m}$ is {\em generic} if it lies on the complement $M_{\bullet}\setminus W_\bullet$ of the set $W_\bullet$ defined by
\begin{equation}
W_\bullet = \{ {\bm m} \in M_\bullet ~|~ 
\text{there exist BPS cycles $\gamma, \gamma'$ with $\gamma \pm \gamma' \ne 0$ and
$Z_{\bm m}(\gamma)/Z_{\bm m}(\gamma') \in {\mathbb R}_{\ne 0}$} \}
\end{equation}
for $\bullet \ne {\rm Leg}$, and $W_{\rm Leg} = \emptyset$ for the Legendre case. 
Otherwise, we say ${\bm m}$ is \emph{non-generic}. We denote by $M_{\bullet}' = M_{\bullet}\setminus W_\bullet$ the locus of generic parameters.
\end{dfn}

This genericity condition implies, in particular, that no two BPS cycles appear at the same phase $\vartheta$ except for the Legendre case\footnote{
The Legendre case is exceptional; that is, the degenerate spectral network in the Legendre case always contains both a type III saddle and a degenerate ring domain simultaneously (see \S \ref{section:Legendre} below).   
}. 
In particular, we do not have any ring domain whose boundary component consists of a chain of type I saddle trajectories (e.g., the ``eyeball" in Figure \ref{fig:loop2}) on the generic locus.

\begin{figure}[h]
    \centering
    \begin{subfigure}[t]{.30\textwidth}
        \centering
        \includegraphics[width=\linewidth,trim={0cm -0.2cm 0cm 0cm}]{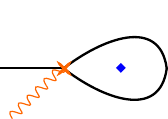}
           \caption{Type IV saddle.}
      \label{fig:loop1}
    \end{subfigure}
    \hspace{1.5cm}             
    \begin{subfigure}[t]{.39\textwidth}
        \centering
        \includegraphics[width=\linewidth,trim={0cm -0.3cm 0cm 0cm}]{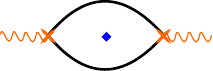}
        \caption{The forbidden ``eyeball" degeneration.}
      \label{fig:loop2}
    \end{subfigure}
    \caption{Degenerations around a second order pole.}
    \label{fig:loops}
\end{figure}

\noindent Thus, together with the results from above, the following holds in all our examples:

\begin{prop}
If ${\bm m} \in M_\bullet'$, then, for any $\vartheta$, a ring domain appearing in the spectral network ${\mathcal W}_{\vartheta}(\varphi_\bullet)$ must be a degenerate ring domain. 
For any second order pole $s$ of $\varphi_\bullet$, a degenerate ring domain appears around $s$ if and only if $\vartheta = \arg{m_s} + \pi/2$ (mod $\pi$) holds, and then, the BPS cycle associated to the degenerate ring domain is given (up to sign) by 
\begin{equation}
\gamma_{\rm BPS} = \gamma_{s_{\pm}} - \gamma_{s_{\mp}},
\end{equation}
where $s_{\pm} \in \overline{\Sigma}_\bullet$ are the preimages of $s$.
\label{prop:loops}
\end{prop}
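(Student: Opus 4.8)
The plan is to establish the proposition in two parts. The first part---that any ring domain in $\mathcal{W}_\vartheta(\varphi_\bullet)$ for generic $\bm m$ must be degenerate---follows almost immediately from the machinery already assembled. Indeed, Lemma \ref{lemm:no-recurrent-trajectory} rules out non-degenerate ring domains (and spiral domains) for any of the quadratic differentials in Table \ref{table:classical}, at any phase $\vartheta$; the argument there uses Teichm\"uller's lemma to count pole orders, observing that a non-degenerate ring domain would cut $X = \mathbb{P}^1$ into two pieces each needing total pole order three, which is impossible with at most three poles and at most two turning points. The only remaining possibility for a ring domain appearing in a degenerate spectral network over the generic locus would be one whose boundary is a chain of type I saddles (the ``eyeball'' of Figure \ref{fig:loop2}), but this is explicitly excluded by the genericity condition in Definition \ref{def:generic-locus}, since such a configuration would force two distinct BPS cycles (the two type I saddles, or the saddle and the loop) to appear at the same phase. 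Hence on $M_\bullet'$ any ring domain is degenerate.

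For the second part I would argue as follows. Suppose $\varphi_\bullet$ has a second-order pole $s$; by Assumption \ref{ass:genericity}, $s \in P_{\bullet, \mathrm{ev}}$ and the associated mass parameter $m_s$ is nonzero. The criterion for a degenerate ring domain to appear around $s$ in $\mathcal{W}_\vartheta(\varphi)$ is exactly condition \eqref{eq:loop-condition}, namely $e^{-i\vartheta} \Res_{x=s}\sqrt{Q(x)}\,dx \in i\mathbb{R}_{\ne 0}$; this is the standard local analysis of the foliation around a double pole recalled after the classification of complementary domains (c.f. \cite[\S 3.4]{BS13}), with the ``conversely'' direction also cited there. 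By the sign convention \eqref{eq:sign-convention-preimages} we have $\Res_{x=s_\pm}\sqrt{Q(x)}\,dx = \pm m_s$, so $\Res_{x=s}\sqrt{Q(x)}\,dx$ equals $m_s$ up to sign and choice of sheet; the condition \eqref{eq:loop-condition} therefore reads $e^{-i\vartheta} m_s \in i\mathbb{R}_{\ne 0}$, i.e. $\arg m_s - \vartheta \equiv \pi/2 \pmod{\pi}$, which is precisely $\vartheta \equiv \arg m_s + \pi/2 \pmod \pi$. Here I should also remark that since $m_s \ne 0$, the condition $i\mathbb{R}_{\ne 0}$ rather than merely $i\mathbb{R}$ is automatic.

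For the identification of the BPS cycle, I would invoke the description already given in \S\ref{sec:maincomputation}'s preliminaries: for a type IV saddle bounding a degenerate ring domain around a second-order pole $s$, the pullback of the closed saddle trajectory by $\pi : \widetilde\Sigma \to X$ is homologous to $\gamma_{s_\pm} - \gamma_{s_\mp}$, the difference of the residue cycles around the two preimages $s_\pm \in \overline\Sigma$. This is because the boundary chain of the degenerate ring domain, lifted to the double cover, separates the two punctures $s_+$ and $s_-$ from the rest, so it is homologous (up to sign) to a small loop around one of them minus a small loop around the other; a careful bookkeeping of orientations, together with the relation \eqref{eq:relation-among-cycles} among residue classes, pins this down to $\gamma_{s_\pm} - \gamma_{s_\mp}$ up to overall sign. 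One checks this cycle is anti-invariant under $\sigma_\ast$ (hence lies in $\Gamma_\bullet$) since $\sigma$ interchanges $s_+$ and $s_-$ and reverses orientation of the small loops appropriately.

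The main obstacle is not in either of these steps individually---both are essentially bookkeeping on top of cited structural results---but in making sure the two parts interlock cleanly across all nine examples, in particular handling the Legendre case correctly. In Legendre, $\varphi$ has a double pole at $\infty$ and two simple poles at $\pm 1$, and (as flagged in the footnote to Definition \ref{def:generic-locus} and explained in \S\ref{section:Legendre}) the degenerate spectral network there genuinely contains a type III saddle and a degenerate ring domain simultaneously, so the genericity condition is vacuous ($W_{\mathrm{Leg}} = \emptyset$) and the ``no two BPS cycles at the same phase'' heuristic fails. I would therefore state the second paragraph of the proposition for an arbitrary double pole $s$ while being explicit that the ``if and only if'' concerns the appearance of the degenerate ring domain around that particular $s$, with the coexisting type III saddle in Legendre treated separately in \S\ref{section:Legendre}. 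The verification that no other obstruction (e.g. a pole of higher order coinciding with $s$, forcing a half-plane instead of a ring domain) can occur is covered by the remark after Assumption \ref{ass:genericity} that no collision of critical points happens on $M_\bullet$, so the local model at $s$ is genuinely the $k=-2$ normal form $r z^{-2} dz^2$.
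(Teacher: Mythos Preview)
Your proof is correct and follows the same approach as the paper, which does not give a separate proof but presents the proposition as assembled from the preceding lemmas and discussion (Lemma~\ref{lemm:no-recurrent-trajectory}, condition~\eqref{eq:loop-condition}, the sign convention~\eqref{eq:sign-convention-preimages}, and the genericity exclusion of the eyeball). One minor logical redundancy: in your first paragraph, the eyeball of Figure~\ref{fig:loop2} is already a \emph{degenerate} ring domain (its centre is a second-order pole), so excluding it is not needed for the conclusion ``any ring domain is degenerate''---that follows from Lemma~\ref{lemm:no-recurrent-trajectory} alone; the eyeball exclusion is instead what ensures the boundary of a degenerate ring domain on $M_\bullet'$ is a single type~IV saddle rather than a chain of type~I saddles, which is what you actually need for the clean identification of $\gamma_{\rm BPS}$ in the second part.
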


    

%
%

{ 
Now we introduce the BPS indices which partially generalize the ones considered in \cite{GMN09, GMN12, BS13} by including the contributions from BPS cycles associated with trajectories occurring in the presence of simple poles.  

\begin{dfn}
For each ${\bm m} \in M'_\bullet$, we define the {\em BPS indices} $\{ \Omega(\gamma) \}_{\gamma \in \Gamma}$ as a collection of integers defined as follows\footnote{Altohugh we have Lemma \ref{lemm:no-recurrent-trajectory}, we keep the last line in \eqref{eq:def-of-BPS-indices} with the general case in mind.}: 
\begin{equation} \label{eq:def-of-BPS-indices}
\Omega(\gamma)= \begin{cases} 
      +1 & \quad \text{if $\gamma$ is a BPS cycle associated with a type I saddle,} \\
      +2 & \quad \text{if $\gamma$ is a BPS cycle associated with a type II saddle,} \\
      +4 & \quad \text{if $\gamma$ is a BPS cycle associated with a type III saddle}, \\
      - 1 & \quad \text{if $\gamma$ is a BPS cycle associated with a degenerate ring domain}, \\
      - 2 & \quad \text{if $\gamma$ is a BPS cycle associated with a non-degenerate ring domain},
   \end{cases}
\end{equation}
and we set $\Omega(\gamma) = 0$ for any non-BPS cycles $\gamma \in \Gamma$.
\end{dfn}

\begin{rem}
We note that so far, this definition may appear unmotivated and designed to make our result hold. However, in the sequel to this paper we show that these values of $\Omega$ are natural from the perspective of the jumping behaviour of the Borel resummed Voros symbols of the corresponding quantum curves. It would be interesting to ask if the $\Omega(\gamma)$ in definition \eqref{eq:def-of-BPS-indices} satisfy the Kontsevich-Soibelman wall-crossing formula, or Gaiotto-Moore-Neitzke's 2d/4d wall-crossing formula in \cite{GMN12}) in general (see \cite[\S 6]{FIMS} for a relevant discussion). It would also be interesting to understand the meaning of these BPS cycles from the physical and representation-theoretic perspectives.
\end{rem}

\begin{rem}
There is a discrepancy between our definition \eqref{eq:def-of-BPS-indices} of BPS indices and the one used in \cite[\S 7]{Bri19}; that is, \cite{Bri19} gives $-2$ not only for type V saddles (or non-degenerate ring domains) but also for type IV saddles (or degenerate ring domains). 
This discrepancy does not affect the results of \cite[\S 7]{Bri19} since the associated BPS cycles from type IV saddles are in the kernel of the intersection pairing; thus there is no contribution in the BPS automorphism, defined only in terms of $\Omega(\gamma)\langle \gamma,\cdot \rangle$, from such BPS cycles.
We decided to use the modified definition (including the contributions from simple poles) since it agrees with both the resurgent perspective and the formula of this paper.
\end{rem}
}

\section{Computation of the BPS spectrum}
\label{sec:maincomputation}

In the following sections we will compute the BPS structures associated to the spectral curves of hypergeometric type. Our approach is partly inspired by \cite{MMT15,AMMT14,AT15,AT16} who showed the existence of saddles for some special cases of quadratic differentials.

While all examples are degenerations of the hypergeometric, we will follow a pedagogical order. We will begin with the simple examples of Weber, Whittaker, and Bessel (together with degenerate Bessel, and Airy), which may be viewed as local models for the behaviour of more complicated ones. We then consider the main example of this paper, the hypergeometric spectral curve, before turning to the behaviour of its confluent degenerations --- the Kummer, degenerate hypergeometric, and Legendre curves.

\subsection{Simple examples -- Weber, Whittaker, and Bessel}
Since we already have the lattice $\Gamma_\bullet$ and the central charge $Z_\bullet$, the remaining task is to compute the BPS indices for our examples. 

Here we compute the BPS indices for three of the simplest nontrivial examples, all of which have exactly one BPS cycle (up to sign).
We will use the same notations for homology classes used in \S \ref{subsection:constructing-BPS-str} in what follows.

\subsubsection{\bf BPS structure from the Weber curve}
The simplest example with a nontrivial degeneration is the Weber curve, corresponding to the quadratic differential $\varphi_{\rm Web} = Q_{\rm Web}(x)dx^2$ where
\begin{equation}
Q_{\rm{Web}}(x)= \dfrac{1}{4}x^2 - m_\infty.
\end{equation} 
Under Assumption \ref{ass:genericity} (i.e., $m_{\infty} \in M_{\rm Web} = \mathbb{C}^\ast$), $\varphi_{\rm Web}$ has two simple zeros at $b_1 := 2\sqrt{m_\infty}$ and $b_2=-b_1$, and a unique pole of order $6$ at $\infty$. 
The spectral cover $\Sigma_{\rm{Web}} (= \widetilde{\Sigma}_{\rm Web})$ is of genus 0 with two punctures $\infty_\pm$, and the $H_1(\Sigma_{\rm Web},\mathbb{Z})$ is generated by the corresponding residue cycles $\gamma_{\infty_{\pm}}$ with the relation $\gamma_{\infty_+} + \gamma_{\infty_-} = 0$. Since the covering involution exchanges $\gamma_{\infty_{\pm}} \mapsto \gamma_{\infty_{\mp}}$, we have $\Gamma_{\rm Web} =  {\mathbb Z} \gamma_{\infty_+}$ $(= {\mathbb Z} \gamma_{\infty_-})$ which coincides with the whole $H_1(\Sigma_{\rm Web},\mathbb{Z})$ in this case.

    \begin{figure}[h]
    \centering
    \begin{subfigure}[t]{.28\textwidth}
        \centering
        \includegraphics[width=\linewidth,trim={5cm 5cm 5cm 5cm},clip]{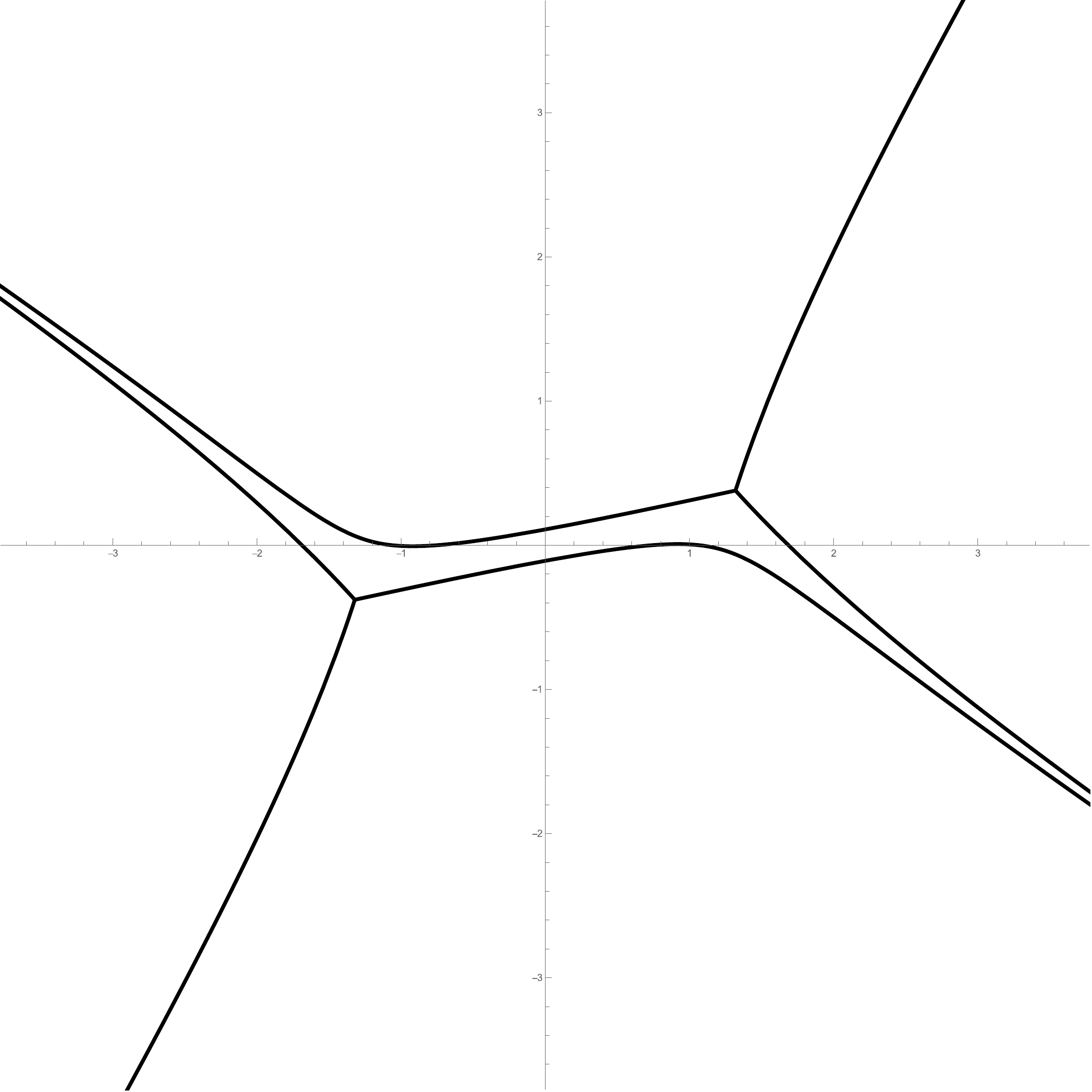}
           \caption{$\vartheta\approx2.03$}
      \label{fig:traj}
    \end{subfigure}
    \hspace{0.75cm}                  
    \begin{subfigure}[t]{.28\textwidth}
        \centering
        \includegraphics[width=\linewidth,trim={5cm 5cm 5cm 5cm},clip]{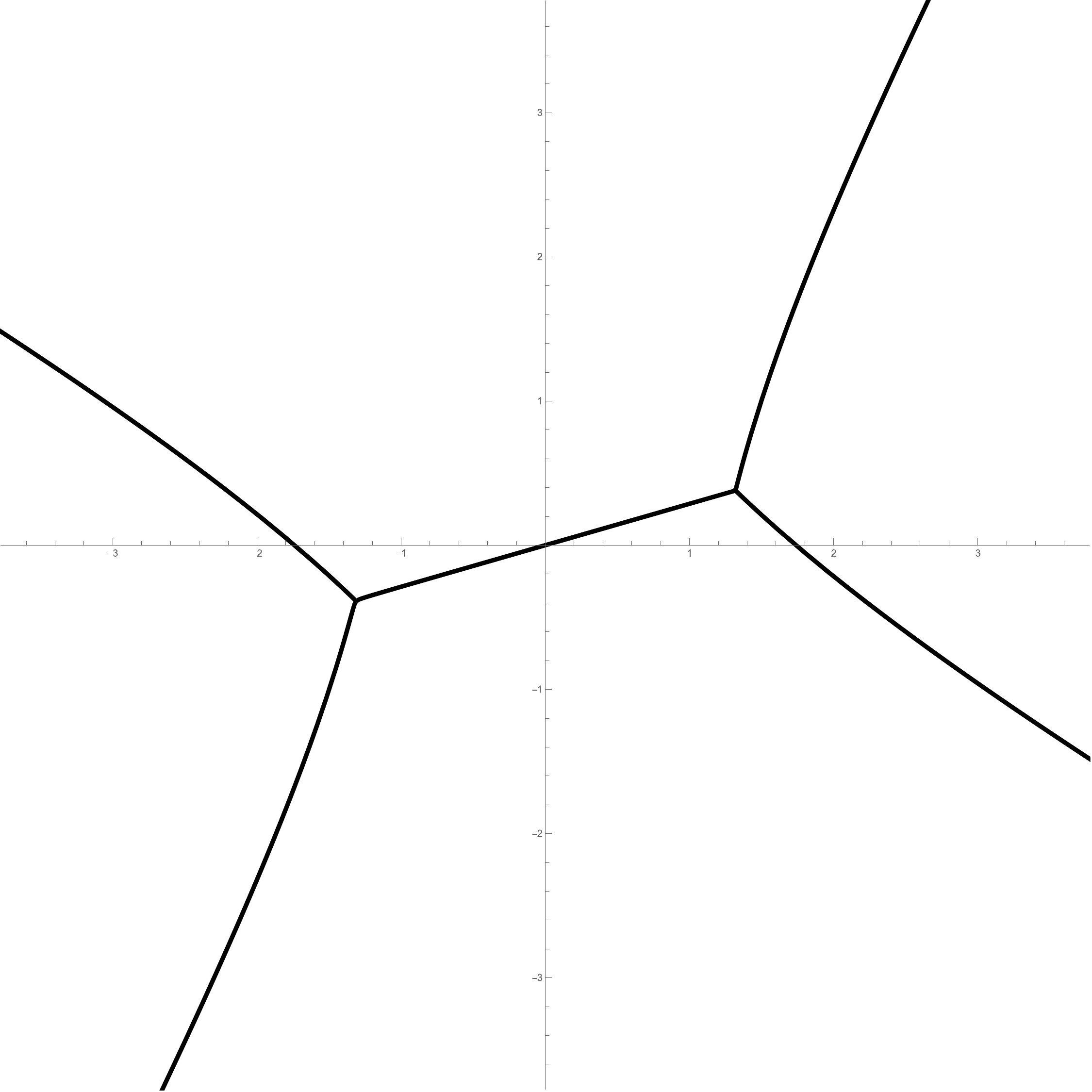}
        \caption{Type I saddle, $\vartheta\approx~2.13$}
      \label{fig:weber2}
    \end{subfigure}
       \hspace{0.75cm}                  
    \begin{subfigure}[t]{.28\textwidth}
        \centering
        \includegraphics[width=\linewidth,trim={5cm 5cm 5cm 5cm},clip]{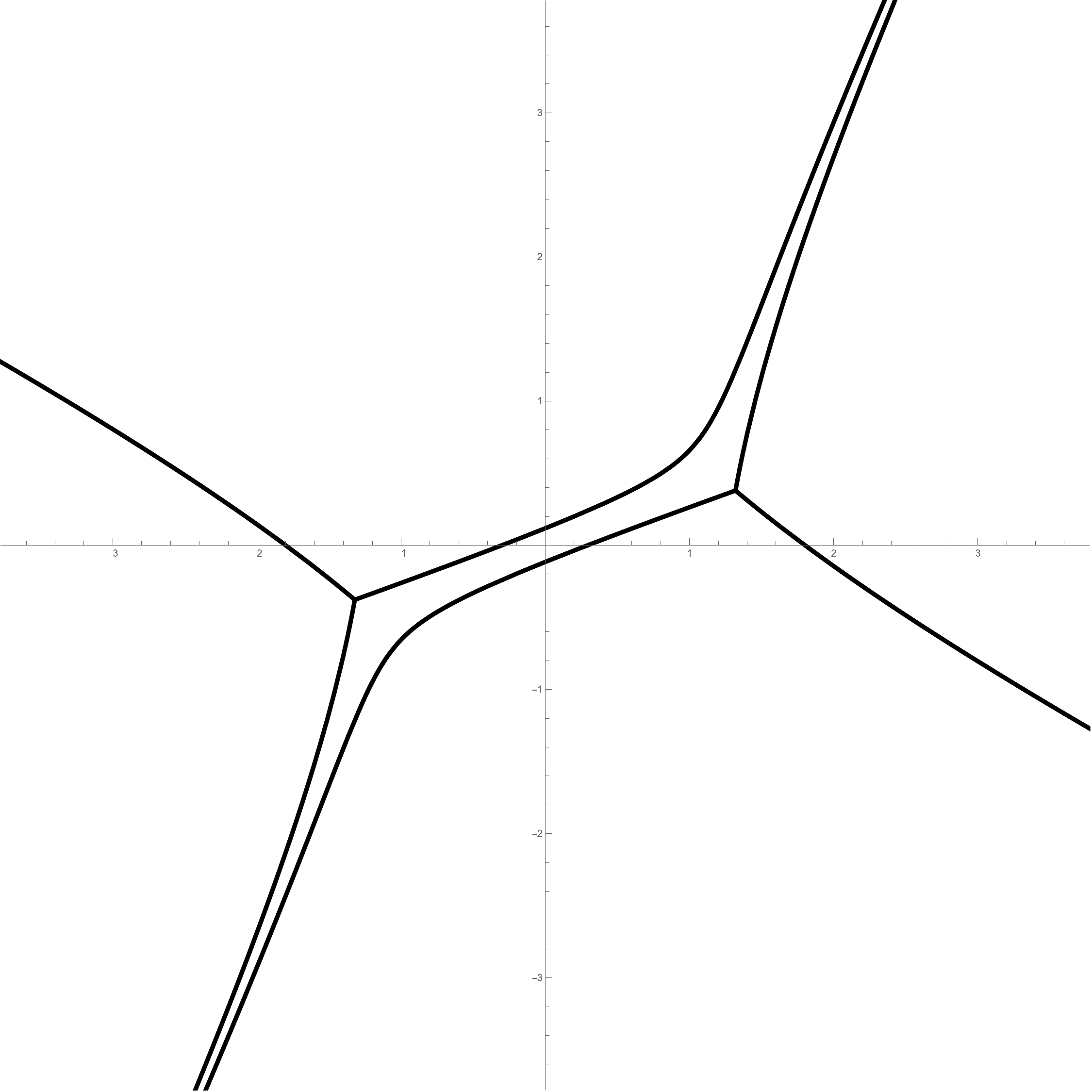}
        \caption{$\vartheta\approx2.23$}
      \label{fig:weber3}
    \end{subfigure}
    \caption{Spectral networks for $\varphi_{\rm Web}$ with $m_\infty \approx 0.4+0.25i$.}
    \label{fig:weber}
\end{figure}

We can draw the spectral network at a generic value $m_\infty \in \mathbb{C}^\ast$, plotted in Figure \ref{fig:weber}, where we can observe a degenerate spectral network, which includes a type I saddle, appearing in Figure \ref{fig:weber2}. 
In this simple setup, we may prove explicitly the following
\begin{prop}  \label{prop:Weber-BPS-structure}
Fix $m_\infty\in M_{\rm Web} = \mathbb{C}^\ast$. 
Then, $\varphi_{\rm Web}$ has exactly one degenerate spectral network in the range $\vartheta \in [0,\pi)$. 
It occurs at $\vartheta = \arg m_\infty+\pi/2$ (mod $\pi$) and contains a type I saddle whose associated BPS cycle is $\gamma_{\infty_\pm}$.
\end{prop}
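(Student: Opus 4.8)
The plan is to prove the statement in three steps that will serve as a template for all later examples: (a) a \emph{phase constraint} showing every degenerate spectral network of $\varphi_{\rm Web}$ occurs at the single phase $\vartheta_0 := \arg m_\infty + \pi/2 \pmod \pi$, so that there is \emph{at most one} in $[0,\pi)$; (b) an \emph{existence} statement producing a type I saddle at exactly that phase, so that there is \emph{at least one}; and (c) the identification of the resulting BPS cycle. Step (b) is the only genuinely geometric input, and here it is essentially free.

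For (a), I would argue entirely from the period map, so that no classification of saddle types is needed. By the discussion preceding \S\ref{subsection:constructing-BPS-str}, any saddle trajectory of $\varphi_{\rm Web}$ of phase $\vartheta$ has an associated BPS cycle $\gamma_{\rm BPS}$ lying in $\Gamma_{\rm Web} = {\mathbb Z}\gamma_{\infty_+}$, nonzero since the central charge of a BPS cycle never vanishes, and satisfying $\vartheta \equiv \arg Z(\gamma_{\rm BPS}) \pmod \pi$. Writing $\gamma_{\rm BPS} = k\,\gamma_{\infty_+}$ with $k \in {\mathbb Z}\setminus\{0\}$ and using $Z(\gamma_{\infty_+}) = 2\pi i\,\Res_{x=\infty_+}\sqrt{Q_{\rm Web}}\,dx = 2\pi i\,m_\infty$ (read off from the expansion $\sqrt{Q_{\rm Web}} = \tfrac{x}{2} - \tfrac{m_\infty}{x} + O(x^{-3})$ together with the sign convention \eqref{eq:sign-convention-preimages}), one gets $\arg Z(\gamma_{\rm BPS}) = \arg(2\pi i k\,m_\infty) \equiv \arg m_\infty + \tfrac{\pi}{2} \pmod\pi$ for every such $k$. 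Hence ${\mathcal W}_\vartheta(\varphi_{\rm Web})$ can be degenerate only when $\vartheta \equiv \vartheta_0 \pmod\pi$.

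For (b), I would exhibit the saddle directly rather than appeal to any deformation argument. Parametrize the straight segment from $b_2 = -2\sqrt{m_\infty}$ to $b_1 = 2\sqrt{m_\infty}$ by $x = 2\sqrt{m_\infty}\,t$ with $t \in [-1,1]$. On this segment $Q_{\rm Web}(x) = m_\infty(t^2-1)$, so, choosing branches appropriately, $\sqrt{Q_{\rm Web}(x)}\,dx = 2i\,m_\infty\sqrt{1-t^2}\,dt$, and therefore $e^{-i\vartheta_0}\sqrt{Q_{\rm Web}(x)}\,dx = 2|m_\infty|\sqrt{1-t^2}\,dt \in {\mathbb R}_{>0}$. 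Thus $\mathrm{Im}\,e^{-i\vartheta_0}\int^x \sqrt{Q_{\rm Web}}$ is constant along the segment, i.e.\ the segment is a trajectory of phase $\vartheta_0$ joining the two simple zeros $b_1, b_2$ — a type I saddle. So ${\mathcal W}_{\vartheta_0}(\varphi_{\rm Web})$ is degenerate, which together with (a) gives exactly one degenerate spectral network in $[0,\pi)$, occurring at $\vartheta_0 = \arg m_\infty + \pi/2 \pmod \pi$. For (c), since $X = {\mathbb P}^1$ and $[b_2,b_1]$ joins the two branch points of $\pi : \Sigma_{\rm Web} \to X$, the complement ${\mathbb P}^1 \setminus [b_2,b_1]$ is a disk whose two preimages in $\Sigma_{\rm Web}$ are disjoint disks containing $\infty_+$ and $\infty_-$ respectively; hence $\pi^{-1}([b_2,b_1])$ is a simple closed curve on $\Sigma_{\rm Web} \cong {\mathbb P}^1$ separating $\infty_+$ from $\infty_-$, which with a suitable orientation represents $\gamma_{\infty_+}$. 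Therefore $\gamma_{\rm BPS} = \gamma_{\infty_\pm}$.

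The step I expect to demand the most care — and which is genuinely hard in the later hypergeometric-type examples, though not here — is (b): showing a saddle really does appear at the phase forced by (a), rather than the critical trajectory escaping to the pole. In the present case it is handed to us by the explicit segment between the two zeros, and the only remaining bookkeeping is to observe (via a short $\varphi$-polygon count with Teichm\"uller's lemma, Proposition~\ref{lem:teich}, using that $\varphi_{\rm Web}$ has no double pole and Lemma~\ref{lemm:no-recurrent-trajectory}) that no self-saddle or loop-type degeneration can occur, so that the degenerate network at $\vartheta_0$ consists precisely of this type I saddle together with the separating trajectories running into the order-$6$ pole at $\infty$; this last point is not strictly needed for the statement but confirms $\Omega(\gamma_{\infty_\pm}) = +1$ later.
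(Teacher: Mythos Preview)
Your proof is correct and the existence step (b) is essentially identical to the paper's --- both exhibit the straight segment between $b_1$ and $b_2$ as an explicit trajectory of the required phase. The genuine difference is in how uniqueness is obtained. The paper argues that this segment is the \emph{unique geodesic} in $X\setminus P_{\rm Web}=\mathbb{C}$ connecting $b_1$ and $b_2$ for the flat metric $|Q_{\rm Web}(x)|\,|dx|^2$ (citing Strebel's Theorem~16.2), and since any saddle trajectory is a geodesic, there can be no other saddle connection at any phase. You instead use the rank-one lattice $\Gamma_{\rm Web}=\mathbb{Z}\gamma_{\infty_+}$ together with the phase constraint $\vartheta\equiv\arg Z(\gamma_{\rm BPS})\pmod\pi$ to force every degenerate network to the single phase $\vartheta_0$.

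Your route is more elementary --- it avoids invoking Strebel's geodesic uniqueness --- and, as you anticipate, it is exactly the template the paper itself adopts in all the subsequent examples (Whittaker, Bessel, Kummer, hypergeometric, etc.), where the argument runs through dual cycles and the imaginary-part criterion of Lemma~\ref{lem:diamondlemma} rather than through metric geometry. The paper's geodesic argument, on the other hand, yields slightly more than the proposition states: it shows not only that the degenerate network is unique but that the type I saddle within it is unique (no second saddle between $b_1$ and $b_2$ in a different homotopy class). Your final paragraph correctly notes this extra bookkeeping is not needed for the proposition as stated, though it is relevant for pinning down $\Omega(\gamma_{\infty_\pm})=+1$.
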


\begin{proof}
Let $\gamma(t) = (1-t)\,b_1+t\,b_2$ be the straight line connecting $b_1$ and $b_2$. 
It is easy to see that 
\begin{equation}
    \sqrt{Q_{\rm Web}(\gamma(t))} \, d\gamma(t) = \pm 8 i \,m_\infty \sqrt{t(1-t)} \, dt,
\end{equation}
(where $\pm$ depends on the choice of the branch) which has a constant phase $\arg m_\infty + \pi/2$ (mod $\pi$) for any $t \in [0,1]$. 
This means that the straight line $\gamma(t)$ is a geodesic for the metric (defined away from critical points) given by $|{Q_{\rm Web}(x)}|\,|dx|^2$, and hence, it gives a saddle trajectory of phase $\vartheta = \arg m_\infty + \pi/2$. 
Since the saddle trajectory is the unique geodesic among all paths on $X \setminus P_{\rm Web} = {\mathbb C}$ connecting $b_1$ and $b_2$ (c.f., \cite[Theorem 16.2]{St84}), we can conclude that there is no other saddle connection. 
\end{proof}

The central charge is computed by using \eqref{eq:sign-convention-preimages}. 
Thus, we have the BPS structure whose BPS spectrum and BPS indices are summarized in Table \ref{table:bpsangles-Web}. 
Since we have only one BPS cycle (up to sign), the set $W_{\rm Web}$ is empty and we have $M_{\rm Web} = M'_{\rm Web}$. 

\begin{table}[h]
  \begin{tabular}{|c|c|} \hline
    $\vartheta_{\rm BPS}$
    & $\arg{m_\infty} \pm \pi/2$ 
    \\ \hline 
    degeneration
    & type I saddle
     \\ \hline
     $\gamma_{\rm BPS}$ & $\gamma_{\infty_{\pm}}$
     \\ \hline
          $Z(\gamma_{\rm BPS})$ & $\pm 2 \pi i \, m_\infty $
     \\ \hline
    $\Omega(\gamma_{\rm BPS})$ &
    $+1$  \\ \hline
  \end{tabular} 
   \vspace{+1.em}
     \caption{The BPS spectrum of $\varphi_{\rm Web}$. The phase of the BPS ray is denoted by $\vartheta_{\rm BPS}$. }
     \label{table:bpsangles-Web}
  \end{table}

 \subsubsection{\bf BPS structure from the Whittaker curve}
The Whittaker curve $\widetilde{\Sigma}_{\rm Whi}$ is defined by the quadratic differential $\varphi_{\rm Whi} = Q_{\rm Whi}(x) dx^2$ with
\begin{equation}
Q_{{\rm Whi}}(x) = \dfrac{1}{4}-\frac{m_\infty}{x}
\end{equation}
This differential has a simple zero together with a simple pole at 0 and a pole of order 4 at infinity, under the assumption $m_{\infty} \in M_{\rm Whi} = \mathbb{C}^\ast$. 
It is easy to see that $\widetilde{\Sigma}_{\rm Whi}$ is of genus $0$ with two punctures at $\infty_\pm$. 
We can observe that a degenerate spectral network with a type II saddle appears in Figure \ref{fig:whit3}.

   \begin{figure}[h]
    \centering
    \begin{subfigure}[t]{.3\textwidth}
        \centering
        \includegraphics[width=\linewidth]{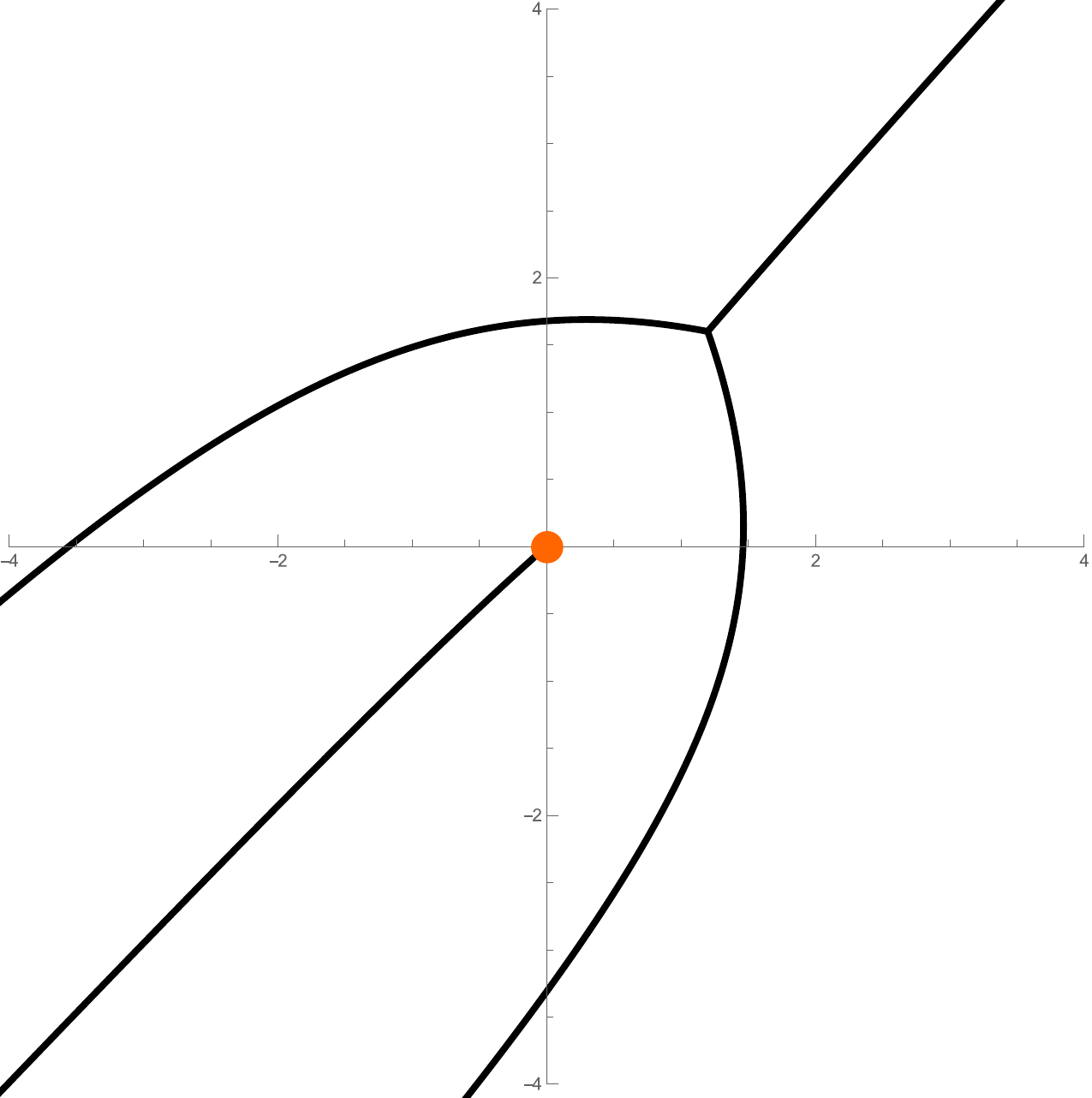}
           \caption{$\vartheta\approx0.82$.}
      \label{fig:whit1}
    \end{subfigure}
    \hspace{0.75cm}              
       \begin{subfigure}[t]{.3\textwidth}
        \centering
        \includegraphics[width=\linewidth]{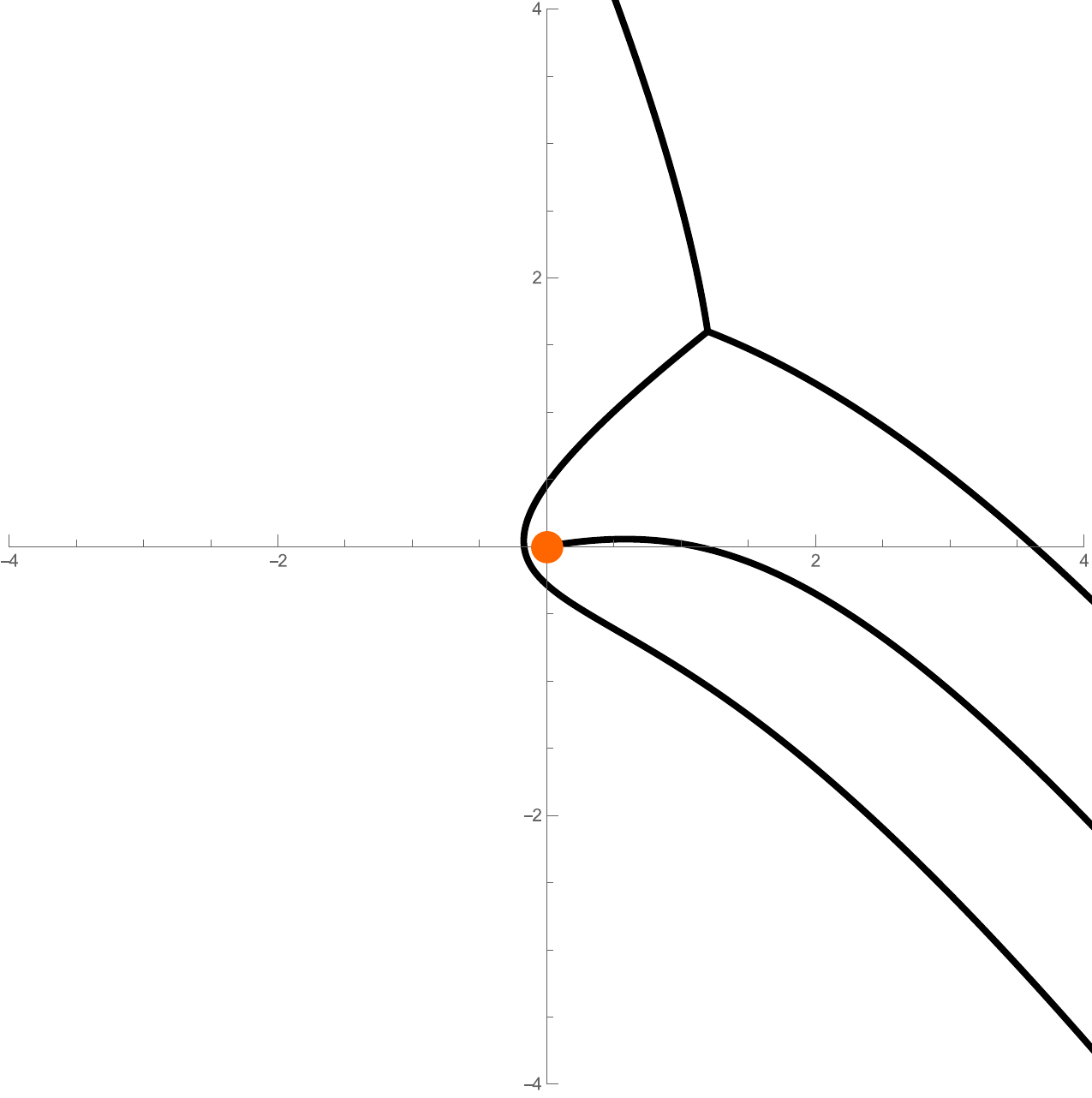}
           \caption{$\vartheta\approx2.12$.}
      \label{fig:whit2}
    \end{subfigure} 
    \\
    \begin{subfigure}[t]{.3\textwidth}
        \centering
        \includegraphics[width=\linewidth]{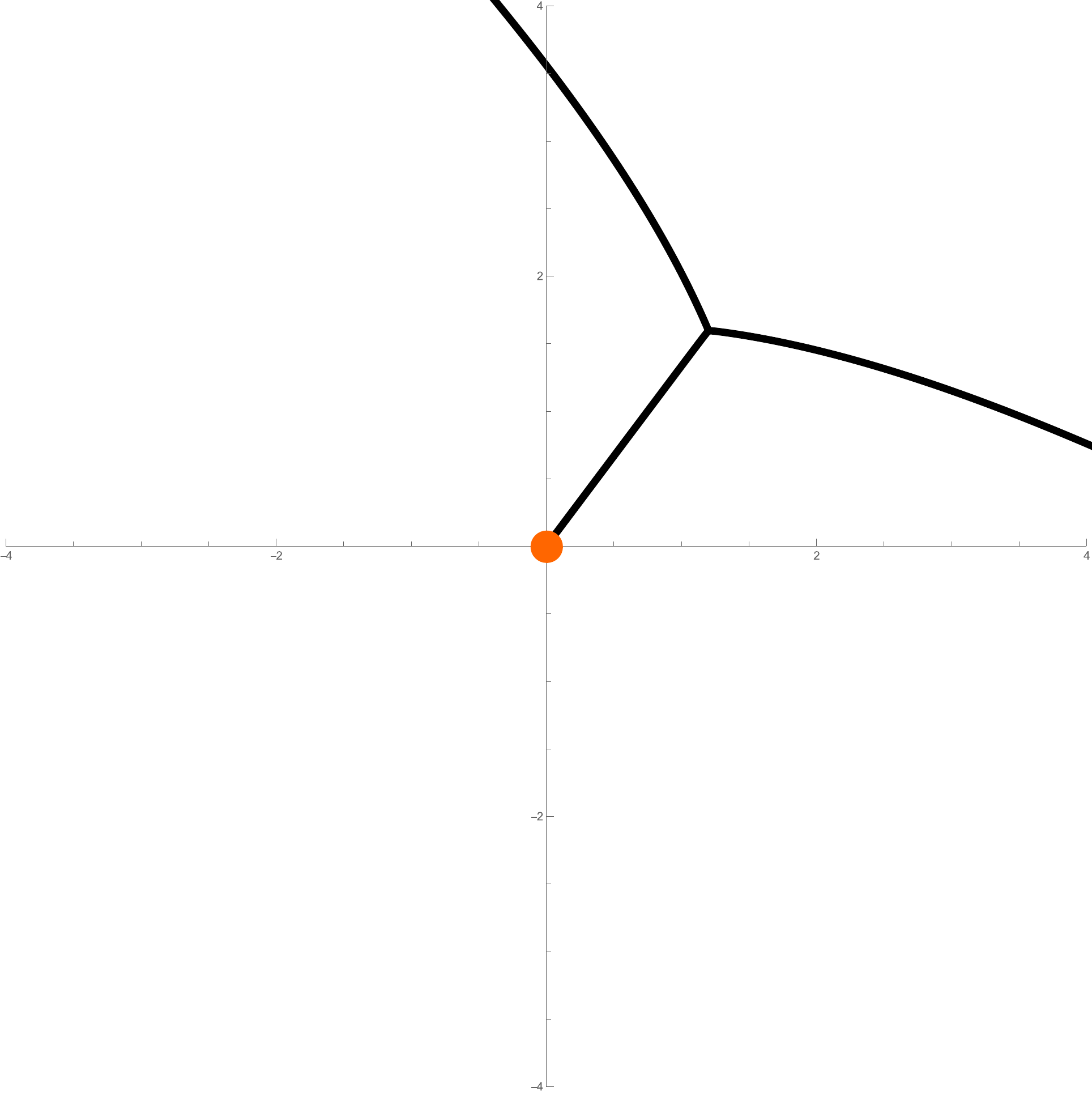}
        \caption{Type II saddle, $\vartheta\approx2.498$.}
      \label{fig:whit3}
    \end{subfigure}
    \hspace{0.75cm}
       \begin{subfigure}[t]{.3\textwidth}
        \centering
        \includegraphics[width=\linewidth]{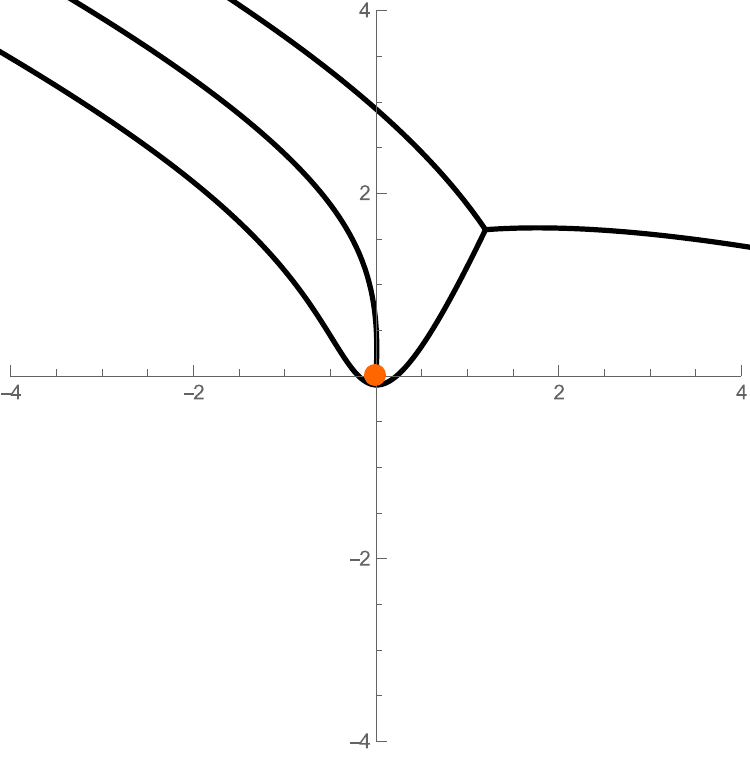}
           \caption{$\vartheta\approx2.79$.}
      \label{fig:whit4}
    \end{subfigure}   
    \caption{Spectral networks for $\varphi_{\rm Whi}$ with $m_0\approx0.3+0.4i$.}
    \label{fig:whit}
\end{figure}

\begin{prop}
Fix $m_\infty \in M_{\rm Whi} = \mathbb{C}^\ast$. Then, $\varphi_{\rm Whi}$ has exactly one degenerate spectral network in the range $\vartheta \in [0,\pi)$. 
It occurs at $\vartheta= \arg m_\infty+\pi/2$ (mod $\pi$) and contains a type II saddle whose associated BPS cycle is $\gamma_{\infty_\pm}$.
\end{prop}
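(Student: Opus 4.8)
The plan is to mimic the strategy used in the proof of Proposition \ref{prop:Weber-BPS-structure}: exhibit an explicit saddle trajectory by a direct parametrization, check it has constant phase, and then invoke the uniqueness of geodesics in the relevant homotopy class to rule out any other degeneration. First I would record the critical points of $\varphi_{\rm Whi}$ under the assumption $m_\infty \in \mathbb{C}^\ast$: the turning points are the simple zero at $x = 4m_\infty$ and the simple pole at $x = 0$, while $\infty$ is a pole of order $4$. A type II saddle must connect $0$ and $4m_\infty$, so the natural candidate is the straight segment $\gamma(t) = 4 m_\infty t$, $t \in (0,1]$, joining the simple pole to the simple zero. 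Substituting, one finds
\begin{equation}
\sqrt{Q_{\rm Whi}(\gamma(t))}\, d\gamma(t) = \pm 2 m_\infty \sqrt{t - 1}\; dt
= \pm 2 i\, m_\infty \sqrt{1-t}\; dt,
\end{equation}
which has constant phase $\arg m_\infty + \pi/2$ (mod $\pi$) for all $t \in (0,1)$. Hence $\gamma(t)$ is a trajectory of phase $\vartheta = \arg m_\infty + \pi/2$, and since it ends at the turning points $0$ and $4m_\infty$, it is a (type II) saddle trajectory, giving one degenerate spectral network at that phase. I should also note the integrability of the $\sqrt{1-t}\,$-type singularity at $t=1$ and the $\sqrt{t-1}/t$-behaviour near the simple pole $t = 0$, both of which are mild (the distinguished coordinate $w = \int \sqrt{Q_{\rm Whi}}\,dx$ extends continuously to the simple pole), so the segment really is a single maximal leaf with both ends at turning points.

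Next I would pin down the BPS cycle. The pullback by $\pi : \widetilde{\Sigma}_{\rm Whi} \to X$ of this segment is, up to orientation, a small loop around the simple pole at $0$ — since $0$ is the unique simple pole, filling it in on $\widetilde{\Sigma}_{\rm Whi}$ means the lifted cycle is homologous to a residue cycle at one of the preimages $\infty_\pm$ of the order-$4$ pole at infinity. Concretely, using the relation \eqref{eq:relation-among-cycles} for this curve (which has $P_{\rm od} \cap D_{\infty} = \emptyset$ and $P_{\rm ev} = \{\infty\}$, so $\gamma_{\infty_+} + \gamma_{\infty_-} = 0$), the lift of the saddle is $\gamma_{\rm BPS} = \gamma_{\infty_\pm}$, consistent with the claimed central charge $Z(\gamma_{\rm BPS}) = \pm 2\pi i\, m_\infty$ computed from \eqref{eq:sign-convention-preimages} together with $\Res_{x=\infty_\pm}\sqrt{Q_{\rm Whi}}\,dx = \pm m_\infty$. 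Since there is (up to sign) a single BPS cycle, $W_{\rm Whi} = \emptyset$ and $M_{\rm Whi} = M'_{\rm Whi}$, and the BPS index is $\Omega(\gamma_{\rm BPS}) = +2$ by the type II entry of \eqref{eq:def-of-BPS-indices}.

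Finally, for uniqueness (``exactly one degenerate spectral network in $\vartheta \in [0,\pi)$''), I would argue as in the Weber case. Any saddle trajectory is a geodesic for the flat metric $|Q_{\rm Whi}(x)|\,|dx|^2$ on $X \setminus P_{\rm Whi} = \mathbb{C}^\ast$; the only possibilities are a type I saddle (impossible here — there is only one simple zero), a type II saddle connecting $0$ and $4m_\infty$, or a loop-type (type IV) saddle. A type IV saddle is excluded because there is no second-order pole of $\varphi_{\rm Whi}$, so \eqref{eq:loop-condition} cannot be satisfied; alternatively, by Teichm\"uller's lemma (Proposition \ref{lem:teich}) a degenerate ring domain would need a double pole as a boundary point. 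For type II, the straight segment above realizes the minimal geodesic in its homotopy class, and by the uniqueness of geodesics (c.f.\ \cite[Theorem 16.2]{St84}) there is no other saddle connecting $0$ and $4m_\infty$; its phase is the single value $\vartheta = \arg m_\infty + \pi/2$ mod $\pi$. I expect the main obstacle to be the careful handling of the geodesic/uniqueness argument at the simple pole endpoint: unlike the Weber case where both endpoints are simple zeros, here one endpoint is a simple pole, so I would need to confirm that Strebel's extremal-length machinery applies with a simple pole serving as a vertex of the relevant $\varphi$-polygon and that no competing geodesic can wrap nontrivially around $0$ or escape toward $\infty$. The cleanest route is probably to use the distinguished coordinate $w$, in which $\varphi_{\rm Whi} = dw^2$ away from $\mathrm{Crit}$ and the simple pole becomes a cone point of angle $\pi$, reducing the claim to the standard fact that between two cone points there is a unique straight-line geodesic of each direction.
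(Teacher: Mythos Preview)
Your approach is essentially correct, though there is a minor slip in the computation: you dropped a factor of $1/\sqrt{t}$, since $Q_{\rm Whi}(4m_\infty t) = (t-1)/(4t)$ gives $\sqrt{Q_{\rm Whi}}\,d\gamma = \pm\, 2 i\, m_\infty \dfrac{\sqrt{1-t}}{\sqrt{t}}\,dt$. This is harmless for the constant-phase conclusion because $\sqrt{1-t}/\sqrt{t} > 0$ on $(0,1)$.

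For the uniqueness direction the paper takes a genuinely different route that sidesteps the concern you flag about geodesic uniqueness at the simple-pole endpoint. Rather than invoking Strebel's extremal-length argument on $\mathbb{C}^\ast$, the paper argues contrapositively: assuming ${\mathcal W}_\vartheta(\varphi_{\rm Whi})$ is nondegenerate, it uses Teichm\"uller's lemma to show there is only one possible topological type of nondegenerate network (the configuration of Figure~\ref{fig:whitproof}), reads off the unique dual cycle $\gamma_D = \gamma_{\infty_\pm}$ of its single horizontal strip, and then applies Lemma~\ref{lem:diamondlemma} (dual cycles satisfy $\mathrm{Im}\,e^{-i\vartheta}Z(\gamma_D) \neq 0$) to conclude $\vartheta \neq \arg m_\infty + \pi/2$ (mod $\pi$). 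Existence of the saddle at the remaining phase is then handled by the direct parametrization, essentially as you do. Your route is more elementary and self-contained provided the cone-point subtlety can be made rigorous; the paper's enumeration route is more robust, avoids that subtlety entirely, and is the template reused for the Gauss, Kummer, and degenerate Gauss cases where no explicit parametrization of the saddle is available.
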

{

\begin{proof}
Suppose that the spectral network ${\mathcal W}_{\vartheta}(\varphi_{\rm Whi})$ with phase $\vartheta$ is nondegenerate; that is, all trajectories are separating. 
Then all four separating trajectories, which we denote by $\gamma_1, \gamma_2, \gamma_3, \gamma_4$,  must terminate at $\infty$. 
Teichmuller's lemma implies that any two separating trajectories emanating from the same simple zero bounding a region without any pole inside must approach $\infty$ with different angles, which must be $\pi$ in this case. 
Therefore, the only possible configuration, topologically, is that of Figure \ref{fig:whitproof} (up to the labeling of the four separating trajectories). 
It contains a unique horizontal strip with dual cycle $\gamma_{\infty_\pm}$, and the associated central charge is $Z(\gamma_{\infty_{\pm}}) = \pm 2 \pi i m_\infty$. 
Therefore, due to Lemma \ref{lem:diamondlemma} (and the equality \eqref{eq:rotation-of-spectral-network}), we must have ${\rm Im} (\pm e^{- i \vartheta} \, 2 \pi i m_\infty) \ne 0$, which is equivalent to $\vartheta \ne \arg m_\infty +\pi/2$ (mod $\pi$), when the spectral network ${\mathcal W}_{\vartheta}(\varphi_{\rm Whi})$ is nondegenerate.

Conversely, if ${\mathcal W}_{\vartheta}(\varphi_{\rm Whi})$ degenerates, then it must contain a saddle trajectory of type II which connects $0$ and $4m_\infty$. 
By a similar argument given in the proof of Proposition \ref{prop:Weber-BPS-structure} for the Weber case, we may prove that the saddle trajectory has phase $\arg m_\infty  +\pi/2$ (mod $\pi$), with the associated BPS cycle $\gamma_{\infty_\pm}$ and central charge $Z(\gamma_{\infty_\pm}) = \pm 2 \pi i m_\infty$. 
\end{proof}

 \begin{figure}[h]
        \centering
        \includegraphics[width=0.24\linewidth,trim={0cm -0.3cm 0cm 0cm}]{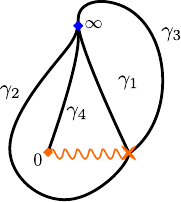}
           \caption{The only possible nondegenerate spectral network for $\varphi_{\rm Whi}$.}
      \label{fig:whitproof}
\end{figure}

}

In summary, we have the BPS structure whose BPS spectrum and BPS indices are listed in Table \ref{table:bpsangles-Whi} ($M_{\rm Whi} = M'_{\rm Whi}$).

\begin{table}[h]
  \begin{tabular}{|c|c|} \hline
    $\vartheta_{\rm BPS}$
    & $\arg m_\infty \pm \pi/2$ 
    \\ \hline 
    degeneration
    & type II saddle
    \\ \hline
    $\gamma_{\rm BPS}$ & $\gamma_{\infty_\pm}$
    \\ \hline
      $Z(\gamma_{\rm BPS})$ &
    $\pm 2 \pi i \, m_\infty$  \\ \hline
    $\Omega(\gamma_{\rm BPS})$ &
    $+2$  \\ \hline
  \end{tabular} 
   \vspace{+1.em}
     \caption{The BPS spectrum of $\varphi_{\rm Whi}$.}
     \label{table:bpsangles-Whi}
\end{table}

 \subsubsection{\bf BPS structure from the Bessel curve}
For $m_0 \in M_{\rm Bes} = {\mathbb C}^\ast$, the quadratic differential $\varphi_{\rm Bes} = Q_{\rm Bes}(x)dx^2$ with
\begin{equation}
Q_{\rm{Bes}}(x)=\dfrac{x+4m_0^2}{4x^2}
\end{equation}
has a single simple zero, a second order pole at the origin, and a pole of order 3 at $\infty$. 
The associated Bessel curve $\Sigma_{\rm{Bes}}$ $(= \widetilde{\Sigma}_{\rm{Bes}})$ is of genus $0$ with three punctures, at $0_{\pm}$ and $\infty$. 

We can draw the spectral network at some chosen value $m_0 \in \mathbb{C}^\ast$. 
The result is a single BPS cycle beginning and terminating at the branch point, projecting to a loop around the origin as in Figure \ref{fig:bessel}. 

    \begin{figure}[h]
    \centering
    \begin{subfigure}[t]{.31\textwidth}
        \centering
        \includegraphics[width=1.05\linewidth,trim={8.75cm 10cm 8.75cm 8.5cm},clip]{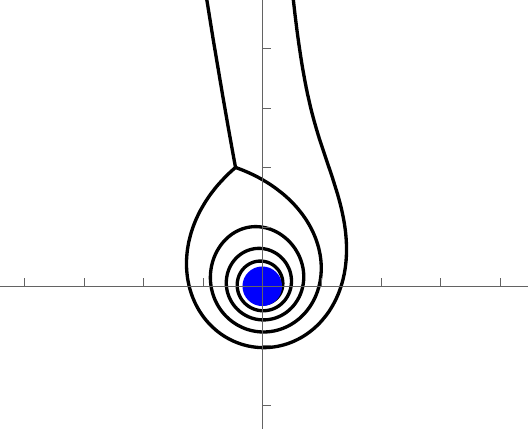}
           \caption{$\vartheta\approx0.84$}
      \label{fig:bessel1}
    \end{subfigure}
    \hspace{0.2cm}                  
    \begin{subfigure}[t]{.31\textwidth}
        \centering
        \includegraphics[width=1.05\linewidth,trim={8.75cm 10cm 8.75cm 8.5cm},clip]{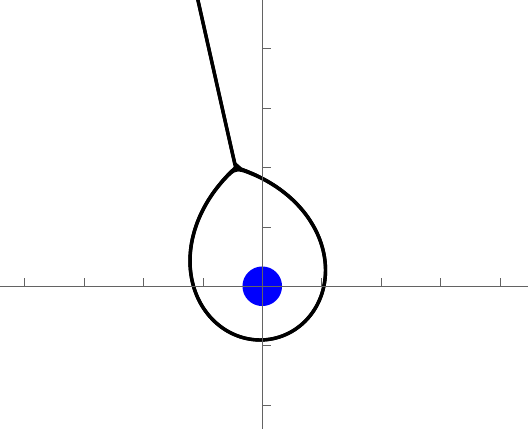}
        \caption{Type IV saddle, $\vartheta\approx0.90$}
      \label{fig:bessel2}
    \end{subfigure}
       \hspace{0.2cm}                  
    \begin{subfigure}[t]{.31\textwidth}
        \centering
        \includegraphics[width=1.05\linewidth,trim={8.75cm 10cm 8.75cm 8.5cm},clip]{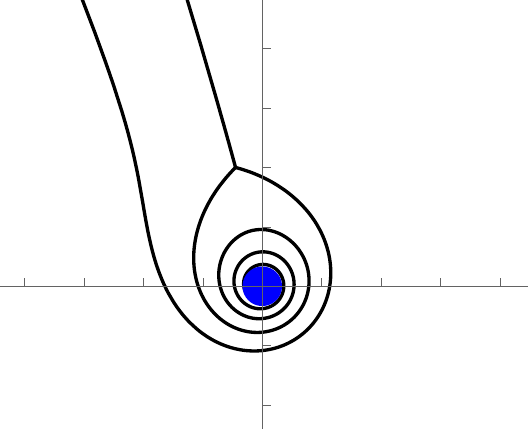}
        \caption{$\vartheta\approx0.96$}
      \label{fig:bessel3}
    \end{subfigure}
    \caption{Spectral networks for $\varphi_{\rm Bes }$ with $m_0 \approx -0.5+0.4i$.}
    \label{fig:bessel}
\end{figure}
 
\noindent We may show
\begin{prop}
Fix $m_0\in M_{\rm Bes}$. Then, $\varphi_{\rm Bes}$ has exactly one degenerate spectral network in the range $\vartheta \in [0,\pi)$, appearing at $\vartheta= \arg m_0 +\pi/2$ (mod $\pi$). At this phase, a degenerate ring domain appears around the origin, and the associated BPS cycle is $\gamma_{0_\pm}-\gamma_{0_\mp}$.
\end{prop}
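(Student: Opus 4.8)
The plan is to reduce the statement to the structural results already in hand for hypergeometric type differentials --- Lemma~\ref{lemm:no-recurrent-trajectory} and Proposition~\ref{prop:loops} --- after first recording the critical point data of $\varphi_{\rm Bes}$. So I would begin by noting that, for every $m_0 \in M_{\rm Bes} = {\mathbb C}^\ast$, the differential $\varphi_{\rm Bes} = \frac{x + 4m_0^2}{4x^2}\,dx^2$ has exactly one turning point, the simple zero $b := -4m_0^2$; it has \emph{no} simple poles at all, a single double pole at $x = 0$, and a pole of order $3$ at $x = \infty$. A one-line computation with the convention \eqref{eq:sign-convention-preimages} gives $\Res_{x = 0_\pm}\sqrt{Q_{\rm Bes}(x)}\,dx = \pm m_0$, hence $Z(\gamma_{0_\pm}) = \pm 2\pi i\, m_0$ and $Z(\gamma_{0_+} - \gamma_{0_-}) = 4\pi i\, m_0$.

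Next I would eliminate all saddle types except type IV. A degenerate spectral network by definition contains a saddle trajectory, and in the type I--V classification types I, II, III require respectively two distinct simple zeros, a simple zero together with a simple pole, and two distinct simple poles; since $\varphi_{\rm Bes}$ has a unique zero and no simple pole, these are impossible, and type V (a non-degenerate ring domain) is excluded by Lemma~\ref{lemm:no-recurrent-trajectory}, which applies because $\varphi_{\rm Bes}$ is one of the differentials in Table~\ref{table:classical}. Therefore any degeneration of ${\mathcal W}_\vartheta(\varphi_{\rm Bes})$ is a type IV saddle, the loop-type boundary of a degenerate ring domain; by Teichm\"uller's lemma (Proposition~\ref{lem:teich}) the point boundary component of such a ring domain must be a double pole, which forces it to encircle $x = 0$.

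At this stage the proposition follows from Proposition~\ref{prop:loops} applied with $s = 0$: a degenerate ring domain around $0$ appears if and only if $\vartheta \equiv \arg m_0 + \pi/2 \pmod \pi$, the associated BPS cycle is $\gamma_{0_\pm} - \gamma_{0_\mp}$, and there is exactly one such $\vartheta$ in $[0,\pi)$; as in the Weber case there is only one BPS cycle up to sign, so $M_{\rm Bes} = M'_{\rm Bes}$ and Proposition~\ref{prop:loops} is available for every $m_0 \in M_{\rm Bes}$. If a self-contained argument is preferred, the ``only if'' direction can be done by hand: near $x = 0$ one has $\varphi_{\rm Bes} = (m_0^2 x^{-2} + O(x^{-1}))\,dx^2$, so after the rotation $\varphi \mapsto e^{-2i\vartheta}\varphi$ of \eqref{eq:rotation-of-spectral-network} the leading coefficient is $e^{-2i\vartheta}m_0^2$, and the normal form $\frac{r}{z^2}dz^2$ has circular horizontal trajectories precisely for $r \in {\mathbb R}_{<0}$, which is exactly the loop condition \eqref{eq:loop-condition}, i.e. $\vartheta \equiv \arg m_0 + \pi/2 \pmod\pi$. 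The ``if'' direction, genuine existence of the closed saddle, is the one real input: when \eqref{eq:loop-condition} holds the circular foliation near $0$ extends to a maximal ring domain whose outer boundary, by the Strebel structure theory \cite{St84}, is a chain of saddle trajectories, and since $b$ is the unique turning point this chain must be a single closed saddle based at $b$ winding once around $0$, whose lift is $\gamma_{0_+} - \gamma_{0_-}$ with central charge $4\pi i\, m_0$ --- consistent with the required phase $\arg m_0 + \pi/2$.

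The main obstacle is exactly this existence statement: showing that the circular foliation around the double pole really closes up into a \emph{bounded} degenerate ring domain (rather than a spiral, which is where Jenkins' three pole theorem enters) and that its bounding curve is an honest saddle through $b$. This is precisely what the converse half of the discussion after \eqref{eq:loop-condition}, packaged in Proposition~\ref{prop:loops}, supplies, so in the final write-up I expect simply to invoke it; the rest --- counting zeros and poles, checking the residue normalization, reading off the BPS cycle, and noting uniqueness of $\vartheta$ in $[0,\pi)$ --- is routine.
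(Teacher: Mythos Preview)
Your proposal is correct and follows essentially the same approach as the paper: the paper's proof is the two-line observation that with only one branch point any saddle must be of loop type, and then an invocation of Proposition~\ref{prop:loops}. Your version is a more explicit unpacking of the same argument --- enumerating and excluding types I--III and V, pinning the ring domain to the double pole at $0$, and being careful about the $M'_{\rm Bes} = M_{\rm Bes}$ issue --- which is fine and arguably clearer, but not a genuinely different route.
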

\begin{proof}
Since there is only one branch point, any saddle trajectory must be of loop type. 
According to Proposition \ref{prop:loops}, a loop can appear around a second order pole if and only if the residue is imaginary, which is equivalent to $\vartheta = \arg m_0 + \pi/2$ (mod $\pi$) in the Bessel case. 
The rest of the claim also follows from Proposition \ref{prop:loops}. 
\end{proof}





In summary, we have the BPS structure whose BPS spectrum and BPS indices are summarized in Table \ref{table:bpsangles-Bes} ($M_{\rm Bes} = M'_{\rm Bes}$). 
We used $Z(\gamma_{0_\pm})= \pm 2\pi i m_0$ to compute the central charge.

\begin{table}[h]
  \begin{tabular}{|c|c|} \hline
    $\vartheta_{\rm BPS}$
    & $\arg m_0 \pm \pi/2$
    \\ \hline 
    degeneration
    & degenerate ring domain
     \\ \hline
     $\gamma_{\rm BPS}$ & $\gamma_{0_\pm} -  \gamma_{0_\mp}$
     \\ \hline
    $Z(\gamma_{\rm BPS})$ &
    $\pm 4 \pi i \, m_0$  \\ \hline
    $\Omega(\gamma_{\rm BPS})$ &
    $-1$  \\ \hline
  \end{tabular} 
   \vspace{+1.em}
    \caption{The BPS spectrum of $\varphi_{\rm Bes}$.}
     \label{table:bpsangles-Bes}
  \end{table}

  \subsubsection{\bf Degenerate Bessel and Airy curves}
  For completeness, we record the trivial cases:
  \begin{equation}
    Q_{\rm Ai}(x) = x , \qquad Q_{\rm dBes}(x) = \frac{1}{x}. 
  \end{equation}
  \begin{prop}
  Let $\bullet = $ $\rm{dBes}$ or ${\rm Ai}$. Then there are no degenerate networks for any $\vartheta \in [0,\pi)$, and the BPS spectrum of $\varphi_\bullet$ is empty.
  \end{prop}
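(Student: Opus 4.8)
The plan is to show that neither $\varphi_{\rm Ai} = x\,dx^2$ nor $\varphi_{\rm dBes} = x^{-1}\,dx^2$ admits a saddle trajectory of any phase, which makes the emptiness of the BPS spectrum immediate. First I would record the critical loci. For $\varphi_{\rm Ai}$ there is a single simple zero at $x=0$ and, since $\widetilde{Q}_{\rm Ai}(\widetilde x) = \widetilde x^{-5}$, a pole of order $5$ at $\infty$; thus $T_{\rm Ai} = \{0\}$ and ${\rm Crit}_{{\rm Ai},\infty} = \{\infty\}$, with no pole of order $2$. For $\varphi_{\rm dBes}$ there is a single simple pole at $x=0$ (a turning point of simple-pole type) and a pole of order $3$ at $\infty$; thus $T_{\rm dBes} = \{0\}$, again with ${\rm Crit}_{{\rm dBes},\infty} = \{\infty\}$ and no double pole. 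In both cases the set of finite critical points is a single point.

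Next I would go through the classification of saddle trajectories (types~I--V). Types~I, II, and III each require two distinct finite critical points of the appropriate kinds (two zeros; a zero and a simple pole; two simple poles), so none can occur when $|T| = 1$ and the unique finite critical point is a zero (Airy) or a simple pole (degenerate Bessel). Type~V saddles are excluded by Lemma~\ref{lemm:no-recurrent-trajectory}, which forbids nondegenerate ring domains in all examples of Table~\ref{table:classical}. A type~IV saddle would bound a degenerate ring domain whose point-like boundary component is, by Teichm\"uller's lemma (Proposition~\ref{lem:teich}), a pole of order $2$ of $\varphi_\bullet$; since neither differential has a double pole, type~IV is impossible as well.

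There remains only the possibility of a saddle connecting the unique finite critical point to itself. For $\varphi_{\rm dBes}$ this is settled immediately: exactly one separating trajectory emanates from a simple pole (Figure~\ref{fig:simplepole} and the corresponding normal form), so no critical trajectory can return to it and no such loop exists. For $\varphi_{\rm Ai}$, such a saddle would be a simple closed curve through the simple zero $b = 0$, which (as it contains no pole) decomposes $X = {\mathbb P}^1$ into two closed disks $D_1, D_2$, each a $\varphi$-polygon with the single vertex $b$. Applying Teichm\"uller's lemma (Proposition~\ref{lem:teich}) to $D_i$ gives $\beta_b = 2 + \sum_j n_j$, where the sum runs over critical points in the interior of $D_i$ and $\beta_b = 1 - \frac{3\theta_b}{2\pi}$ with $\theta_b$ a positive integer multiple of $\pi$ (the sector angle at a simple zero). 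Since the only pole of $\varphi_{\rm Ai}$ is the order-$5$ pole at $\infty$, the interior total order $\sum_j n_j$ equals $0$ or $-5$, and one checks that no integer multiple of $\pi$ for $\theta_b$ makes $\beta_b$ equal to $2$ or $-3$; this contradiction rules out the self-loop.

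Assembling the pieces: no saddle trajectory of any phase occurs for $\varphi_{\rm Ai}$ or $\varphi_{\rm dBes}$, so every spectral network ${\mathcal W}_\vartheta(\varphi_\bullet)$ is nondegenerate, and since no ring domains occur either (by the above together with Lemma~\ref{lemm:no-recurrent-trajectory}), there are no BPS cycles; the BPS spectrum is therefore empty and $\Omega \equiv 0$. I expect the only genuinely delicate point to be the Teichm\"uller bookkeeping for the Airy self-loop --- pinning down the admissible interior angles at the simple zero and tracking which disk contains $\infty$ --- while everything else is a direct consequence of the saddle classification and Lemma~\ref{lemm:no-recurrent-trajectory}.
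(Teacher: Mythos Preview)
Your argument is correct in spirit but takes a much longer road than the paper. The paper's proof is a single observation: for both $\bullet = {\rm Ai}$ and $\bullet = {\rm dBes}$, the partially compactified cover $\widetilde{\Sigma}_\bullet$ is a sphere with exactly one puncture (the unique preimage of the odd-order pole at $\infty$), so $H_1(\widetilde{\Sigma}_\bullet, {\mathbb Z}) = 0$ and hence $\Gamma_\bullet = 0$. Since every saddle trajectory or ring domain would produce a BPS cycle in $\Gamma$ with nonzero central charge, the triviality of $\Gamma$ forces the absence of any degeneration, and the BPS spectrum is empty. This bypasses the entire case analysis of saddle types I--V.

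Your trajectory-by-trajectory analysis is nonetheless valid, with one slip in the Airy self-loop step: at a simple zero the three critical trajectories of a given phase are separated by angle $\tfrac{2\pi}{3}$, so the interior angle $\theta_b$ of your $\varphi$-polygon (both incident edges being trajectories of the same phase) must be an integer multiple of $\tfrac{2\pi}{3}$, not of $\pi$ as you wrote. With the corrected angles $\theta_b \in \{\tfrac{2\pi}{3}, \tfrac{4\pi}{3}\}$ one gets $\beta_b = 1 - k \in \{0,-1\}$ for $k \in \{1,2\}$, which still never equals $2$ or $-3$, so your contradiction and conclusion survive. The paper's homology argument is far cleaner and avoids this bookkeeping; your direct argument has the modest virtue of not appealing to the nonvanishing of central charges of BPS cycles, proving the absence of saddles purely from the local trajectory structure.
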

  \begin{proof}
  Both of the spectral covers $\widetilde{\Sigma}_{\rm Ai} (= \Sigma_{\rm Ai})$ and $\widetilde{\Sigma}_{\rm dBes}$ have trivial homology groups. 
  Thus in both cases the lattice and therefore the central charge and $\Omega$, are trivial. 
  \end{proof}

\subsection{Main example -- BPS structure from the Gauss hypergeometric curve}

We turn now to the main example of this paper, the quadratic differential arising from the celebrated Gauss hypergeometric equation. 
It is explicitly given by $\varphi_{\rm HG}({\bm m}) = Q_{{\rm HG }}(x)dx^2$ where
\begin{equation}
Q_{\rm HG}(x) = \frac{m_{\infty}^2 x^2 - (m_\infty^2 - m_1^2 + m_0^2) x + m_0^2}{x^2(x-1)^2}.
\end{equation}
Under the assumption ${\bm m} \in M_{\rm HG}$, it has two simple zeros, and second order poles at $0$, $1$ and $\infty$. 
Thus, the associated Gauss hypergeometric curve $\Sigma_{\rm HG} (= \widetilde{\Sigma}_{\rm HG})$ is of genus 0 with six punctures at $0_\pm$, $1_\pm$ and $\infty_\pm$.

Choosing a generic value for the parameters ${\bm m}$, the degenerations that appear are depicted in Figure \ref{fig:hgsaddles}. 

  \begin{figure}[h!]
    \centering
    \begin{subfigure}[t]{.3\textwidth}
        \centering
        \includegraphics[width=\linewidth,trim={5cm 5cm 5cm 5cm},clip]{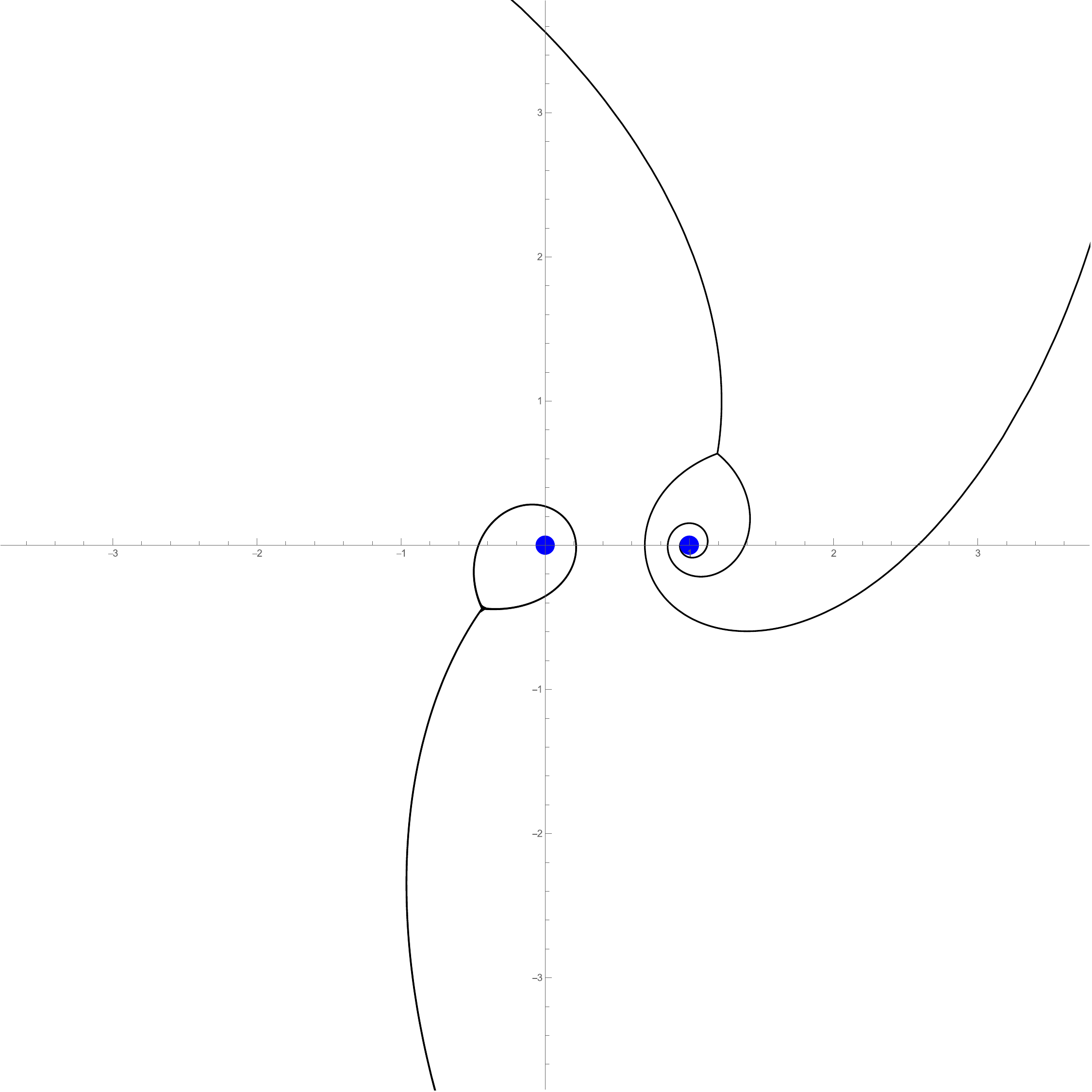}
           \caption{Type IV saddle, $\vartheta\approx0.19$}
      \label{fig:hg1}
    \end{subfigure}       
     \begin{subfigure}[t]{.3\textwidth}
        \centering
        \includegraphics[width=\linewidth,trim={5cm 5cm 5cm 5cm},clip]{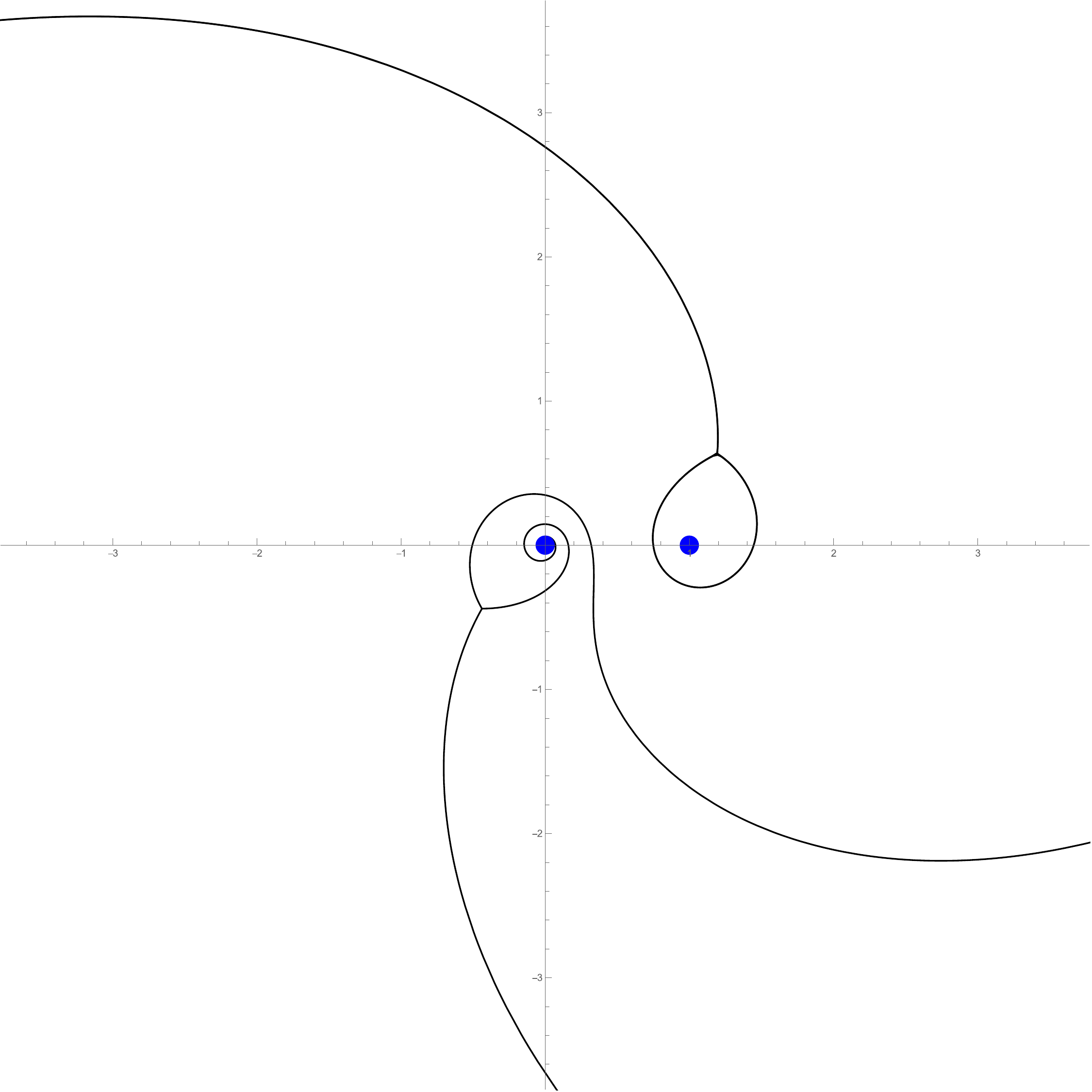}
           \caption{Type IV saddle, $\vartheta\approx0.338$}
      \label{fig:hg3}
    \end{subfigure}
     \begin{subfigure}[t]{.3\textwidth}
        \centering
        \includegraphics[width=\linewidth,trim={5cm 5cm 5cm 5cm},clip]{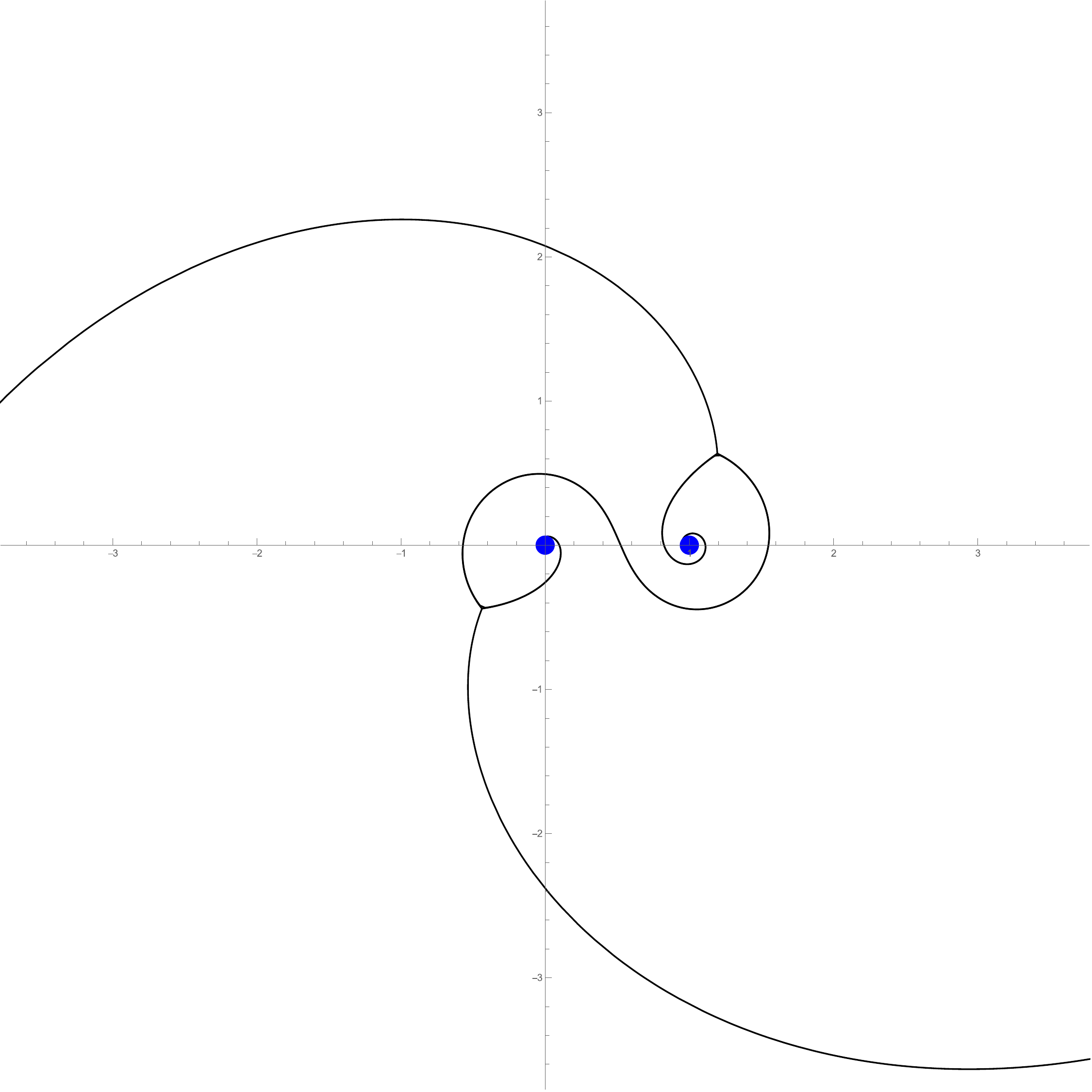}
           \caption{Type I saddle, $\vartheta\approx0.554$}
      \label{fig:hg5}
    \end{subfigure}
     \begin{subfigure}[t]{.3\textwidth}
        \centering
        \includegraphics[width=\linewidth,trim={5cm 5cm 5cm 5cm},clip]{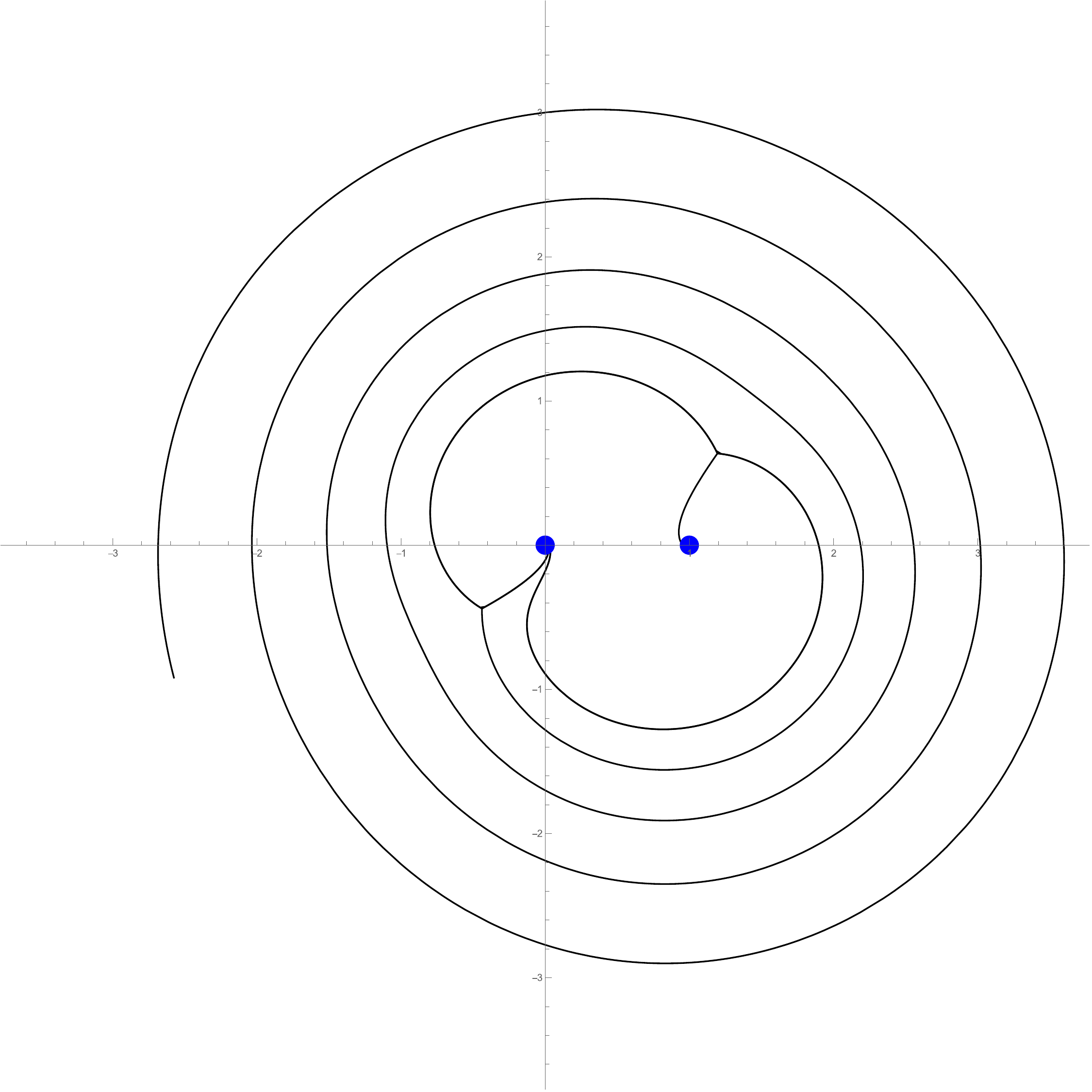}
           \caption{Type I saddle, $\vartheta\approx1.088$}
      \label{fig:hg7}
    \end{subfigure}
        \begin{subfigure}[t]{.3\textwidth}
        \centering
        \includegraphics[width=\linewidth,trim={5cm 5cm 5cm 5cm},clip]{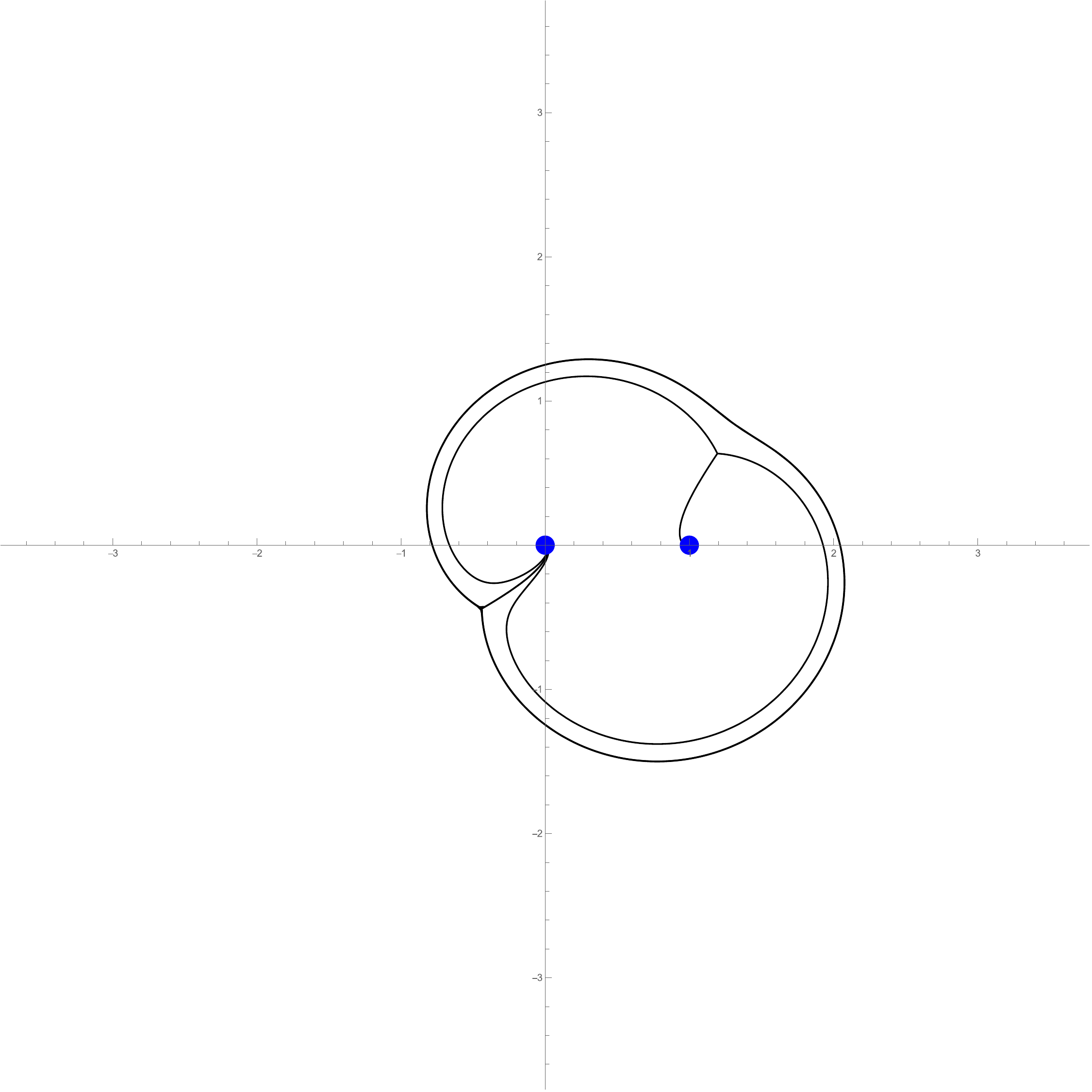}
           \caption{Type IV saddle, $\vartheta\approx1.122$}
      \label{fig:hg9}
    \end{subfigure}
        \begin{subfigure}[t]{.3\textwidth}
        \centering
        \includegraphics[width=\linewidth,trim={5cm 5cm 5cm 5cm},clip]{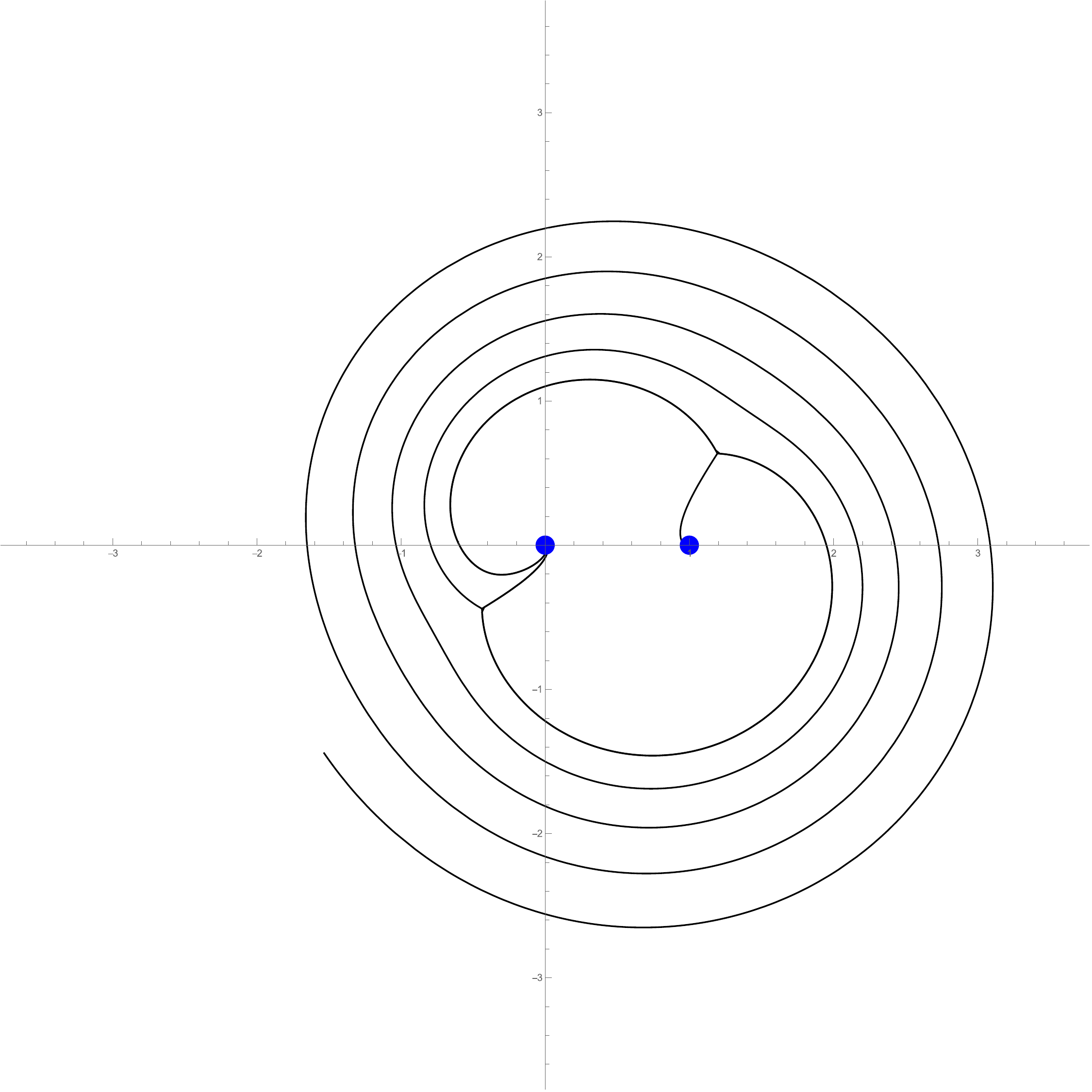}
           \caption{Type I saddle, $\vartheta\approx1.148$}
      \label{fig:hg11}
    \end{subfigure}
        \begin{subfigure}[t]{.3\textwidth}
        \centering
        \includegraphics[width=\linewidth,trim={5cm 5cm 5cm 5cm},clip]{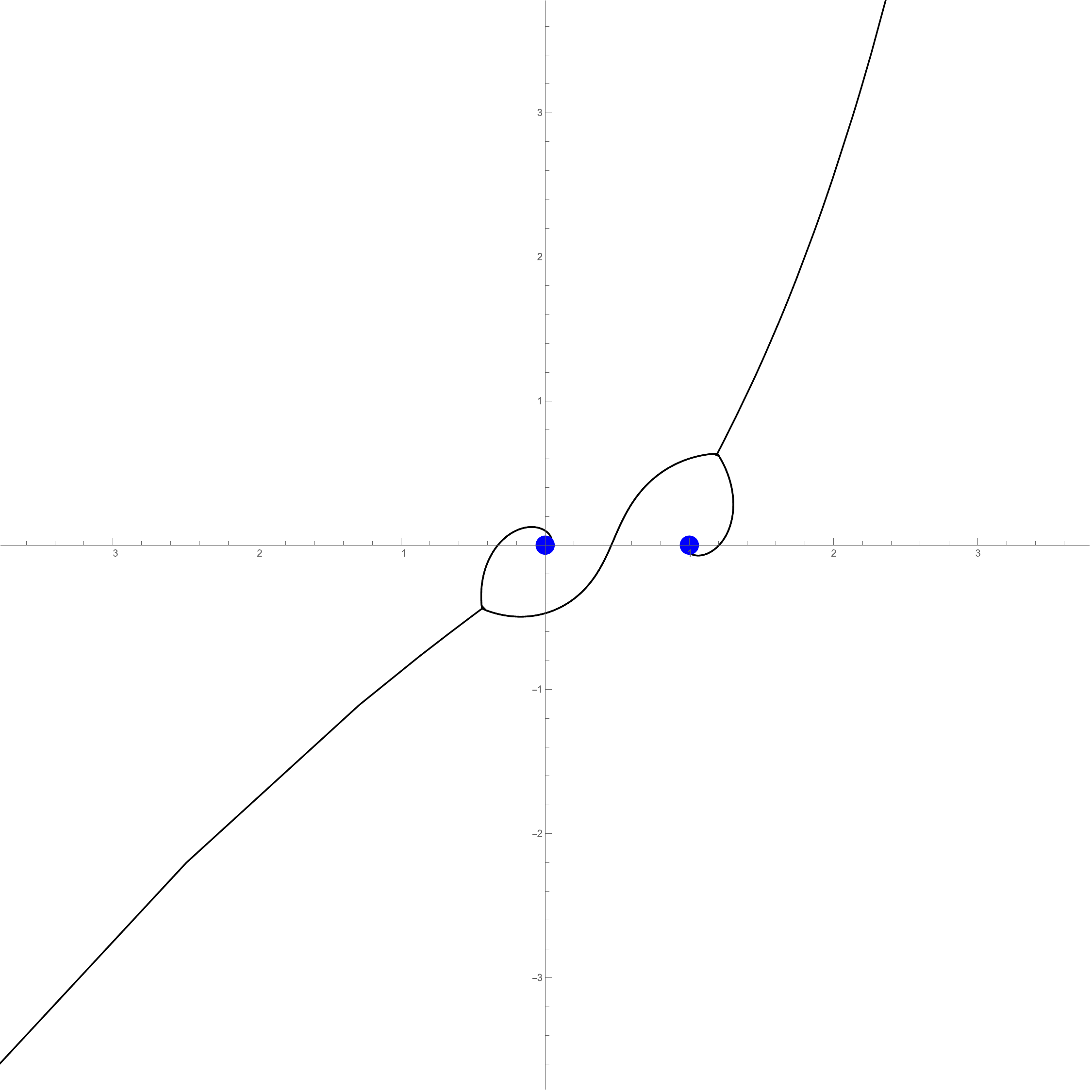}
           \caption{Type I saddle, $\vartheta\approx2.866$}
      \label{fig:hg13}
    \end{subfigure}
    \caption{$\mathcal{W}_{\vartheta_{\rm BPS}}(\varphi_{\rm HG})$ with~$m_0^2\approx~0.5+~0.2i$,~ $m_1^2\approx~0.5+~0.4i$,~$m_\infty^2 \approx~-~0.4+~0.5i$.} 
    \label{fig:hgsaddles}
\end{figure}
  
It is easy to compute the homology classes of each corresponding $\gamma_{\rm BPS}$, so we observe the existence of saddle trajectories with the given classes at seven values of $\vartheta \in [0,\pi)$. 
{ 
Among the seven degenerate spectral networks, four of them (i.e., Figure \ref{fig:hg5}, \ref{fig:hg7}, \ref{fig:hg11}, and \ref{fig:hg13}) contain a type I saddle, while the other three (i.e., Figure \ref{fig:hg1}, \ref{fig:hg3}, and \ref{fig:hg9}) contain a type IV saddle bounding a degenerate ring domain around $0$, $1$ and $\infty$, respectively.
We note that the appearance of four type I saddles was already observed in \cite[Figure 6]{GLPY17}.
}
We collect the classes appearing as $\gamma_{\rm BPS}$ in Tables \ref{table:bpsangles-HG}.


We can again prove that this is in fact exactly the BPS spectrum, for any value of ${\bm m}\in M_{\rm HG}'$.
First, we take care of the ring domains:

{ 
\begin{lemm} \label{lemm:loops-in-HG-cases}
For each $s \in \{0,1,\infty \}$, the spectral network ${\mathcal W}_{\vartheta}(\varphi_{\rm HG})$ contains a degenerate ring domain around $s$ if and only if $\vartheta = \arg m_s + \pi/2$ (mod $\pi$). 
The associated BPS cycle is $\gamma_{s_\pm} - \gamma_{s_\mp}$. 
\end{lemm}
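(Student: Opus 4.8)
The plan is to deduce the statement from the general ring-domain analysis on the generic locus, Proposition \ref{prop:loops}, after recording the local data specific to $\varphi_{\rm HG}$. Under Assumption \ref{ass:genericity} the function $Q_{\rm HG}(x)$ has exactly two simple zeros (the turning points), and its poles are \emph{double} poles located precisely at $0$, $1$ and $\infty$; the behaviour at $\infty$ is read off in the chart $\tilde x = x^{-1}$ using the gluing rule of the Remark after \eqref{eq:spectral-curve-x-y}. Moreover $\widetilde\Sigma_{\rm HG} = \Sigma_{\rm HG}$, there is a single pair of preimages $s_\pm \in \overline\Sigma_{\rm HG}$ over each $s \in \{0,1,\infty\}$, and $\Res_{x = s_\pm}\sqrt{Q_{\rm HG}(x)}\,dx = \pm m_s$, with $m_s \neq 0$, by the sign convention \eqref{eq:sign-convention-preimages}. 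Teichm\"uller's lemma (Proposition \ref{lem:teich}) forces the point-like boundary component of any degenerate ring domain to be a double pole, so $0$, $1$, $\infty$ are the only candidates; and since ${\bm m} \in M_{\rm HG}'$, Proposition \ref{prop:loops} guarantees that every ring domain that occurs is degenerate.

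With this in place I would apply Proposition \ref{prop:loops} (equivalently, the criterion recorded around \eqref{eq:loop-condition}) to each $s \in \{0,1,\infty\}$: a degenerate ring domain appears around $s$ in ${\mathcal W}_\vartheta(\varphi_{\rm HG})$ if and only if $e^{-i\vartheta}\Res_{x=s}\sqrt{Q_{\rm HG}(x)}\,dx \in i\mathbb R_{\neq 0}$, and since this residue is $\pm m_s$ the condition reads $e^{-i\vartheta} m_s \in i\mathbb R_{\neq 0}$, i.e. $\vartheta \equiv \arg m_s + \tfrac{\pi}{2} \pmod{\pi}$, which is the claimed locus. The same proposition identifies the associated BPS cycle up to sign as $\gamma_{s_\pm} - \gamma_{s_\mp}$: the type IV saddle bounding the ring domain is freely homotopic, within the ring domain, to a positively oriented small circle around $s$, so its lift to $\widetilde\Sigma_{\rm HG}$ under $\pi$ is built from the residue cycles $\gamma_{s_\pm}$; it must be the \emph{anti-invariant} combination $\gamma_{s_+} - \gamma_{s_-}$, since the invariant combination $\gamma_{s_+} + \gamma_{s_-}$ is fixed by $\sigma_\ast$ and hence does not lie in $\Gamma$.

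I do not expect a genuine obstacle: granting Proposition \ref{prop:loops}, the lemma is essentially bookkeeping, the only point deserving a line of care being the confirmation that $\varphi_{\rm HG}$ has no second-order poles beyond $0,1,\infty$ and that the genericity hypothesis ${\bm m} \in M_{\rm HG}'$ is exactly what excludes the degenerate ``eyeball'' ring domain of Figure \ref{fig:loop2} (which would force a chain of type I saddles to occur at the same phase $\arg m_s + \pi/2$, contradicting that distinct BPS cycles sit at distinct phases on $M_{\rm HG}'$). If a self-contained argument is preferred, the ``only if'' direction is purely local: near a double pole $s$ the normal form $\tfrac{r}{z^2}\,dz^2$ (with $\sqrt r$ the residue) admits closed trajectories encircling $s$ of phase $\vartheta$ precisely when $e^{-i\vartheta}\sqrt r \in i\mathbb R$; and the ``if'' direction then follows because on $\mathbb P^1$ with three double poles Lemma \ref{lemm:no-recurrent-trajectory} rules out recurrent and non-degenerate ring behaviour, so this local family of closed trajectories must assemble into an honest (necessarily degenerate) ring domain around $s$.
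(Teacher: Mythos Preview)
Your proposal is correct and takes the same approach as the paper: the paper's proof is the single sentence ``This is a consequence of Proposition \ref{prop:loops},'' and you do exactly this, only unpacking the local data (that $0,1,\infty$ are the double poles with residues $\pm m_s$) and the cycle identification that the paper leaves implicit. Your additional remarks about Teichm\"uller's lemma, the exclusion of the eyeball configuration on $M_{\rm HG}'$, and the self-contained local argument are all fine elaborations but go beyond what the paper records.
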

\begin{proof}
This is a consequence of Proposition \ref{prop:loops}. 
\end{proof}

In what follows, we assume that 
$\arg m_0$, $\arg m_1$ and $\arg m_\infty$ are pairwise distinct (mod $\pi$) in order to work on the generic locus. 
Since $\varphi_{\rm HG}$ has no simple poles, the only other possible degenerations arise from type I saddles. 
The following lemma describes the candidate BPS cycles associated with type I saddles (we note a similar result is obtained by \cite[\S 12,4]{BS13} in a representation-theoretic language): 
}

\begin{lemm}
\label{lem:hgbps}
{
The only possible BPS cycles associated with a type I saddle appearing in the spectral network of $\varphi_{\rm HG}$ are in 
\begin{equation} \label{eq:candidate-of-BPS-cycles}
    \{ \gamma_{0_\epsilon} + \gamma_{1_{\epsilon'}} + \gamma_{\infty_{\epsilon''}} 
    ~|~ \epsilon, \epsilon', \epsilon'' \in \{\pm \} \} \subset \Gamma_{\rm HG}.
\end{equation}

}

\end{lemm}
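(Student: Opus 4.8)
The plan is to argue directly from the topology of the branched double cover $\pi : \Sigma_{\rm HG} \to \mathbb{P}^1$, reading off the homology class from the way a type I saddle splits the base. First I would record the geometric input. By definition a type I saddle is a trajectory $\ell$ of $\varphi_{\rm HG}$ whose two limiting endpoints are the two simple zeros $b_1, b_2$ of $Q_{\rm HG}$, which are distinct under Assumption \ref{ass:genericity}. Since two trajectories of a fixed phase never intersect (and in particular a single saddle trajectory has no self-intersections), and since $\ell$ lies entirely in $X \setminus {\rm Crit}$ away from its two endpoints, its closure $\bar\ell := \ell \cup \{b_1, b_2\}$ is an embedded closed arc in $X = \mathbb{P}^1 \cong S^2$ that is disjoint from the set of poles $P_{\rm HG} = \{0, 1, \infty\}$ and contains no zero in its interior.

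Next comes the topological core. Because $\bar\ell$ is an embedded arc, $V := \mathbb{P}^1 \setminus \bar\ell$ is an open disk, and it contains all three poles $0, 1, \infty$. The cover $\pi$ is branched precisely over $b_1, b_2$, both of which lie on $\bar\ell$, so over the simply connected set $V$ the restriction of $\pi$ is a trivial double cover: $\pi^{-1}(V) = U_+ \sqcup U_-$ with each of $U_\pm$ mapped isomorphically onto $V$. The preimage $C := \pi^{-1}(\bar\ell)$ consists of two copies of the interior of $\ell$ glued at the two branch points, hence is a single embedded circle, and $\Sigma_{\rm HG} = \overline{U_+} \cup_C \overline{U_-}$. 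Since $\Sigma_{\rm HG}$ is a sphere and $C$ an embedded circle, $\overline{U_\pm}$ are closed disks, and $C$ is a representative of the BPS cycle $\gamma_{\rm BPS}$ attached to the saddle (the pullback of $\ell$ by $\pi$).

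Finally I would read off the class. As $U_+ \to V$ is an isomorphism, $U_+$ contains exactly one of the two preimages $s_+, s_- \in \overline{\Sigma}_{\rm HG}$ of each pole $s \in \{0, 1, \infty\}$; write $U_+$ for the copy containing $0_\epsilon, 1_{\epsilon'}, \infty_{\epsilon''}$. Inside the thrice-punctured disk $U_+$, its boundary circle $\partial U_+ = C$ is homologous, after orienting $C$ suitably, to the sum of the positively oriented small loops around those three punctures, i.e.\ to $\gamma_{0_\epsilon} + \gamma_{1_{\epsilon'}} + \gamma_{\infty_{\epsilon''}}$. Hence $\gamma_{\rm BPS}$ is this element of \eqref{eq:candidate-of-BPS-cycles}. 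As a consistency check, the covering involution $\sigma$ interchanges $U_+$ and $U_-$ and each $s_+ \leftrightarrow s_-$, reversing the orientation of $C$, so $\sigma_\ast \gamma_{\rm BPS} = -\gamma_{\rm BPS}$, confirming $\gamma_{\rm BPS} \in \Gamma_{\rm HG}$; moreover $-(\gamma_{0_\epsilon} + \gamma_{1_{\epsilon'}} + \gamma_{\infty_{\epsilon''}}) = \gamma_{0_{-\epsilon}} + \gamma_{1_{-\epsilon'}} + \gamma_{\infty_{-\epsilon''}}$ by the relation \eqref{eq:relation-among-cycles}, so the sign ambiguity inherent in BPS cycles stays within the asserted set.

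I expect the only delicate point to be the first paragraph: one must be certain that a type I saddle is genuinely a simple arc meeting no pole and containing no zero in its interior, so that $\mathbb{P}^1 \setminus \bar\ell$ really is a disk. This is exactly the content of Strebel's trajectory structure theory, together with the absence of recurrent behaviour in our setting (the Jenkins three-pole theorem) and Assumption \ref{ass:genericity} guaranteeing two distinct simple zeros. An alternative but more laborious route would instead classify the admissible configurations of separating trajectories via Teichm\"uller's lemma (Proposition \ref{lem:teich}) and extract the dual cycle from each resulting horizontal strip; the branched-cover argument above has the advantage of avoiding any case enumeration.
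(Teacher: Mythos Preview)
Your argument is correct and follows essentially the same approach as the paper's own proof: both identify the BPS cycle as the pullback of an embedded arc joining the two simple zeros and then read it off as a sum $\gamma_{0_\epsilon} + \gamma_{1_{\epsilon'}} + \gamma_{\infty_{\epsilon''}}$ of residue classes using that $\Sigma_{\rm HG}$ has genus~$0$. The paper's version is much terser---it simply asserts that ``since $\Sigma_{\rm HG}$ is of genus $0$, any such cycles are decomposed as a sum of residue cycles at the punctures'' of the stated form---whereas you spell out explicitly the disk decomposition $\overline{\Sigma}_{\rm HG} = \overline{U_+} \cup_C \overline{U_-}$ that justifies why exactly one preimage of each pole appears with coefficient~$\pm 1$; this extra detail is welcome and clarifies what the paper leaves implicit.
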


{  
\begin{proof}
It is clear that possible BPS cycles must be represented by pullback of a path on ${\mathbb P}^1 \setminus \{0,1,\infty \}$ connecting two simple zeros of $Q_{\rm HG}(x)$. Since $\Sigma_{\rm HG}$ is of genus $0$, any such cycles are decomposed as a sum of residue cycles at the punctures of the form 
$\gamma_{0_\epsilon} + \gamma_{1_{\epsilon'}} + \gamma_{\infty_{{\epsilon''}}}$ with some signatures $\epsilon, \epsilon', \epsilon'' \in \{\pm \}$, up to sign. 
The signature depends on how the path avoid the poles $0$, $1$, $\infty$, and the orientation of the cycle. 
The relation \eqref{eq:relation-among-cycles} of cycles implies that 
$\gamma_{0_\epsilon} + \gamma_{1_{\epsilon'}} + \gamma_{\infty_{{\epsilon''}}} = - (\gamma_{0_{-\epsilon}} + \gamma_{1_{-\epsilon'}} + \gamma_{\infty_{{-\epsilon''}}})$, and hence, we have the above list of possible classes (this observation also shows that the above chosen cycles belong to the lattice $\Gamma_{\rm HG}$.)
\end{proof}

}

Then we have:
\begin{lemm} \label{lemm:existence-of-saddle-HG}
Let $\gamma$ be any of the eight cycles in \eqref{eq:candidate-of-BPS-cycles}, and set $\vartheta = \arg Z(\gamma)$. 
We assume that $\vartheta \ne \arg m_s + \pi/2$ (mod $\pi$) for all $s \in \{0,1, \infty \}$.
Then, the spectral network ${\mathcal W}_{\vartheta}(\varphi_{\rm HG})$ with phase $\vartheta$ contains a type I saddle with class $\pm\gamma$.  
\end{lemm}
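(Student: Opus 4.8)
The plan is to show existence of a type I saddle at the phase $\vartheta = \arg Z(\gamma)$ by a continuity/shooting argument on the foliation $\mathcal{F}_\vartheta$, following the strategy of \cite{MMT15,AMMT14,AT15,AT16} referenced above. First I would fix one of the eight cycles $\gamma = \gamma_{0_\epsilon} + \gamma_{1_{\epsilon'}} + \gamma_{\infty_{\epsilon''}}$ and set $\vartheta = \arg Z(\gamma)$, noting that by Lemma \ref{lemm:loops-in-HG-cases} and the genericity assumption no degenerate ring domain occurs at this phase, so by Lemma \ref{lemm:no-recurrent-trajectory} the spectral network $\mathcal{W}_\vartheta(\varphi_{\rm HG})$ is either nondegenerate (all trajectories separating, complement a union of half planes and horizontal strips) or it already contains the desired type I saddle. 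So the goal is to \emph{rule out} the nondegenerate configuration at this special phase. The classification of nondegenerate networks: $\varphi_{\rm HG}$ has two simple zeros ($6$ emanating separating trajectories total) and three double poles (no half planes, since order $<3$), so a nondegenerate $\mathcal{W}_\vartheta$ consists of a small finite list of topological types of strip decompositions; I would enumerate these using Teichm\"uller's lemma (Proposition \ref{lem:teich}), exactly as in the Whittaker proof. In each such configuration there are three horizontal strips, whose dual cycles $\gamma_{D_1},\gamma_{D_2},\gamma_{D_3}$ form a basis of $\Gamma_{\rm HG}$ by Lemma \ref{lem:diamondlemma}, and one checks they are (up to sign and relabeling of the three poles) exactly $\gamma_{0_+}+\gamma_{1_+}+\gamma_{\infty_+}$-type classes — i.e., among the eight candidates of \eqref{eq:candidate-of-BPS-cycles}.

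Next I would invoke Lemma \ref{lem:diamondlemma}: for a nondegenerate network, every dual cycle $\gamma_D$ has $\mathrm{Im}\, e^{-i\vartheta} Z(\gamma_D) \neq 0$. The key arithmetic input is that our chosen $\gamma$ is itself realized (in at least one of the finitely many topological types) as a dual cycle or, more robustly, that $\gamma$ lies in the set of classes that a horizontal strip's dual cycle can carry. Since $\vartheta = \arg Z(\gamma)$ forces $\mathrm{Im}\, e^{-i\vartheta} Z(\gamma) = 0$, this contradicts Lemma \ref{lem:diamondlemma} \emph{provided} $\gamma$ (or a cycle on the same $\mathbb{R}$-ray) must appear as a dual cycle in any nondegenerate network of $\varphi_{\rm HG}$. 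Here the genericity assumption ${\bm m}\in M_{\rm HG}'$ is essential: it guarantees no other BPS cycle shares the phase $\vartheta$, so the only way the network can fail to be nondegenerate is via a type I saddle with class $\pm\gamma$ itself. The core combinatorial claim I must establish is: the eight candidate classes \eqref{eq:candidate-of-BPS-cycles} are precisely the classes that can appear as a dual cycle of some horizontal strip across all nondegenerate topological types, and moreover for each of the eight, there is \emph{some} nondegenerate type in which it appears — equivalently, starting from a putative nondegenerate network and rotating $\vartheta$ toward $\arg Z(\gamma)$, one of the three strips must collapse. Concretely I would run the rotation argument: start at a generic $\vartheta_0$ where $\mathcal{W}_{\vartheta_0}$ is nondegenerate with dual cycles $\gamma_{D_1},\gamma_{D_2},\gamma_{D_3}$; as $\vartheta \to \arg Z(\gamma_{D_j})$ the strip $D_j$ degenerates to a saddle trajectory with class $\gamma_{D_j}$ (its width, $=\mathrm{Im}\, e^{-i\vartheta}Z(\gamma_{D_j})/$const, goes to zero). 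Since $\{\gamma_{D_j}\}$ is a basis mapping onto (a sign-representative subset of) the eight candidates, every one of the eight classes is $\gamma_{D_j}$ for some strip in some adjacent chamber, giving the saddle.

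The main obstacle I anticipate is the honest enumeration of nondegenerate topological types of $\mathcal{W}_\vartheta(\varphi_{\rm HG})$ and verifying that their dual-cycle bases exhaust the eight candidate classes with the correct multiplicities — this is the step where Teichm\"uller's lemma must be applied carefully to cut down the a priori combinatorics of how six separating trajectories from two simple zeros can connect to three double poles, and where one must be careful about the global (not just local) structure, e.g. that no configuration produces a strip whose dual cycle is, say, $2\gamma_{0_+}$ or some class outside \eqref{eq:candidate-of-BPS-cycles}. I expect the relation \eqref{eq:relation-among-cycles} among residue cycles and the genus-$0$ hypothesis to be the tools that keep the list short; the argument should parallel \cite[\S 12.4]{BS13} closely, which treats essentially this quiver (the $\widehat{A}$-type / three-punctured-sphere case) representation-theoretically, and I would cite that for the combinatorial classification while supplying the geometric shooting argument for existence at the exact phase. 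A secondary subtlety is ensuring the type I saddle connects \emph{two distinct} simple zeros rather than being a loop from a single zero back to itself; this is automatic here since a self-folding saddle at a simple zero would, by Teichm\"uller's lemma, enclose a single double pole and hence force $\vartheta = \arg m_s + \pi/2$, which we have excluded — so I would record this as a short lemma or remark within the proof.
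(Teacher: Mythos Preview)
Your overall strategy --- assume the network at phase $\vartheta$ is nondegenerate, enumerate the finitely many topological types, and derive a contradiction from Lemma \ref{lem:diamondlemma} --- matches the paper's. But there is a genuine gap in your combinatorial step.

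You assert that in each nondegenerate configuration the three dual cycles ``are (up to sign and relabeling) exactly $\gamma_{0_+}+\gamma_{1_+}+\gamma_{\infty_+}$-type classes --- i.e., among the eight candidates of \eqref{eq:candidate-of-BPS-cycles}.'' This is false. Aoki--Tanda's classification (which you would recover via Teichm\"uller's lemma) gives four types, labelled by the tuple $(n_0,n_1,n_\infty)$ of separating trajectories ending at each pole: one $(2,2,2)$ type and three $(1,1,4)$-type permutations. In the $(2,2,2)$ type your claim is correct: the dual cycles are three of the eight candidates. But in the $(1,1,4)$ type, two of the three dual cycles are \emph{loop-type} classes $\gamma_{s_+}-\gamma_{s_-}$, not members of \eqref{eq:candidate-of-BPS-cycles} at all. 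So your contradiction ``$\gamma$ appears as a dual cycle, hence $\mathrm{Im}\,e^{-i\vartheta}Z(\gamma)\neq 0$'' does not fire for most of the eight classes in these configurations, and your rotation argument (which only collapses strips to saddles carrying the \emph{current} dual-cycle classes) does not reach them either.

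The paper closes this gap with a sharper observation: in \emph{every} topological type, each of the eight candidate classes can be written as a $\mathbb{Z}$-linear combination of the three (carefully oriented) dual cycles with coefficients \emph{all of the same sign}. For instance, in the $(1,1,4)$ type with dual cycles $\gamma_a=\gamma_{0_+}-\gamma_{0_-}$, $\gamma_b=-(\gamma_{0_+}+\gamma_{1_+}+\gamma_{\infty_-})$, $\gamma_c=\gamma_{1_+}-\gamma_{1_-}$, one has $\gamma_{0_+}+\gamma_{1_+}+\gamma_{\infty_+}=\gamma_a+\gamma_b+\gamma_c$, etc. Since the dual cycles are oriented so that $Z(\gamma_{D_j})\in\mathbb{H}_+$, every such combination also lies in $\mathbb{H}_+\cup\mathbb{H}_-$, never on the real axis --- contradicting $Z(\gamma)\in\mathbb{R}$. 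This ``positive-cone'' argument is what you are missing; it requires tracking orientations via the source/sink structure at the double poles, which the paper sets up carefully. Your proposal would need this ingredient to go through.
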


\begin{proof}
Thanks to \eqref{eq:rotation-of-spectral-network}, it suffices to prove the claim when $\vartheta = 0$; that is, when $Z(\gamma) \in {\mathbb R}_{>0}$ and all mass parameters are not pure imaginary. 

Suppose for contradiction that the spectral network ${\mathcal W}_0(\varphi_{\rm HG})$ is nondegenerate. 
Then, as is shown in \cite{Aoki-Tanda}, it is known that possible topological types of ${\mathcal W}_0(\varphi_{\rm HG})$ are one of the graphs of ``type" $(2,2,2)$, $(1,1,4)$, $(1,4,1)$ or $(1,1,4)$. 
Here the tuple $(n_0,n_1,n_\infty)$ denotes the number of critical trajectories which approach to $0$, $1$, and $\infty$, respectively. 
Figure \ref{fig:222cycles} shows a graph of type $(2,2,2)$, while Figure \ref{fig:oneonefour1} depicts a graph of type $(1,1,4)$ for example.
It is easy to see that these nondegenerate spectral networks always contain three horizontal strips, and the associated dual cycles are shown in these figures. 

\begin{figure}[h]
    \centering
    \begin{subfigure}[t]{.38\textwidth}
        \centering
        \includegraphics[width=\linewidth,trim={0cm -0.92cm 0cm 0cm}]{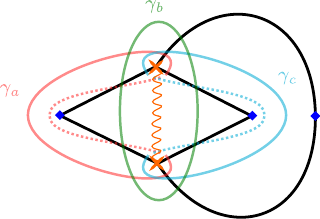}
           \caption{$(2,2,2)$ graph with dual cycles.}
      \label{fig:222cycles}
    \end{subfigure}
    \hspace{2cm}
    \begin{subfigure}[t]{.38
    \textwidth}
        \centering
        \includegraphics[width=\linewidth]{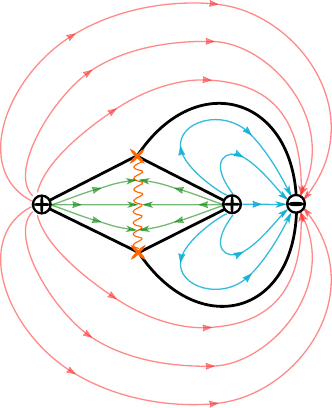}
        \caption{Oriented trajectories .}
      \label{fig:222flow}
    \end{subfigure}
    \caption{Impossibility of the $(2,2,2)$ graph.}
    \label{fig:222fig}
\end{figure}

\begin{figure}[h]
    \centering
    \begin{subfigure}[t]{.40\textwidth}
        \centering
        \includegraphics[width=\linewidth,trim={0cm -0.94cm 0cm 0cm}]{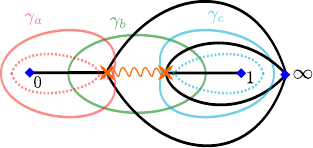}
           \caption{$(1,1,4)$ graph with dual cycles.}
      \label{fig:oneonefour1}
    \end{subfigure}
    \hspace{2cm}              
    \begin{subfigure}[t]{.35\textwidth}
        \centering
        \includegraphics[width=\linewidth]{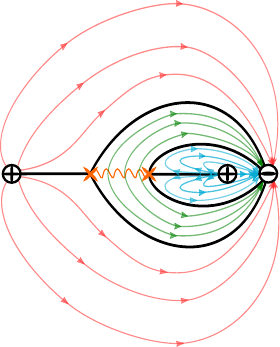}
        \caption{Oriented trajectories.}
      \label{fig:oneonefour2}
    \end{subfigure}
    \caption{Impossibility of the $(1,1,4)$ graph.}
    \label{fig:oneonefour}
\end{figure}

To show that these graphs cannot appear under our assumption, we need a careful treatment of the sign (orientation) of the dual cycles associated with the horizontal strips of these spectral networks. 
For this purpose, it is convenient to use several properties of trajectories enhanced with orientations which are naturally equipped after choosing a branch cut which trivializes the covering $\pi : \overline{\Sigma} \to X (= {\mathbb P}^1)$, and label the sheets as sheets $1$ and $2$. 
That is, we can equip an orientation to each part of trajectories on $X \setminus \{\text{cut}\}$. Namely, identifying $X \setminus \{\text{cut}\}$ with the first sheet of $\Sigma \setminus \pi^{-1}(\text{cut})$, we orient each part of the trajectory so that the real part of the integral appearing in \eqref{eq:trajectories} increases along it. 
Since the integrand in \eqref{eq:trajectories} has different signs on different sheets, the orientation is reversed before and after the branch cuts. 
Figure \ref{fig:222flow} and Figure \ref{fig:oneonefour2} shows an example of the oriented trajectories on the first sheet, with the orange wavy line denoting the branch cut.  
The main property we will use is that any second order pole $s$ must be either a ``source" or a ``sink" for the oriented trajectories, in the obvious sense. 
Here we attach the symbol $\ominus$ (resp. $\oplus$) for the second order poles if it is a source (resp. sink) on the first sheet (see Figure \ref{fig:222flow} and Figure \ref{fig:oneonefour2}). 
In our sign convention \eqref{eq:sign-convention-preimages}, we attach $\oplus$ to $s$ if its preimage on the first sheet is $s_-$, and attach $\ominus$ otherwise. 

Consider now the graph of type $(2,2,2)$, which has three dual cycles $\gamma_a$, $\gamma_b$ and $\gamma_c$ as shown in Figure \ref{fig:222cycles}. Although the dual cycles are defined up to sign, here we take one of them so that its central charge lies on the upper half plane ${\mathbb H}_{+}$ (recall that Lemma \ref{lem:diamondlemma} guarantees that the central charges have a non-zero imaginary part). 
This condition is satisfied if we equip the orientation to each dual cycle $\gamma_D$ so that $(\gamma_D, \beta_D) = - 1$ holds, where $\beta_D \in H_1(\overline{\Sigma}, D_\infty, {\mathbb Z})$ is the relative homology class represented by the preimage of any generic trajectory in $D$ with the orientation given as above, and $(\cdot, \cdot)$ is the intersection pairing normalized as $(\text{$x$-axis}, \text{$y$-axis}) = + 1$. 
With the choice of the first sheet indicated in Figure \ref{fig:222flow}, we have
\begin{equation} \label{eq:dual-cycles-in-2-2-2}
\gamma_a = \gamma_{0_+} + \gamma_{1_-} + \gamma_{\infty_+}, \quad
\gamma_b = \gamma_{0_+} + \gamma_{1_+} + \gamma_{\infty_-}, \quad
\gamma_c = \gamma_{0_-} + \gamma_{1_+} + \gamma_{\infty_+}.
\end{equation}
Now, we set $\gamma_d := \gamma_a + \gamma_b + \gamma_c = \gamma_{0_+} + \gamma_{1_+} + \gamma_{\infty_+}$. 
The central charge $Z(\gamma_d)$ also lies in the upper half plane ${\mathbb H}_{+}$ by construction.  
However, since the set $\{ \pm \gamma_a, \pm \gamma_b, \pm \gamma_c, \pm \gamma_d \}$ coincides with the above set \eqref{eq:candidate-of-BPS-cycles} of all candidate BPS cycles, we have a contradiction to the original assumption (i.e., reality of one of the central charges of the cycles \eqref{eq:candidate-of-BPS-cycles}). 
Thus we have shown that the graph of type $(2,2,2)$ never appears as ${\mathcal W}_{0}(\varphi_{\rm HG})$.

Next, we consider the graph of type $(1,1,4)$ in Figure \ref{fig:oneonefour}. 
We may check that the dual cycles are explicitly given as follows:
\begin{equation} \label{eq:dual-cycles-in-1-1-4}
\gamma_a = \gamma_{0_+} - \gamma_{0_-}, \quad
\gamma_b = - \gamma_{0_+} - \gamma_{1_+} - \gamma_{\infty_-}, \quad
\gamma_c = \gamma_{1_+} - \gamma_{1_-}.
\end{equation}
Here the orientations are determined in the same manner as before, i.e. so that their central charge has positive imaginary part under the choice of the first sheet indicated in Figure \ref{fig:oneonefour2}.
Then we can express all eight cycles in \eqref{eq:candidate-of-BPS-cycles} as a ${\mathbb Z}$-linear combination with all positive or all negative coefficients, as follows:
\begin{equation}
\begin{cases}
\gamma_{0_\pm} + \gamma_{1_\pm} + \gamma_{\infty_\pm} = \pm (\gamma_a + \gamma_b + \gamma_c), & \\
\gamma_{0_\pm} + \gamma_{1_\mp} + \gamma_{\infty_\pm} = \pm (\gamma_a + \gamma_b), & \\
\gamma_{0_\pm} + \gamma_{1_\pm} + \gamma_{\infty_\mp} = \mp \gamma_b, & \\
\gamma_{0_\pm} + \gamma_{1_\mp} + \gamma_{\infty_\mp} = \mp (\gamma_b + \gamma_c).
\end{cases}
\end{equation}
Then we again have a contradiction to the reality of one of the central charges of the cycles \eqref{eq:candidate-of-BPS-cycles}, which eliminates the possibility of the graph of type $(1,1,4)$.
By similar reasoning, we can also check that the other possible cases never appear.

Thus the only possibility is to have a degenerate spectral network at $\vartheta = 0$. 
Since we have also assumed that all mass parameters are not pure imaginary, ${\mathcal W}_{0}(\varphi_{\rm HG})$ does not contain any loop-type saddle. 
Thus, we conclude it contains a saddle trajectory whose type must be I.
\end{proof}

\begin{rem}
The condition that the central charges of the cycles appearing in \eqref{eq:dual-cycles-in-2-2-2} (resp., \eqref{eq:dual-cycles-in-1-1-4}) have a positive imaginary part gives a necessary condition for appearance of a graph of type $(2,2,2)$  (resp., of type $(1,1,4)$) in terms of the mass parameters. 
This agrees with the classification of the Stokes graphs (spectral networks) of the Gauss hypergeometric equation given by Aoki-Tanda \cite{Aoki-Tanda}. 
\end{rem}

It follows from the proof of Lemma \ref{lemm:existence-of-saddle-HG} that, if we further assume that $\arg Z(\gamma') \ne \arg Z(\gamma'')$ holds for any pair $\gamma', \gamma''$ of distinct elements in \eqref{eq:candidate-of-BPS-cycles}, then the BPS cycle associated with the type I saddle thus obtained must be $\pm \gamma$. 
That is, the set \eqref{eq:candidate-of-BPS-cycles} coincides with the set of all BPS cycles associated to a type I saddle, and we have a concrete description of the generic locus $M'_{\rm HG}$ in terms of the mass parameter.
Together with the BPS cycles associated with degenerate ring domains (c.f., Lemma \ref{lemm:loops-in-HG-cases}), we have 

\begin{prop}
Fix ${\bm m}\in M_{\rm HG}'$. Then, there are exactly seven degenerations in the range $\vartheta \in [0,\pi)$: four type I saddles, and three loop-type saddles. 
The BPS spectrum of $\varphi_{\rm HG}$ consists of exactly the fourteen $\gamma_{\rm BPS}$ described in Table \ref{table:bpsangles-HG}.
\end{prop}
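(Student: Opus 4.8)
The plan is to assemble the proposition from Lemmas~\ref{lemm:no-recurrent-trajectory}, \ref{lemm:loops-in-HG-cases}, \ref{lem:hgbps} and~\ref{lemm:existence-of-saddle-HG}, together with the discussion following the proof of Lemma~\ref{lemm:existence-of-saddle-HG}; the only new input is that genericity of ${\bm m}$ separates all the relevant phases, so nothing collides. First I would unpack what ${\bm m} \in M'_{\rm HG}$ buys us. Using $Z(\gamma_{s_\pm}) = \pm 2\pi i\, m_s$ (from~\eqref{eq:residue-and-mass-parameters}) one computes $Z(\gamma_{s_+} - \gamma_{s_-}) = \pm 4 \pi i\, m_s$ and $Z(\gamma_{0_\epsilon} + \gamma_{1_{\epsilon'}} + \gamma_{\infty_{\epsilon''}}) = 2\pi i\,(\epsilon m_0 + \epsilon' m_1 + \epsilon'' m_\infty)$. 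Applying Definition~\ref{def:generic-locus} to the pairs of distinct BPS cycles drawn from $\{\gamma_{s_+} - \gamma_{s_-}\}_{s}$ and from the eight candidates in~\eqref{eq:candidate-of-BPS-cycles} forces: the arguments $\arg m_0, \arg m_1, \arg m_\infty$ to be pairwise distinct $\bmod\,\pi$; the four values $\arg(\epsilon m_0 + \epsilon' m_1 + \epsilon'' m_\infty)\bmod\pi$ (one per $\pm$-pair) to be pairwise distinct; and each of the latter to differ from each $\arg m_s + \pi/2 \bmod \pi$. Hence the seven numbers
\[
\Bigl\{\arg m_s + \tfrac{\pi}{2} \bmod \pi \ :\ s \in \{0,1,\infty\}\Bigr\}
\ \cup\
\Bigl\{\arg Z(\gamma) \bmod \pi \ :\ \gamma = \gamma_{0_+} + \gamma_{1_{\pm}} + \gamma_{\infty_{\pm}}\Bigr\}
\]
are all distinct in $[0,\pi)$.

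Next I would classify all degenerations. Because $\varphi_{\rm HG}$ has no simple poles, any saddle trajectory is of type I (joining the two simple zeros of $Q_{\rm HG}$) or loop-type, and Lemma~\ref{lemm:no-recurrent-trajectory} rules out type V, so loop-type means type IV. For loop-type degenerations, Lemma~\ref{lemm:loops-in-HG-cases} says a degenerate ring domain appears around $s \in \{0,1,\infty\}$ exactly when $\vartheta = \arg m_s + \pi/2 \bmod \pi$, with BPS cycle $\gamma_{s_\pm} - \gamma_{s_\mp}$; by the distinctness just established these are three different phases in $[0,\pi)$. For type~I degenerations, Lemma~\ref{lem:hgbps} restricts the class to the list~\eqref{eq:candidate-of-BPS-cycles}, which by the relation~\eqref{eq:relation-among-cycles} is a union of four $\pm$-pairs; for a representative $\gamma$ we set $\vartheta = \arg Z(\gamma) \in [0,\pi)$, note $\vartheta \ne \arg m_s + \pi/2 \bmod \pi$ by genericity, and apply Lemma~\ref{lemm:existence-of-saddle-HG} to get a type~I saddle of class $\pm\gamma$ at that phase. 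The discussion following the proof of Lemma~\ref{lemm:existence-of-saddle-HG} (again invoking the pairwise distinctness of the $\arg Z(\gamma)$) shows that this saddle is the only one at that phase and its class is exactly $\pm\gamma$; so we obtain four type~I degenerations, one per $\pm$-pair.

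Then I would combine exhaustiveness with the count. Any $\vartheta \in [0,\pi)$ with ${\mathcal W}_\vartheta(\varphi_{\rm HG})$ degenerate contains a saddle, which by the classification is of type I or type IV, forcing $\vartheta$ into the seven-element set above; conversely each of these seven phases is degenerate by the previous paragraph, and at each the degeneration is unique and carries the stated BPS cycle. Hence there are exactly $4 + 3 = 7$ degenerate phases in $[0,\pi)$ --- four type~I and three loop-type --- with BPS cycles, up to sign, the four cycles $\gamma_{0_+} + \gamma_{1_{\pm}} + \gamma_{\infty_{\pm}}$ and the three cycles $\gamma_{s_+} - \gamma_{s_-}$. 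Since ${\mathcal W}_\vartheta = {\mathcal W}_{\vartheta + \pi}$ sends $\gamma_{\rm BPS}$ to $-\gamma_{\rm BPS}$, passing to $\vartheta \in [0,2\pi)$ adjoins the negatives, producing the fourteen-element BPS spectrum, and the central charges computed above match the entries of Table~\ref{table:bpsangles-HG}. The point that needs care --- and is the only real obstacle --- is the bookkeeping that avoids circularity in the definition of $M'_{\rm HG}$: one must check that the non-reality conditions used in the first paragraph are genuinely among those imposed by Definition~\ref{def:generic-locus} on the (already-identified, via Lemmas~\ref{lemm:loops-in-HG-cases} and~\ref{lemm:existence-of-saddle-HG}) BPS cycles, after which everything is a direct synthesis of the lemmas.
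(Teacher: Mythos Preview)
Your proposal is correct and mirrors the paper's own treatment: the proposition is stated there without a separate proof, appearing as an immediate synthesis of Lemmas~\ref{lemm:no-recurrent-trajectory}, \ref{lemm:loops-in-HG-cases}, \ref{lem:hgbps}, \ref{lemm:existence-of-saddle-HG} and the paragraph following the last of these, exactly as you assemble it. Your explicit unpacking of what ${\bm m} \in M'_{\rm HG}$ forces on the phases, and your flagging of the apparent circularity in Definition~\ref{def:generic-locus} (resolved in the paper by the sentence ``we have a concrete description of the generic locus $M'_{\rm HG}$ in terms of the mass parameter''), make the logical structure more transparent than the paper does, but the route is the same.
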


  \begin{table}[h]
  \begin{tabular}{|c||c|c|} \hline
    $\vartheta_{\rm BPS}$
    & $\arg (\epsilon m_{0} + \epsilon' m_{1} + \epsilon'' m_{\infty}) + \pi/2$ & 
    $\arg{m_s} \pm \pi/2$  \\ \hline 
    degeneration
    & type I saddle & degenerate ring domain   \\ \hline
    
    $\gamma_{\rm BPS}$
    & $\gamma_{0_\epsilon} + \gamma_{1_{\epsilon'}} + \gamma_{\infty_{\epsilon''}}$ & $\gamma_{s_\pm} - \gamma_{s_\mp}$  \\ \hline 
    $Z(\gamma_{\rm BPS})$
    & $2\pi i (\epsilon m_0 + \epsilon' m_1 + \epsilon'' m_\infty)$ & 
    $\pm 4 \pi i m_s$  \\ \hline

    $\Omega(\gamma_{\rm BPS})$ & 
    $+1$ & $-1$   \\ \hline
  \end{tabular} 
   \vspace{+1.em}
     \caption{
     The BPS spectrum of $\varphi_{\rm HG}$, where $\epsilon, \epsilon', \epsilon'' \in \{ \pm \}$ and $s \in \{0,1,\infty \}$.}
     \label{table:bpsangles-HG}
  \end{table}




\subsection{BPS structure from degenerations of the Gauss hypergeometric curve}
\subsubsection{\bf BPS structure from the Kummer curve}
The Kummer differential is $\varphi_{\rm Kum} = Q_{\rm Kum}(x)dx^2$ where
\begin{equation}
Q_{\rm{Kum}}(x)= \dfrac{x^{2}+4m_{\infty}x + 4 m_{0}^{2}}{4x^{2}} 
\end{equation}
which defines the Kummer curve $\Sigma_{\rm{Kum}} \,(= \widetilde{\Sigma}_{\rm Kum})$. 
It has has two simple zeroes, a second order pole at $0$ and a pole of order $4$ at $\infty$ under the assumption ${\bm m} \in M_{\rm Kum}$. 
It is easy to see that $\Sigma_{\rm Kum}$ is of genus $0$ with four punctures, at $0_\pm$ and $\infty_\pm$.

We can draw the spectral network for a generic value of the parameters ${\bm m}\in M_{\rm Kum}'$. As a result, we observe three BPS cycles, with two type I saddles and a single loop, as depicted in Figure \ref{fig:kummerbps} below.

\begin{figure}[h]
    \centering  
    \begin{subfigure}[t]{.25\textwidth}
        \centering
        \includegraphics[width=1.0\linewidth,trim={7cm 7cm 7cm 7cm},clip]{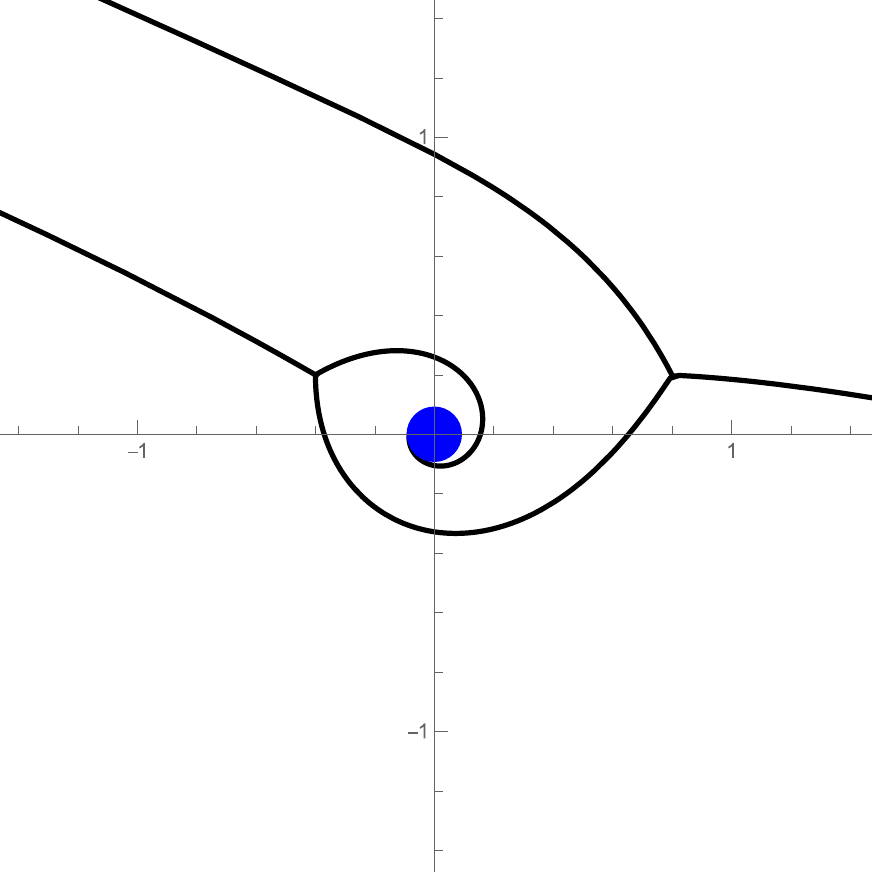}
        \caption{Type I saddle, $\vartheta \approx 1.25$}
      \label{fig:kummer1}
    \end{subfigure}
        \hspace{0.5cm}               
    \begin{subfigure}[t]{.25\textwidth}
        \centering
        \includegraphics[width=1.0\linewidth,trim={7cm 7cm 7cm 7cm},clip]{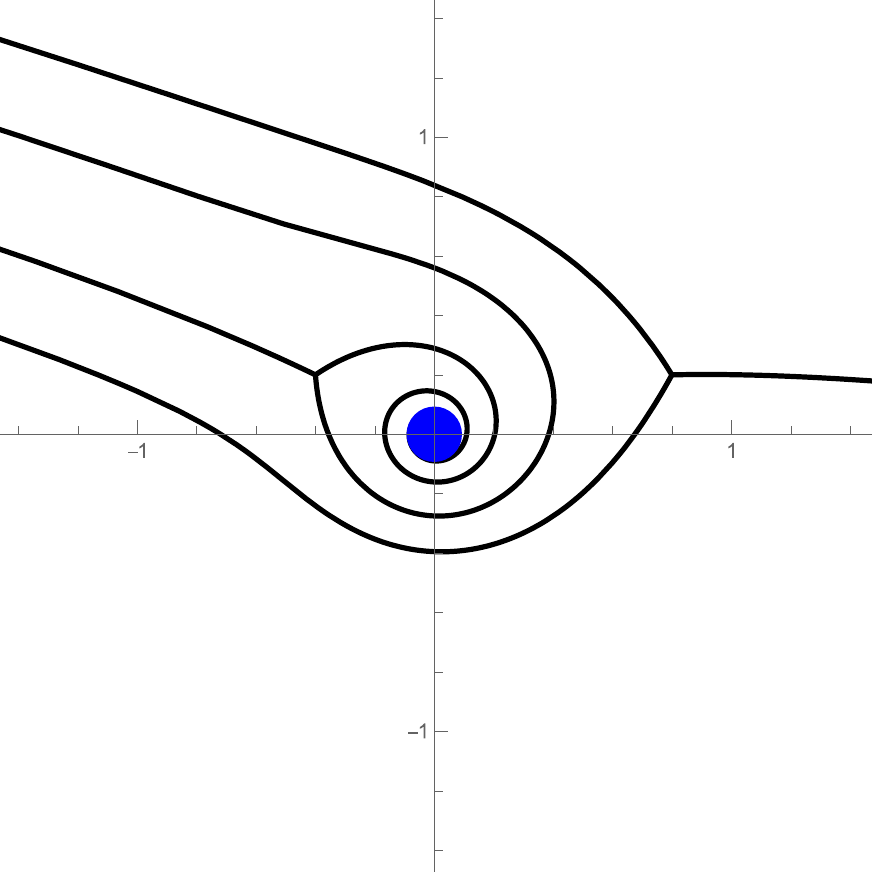}
        \caption{$\vartheta\approx1.36$}
      \label{fig:kummer2}
    \end{subfigure}
        \hspace{0.5cm}               
    \begin{subfigure}[t]{.25\textwidth}
        \centering
        \includegraphics[width=1.0\linewidth,trim={6cm 6cm 6cm 6cm},clip]{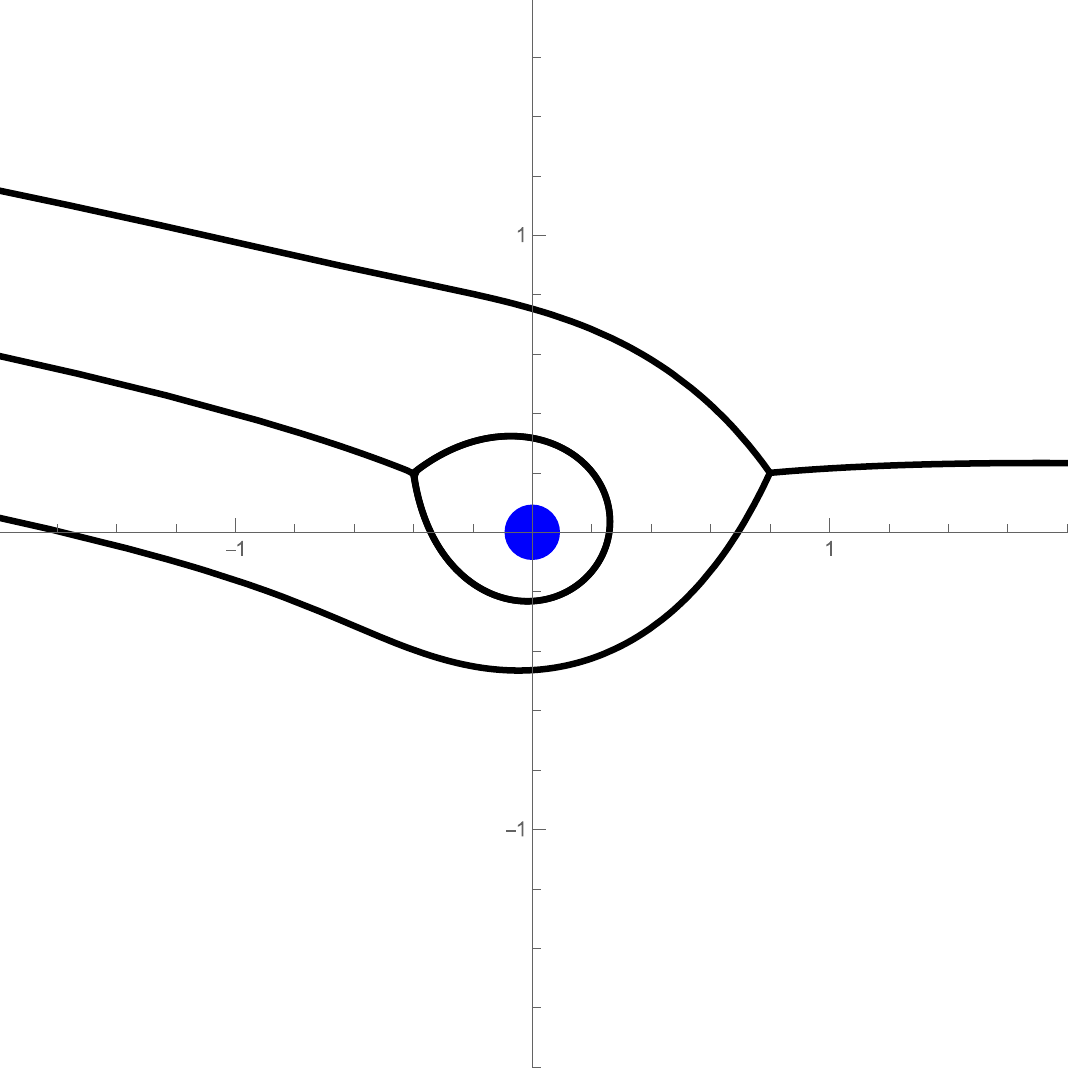}
        \caption{Type IV saddle, $\vartheta\approx1.46$}
      \label{fig:kummer3}
    \end{subfigure}
         \hspace{0.5cm}               
    \begin{subfigure}[t]{.25\textwidth}
        \centering
        \includegraphics[width=1.0\linewidth,trim={6cm 6cm 6cm 6cm},clip]{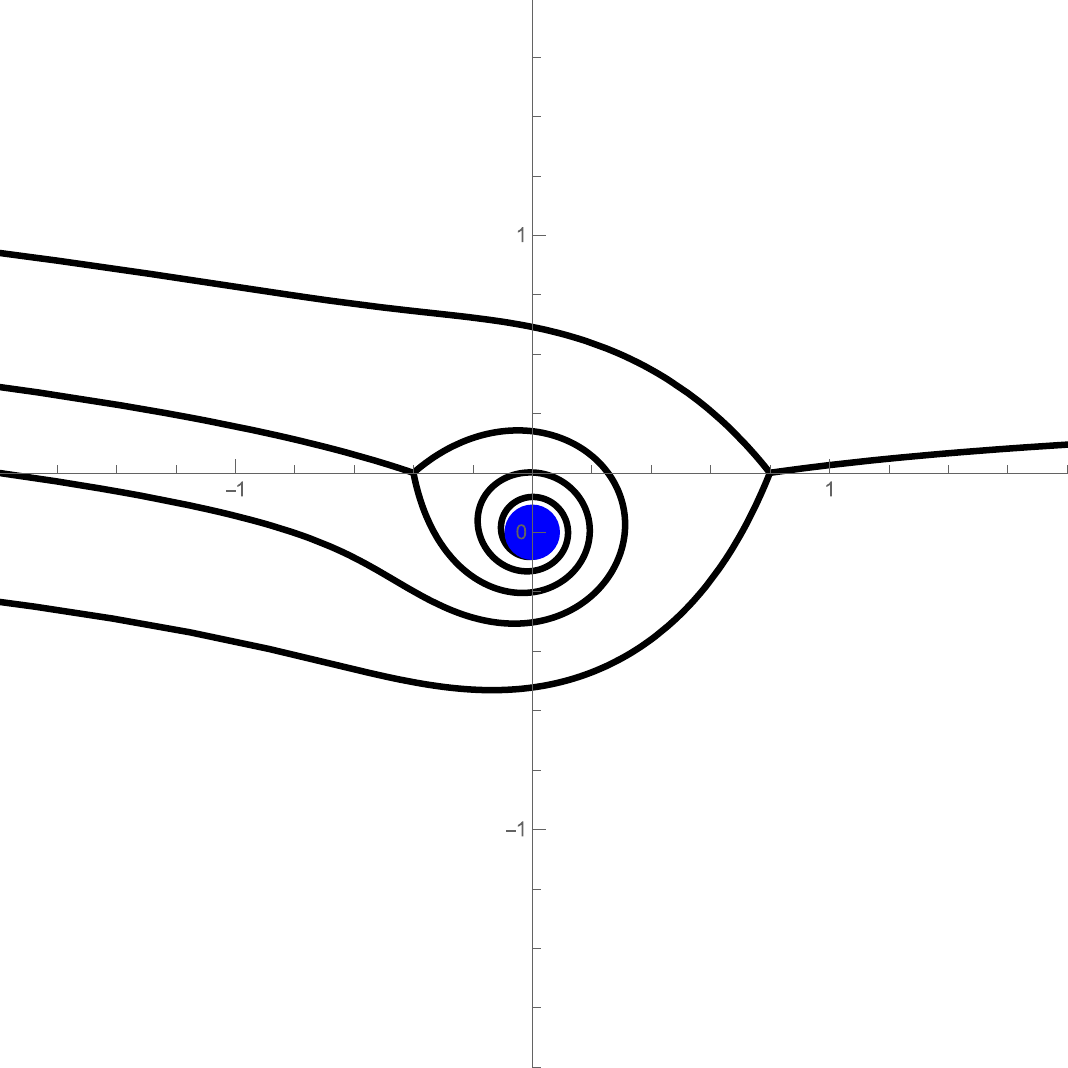}
        \caption{$\vartheta\approx1.54$}
      \label{fig:kummer4}
    \end{subfigure}
         \hspace{0.5cm}               
    \begin{subfigure}[t]{.25\textwidth}
        \centering
        \includegraphics[width=1.0\linewidth,trim={6cm 6cm 6cm 6cm},clip]{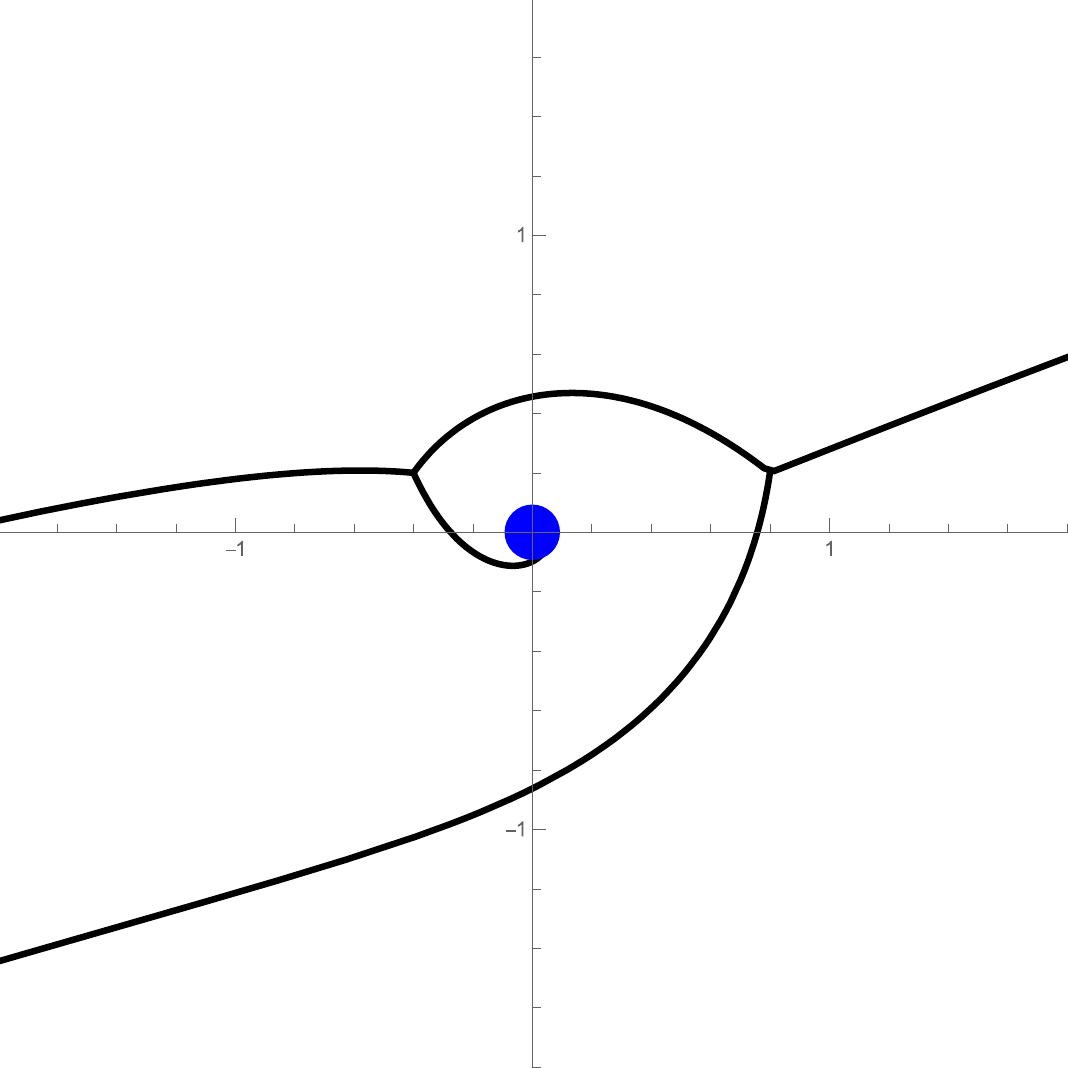}
        \caption{Type I saddle, $\vartheta\approx1.89$}
      \label{fig:kummer5}
    \end{subfigure}
           \hspace{0.5cm}               
    \begin{subfigure}[t]{.25\textwidth}
        \centering
        \includegraphics[width=1.0\linewidth,trim={7.5cm 7.5cm 7.5cm 7.5cm},clip]{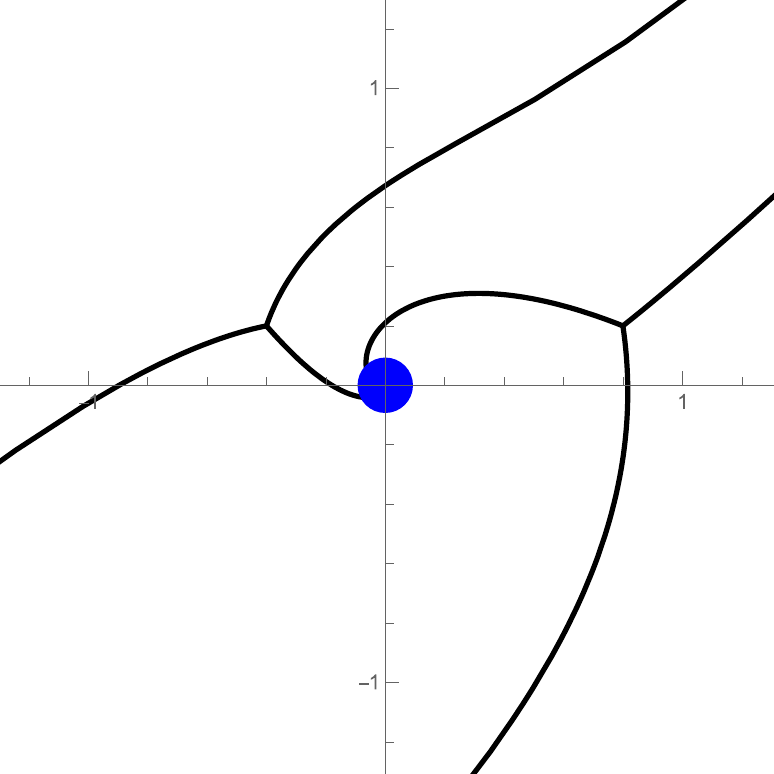}
        \caption{$\vartheta\approx2.33$}
      \label{fig:kummer6}
    \end{subfigure}
    \caption{Spectral networks for $\varphi_\mathrm{Kum}$, $m_0\approx 0.07+0.60, m_\infty \approx -0.4-0.4i$.}
\label{fig:kummerbps}
\end{figure}

Let $Z_a = 2 \pi i (m_0+m_\infty)$, $Z_b = 2 \pi i (m_0-m_\infty)$. We can show
\begin{prop}
{  
Let $\vartheta = \arg Z$ with $Z$ being one of $Z_a$ or $Z_b$, and assume $\vartheta \ne \arg m_0 + \pi/2$ (mod $\pi$).
Then, the spectral network ${\mathcal W}_{\vartheta}(\varphi_{\rm Kum})$ with the phase $\vartheta$ contains a type I saddle.  
If we further assume that $\arg Z_a \ne \arg Z_b$ (mod $\pi$), then the associated BPS cycle is $\gamma_{0_\pm} + \gamma_{\infty_\pm}$ when $Z = Z_a$, or $\gamma_{0_\pm} + \gamma_{\infty_\mp}$ if $Z = Z_b$.
}

Let ${\bm m}\in M_{\rm Kum}'$. 
Then $\varphi_{\mathrm{Kum}}$ has a type I saddle exactly when $Z_a$ or $Z_b$ vanish. The saddles have class $\gamma_a=\pm(\gamma_{0_+}+\gamma_{\infty_+})$ or $\gamma_b=\pm(\gamma_{0_+}+\gamma_{\infty_-})$, respectively.
\end{prop}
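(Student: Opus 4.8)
The plan is to adapt, in a simplified form, the argument used for the hypergeometric curve in the proof of Lemma \ref{lemm:existence-of-saddle-HG}; the Kummer case is easier because $\Gamma_{\rm Kum}$ has rank two and $\varphi_{\rm Kum}$ has a single second-order pole. I would first dispose of the loop-type (type IV) saddles: by Proposition \ref{prop:loops} applied to the unique second-order pole $s=0$, a degenerate ring domain appears in $\mathcal{W}_\vartheta(\varphi_{\rm Kum})$ precisely when $\vartheta \equiv \arg m_0 + \pi/2 \pmod{\pi}$, with BPS cycle $\gamma_{0_\pm} - \gamma_{0_\mp}$; this explains the hypothesis $\vartheta \ne \arg m_0 + \pi/2$ in the first assertion. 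Since $\varphi_{\rm Kum}$ has no simple poles and, by Lemma \ref{lemm:no-recurrent-trajectory}, no non-degenerate ring domains, away from that phase any degeneration of $\mathcal{W}_\vartheta(\varphi_{\rm Kum})$ is necessarily a type I saddle joining the two simple zeros $b_1, b_2$ of $Q_{\rm Kum}$.

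Next I would identify the candidate classes. Since $\Sigma_{\rm Kum}$ is a four-punctured sphere with punctures $0_\pm, \infty_\pm$, the group $H_1(\Sigma_{\rm Kum};\mathbb{Z})$ is generated by the residue cycles $\gamma_{0_\pm}, \gamma_{\infty_\pm}$ subject to \eqref{eq:relation-among-cycles}, and the $\pi$-preimage of any arc in $\mathbb{P}^1\setminus\{0,\infty\}$ joining $b_1$ to $b_2$ represents a class of the form $\pm(\gamma_{0_\epsilon} + \gamma_{\infty_{\epsilon'}})$. Modulo \eqref{eq:relation-among-cycles} this leaves only two classes up to sign, $\gamma_a := \gamma_{0_+} + \gamma_{\infty_+}$ and $\gamma_b := \gamma_{0_+} + \gamma_{\infty_-}$, which form a basis of $\Gamma_{\rm Kum}$; by \eqref{eq:sign-convention-preimages} one reads off $Z(\gamma_a) = Z_a$ and $Z(\gamma_b) = Z_b$, and one checks the useful identity $\gamma_a + \gamma_b = \gamma_{0_+} - \gamma_{0_-}$, the loop class around $0$.

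The substantial point is the existence of the type I saddle, which I would prove by contradiction exactly as in Lemma \ref{lemm:existence-of-saddle-HG}. Using \eqref{eq:rotation-of-spectral-network} I reduce to $\vartheta = 0$, so that (for the $\gamma_a$ case) $Z_a \in \mathbb{R}_{>0}$, no mass parameter is purely imaginary, and — by the genericity hypothesis $\arg Z_a \ne \arg Z_b$ — also $Z_b \notin \mathbb{R}$. Suppose $\mathcal{W}_0(\varphi_{\rm Kum})$ is nondegenerate. Then, by Strebel's trajectory classification together with Teichm\"uller's lemma (Proposition \ref{lem:teich}) — tracking the three critical trajectories emanating from each of $b_1, b_2$, which can only terminate at $0$ or at the order-four pole $\infty$ (where there is exactly one half plane) — one enumerates the finitely many admissible topological types; each contains exactly two horizontal strips, whose dual cycles form a basis of $\Gamma_{\rm Kum}$ by Lemma \ref{lem:diamondlemma} (unimodularity already excludes, e.g., the configuration in which both strips loop around a pole). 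Fixing the orientation of each dual cycle so that its central charge lies in the open upper half plane — via the oriented-trajectory / source-sink bookkeeping at the pole $0$ illustrated in Figures \ref{fig:222flow} and \ref{fig:oneonefour2} — one checks in each admissible type that $\pm\gamma_a$ and $\pm\gamma_b$ are all expressed as combinations of the (oriented) dual cycles landing in the open upper half plane, so in particular $\mathrm{Im}\, Z_a > 0$, contradicting $Z_a \in \mathbb{R}_{>0}$. Hence $\mathcal{W}_0(\varphi_{\rm Kum})$ degenerates; as there is no loop-type saddle at this phase, it contains a type I saddle, whose class must be $\pm\gamma_a$ since $\arg Z_a \ne \arg Z_b$. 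Running this for $\gamma_b$ as well, and combining with the ring-domain analysis, yields the stated BPS spectrum on $M'_{\rm Kum}$.

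I expect the main obstacle to be the same as in the hypergeometric case: enumerating the admissible nondegenerate topological types of $\mathcal{W}_0(\varphi_{\rm Kum})$ and, for each, fixing the correct orientations of the strip dual cycles so that the reality contradiction can actually be extracted. Everything else reduces to Propositions \ref{prop:loops} and \ref{lem:teich}, Lemmas \ref{lem:diamondlemma} and \ref{lemm:no-recurrent-trajectory}, and elementary homology of the punctured sphere.
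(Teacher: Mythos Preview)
Your proposal is correct and follows essentially the same approach as the paper: enumerate the finitely many nondegenerate topological types of $\mathcal{W}_0(\varphi_{\rm Kum})$ via Teichm\"uller's lemma and the local normal forms, then mimic the oriented-dual-cycle argument of Lemma \ref{lemm:existence-of-saddle-HG} to derive a reality contradiction in each case. The paper's proof is simply a compressed version of what you wrote, noting that exactly two topological types survive (Figure \ref{fig:kummerpic}) and deferring the orientation bookkeeping to the hypergeometric case.
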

\begin{proof}
It is easy to check combinatorially, using Teichm\"uller's lemma and the normal forms, that there are only two possible topological types of spectral networks as in Figure \ref{fig:kummerpic}. 
In both cases, we may proceed by a similar argument to proof of Lemma \ref{lemm:existence-of-saddle-HG} in the hypergeometric case, so we omit the details: 
carefully computing the BPS cycles with orientation determined by the oriented trajectories, we can verify that both of Figures \ref{fig:kummerproof-a} and \ref{fig:kummerproof-b} do not appear under our assumption.  
The rest of the claim can also be proved by a similar argument to the hypergeometric case. 
\end{proof}

\begin{figure}[h]
    \centering
    \begin{subfigure}[t]{.23\textwidth}
        \centering
        \includegraphics[width=0.7\linewidth]{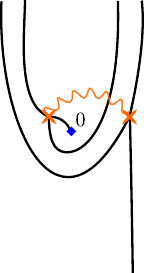}
           \caption{}
      \label{fig:kummerproof-a}
    \end{subfigure}
    \hspace{2cm}               
    \begin{subfigure}[t]{.23\textwidth}
        \centering
        \includegraphics[width=0.7\linewidth]{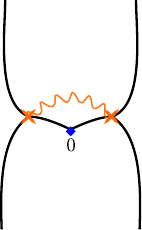}
        \caption{}
      \label{fig:kummerproof-b}
    \end{subfigure}
    \caption{Possible nondegenerate spectral networks for $\varphi_{\rm Kum}$.}
    \label{fig:kummerpic}
\end{figure}



Together with the contribution of the degenerate ring domain around the second order pole $0$, we have the complete list of the BPS spectrum summarized in Table \ref{table:kummerangles}. 

 \begin{table}[h]
  \begin{tabular}{|c||c|c|c|c|c|c|c|c|} \hline
    $\vartheta_{\rm BPS}$
    & $\arg{(m_0+m_\infty)} \pm \pi/2$ 
    & $\arg{(m_0-m_\infty) \pm \pi/2}$ 
    & $\arg{m_0} \pm \pi/2$ 
    \\ \hline 
    degeneration
    & type I saddle
    & type I saddle & degenerate ring domain  \\ \hline
    $\gamma_{\rm BPS}$ &
    $\gamma_{0_\pm}+\gamma_{\infty_\pm}$ & $ \gamma_{0_\pm} + \gamma_{\infty_\mp}$ & $\gamma_{0_\pm}-\gamma_{0_\mp}$ \\ \hline
     $Z(\gamma_{\rm BPS})$ &
    $\pm 2 \pi i (m_0+m_\infty)$ & $\pm 2 \pi i (m_0-m_\infty)$ & $\pm 4 \pi i m_0$ \\ \hline
    $\Omega(\gamma_{\rm BPS})$ &
    $+1$ & $+1$ & $-1$ \\ \hline
  \end{tabular} 
   \vspace{+1.em}
     \caption{The BPS spectrum of $\varphi_{\rm Kum}$.}
     \label{table:kummerangles}
  \end{table}

\subsubsection{\bf BPS structure from the degenerate Gauss curve}
The degenerate Gauss equation is the confluent limit of the hypergeometric equation in the limit when $m_0$ vanishes, leaving a simple pole at the origin. The corresponding quadratic differential is $\varphi_{\rm dHG} = Q_{\rm dHG}(x)dx^2$ where
\begin{equation}
Q_{\rm{dHG}}(x)= \dfrac{m_{\infty}^2 x+m_{1}^2 - m_{\infty}^{2}}{x(x-1)^2}.
\end{equation}
Under the assumption ${\bm m}\in M_{\rm dHG}$, $\varphi_{\rm dHG}$ has two simple zeros, a simple pole at the origin, and a second order pole at $1$ and $\infty$. 
The associated (partially compactified) degenerate Gauss curve $\widetilde{\Sigma}_{\rm{dHG}}$ is of genus $0$ with four punctures, at $1_\pm$ and $\infty_\pm$.

We can compute the BPS spectrum at a generic value of the parameters ${\bm m}=(m_1,m_\infty)$. As the result, we will have four BPS cycles, with two type II saddles and two loops, depicted in Figure \ref{fig:dhgpic1}.

  \begin{figure}[h]
    \centering
    \begin{subfigure}[t]{0.27\textwidth}
        \centering
        \includegraphics[width=\linewidth,trim={5cm 5cm 5cm 5cm},clip]{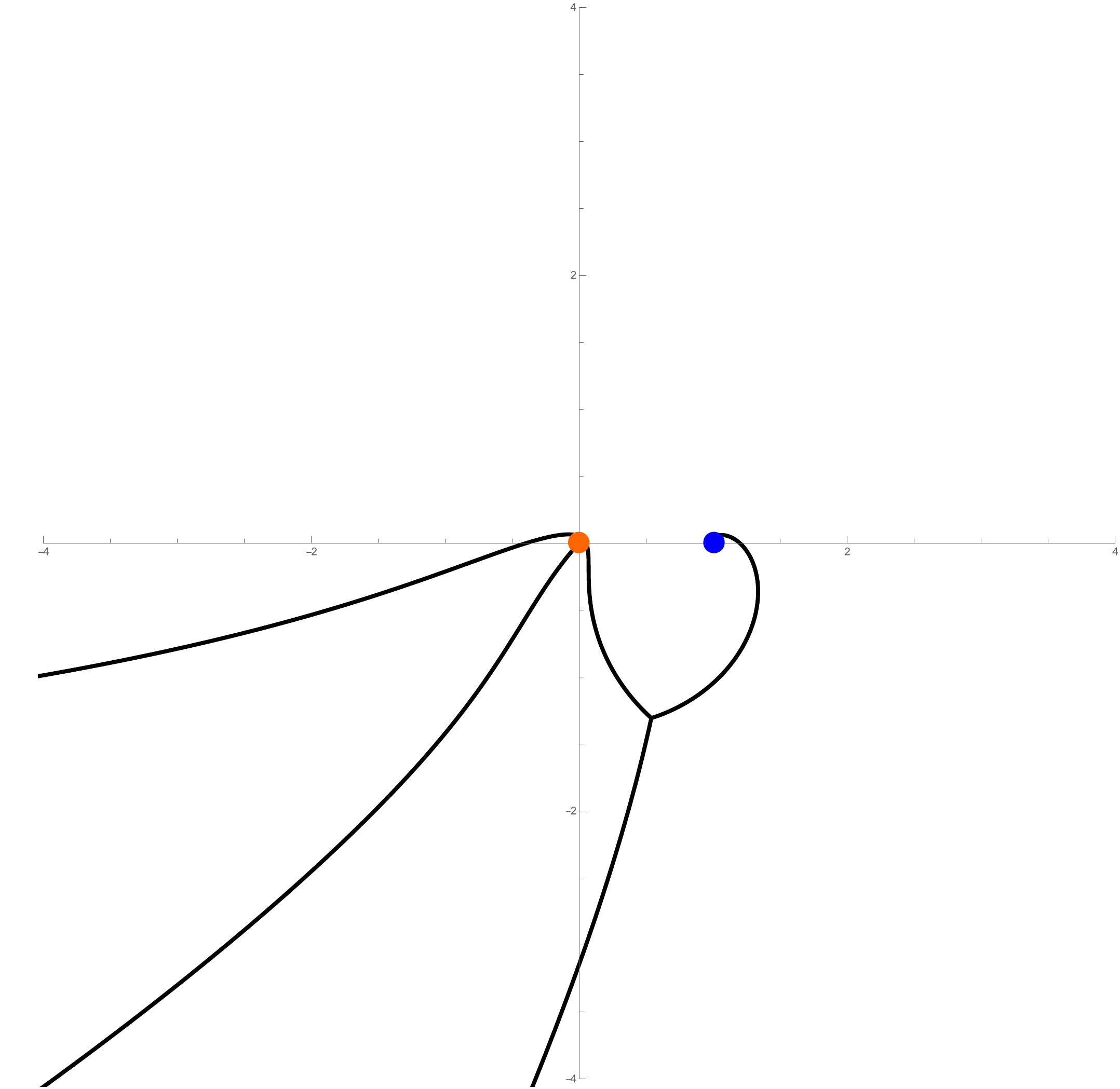}
        \caption{$\vartheta\approx0.31$}
      \label{fig:dhg2}
    \end{subfigure}
        \hspace{0.5cm}               
    \begin{subfigure}[t]{.27\textwidth}
        \centering
        \includegraphics[width=\linewidth,trim={3.75cm 3.75cm 3.75cm 3.75cm},clip]{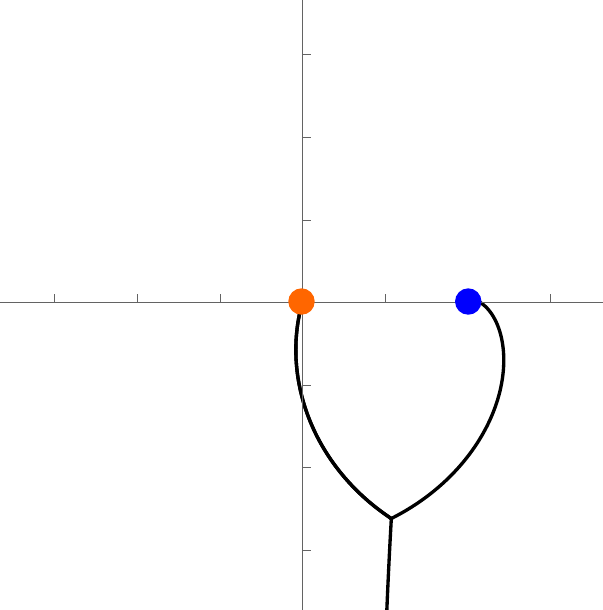}
        \caption{Type II saddle, $\vartheta\approx 0.55$}
      \label{fig:dhg3}
    \end{subfigure}
         \hspace{0.5cm}               
    \begin{subfigure}[t]{.27\textwidth}
        \centering
        \includegraphics[width=\linewidth,trim={4cm 4cm 3cm 3cm},clip]{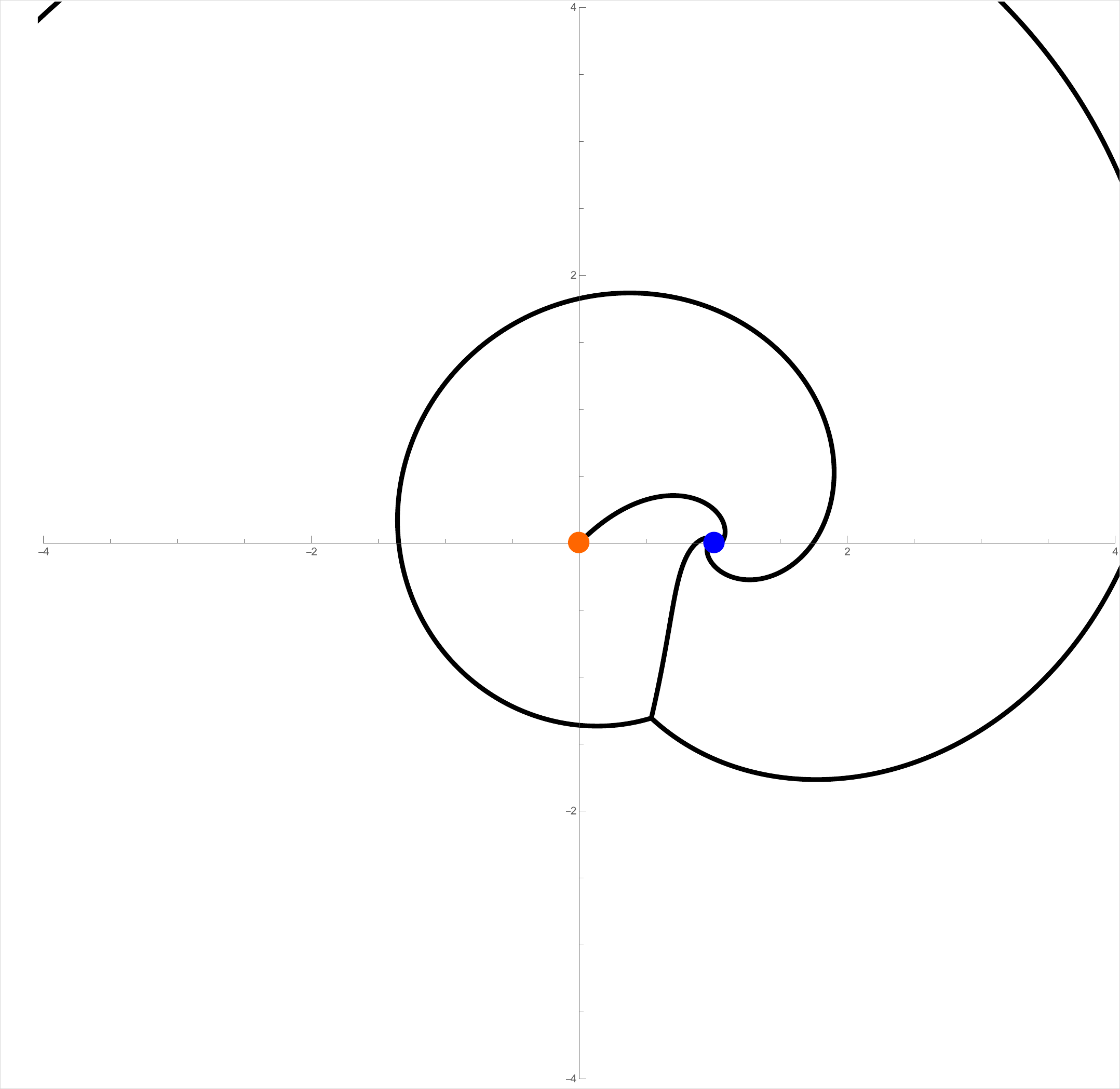}
        \caption{$\vartheta\approx1.87$}
      \label{fig:dhg5}
    \end{subfigure}
           \hspace{0.5cm}               
    \begin{subfigure}[t]{.27\textwidth}
        \centering
        \includegraphics[width=\linewidth,trim={1.8cm 1.8cm 1.8cm 1.8cm},clip]{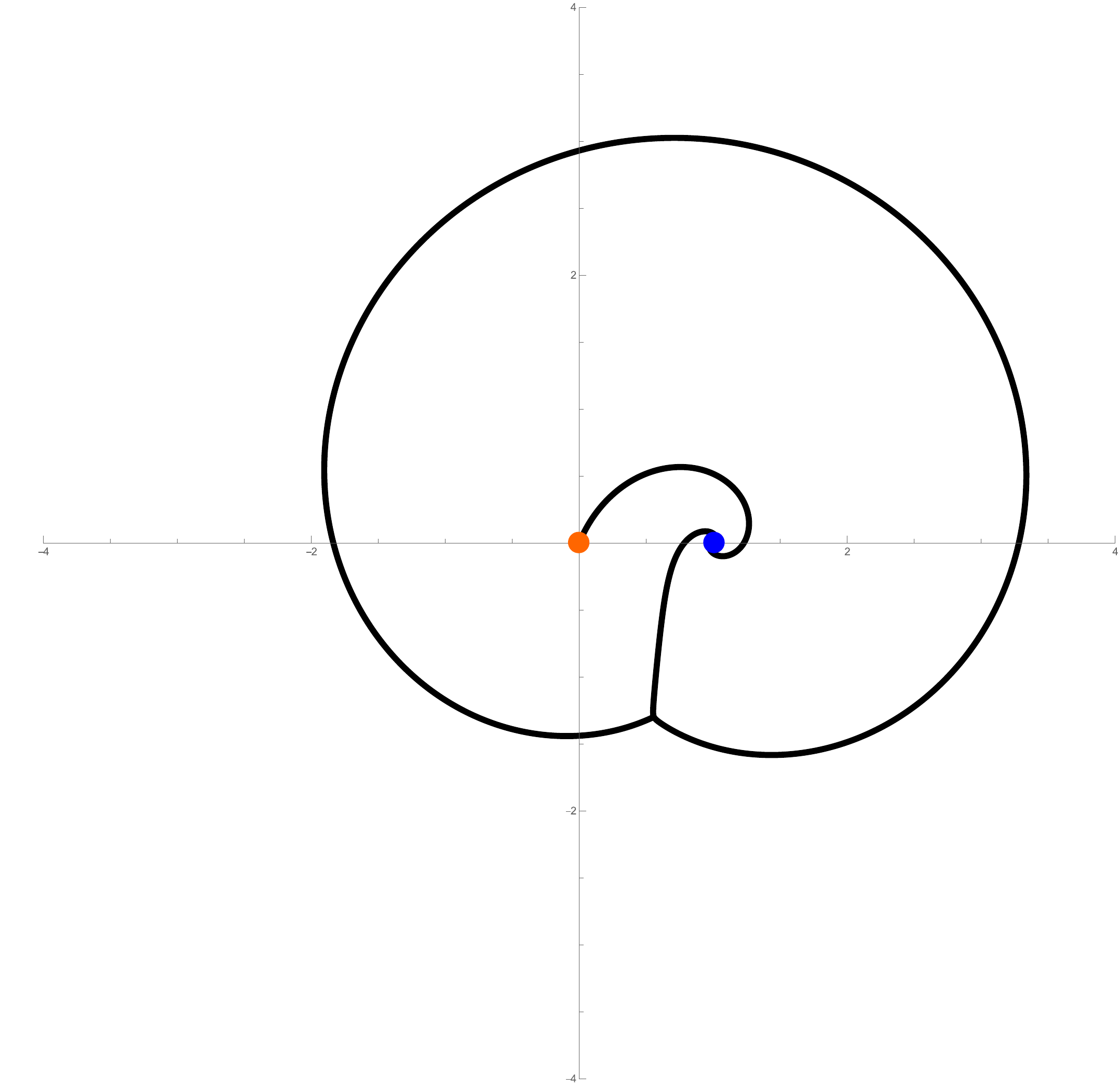}
        \caption{Type IV saddle, $\vartheta\approx2.062$}
      \label{fig:dhg6}
    \end{subfigure}
    \begin{subfigure}[t]{.27\textwidth}
        \centering
        \includegraphics[width=\linewidth,trim={2.5cm 2.5cm 2.5cm 2.5cm},clip]{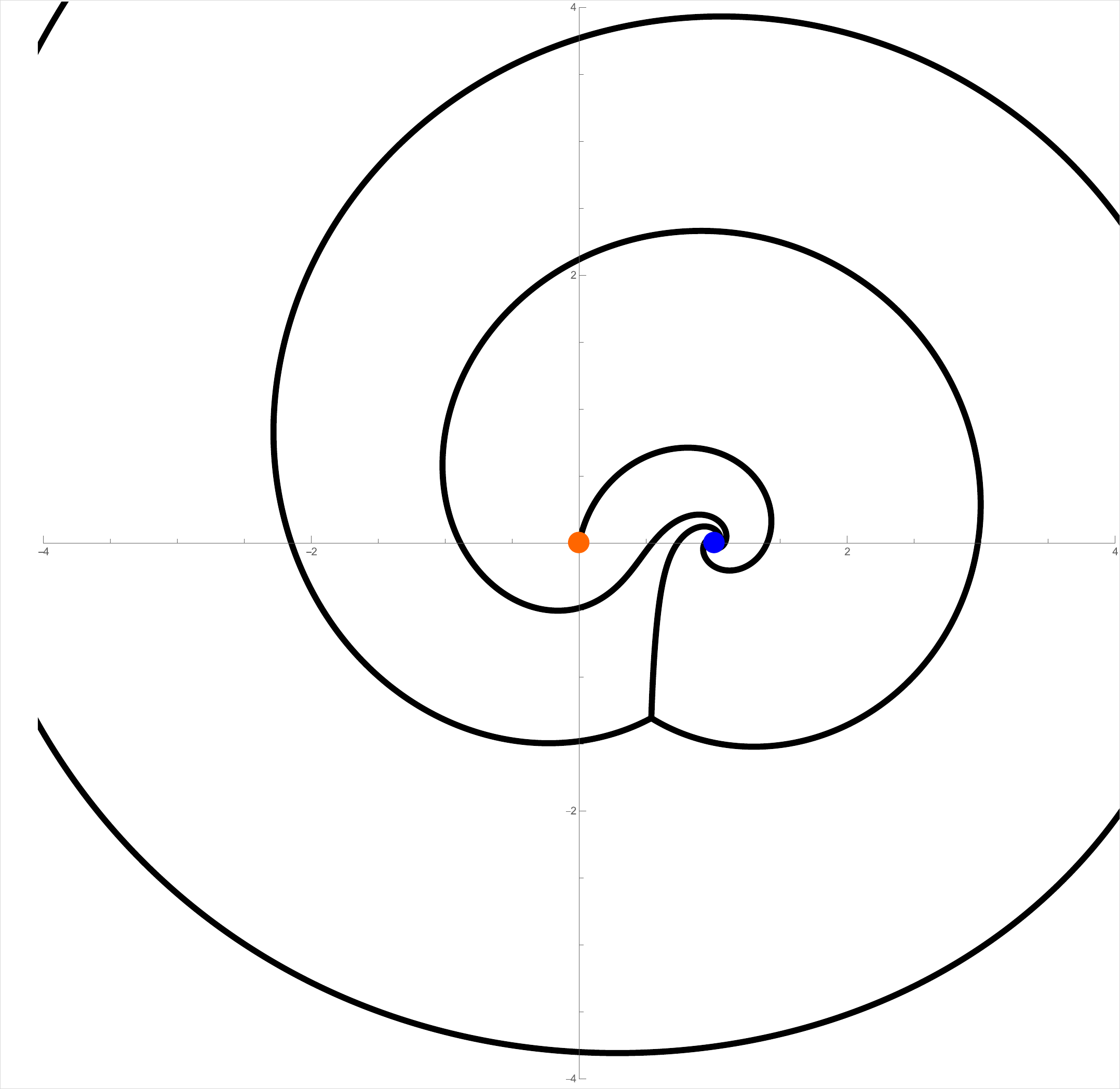}
        \caption{$\vartheta\approx2.16$}
      \label{fig:dhg7}
    \end{subfigure}
    \begin{subfigure}[t]{.27\textwidth}
        \centering
        \includegraphics[width=\linewidth,trim={4cm 4cm 4cm 4cm},clip]{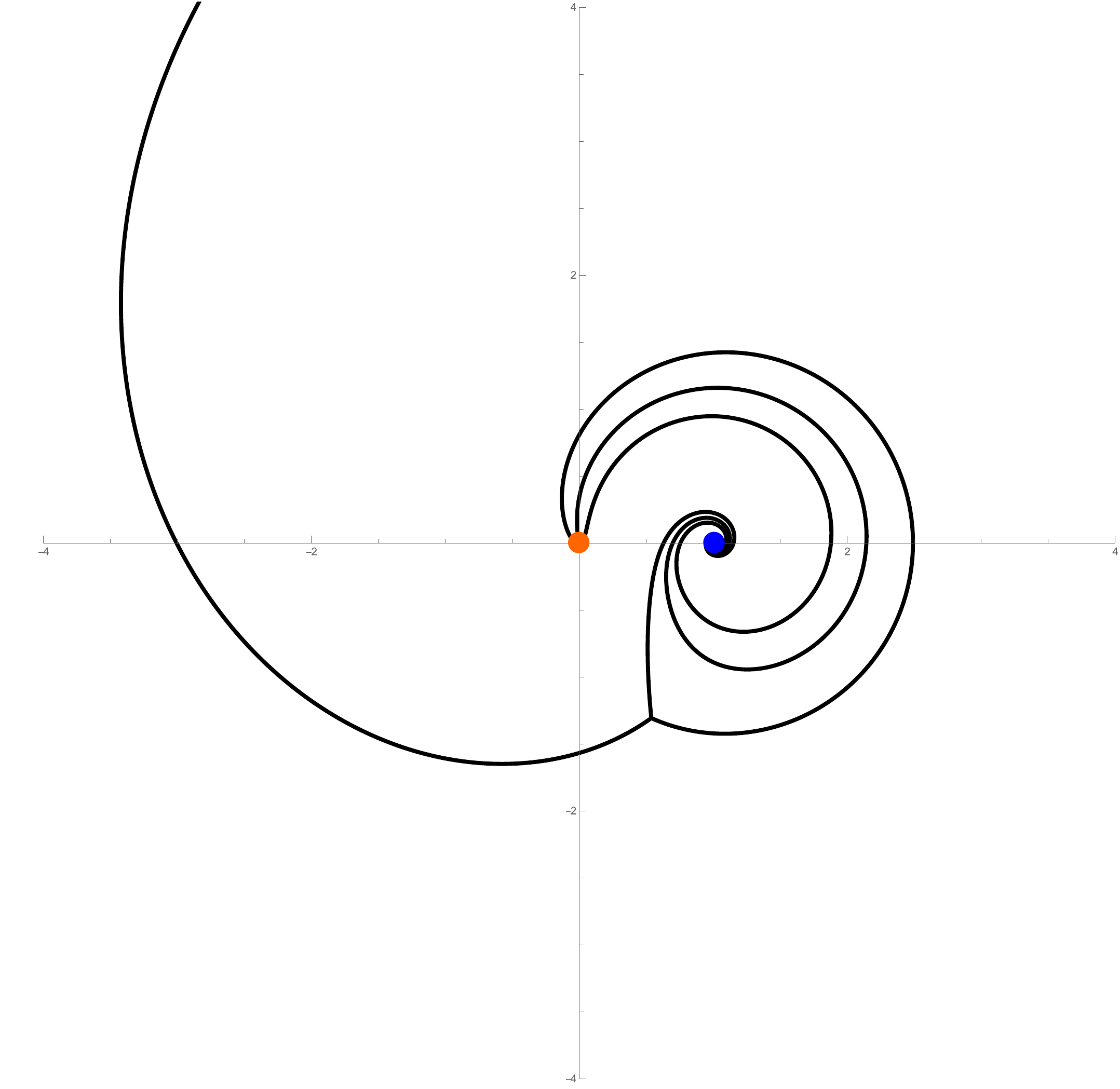}
        \caption{$\vartheta\approx2.356$}
      \label{fig:dhg8}
    \end{subfigure}
        \begin{subfigure}[t]{.27\textwidth}
        \centering
        \includegraphics[width=\linewidth,trim={3cm 3cm 3cm 3cm},clip]{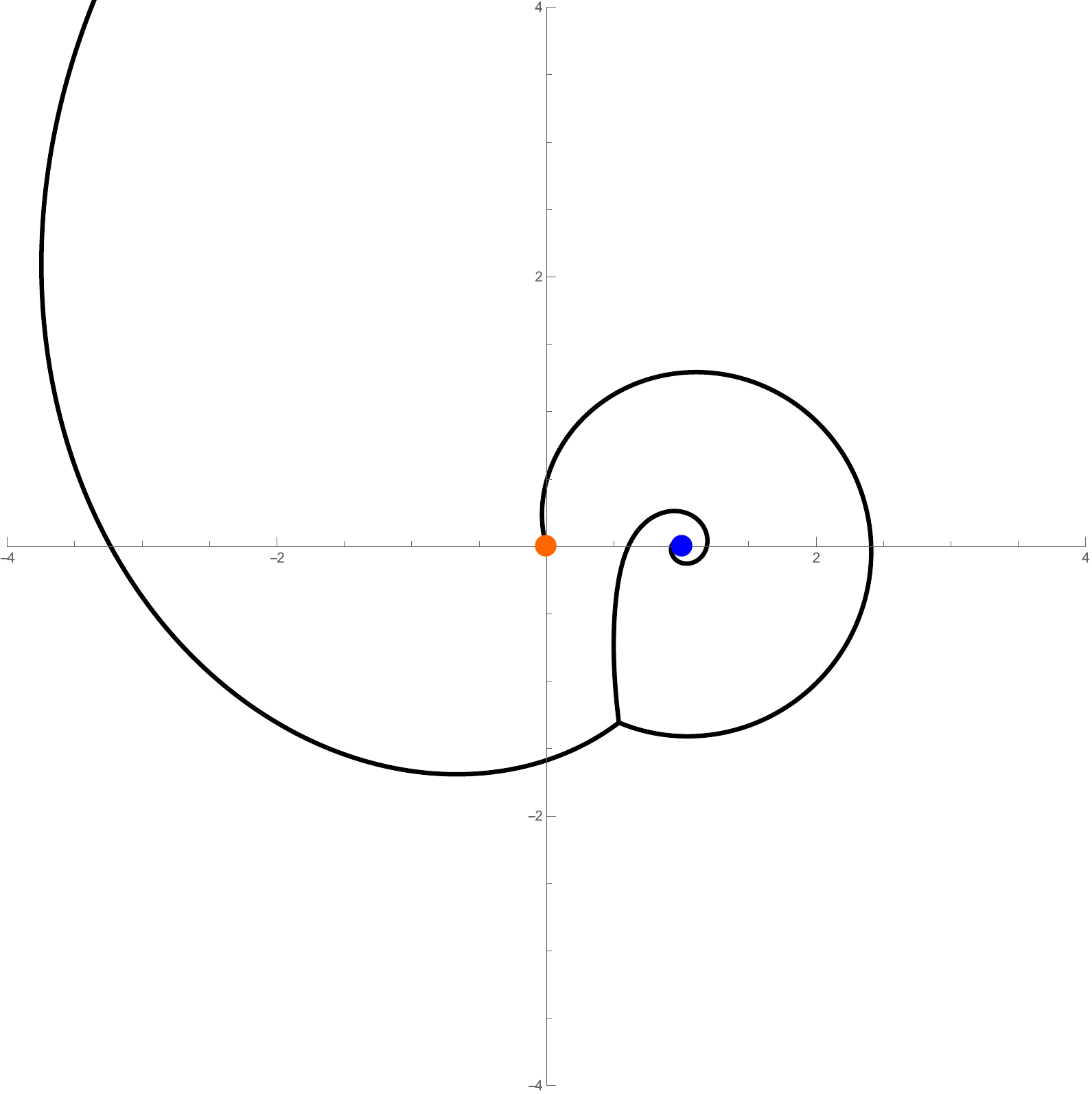}
        \caption{Type II saddle, $\vartheta\approx2.396$}
      \label{fig:dhg9}
    \end{subfigure}
            \begin{subfigure}[t]{.27\textwidth}
        \centering
        \includegraphics[width=\linewidth,trim={0.6cm 0.6cm 0.6cm 0.6cm},clip]{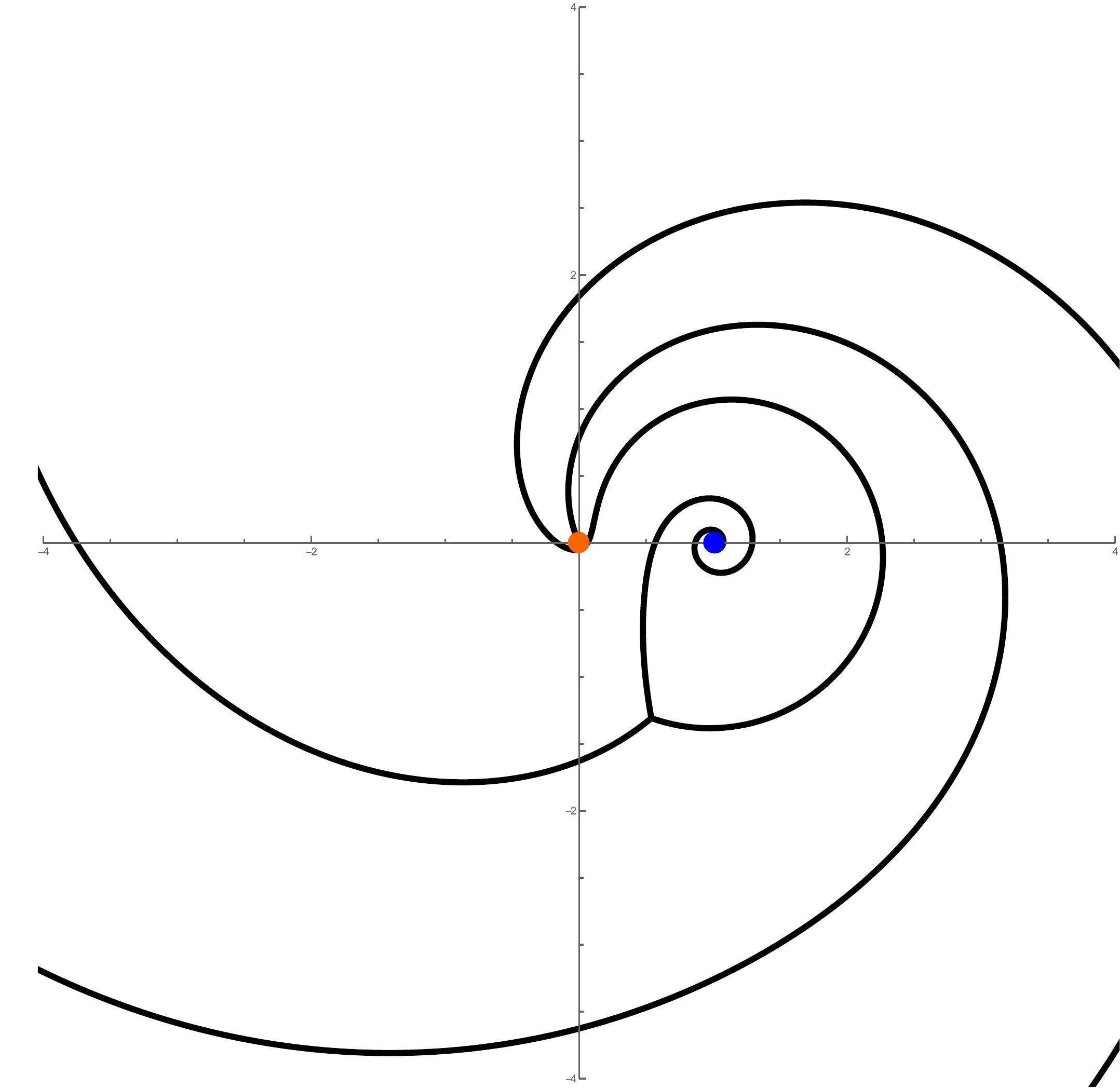}
        \caption{$\vartheta\approx2.474$}
      \label{fig:dhg10}
    \end{subfigure}
            \begin{subfigure}[t]{.27\textwidth}
        \centering
        \includegraphics[width=\linewidth,trim={3cm 3cm 3cm 3cm},clip]{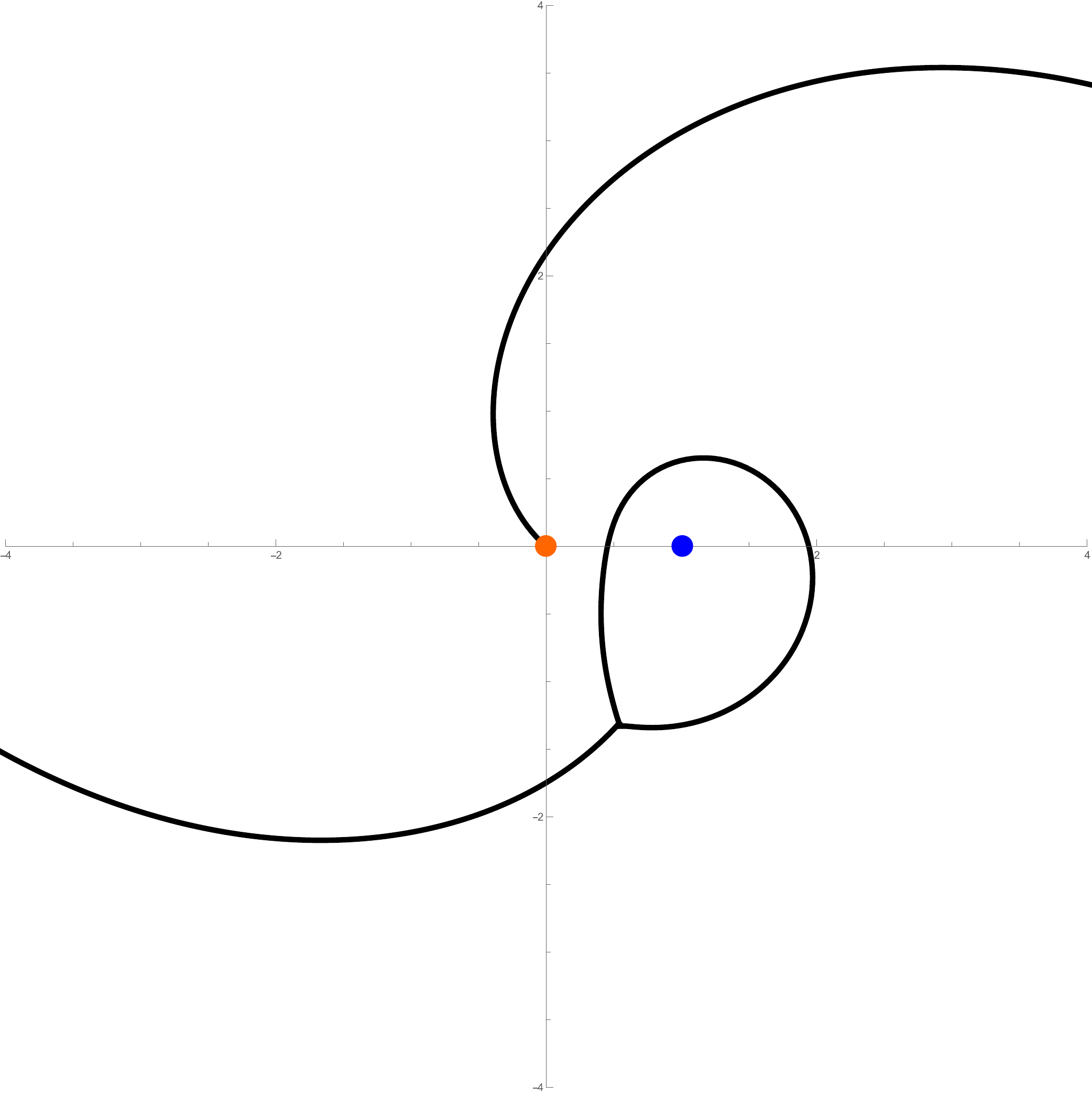}
        \caption{Type IV saddle, $\vartheta\approx2.68$}
      \label{fig:dhg11}
    \end{subfigure}
    
    \caption{Spectral networks for $\varphi_{\mathrm{dHG}}$, $m_1\approx -0.53-0.28i$, $m_\infty\approx-0.32-0.63i$.}
\label{fig:dhgpic1}
\end{figure}

  \begin{prop}
Fix ${\bm m} \in M_{\rm dHG}'$. The BPS spectrum of $\varphi_{\rm dHG}$ consists of exactly the eight BPS cycles given in Table \ref{table:bpsangles-dhg}.
\end{prop}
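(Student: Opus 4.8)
The plan is to follow the same strategy used for the hypergeometric and Kummer curves. First I would dispose of the loop-type degenerations: since $\varphi_{\rm dHG}$ has second order poles at $1$ and $\infty$, Proposition~\ref{prop:loops} immediately shows that a degenerate ring domain appears around $1$ (resp.\ $\infty$) precisely when $\vartheta \equiv \arg m_1 + \pi/2$ (resp.\ $\vartheta \equiv \arg m_\infty + \pi/2$) modulo $\pi$, with associated BPS cycle $\gamma_{1_\pm} - \gamma_{1_\mp}$ (resp.\ $\gamma_{\infty_\pm} - \gamma_{\infty_\mp}$) and $\Omega = -1$; on $M'_{\rm dHG}$ these two phases are distinct modulo $\pi$. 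Since $\varphi_{\rm dHG}$ has only a single simple zero and a single simple pole (at the origin), no type~I or type~III saddle can occur, so the only remaining degenerations are type~II saddles joining the simple zero to the origin.

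Next I would identify the candidate cycles for type~II saddles, in the spirit of Lemma~\ref{lem:hgbps}. On the genus-zero surface $\widetilde{\Sigma}_{\rm dHG}$ the preimage of such a saddle trajectory is a simple closed curve which the covering involution reverses; it is therefore automatically anti-invariant, and it separates the four punctures $\{1_+,1_-,\infty_+,\infty_-\}$ into two $\sigma$-stable groups. The partition $\{1_+,1_-\}\mid\{\infty_+,\infty_-\}$ would give the $\sigma$-invariant class $\gamma_{1_+}+\gamma_{1_-}$ and is excluded, so, using the relation \eqref{eq:relation-among-cycles} which here reads $\gamma_{1_+}+\gamma_{1_-}+\gamma_{\infty_+}+\gamma_{\infty_-}=0$, the only possibilities are, up to sign,
\begin{equation}
\gamma_a := \gamma_{1_\pm}+\gamma_{\infty_\pm}, \qquad \gamma_b := \gamma_{1_\pm}+\gamma_{\infty_\mp},
\end{equation}
with central charges $Z(\gamma_a) = 2\pi i(m_1+m_\infty)$ and $Z(\gamma_b) = 2\pi i(m_1-m_\infty)$ computed from \eqref{eq:residue-and-mass-parameters}.

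The main step is then the existence statement, modelled on Lemma~\ref{lemm:existence-of-saddle-HG}. For $\gamma \in \{\gamma_a,\gamma_b\}$ I would set $\vartheta := \arg Z(\gamma)$, assume $\vartheta \not\equiv \arg m_s + \pi/2$ for $s \in \{1,\infty\}$, and, by \eqref{eq:rotation-of-spectral-network}, reduce to $\vartheta = 0$. Supposing for contradiction that ${\mathcal W}_0(\varphi_{\rm dHG})$ is nondegenerate, I would use Teichm\"uller's lemma (Proposition~\ref{lem:teich}) together with the local normal forms to enumerate the finitely many possible topological types, each of which contains a full complement of horizontal strips whose dual cycles form a basis of $\Gamma_{\rm dHG}$ (Lemma~\ref{lem:diamondlemma}). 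After fixing branch cuts and orienting the trajectories via the source/sink behaviour at the double poles $1$ and $\infty$ --- exactly as in the proof of Lemma~\ref{lemm:existence-of-saddle-HG} --- one checks that in each such type one of $\gamma_a$, $\gamma_b$, or a ${\mathbb Z}$-combination thereof would have a real central charge, contradicting $\vartheta = 0$ and the genericity of ${\bm m}$. Hence ${\mathcal W}_0(\varphi_{\rm dHG})$ must be degenerate, and since loop-type saddles have been ruled out at this phase it carries a type~II saddle; the extra assumption $\arg Z(\gamma_a) \not\equiv \arg Z(\gamma_b)$ modulo $\pi$ (part of the definition of $M'_{\rm dHG}$) then pins its class down to $\pm\gamma$.

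Finally, combining the two loop cycles ($\Omega = -1$) with the two type~II cycles ($\Omega = +2$), each up to sign, gives exactly the eight BPS cycles listed in Table~\ref{table:bpsangles-dhg}, and at the same time identifies $M'_{\rm dHG}$ explicitly as the locus where the four phases $\arg(m_1 + m_\infty)$, $\arg(m_1 - m_\infty)$, $\arg m_1$, $\arg m_\infty$ are pairwise distinct modulo $\pi$. I expect the orientation bookkeeping in the classification step to be the only delicate part; but since the geometry here (four punctures, a single zero-type turning point) is simpler than in the hypergeometric case, this step should be strictly easier than in Lemma~\ref{lemm:existence-of-saddle-HG}.
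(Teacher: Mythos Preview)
Your approach is essentially the same as the paper's: dispose of the ring domains via Proposition~\ref{prop:loops}, list the possible type~II cycles, and then rule out all nondegenerate topological types by a central-charge contradiction. The paper carries out the classification step more concretely, observing that there is in fact only \emph{one} nondegenerate type up to swapping $1$ and $\infty$: the single trajectory from the simple pole must bound a horizontal strip, forcing two of the three trajectories from the simple zero to land at the same double pole and the third at the other.

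One small slip in your candidate-cycle argument: the two regions cut out by the lifted saddle are not $\sigma$-stable but $\sigma$-\emph{swapped}. In the parametrization of Example~\ref{exa:dGauss} the involution is $z\mapsto -z$, which exchanges the two hemispheres bounded by any curve through its fixed points $0,\infty$. If the groups were genuinely $\sigma$-stable, the only $2+2$ partition would be $\{1_+,1_-\}\mid\{\infty_+,\infty_-\}$, leaving you with no candidates at all; it is precisely because $\sigma$ swaps the sides that one obtains the partitions $\{1_+,\infty_+\}\mid\{1_-,\infty_-\}$ and $\{1_+,\infty_-\}\mid\{1_-,\infty_+\}$, and hence your $\gamma_a$ and $\gamma_b$. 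Your conclusion is correct, but the justification needs this fix.
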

\begin{proof}
{  
It is easy to check combinatorially that the only possible nondegenerate spectral network, up to permutation of $0$ and $\infty$, is as shown in Figure \ref{fig:dhgproofpic}. 
\begin{figure}[h]
    \centering
        \includegraphics[width=0.2\linewidth,trim={0.0cm -0.0cm 0cm 0cm}]{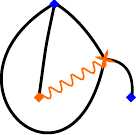}
           \caption{The only possible non-degenerate spectral network for $\varphi_{\rm dHG}$, up to permutation of $1$ and $\infty$}
      \label{fig:dhgproofpic}

\end{figure}

This is because the separating trajectory attached to the simple pole must be a boundary of a horizontal strip; thus the two of separating trajectories attached to the simple zero must have the same endpoint (either $1$ or $\infty$). 
Then the last separating trajectory must end at the other pole, hence we have the topological type shown in Figure \ref{fig:dhgproofpic}.

}

Similar to the previous examples, we may carefully compute the central charges of dual cycles and conclude that if either $2 \pi i (m_1 \pm m_\infty)$ is real, there must be a saddle. The result follows since ${\bm m }$ was generic.
\end{proof}

Together with the contribution from second order poles, we have Table \ref{table:bpsangles-dhg} which summarizes the BPS spectrum of $\varphi_{\rm dHG}$. 

 \begin{table}[h]
  \begin{tabular}{|c||c|c|} \hline
    $\vartheta_{\rm BPS}$
    & $\arg{(\epsilon m_1 + \epsilon' m_\infty)} + \pi/2$ 
    & $\arg m_s \pm \pi/2$   
    \\ \hline 
    degeneration
    & type II saddle
    & degenerate ring domain  \\ \hline
    $\gamma_{\rm BPS}$ & $\gamma_{1_\epsilon} + \gamma_{\infty_{\epsilon'}}$ & 
    $\gamma_{s_\pm} - \gamma_{s_\mp}$ \\ \hline
    $Z(\gamma_{\rm BPS})$ &
    $ 2 \pi i (\epsilon m_1 + \epsilon' m_\infty) $ &  
    $ \pm 4 \pi i m_s $  \\ \hline
    $\Omega(\gamma_{\rm BPS})$ &
    $+2$ &  $-1$ \\ \hline
  \end{tabular} 
   \vspace{+1.em}
     \caption{ 
     The BPS spectrum of $\varphi_{\mathrm{dHG}}$, where $\epsilon, \epsilon' \in \{\pm \}$, and $s \in \{1,\infty\}$.}
     \label{table:bpsangles-dhg}
  \end{table}

\subsubsection{\bf BPS structure from the Legendre curve} 
\label{section:Legendre}
Finally, let us conside the quadratic differential $\varphi_{\rm Leg} = Q_{\rm Leg}(x) dx^2$ obtained from the Legendre equation, where
\begin{equation}
Q_{\rm{Leg}}(x)= \dfrac{m_\infty^2}{x^2 -1}. 
\end{equation}
Under the assumption $m_{\infty} \in M_{\rm Leg} = \mathbb{C}^\ast$, $\varphi_{\rm Leg}$ has two simple poles at $\pm 1$, and a second order pole at infinity. 
The associated (partially compactified) Legendre curve $\widetilde{\Sigma}_{\rm{Leg}}$ is of genus $0$ with two punctures at $\infty_\pm$. 

{ 
This example contains the most unusual behaviour of the degenerations, and the BPS spectrum in this case may be unfamiliar from typical examples in physics. 
In the range $\vartheta \in [0, \pi)$, we may observe that there is a unique phase $\vartheta$ for which the spectral network is degenerate. This degeneration is shown in Figure \ref{fig:leg4}, in which a type III saddle appears between the two simple poles $\pm 1$. However, this is not the end of the story - a family of closed trajectories fill the complement of the type III saddle; that is, a degenerate ring domain around infinity also appears at the same phase (Figure \ref{fig:leg4}). Indeed, our final result cannot hold if we do not regard the ring domain itself as a BPS state. This viewpoint is also justified from the point of view of the Voros symbol jump property in the sequel to this paper.

We understand that the simultaneous degeneration is a special feature of this case among our examples; this is the reason why we treat this example separately in Definition \ref{def:generic-locus}. On the other hand, for more general quadratic differentials we may expect this kind of behaviour to appear whenever two simple poles are present. We arrived at this understanding thanks to the work \cite{KKT14} where the WKB-theoretic transformation to the Legendre equation was discussed. 
}


\begin{figure}[h]
    \centering
    \begin{subfigure}[t]{.27\textwidth}
        \centering
        \includegraphics[width=\linewidth,trim={1.1cm 1.1cm 1.1cm 1.1cm},clip]{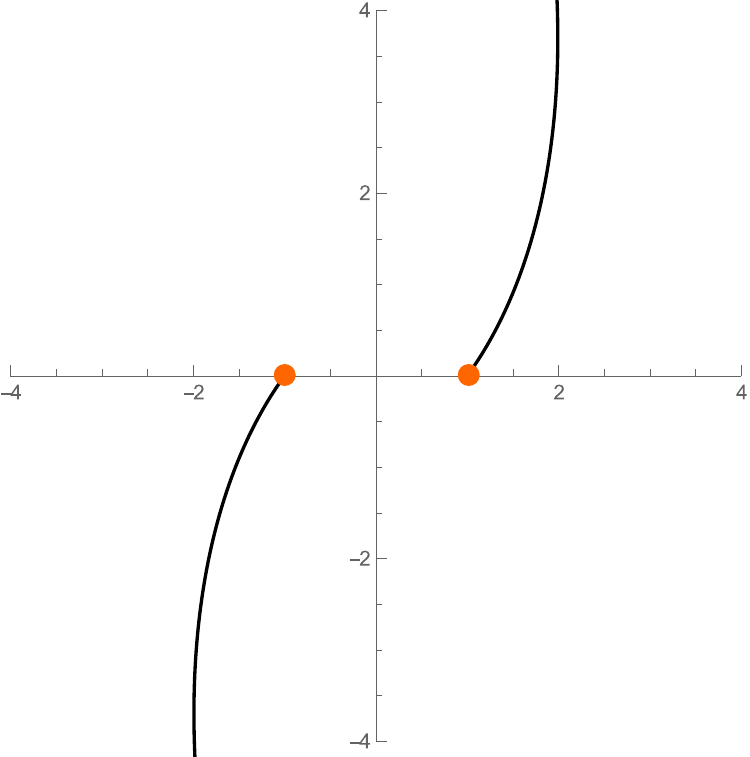}
           \caption{$\vartheta \approx 0.471$}
      \label{fig:leg1}
    \end{subfigure}
    \hspace{1.cm}                  
    \begin{subfigure}[t]{.27\textwidth}
        \centering
        \includegraphics[width=\linewidth,trim={1.1cm 1.1cm 1.1cm 1.1cm},clip]{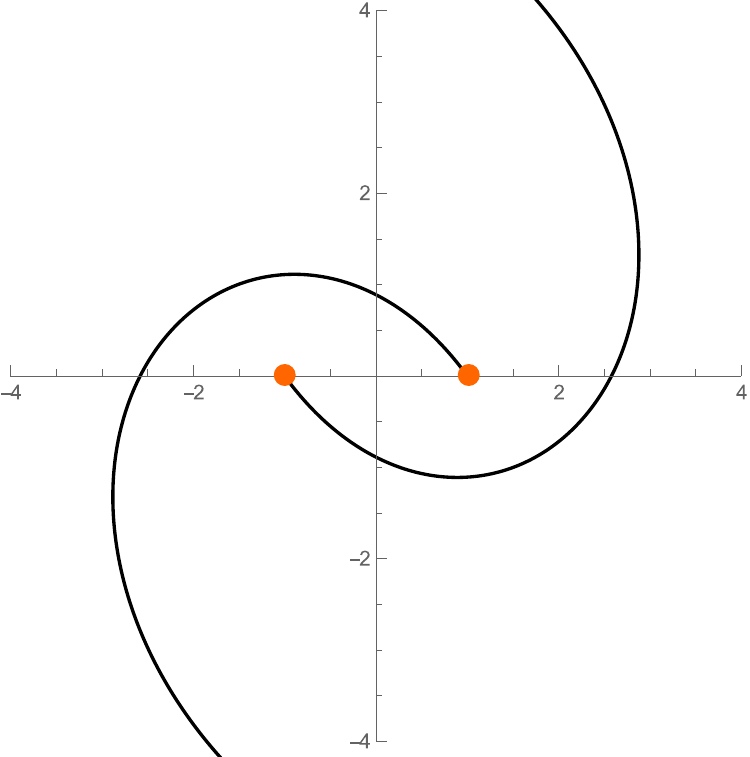}
        \caption{$\vartheta \approx 1.100$}
      \label{fig:leg2}
    \end{subfigure}
    \hspace{1.cm}                  
    \begin{subfigure}[t]{.27\textwidth}
        \centering
        \includegraphics[width=\linewidth,trim={1.1cm 1.1cm 1.1cm 1.1cm},clip]{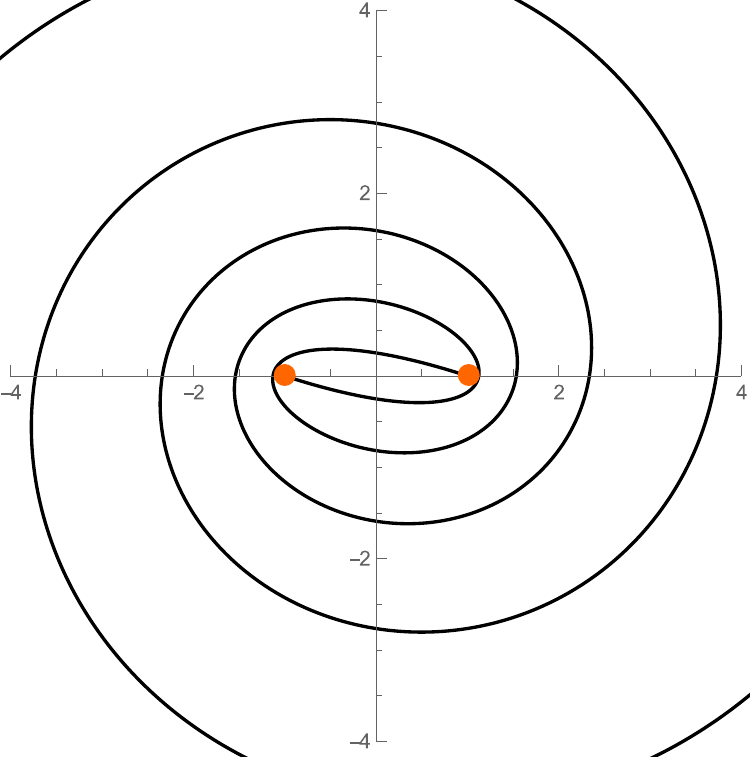}
        \caption{$\vartheta\approx1.414$}
      \label{fig:leg3}
    \end{subfigure}
    \\[+1.em]                 
    \begin{subfigure}[t]{.27\textwidth}
        \centering
        \includegraphics[width=\linewidth,trim={1.1cm 1.1cm 1.1cm 1.1cm},clip]{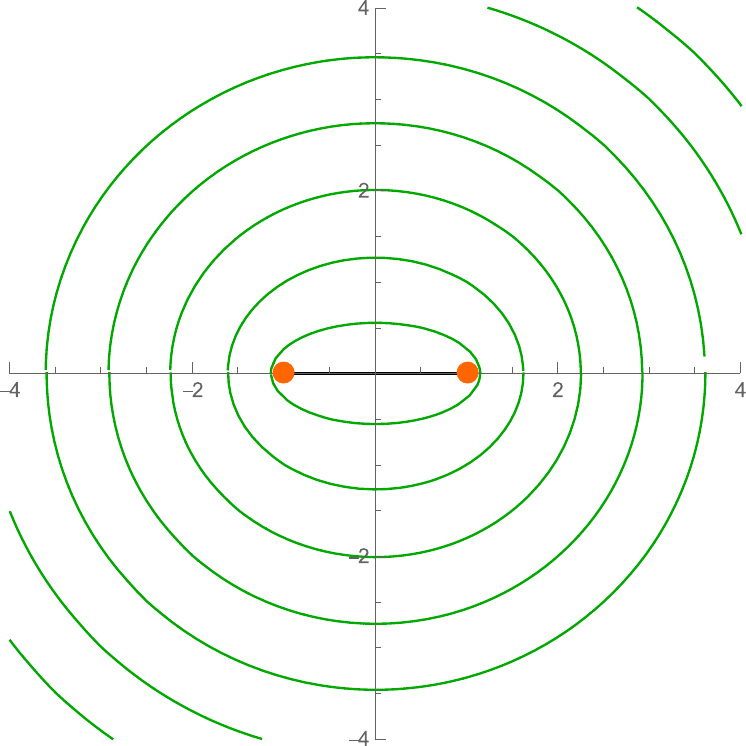}
        \caption{Type III saddle, $\vartheta\approx 1.571$. Closed trajectories forming a degenerate ring domain are shown in green.}
      \label{fig:leg4}
    \end{subfigure}
    \hspace{1.cm}                  
    \begin{subfigure}[t]{.27\textwidth}
        \centering
        \includegraphics[width=\linewidth,trim={1.1cm 1.1cm 1.1cm 1.1cm},clip]{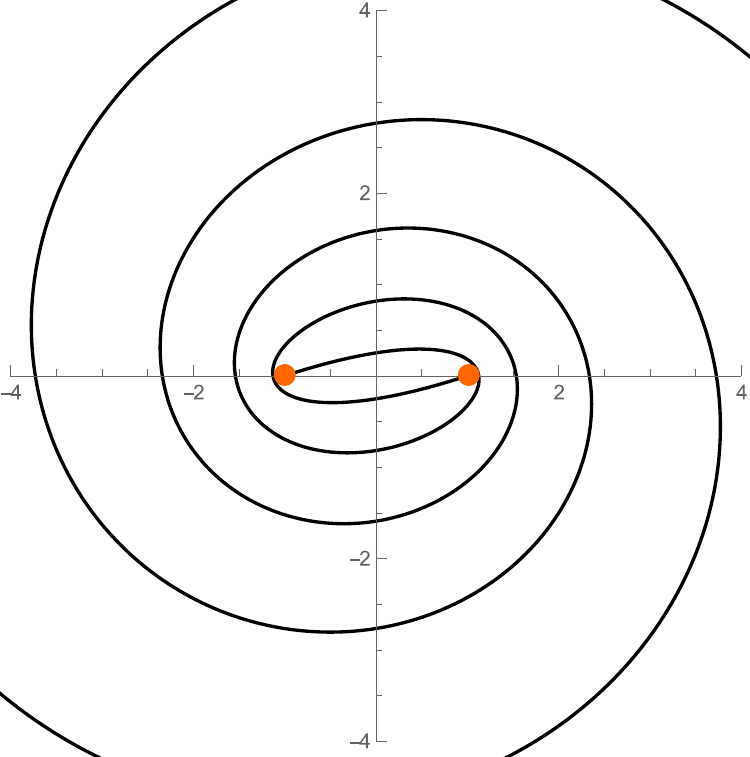}
        \caption{$\vartheta \approx 1.728$}
      \label{fig:leg5}
    \end{subfigure}
    \hspace{1.cm}                  
    \begin{subfigure}[t]{.27\textwidth}
        \centering
        \includegraphics[width=\linewidth,trim={1.1cm 1.1cm 1.1cm 1.1cm},clip]{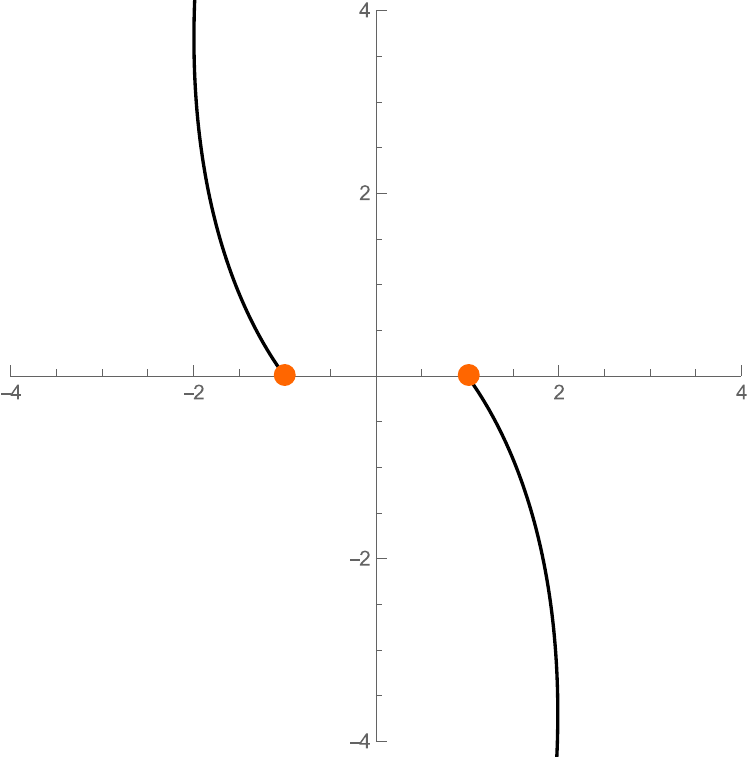}
        \caption{$\vartheta\approx  2.670$}
      \label{fig:leg6}
    \end{subfigure}
    \caption{Spectral networks for $\varphi_{\rm Leg}$ with $m_\infty \approx i$}
    \label{fig:leg}
\end{figure}

 \noindent We have
 
\begin{prop}
Fix $m_\infty \in M_{\rm Leg} = M'_{\rm Leg}$. 
Then, $\varphi_{\rm Leg}$ has exactly one degenerate spectral network in the range $\vartheta \in [0,\pi)$, appearing at $\vartheta = \arg m_\infty + \pi/2$ (mod $\pi$).
It contains a type III saddle with the associated BPS cycle $\gamma_{\infty_\pm}$, and a degenerate ring domain around $\infty$ with the associated BPS cycle $\gamma_{\infty_\pm} - \gamma_{\infty_\mp}$, simultaneously.
\end{prop}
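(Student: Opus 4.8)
The plan is to reduce the entire analysis to the explicit geometry of $\sqrt{\varphi_{\rm Leg}}$ on the spectral cover, where everything becomes transparent. Using the rational parametrization $x = \tfrac{1}{2}(z + z^{-1})$, $y = 2 m_\infty z/(z^2-1)$ of $\Sigma_{\rm Leg}$, one checks that the two ramification points of $\pi$ are $z = \pm 1$ (lying over the simple poles $x = \pm 1$), that $\widetilde{\Sigma}_{\rm Leg} \cong {\mathbb C}^\ast_z$ with the two punctures $\infty_\pm$ sitting at $z = 0, \infty$, and --- crucially --- that
\[
\sqrt{\varphi_{\rm Leg}} \;=\; \frac{m_\infty\, dx}{\sqrt{x^2-1}} \;=\; m_\infty\,\frac{dz}{z} \;=\; d\big(m_\infty \log z\big).
\]
Thus $\sqrt{\varphi_{\rm Leg}}$ is holomorphic and nonvanishing on ${\mathbb C}^\ast_z$, so the pullback to $\widetilde\Sigma_{\rm Leg}$ of the distinguished coordinate of \eqref{eq:distinguished-coordinate} is simply $w = m_\infty \log z$. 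Writing $z = \rho e^{i\psi}$ and $m_\infty = |m_\infty| e^{i\mu}$, the trajectory equation ${\rm Im}(e^{-i\vartheta} w) = \text{const}$ becomes $\sin(\mu-\vartheta)\log\rho + \cos(\mu-\vartheta)\psi = \text{const}$, and by \eqref{eq:rotation-of-spectral-network} the whole problem splits according to whether $\cos(\mu - \vartheta)$ vanishes.

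I would then treat the two cases. If $\vartheta \not\equiv \mu + \pi/2 \pmod \pi$, every trajectory of $\sqrt{\varphi_{\rm Leg}}$ on ${\mathbb C}^\ast_z$ is a logarithmic spiral running from $z = 0$ to $z = \infty$; no trajectory is closed, so no ring domain can occur, and the only critical trajectories are the two spirals through $z = 1$ and through $z = -1$ (these are always distinct). Since $z = \pm 1$ are regular points for $\sqrt{\varphi_{\rm Leg}}$ on the cover, each such spiral is precisely the full preimage of a single separating trajectory on $X$ joining a simple pole $x = \pm 1$ to the double pole $x = \infty$; hence $\mathcal{W}_\vartheta(\varphi_{\rm Leg})$ is non-degenerate. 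If $\vartheta \equiv \mu + \pi/2 \pmod \pi$, the trajectories are the circles $|z| = \rho$. The circle $|z| = 1$ passes through both ramification points $z = \pm 1$ and projects two-to-one onto the segment $[-1,1] \subset X$, which is therefore a type III saddle connecting the two simple poles; lifting this segment back to $\widetilde\Sigma_{\rm Leg}$ closes up to the loop $|z| = 1$, the generator of $H_1(\widetilde{\Sigma}_{\rm Leg},{\mathbb Z})$, so the associated BPS cycle is $\gamma_{\infty_\pm}$, with $Z(\gamma_{\infty_\pm}) = \pm 2\pi i\, m_\infty$ by \eqref{eq:sign-convention-preimages} and $\Omega = +4$. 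The circles $|z| = \rho$ with $\rho > 1$ map biholomorphically onto $X \setminus ([-1,1] \cup \{\infty\})$, which is thus a single degenerate ring domain around $\infty$ whose two boundary components are the saddle $[-1,1]$ and the double pole $\infty$ (consistent with Teichm\"uller's lemma, Proposition \ref{lem:teich}); by Proposition \ref{prop:loops} the associated BPS cycle is $\gamma_{\infty_\pm} - \gamma_{\infty_\mp}$, with $Z = \pm 4\pi i\, m_\infty$ and $\Omega = -1$. Since $\vartheta \equiv \mu + \pi/2 \pmod \pi$ has a unique solution in $[0,\pi)$, and since $M'_{\rm Leg} = M_{\rm Leg}$ by Definition \ref{def:generic-locus}, this produces exactly one degenerate network, occurring at $\vartheta = \arg m_\infty + \pi/2$, containing a type III saddle and a degenerate ring domain simultaneously, and matching Table \ref{table:bpsangles-Leg}.

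I do not anticipate a serious obstacle: the identity $\sqrt{\varphi_{\rm Leg}} = d(m_\infty \log z)$ makes the trajectory structure on the cover essentially trivial. The one point requiring care is the bookkeeping around the two-to-one projection $\widetilde{\Sigma}_{\rm Leg} \to X$ --- in particular, verifying that in the generic case the spirals through $z = \pm 1$ descend to genuine single separating trajectories, and that in the special case the inner boundary of the ring domain is the type III saddle itself rather than a loop-type saddle. It is exactly this last feature that makes the simultaneous degeneration unavoidable and justifies treating the degenerate ring domain as a BPS state (and treating the Legendre case separately in Definition \ref{def:generic-locus}).
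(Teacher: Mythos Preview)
Your proposal is correct and takes a genuinely different route from the paper's proof. The paper argues directly on $X$: it parametrizes the straight segment between the simple poles $x=\pm 1$, observes (as in the Weber case, Proposition~\ref{prop:Weber-BPS-structure}) that $\sqrt{Q_{\rm Leg}}\,dx$ has constant phase $\arg m_\infty + \pi/2$ along it, invokes the uniqueness of geodesics \cite[Theorem 16.2]{St84} to conclude this is the only saddle, and then appeals to Proposition~\ref{prop:loops} and the residue condition \eqref{eq:loop-condition} to obtain the degenerate ring domain at the same phase. Your approach instead pulls everything back to the spectral cover via the explicit parametrization and the key identity $\sqrt{\varphi_{\rm Leg}} = m_\infty\, dz/z$, which linearizes the problem: the global trajectory structure on $\mathbb{C}^\ast_z$ becomes either logarithmic spirals or concentric circles, and the entire dichotomy (non-degenerate versus simultaneous type~III saddle plus degenerate ring domain) drops out at once, with no appeal to Strebel's geodesic uniqueness or to Proposition~\ref{prop:loops}. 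What your method buys is a self-contained, global picture that makes the simultaneous degeneration manifestly unavoidable and explains \emph{why} the complement of the saddle is foliated by closed trajectories; what the paper's method buys is brevity, since the Weber-style calculation and Proposition~\ref{prop:loops} are already in hand. One small point of care in your write-up: the phrase ``map biholomorphically onto $X\setminus([-1,1]\cup\{\infty\})$'' should refer to the annulus $\{|z|>1\}$ rather than to individual circles; each circle $|z|=\rho$ projects injectively to an ellipse, and together these foliate the ring domain.
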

\begin{proof}
A similar technique used in the Weber case allows us to show that the straight line between two simple poles is the unique saddle trajectory of the phase $\arg m_\infty + \pi/2$ (mod $\pi$). 
The phase coincides with the argument of $2 \pi i \Res_{x=\infty} \sqrt{Q_{\rm Leg}(x)} \, dx = \pm 2 \pi i m_\infty$ (mod $\pi$), and hence, a degenerate ring domain must appear around the infinity (c.f., Proposition \ref{prop:loops}). 
This proves that degeneration of the spectral network only occurs at the single phase which is given above. 
\end{proof}

In summary, we have Table \ref{table:bpsangles-Leg}. 

\begin{table}[h]
  \begin{tabular}{|c||c|c|c|c|c|c|c|c|} \hline
    $\vartheta_{\rm BPS}$ 
    & $\arg{m_\infty} \pm \pi/2$ 
    & $\arg{m_\infty} \pm \pi/2$ 
 
    \\ \hline 
    degeneration
    & type III saddle
    & degenerate ring domain  \\ \hline
     $\gamma_{\rm BPS}$
    & $\gamma_{\infty_\pm}$
    & $\gamma_{\infty_\pm}-\gamma_{\infty_\mp}$   \\ \hline
        $Z(\gamma_{\rm BPS})$ &
    $\pm 2 \pi  i m_{\infty} $ & $\pm 4 \pi i m_\infty $ \\ \hline
    $\Omega(\gamma_{\rm BPS})$ &
    $+4$ & $-1$ \\ \hline
  \end{tabular} 
   \vspace{+1.em}
     \caption{The BPS spectrum of $\varphi_{\mathrm{Leg}}$.}
     \label{table:bpsangles-Leg}
  \end{table}

\section{Free energy and BPS spectrum}
\label{sec:final}

We may now present our main result on the relationship between the BPS indices $\Omega(\gamma)$ and the TR free energies for spectral curves 
of hypergeometric type. 

\subsection{Formula for the free energy}

In the previous section, we gave a description of the BPS structure arising from the spectral curves of hypergeometric type.

It remains only to compare these values with Table \ref{table:free-energy-0} of expressions of genus $g$ free energies, which yields 

\begin{prop}
For any $\bullet \in \{{\rm HG}, {\rm dHG}, {\rm Kum}, {\rm Leg}, 
{\rm Bes}, {\rm Whi}, {\rm Web}\}$, fix ${\bm m}\in M'_\bullet$, and any half plane $\mathbb{H}$ whose boundary rays are not BPS. Then we have the following:
\begin{itemize}
\item[(1)] 
The genus $0$ free energy $F_0 = F_0^{\bullet}$ computed from TR is expressed as
\begin{equation} \label{eq:F0-from-BPS-indices}
F_0({\bm m}) \equiv \sum_{\substack{ \gamma \in \Gamma \\ Z_{\bm m}(\gamma)\in  \mathbb{H}}} 
\Omega(\gamma)\cdot \dfrac{1}{2} 
\left( \frac{Z_{\bm m}(\gamma)}{2 \pi i } \right)^2
\log \left( \frac{Z_{\bm m}(\gamma)}{2 \pi i} \right) 
\end{equation}
modulo degree two polynomials of the mass parameter ${\bm m}$.

\item[(2)]
The genus $1$ free energy $F_1 = F_1^{\bullet}$ computed from TR is expressed as
\begin{equation}
F_1({\bm m}) \equiv - \frac{1}{12} 
\log \left(
\prod_{\substack{ \gamma \in \Gamma \\ Z_{\bm m}(\gamma)\in \mathbb{H}}} 
 \left(\frac{Z_{\bm m}(\gamma)}{2 \pi i} \right)^{\Omega(\gamma)} 
 \right)
\end{equation}
modulo additive constants.

\end{itemize}
\end{prop}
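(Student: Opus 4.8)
The plan is to prove both assertions by a direct case-by-case verification, matching the right-hand sides against the explicit formulas for $F_0$ and $F_1$ collected in Table~\ref{table:free-energy-0} and using the BPS spectra determined in \S\ref{sec:maincomputation}, which, since ${\bm m}\in M'_\bullet$, are exactly those recorded in Tables~\ref{table:bpsangles-Web}--\ref{table:bpsangles-Leg}.

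The first step is to replace the sum (resp.\ product) over $\{\gamma:Z_{\bm m}(\gamma)\in\mathbb{H}\}$ by one half of the corresponding sum (resp.\ product) over all active classes. Since $\Omega(\gamma)=\Omega(-\gamma)$ and, the boundary rays of $\mathbb{H}$ being non-BPS, exactly one of $\pm\gamma$ has central charge in $\mathbb{H}$ for each active $\gamma$, it suffices to control the difference between the $\gamma$- and $(-\gamma)$-contributions. Writing $w=Z_{\bm m}(\gamma)/(2\pi i)$, one has $\tfrac12 w^{2}\log(-w)=\tfrac12 w^{2}\log w\pm\tfrac{i\pi}{2}w^{2}$, and since $w$ is, for every active class, a $\mathbb{Z}$-linear combination of the mass parameters, the correction $\pm\tfrac{i\pi}{2}w^{2}$ is a polynomial of degree two in ${\bm m}$; the choice of $\mathbb{H}$ and of the branch of $\log$ likewise perturb the right-hand side of \eqref{eq:F0-from-BPS-indices} only by such polynomials. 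Hence, modulo degree-two polynomials in ${\bm m}$,
\begin{equation}
\sum_{\substack{\gamma\in\Gamma\\ Z_{\bm m}(\gamma)\in\mathbb{H}}}\Omega(\gamma)\,\tfrac12\Bigl(\tfrac{Z_{\bm m}(\gamma)}{2\pi i}\Bigr)^{2}\log\Bigl(\tfrac{Z_{\bm m}(\gamma)}{2\pi i}\Bigr)\;\equiv\;\tfrac12\sum_{\gamma\ \text{active}}\Omega(\gamma)\,\tfrac12\Bigl(\tfrac{Z_{\bm m}(\gamma)}{2\pi i}\Bigr)^{2}\log\Bigl(\tfrac{Z_{\bm m}(\gamma)}{2\pi i}\Bigr),
\end{equation}
and the analogous identity holds for assertion (2) modulo additive constants, since there the relevant correction $\log(-w)-\log w$ is constant.

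The second step is to substitute the BPS data. For every hypergeometric-type curve, each active cycle has $Z_{\bm m}(\gamma)/(2\pi i)$ equal either to a signed sum $\epsilon m_0+\epsilon' m_1+\epsilon'' m_\infty$ (type I, II, or III saddle, with $\Omega(\gamma)\in\{1,2,4\}$) or to $\pm 2m_s$ for a second-order pole $s$ (degenerate ring domain, with $\Omega(\gamma)=-1$). Take $\bullet={\rm HG}$ as the model: after the reduction above, the eight type-I cycles of Table~\ref{table:bpsangles-HG} yield the contribution $\sum_{\epsilon,\epsilon'}\tfrac12(m_0+\epsilon m_1+\epsilon' m_\infty)^{2}\log(m_0+\epsilon m_1+\epsilon' m_\infty)$ to the right-hand side of (1) and the factor $\Delta_{\rm HG}({\bm m})$ inside the logarithm in (2), while the six cycles associated with degenerate ring domains contribute $-\sum_{s}\tfrac12(2m_s)^{2}\log(2m_s)$ and $(m_0 m_1 m_\infty)^{-1}$, respectively; these reproduce $F_0^{\rm HG}$ and $F_1^{\rm HG}$ of Table~\ref{table:free-energy-0} up to the stated ambiguities. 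The remaining curves ${\rm Web}$, ${\rm Whi}$, ${\rm Bes}$, ${\rm Kum}$, ${\rm dHG}$, ${\rm Leg}$ are handled identically from Tables~\ref{table:bpsangles-Web}, \ref{table:bpsangles-Whi}, \ref{table:bpsangles-Bes}, \ref{table:kummerangles}, \ref{table:bpsangles-dhg} and \ref{table:bpsangles-Leg}, in each case matching the corresponding row of Table~\ref{table:free-energy-0}.

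I do not expect a genuine obstacle here: the work is the bookkeeping of multiplicities, signs, and the branch/half-plane ambiguities. The one case meriting explicit comment is Legendre, where a type III saddle ($\Omega=+4$, central charge $\pm 2\pi i m_\infty$) and a degenerate ring domain ($\Omega=-1$, central charge $\pm 4\pi i m_\infty$) appear simultaneously at the same phase; there the two logarithmic contributions largely cancel, and one finds both sides of (1) and (2) to be trivial modulo degree-two polynomials, resp.\ additive constants, in agreement with $F_0^{\rm Leg}$ and $F_1^{\rm Leg}$. By contrast, the $g\ge 2$ statement of Theorem~\ref{thm:main-intro} is cleaner because the exponent $2g-2$ is positive and even, so no branch of a logarithm and no choice of $\mathbb{H}$ intervenes, and the half-plane sum is literally one half of the sum over all active classes.
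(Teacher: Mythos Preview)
Your approach---direct case-by-case verification against Table~\ref{table:free-energy-0} using the BPS spectra of Tables~\ref{table:bpsangles-Web}--\ref{table:bpsangles-Leg}---is exactly what the paper does; indeed, the paper gives no proof beyond the implicit table comparison, and your reduction of the half-plane sum to half the sum over all active classes (with the branch and sign corrections absorbed into the stated ambiguities) is a correct elaboration that the paper omits.

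One small slip: in the Legendre case the two sides of (2) are \emph{not} trivial modulo constants. The product inside the logarithm is $m_\infty^{4}\cdot(2m_\infty)^{-1}=m_\infty^{3}/2$, so the right-hand side is $-\tfrac{1}{12}\log(m_\infty^{3}/2)=-\tfrac14\log m_\infty+\text{const}$, which is non-constant in $m_\infty$ but does match $F_1^{\rm Leg}$ from Table~\ref{table:free-energy-0}. Your claim of triviality is correct for (1), where $2m_\infty^{2}\log m_\infty-2m_\infty^{2}\log(2m_\infty)=-2m_\infty^{2}\log 2$ is indeed a degree-two polynomial, but not for (2); this does not affect the argument, since all that is needed is that the two sides agree.
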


\begin{rem}
The expression \eqref{eq:F0-from-BPS-indices} of $F_0$ essentially agree with the {\em prepotential} defined from the lenear data of the {\em Joyce structure} associated with a finite uncoupled BPS structures (see \cite[\S 8.7]{Bri19-2}).  
The expression is also strikingly similar to the solution of the WDVV equation arising from Veselov's $\vee$-systems \cite{Ves99, Ves00} (when $\Omega(\gamma_{\rm BPS}) \in \{0, 1 \}$). 
\end{rem}

The rest of the free energies have a uniform definition and expression.  
Again, we simply inspect the Table \ref{table:free-energy-0} and observe each piece of the sum in the expression for the free energies $F_{\rm{TR}}$ is just $(2\pi i /Z(\gamma_{\rm BPS}))^{2g-2}$ for a BPS cycle $\gamma_{\rm{BPS}}$, weighted by $\Omega(\gamma)$. Then in all our examples, we have:

\begin{thm}
\label{thm:maintheorem-2} For any ${\bm m}\in M'_\bullet$ and their corresponding BPS structures $(\Gamma,Z,\Omega)$,  the equality
 \begin{equation} \label{eq:finalformula}
 F_{g}({\bm m})=\dfrac{B_{2g}}{2g(2g-2)}
\sum_{\substack{ \gamma \in \Gamma \\ Z_{\bm m}(\gamma)\in \mathbb{H}}}\Omega(\gamma) \left( 
{\dfrac{2 \pi i}{Z_{\bm m}(\gamma)}} \right)^{2g-2}
\end{equation}
holds for all $g \ge 2$ and any half plane $\mathbb{H}$ whose boundary rays are not BPS.
\end{thm}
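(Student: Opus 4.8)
The plan is to prove Theorem \ref{thm:maintheorem-2} by a direct, case-by-case comparison between the closed-form free energies recalled in Theorem \ref{thm:Borel-sum-free-energy} and Table \ref{table:free-energy-0}, and the BPS spectra computed in \S\ref{sec:maincomputation} and tabulated in Tables \ref{table:bpsangles-Web}--\ref{table:bpsangles-Leg}. Since both $F_g$ (for $g\ge 2$) and the BPS data $(\Gamma,Z,\Omega)$ are already known explicitly, no new computation is required: the content of the theorem is simply that these two lists of rational functions of the mass parameters coincide. The Airy and degenerate Bessel cases are immediate, since there the free energies vanish and the BPS spectrum is empty, so both sides of \eqref{eq:finalformula} are zero; thus it suffices to treat $\bullet \in \{{\rm HG}, {\rm dHG}, {\rm Kum}, {\rm Leg}, {\rm Bes}, {\rm Whi}, {\rm Web}\}$.

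Before comparing, I would first record that the right-hand side of \eqref{eq:finalformula} is well defined and independent of the choice of half plane $\mathbb{H}$. Indeed, by the symmetry axiom $\Omega(-\gamma)=\Omega(\gamma)$, while $Z(-\gamma)=-Z(\gamma)$ and $2g-2$ is even, so each antipodal pair $\{\gamma,-\gamma\}$ of active classes contributes the same quantity $\Omega(\gamma)(2\pi i/Z(\gamma))^{2g-2}$ regardless of which representative is selected. Since $\mathbb{H}$ has no BPS boundary rays, exactly one member of each such pair has central charge lying in $\mathbb{H}$, so the sum is effectively a finite sum over antipodal pairs of BPS cycles, with value independent of $\mathbb{H}$; finiteness is guaranteed by the computations of \S\ref{sec:maincomputation}.

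Then, for each curve I would enumerate the antipodal pairs of BPS cycles from the relevant table, substitute $Z(\gamma_{s_\pm})=\pm 2\pi i\,m_s$ and $Z(\gamma_{s_\pm}-\gamma_{s_\mp})=\pm 4\pi i\,m_s$ (using \eqref{eq:sign-convention-preimages}), and use the evenness of $2g-2$ to reduce $(2\pi i/Z(\gamma))^{2g-2}$ either to $L^{-(2g-2)}$ for a linear form $L$ in the $m_s$, or to $(2m_s)^{-(2g-2)}$. Concretely: a type I saddle with cycle $\sum_s\gamma_{s_{\epsilon_s}}$ contributes $+L^{-(2g-2)}$ with $L=\sum_s\epsilon_s m_s$; a type II saddle contributes $+2\,L^{-(2g-2)}$; a type III saddle contributes $+4\,L^{-(2g-2)}$; a degenerate ring domain around $s$ contributes $-(2m_s)^{-(2g-2)}$. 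Summing these over the antipodal pairs in Tables \ref{table:bpsangles-Web}--\ref{table:bpsangles-Leg} and multiplying by $B_{2g}/(2g(2g-2))$ reproduces, term by term, the formula for $F_g$ in Theorem \ref{thm:Borel-sum-free-energy} (for ${\rm HG}$) and in Table \ref{table:free-energy-0} (for the remaining six). For example, in the ${\rm HG}$ case the eight type I cycles $\gamma_{0_\epsilon}+\gamma_{1_{\epsilon'}}+\gamma_{\infty_{\epsilon''}}$ form four antipodal pairs by the relation \eqref{eq:relation-among-cycles}, giving $\sum_{\epsilon,\epsilon'}(m_0+\epsilon m_1+\epsilon' m_\infty)^{-(2g-2)}$, while the three degenerate ring domains give $-\sum_{s\in\{0,1,\infty\}}(2m_s)^{-(2g-2)}$, which together is exactly \eqref{eq:Fg-HG-intro}.

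I expect no deep obstacle: the delicate geometric work (existence of the saddles, absence of spurious degenerations on the generic locus $M'_\bullet$, and identification of the BPS cycles) has already been carried out in \S\ref{sec:maincomputation}. The only step demanding genuine care is the combinatorial bookkeeping --- correctly grouping the BPS cycles into antipodal pairs via \eqref{eq:relation-among-cycles} so that no class is double-counted (this matters most for ${\rm HG}$, ${\rm dHG}$ and ${\rm Kum}$), and keeping track of the factor $2$ in the central charges $Z=\pm 4\pi i\,m_s$ of the ring-domain cycles, which is precisely what produces the $(2m_s)$ in the denominators. Once this matching is done for each of the seven curves, the theorem follows.
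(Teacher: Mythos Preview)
Your proposal is correct and takes essentially the same approach as the paper: a direct case-by-case comparison of the explicit free energies in Table~\ref{table:free-energy-0} with the BPS spectra in Tables~\ref{table:bpsangles-Web}--\ref{table:bpsangles-Leg}. The paper's own proof is a single sentence to this effect, whereas you have spelled out the well-definedness of the right-hand side (via the symmetry $\Omega(-\gamma)=\Omega(\gamma)$ and evenness of $2g-2$) and the combinatorial bookkeeping for each saddle type, which is a helpful elaboration but not a different argument.
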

\begin{proof}
The claim follows immediately from Table \ref{table:free-energy-0} of the free energies and Tables \ref{table:bpsangles-Web}--\ref{table:bpsangles-Leg} of the BPS spectrum.
\end{proof}

\subsection{Conjectures}
\label{sec:conjectures}
So far, we only discussed BPS structures arising from a very specific collection of quadratic differentials. We may extend our considerations in two directions.

One the one hand, we may consider more general quadratic differentials. In this case, it is well-known that the \emph{wall-crossing} phenomenon may occur. In such a case, we likely cannot expect our result continues to hold, since the free energies themselves are continuous in the parameters ${\bm m}$, whereas $Z$ and $\Omega$ jump. On the other hand, we may expect that for quadratic differentials whose BPS structure is \emph{uncoupled}, the formula (\ref{eq:finalformula}) continues to hold. This expectation follows from the superposition structure of formula \eqref{eq:finalformula}, which applies on the BPS side to the general solution of the Riemann-Hilbert problem in the uncoupled case (cf. \cite{Bri19}).

The other direction is to consider the generalization to higher degree spectral curves. According to the work \cite{GMN12} by Gaiotto-Moore-Neitzke, it is natural to expect that a tuple of higher differentials (or a higher degree spectral curve) also defines a BPS structure. 
Although many examples of BPS invariants appearing for higher degree spectral curves are studied in \cite{GMN12},  
we still do not have mathematically rigorous description of the BPS structure in these cases because a higher analogue of Bridgeland-Smith's theory \cite{BS13} is missing. Nonetheless, studies of higher rank spectral networks and their degenerations have been made which offer starting points for further investigation \cite{BNR, AKT94, AKT98, GMN12-2, KNPS, MPY, LP, GLPY17, Sasaki17, HK}. On the other hand, TR is applicable to higher degree spectral curves \cite{BHLMR-12, BE-12}, and its relationship to WKB analysis is also discussed in \cite{BE-16} (under a certain admissibility assumption on spectral curves). 

Now, let us make a conjectural statement which generalizes our main results to this class of spectral curves. 
Suppose we are given a tuple 
\begin{equation}
(\varphi_1, \dots, \varphi_N) = (Q_1(x) dx, \dots, Q_N(x) dx^{N})
\end{equation} 
of meromorphic differentials on ${\mathbb P}^1$.
Here $\varphi_r = Q_r(x) dx^r$ is an $r$-differential for each $r \in \{1,\dotsm N \}$. 
Then, we have an associated spectral curve defined by 
\begin{equation} \label{eq:spectral-cover-higher}
\Sigma := \{ \lambda  ~|~ \lambda^{N} + \sum_{r=1}^{N} \varphi_r \lambda^{N-r} = 0  \}  \subset T^\ast {\mathbb P}^1.
\end{equation}
We also define the central charge by the same formula $Z(\gamma) = \oint_{\gamma} \lambda$ as before, and the BPS invariants $\Omega : \Gamma \to {\mathbb Q}$ (or ${\mathbb Z}$) are obtained by the weighted counting of degeneration\footnote{
Since the topological types of degenerate and nondegenerate spectral networks are not fully classified if the degree is greater than 2, the notion of ``degeneration" in the spectral networks is not mathematically rigorous at this moment. It will take further effort to formulate our conjecture more precisely.  
} 
in the associated spectral network as discussed in \cite{GMN12}. Then, we expect 

\begin{conj} \label{conj:1}
Suppose the BPS structure obtained from \eqref{eq:spectral-cover-higher} is uncoupled, and $\mathbb{H}$ is any half plane whose boundary rays are not BPS.
Then, we have the following expression of the $g$-th TR free energy of the spectral curve $\Sigma$: 
\begin{align} 
F_0 & \equiv \sum_{\substack{\gamma \in \Gamma \\ Z(\gamma) \in {\mathbb H}}} 
\frac{\Omega(\gamma)}{2} 
\left( \frac{Z(\gamma)}{2 \pi i } \right)^2
\log \left( \frac{Z(\gamma)}{2 \pi i} \right), \\ \\
F_1 & \equiv - \frac{1}{12} 
\log \left(
\prod_{\substack{\gamma \in \Gamma \\ Z(\gamma) \in {\mathbb H}}} 
 \left(\frac{Z(\gamma)}{2 \pi i} \right)^{\Omega(\gamma)} 
 \right), \\ \\
F_{g} & = \dfrac{B_{2g}}{2g(2g-2)} \sum_{\substack{ \gamma \in \Gamma \\ Z(\gamma)\in  \mathbb{H}}} 
\Omega(\gamma) \left( 
{\dfrac{2 \pi i}{Z(\gamma)}} \right)^{2g-2}
\qquad (g \ge 2) 
\end{align}
modulo the ambiguities in $g=0$ and $g=1$.
\end{conj}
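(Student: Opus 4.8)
The strategy is a direct, case-by-case verification: all the necessary inputs are already in place, so the proof amounts to matching two finite lists. On the TR side we have, for each $\bullet \in \{{\rm HG},{\rm dHG},{\rm Kum},{\rm Leg},{\rm Bes},{\rm Whi},{\rm Web}\}$, the explicit closed formula for $F_g^\bullet$ ($g\ge2$) collected in Table \ref{table:free-energy-0} (these are Theorem \ref{thm:Borel-sum-free-energy} for ${\rm HG}$ and the entries of the table for the degenerations, quoted from \cite{IKoT-I,IKoT-II}). In each case $F_g^\bullet$ is a finite sum of terms of the shape $c_j \cdot \frac{B_{2g}}{2g(2g-2)} \cdot a_j^{-(2g-2)}$, where $a_j$ is a linear form in the mass parameters and $c_j \in \{+1,+2,+4,-1\}$. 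On the BPS side, Tables \ref{table:bpsangles-Web}--\ref{table:bpsangles-Leg} list, for the same $\bullet$ and for ${\bm m}\in M'_\bullet$, the complete BPS spectrum: each active class $\gamma$, its central charge $Z_{\bm m}(\gamma)$ (always of the form $2\pi i\cdot(\text{linear form in }{\bm m})$), and its index $\Omega(\gamma)\in\{+1,+2,+4,-1\}$.

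The plan is then as follows. First, fix $\bullet$ and a half plane $\mathbb H$ whose boundary rays are not BPS. Since every BPS structure in question is finite (only finitely many active classes, as proved in \S\ref{sec:maincomputation}), and since $\Omega(\gamma)=\Omega(-\gamma)$ while exactly one of $\pm\gamma$ has central charge in $\mathbb H$, the sum $\sum_{\gamma\in\Gamma,\ Z_{\bm m}(\gamma)\in\mathbb H}\Omega(\gamma)\bigl(\tfrac{2\pi i}{Z_{\bm m}(\gamma)}\bigr)^{2g-2}$ is a well-defined finite sum, and its value is independent of the choice of $\mathbb H$: replacing $\gamma$ by $-\gamma$ multiplies $\bigl(\tfrac{2\pi i}{Z(\gamma)}\bigr)^{2g-2}$ by $(-1)^{2g-2}=1$. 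Second, substitute $Z_{\bm m}(\gamma)=2\pi i\,\ell_\gamma$ for the appropriate linear form $\ell_\gamma$, so that each summand becomes $\Omega(\gamma)\,\ell_\gamma^{-(2g-2)}$; the prefactor $\bigl(\tfrac{2\pi i}{Z}\bigr)^{2g-2}$ is precisely designed so that the $2\pi i$ factors cancel. Third, read off from the relevant BPS table that the multiset $\{(\Omega(\gamma),\ell_\gamma)\}$ (one representative per $\pm$ pair) coincides, term by term, with the multiset $\{(c_j,a_j)\}$ appearing in the Table \ref{table:free-energy-0} expression for $F_g^\bullet$. For instance, in the ${\rm HG}$ case the four type I saddles contribute $\Omega=+1$ with $\ell_\gamma = \epsilon m_0+\epsilon' m_1+\epsilon'' m_\infty$ over independent choices of signs up to overall sign (giving the four terms $(m_0\pm m_1\pm m_\infty)^{-(2g-2)}$), and the three type IV saddles contribute $\Omega=-1$ with $Z=\pm4\pi i m_s$, i.e. $\ell_\gamma = 2m_s$ (giving $-(2m_s)^{-(2g-2)}$ for $s\in\{0,1,\infty\}$); this matches \eqref{eq:Fg-HG-intro} exactly. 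The remaining six curves are handled identically, comparing the relevant row of Table \ref{table:free-energy-0} against the corresponding one of Tables \ref{table:bpsangles-dHG}--\ref{table:bpsangles-Leg} (taking care that for ${\rm Whi}$ and ${\rm dHG}$ the index $+2$ correctly reproduces the coefficient $2$, for ${\rm Leg}$ the type III saddle's index $+4$ reproduces the coefficient $4$, and the degenerate ring domain's $\Omega=-1$ with $Z=\pm4\pi i m_\infty$ reproduces the $-(2m_\infty)^{-(2g-2)}$ term). The Airy and degenerate Bessel cases are vacuous on both sides.

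There is essentially no analytic obstacle here: the heavy lifting — proving the explicit free energy formulas in \cite{IKoT-II}, and establishing the complete BPS spectra including the existence and uniqueness of the saddle trajectories — has already been done in \S\ref{sec:maincomputation} and in the cited works. The one point requiring a little care is bookkeeping of signs and of the $\pm$-pairing: one must check that choosing one representative from each pair $\{\gamma,-\gamma\}$ and summing over the half plane $\mathbb H$ genuinely reproduces each term of Table \ref{table:free-energy-0} with the correct sign and multiplicity, and that linear forms $\ell_\gamma$ and $-\ell_\gamma$ are identified consistently (again harmless since $(2g-2)$ is even). Once this identification is made in all seven nontrivial cases the equality \eqref{eq:finalformula} follows. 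Hence the proof reduces, as stated in the excerpt, to the immediate comparison of Table \ref{table:free-energy-0} with Tables \ref{table:bpsangles-Web}--\ref{table:bpsangles-Leg}.
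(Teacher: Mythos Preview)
Your proposal addresses the wrong statement. The excerpt you were given is Conjecture~\ref{conj:1}, which concerns \emph{arbitrary} spectral curves of the form \eqref{eq:spectral-cover-higher} (of degree $N\ge 2$) whose associated BPS structure is uncoupled. You instead wrote a proof of Theorem~\ref{thm:maintheorem-2}, which is the special case of the nine hypergeometric-type quadratic differentials. Your argument for that theorem is fine and essentially identical to the paper's one-line proof (direct comparison of Table~\ref{table:free-energy-0} with Tables~\ref{table:bpsangles-Web}--\ref{table:bpsangles-Leg}), but it does not touch the conjecture.

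The paper does \emph{not} prove Conjecture~\ref{conj:1}; it is genuinely open. The obstacles are structural, not bookkeeping: for $N\ge 3$ there is at present no rigorous construction of the BPS structure $(\Gamma,Z,\Omega)$ from the spectral network (the paper says this explicitly, noting that a higher-rank analogue of Bridgeland--Smith is missing and that even the notion of ``degeneration'' is not mathematically formulated), and on the TR side there is no general closed formula for $F_g$ analogous to Table~\ref{table:free-energy-0}. The paper only offers numerical evidence in two degree-$3$ examples. A case-by-case table comparison cannot work here because neither table exists in general; any actual proof would require new ideas relating the TR recursion to the (yet-to-be-defined) BPS invariants uniformly across uncoupled cases.
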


In particular, at least for uncoupled BPS structures, this approach suggests an alternative route to the computation of BPS invariants, without using a spectral network, thus bypassing the need to solve any differential equation. While we do not know how much more complicated the story may be for general (coupled) BPS structures may be, we hope that this approach may offer a starting point for understanding the relationship to the topological recursion side.

\begin{rem}
We formulated our conjecture with the hypothesis that the BPS structure be uncoupled. On the other hand, we are unsure what class of spectral curves should be considered, or what condition on a spectral curve ensures it corresponds to an uncoupled BPS structure. It seems such a property should be closely related to the genus of the spectral curve, but further investigation is needed to understand any details.
\end{rem}

\subsection{Degree 3 examples}
In this last section, we will briefly look at two examples of degree 3 spectral curves (i.e., $N = 3$ in \eqref{eq:spectral-cover-higher}) for which the conjectures are numerically testable and appear to hold. 
The reader may require some terminology about the higher degree situation that can be found in e.g. \cite{AKT94, AKT98, GMN12}.

Recently, Y.\,M.\,Takei computed the TR free energies $F_g$ for some examples of 
degree $3$ spectral curves (\cite{YM-Takei20}). 
Here, by observing several figures of spectral networks, we give numerical evidence that our conjectures continue to hold in these cases.

\subsubsection{\bf The $(1,4)$ curve}

Let us consider the spectral curve arising as the classical limit of the 3rd order hypergeometric differential equation of type $(1,4)$, studied in \cite{Okamoto-Kimura, Hirose, Sasaki17}, which is explicitly given as follows:
\begin{equation}
    \Sigma_{(1,4)} ~:~ 3y^3 + 2 t y^2 + x y - m_\infty = 0.
    \label{eq:14curve}
\end{equation}
Here $m_\infty$ is a parameter assumed to be non-zero, and we regard $t$ as a constant\footnote{The parameter $t$ plays the other independent variable when we view the hypergeometric equation as a system of PDEs; see \cite{Okamoto-Kimura}}. 
The curve $\Sigma_{(1,4)}$ is a genus $0$ curve with two punctures at $\infty_\pm$, and we can verify that $\Res_{x=\infty_\pm} y dx = \pm m_\infty$. 

According to \cite[Theorem 4.6]{YM-Takei20} the $g$-th free energy is given explicitly by
\begin{align}
F_0(m_\infty,t) & = 
\frac{m_\infty^2}{4} \log(- 3 m_\infty^2) - \frac{3m_0^2}{4} + \frac{2t^3 m_0}{27} - \frac{t^2}{972} \equiv \frac{m_\infty^2}{2} \log m_\infty, \\
F_1(m_\infty,t) & = - \frac{1}{12} \log m_\infty, \\
F_g(m_\infty,t) & = \dfrac{B_{2g}}{2g(2g-2)}\dfrac{1}{m_\infty^{2g-2}} \quad (g\geq 2).
\end{align}

We performed a crude numerical experiment choosing values $t\approx-1.4-1.4i$, $m_\infty\approx -1.4-1.4i$, and obtained the networks in Figure \ref{fig:14curvefirsthalf} below.
In this figure, we see there is a degenerate spectral network in Figure \ref{fig:14curve-14}, which is the so-called ``three-string web" observed in \cite{GMN12}. 
Indeed, the spectral networks in this example have already been studied in \cite[\S 7.3]{GMN12} (see also \cite{Hirose, Sasaki17}).  
According to their result, the associated BPS cycle is nothing but the residue class $\gamma_{\infty_\pm}$, and the BPS index assigned so that  $\Omega(\gamma_{\infty_\pm}) = 1$.   
These observations agree with the statements of Conjecture \ref{conj:1} at least numerically.

\subsubsection{\bf The $(2,3)$ curve}

Our last example is the spectral curve arising as the classical limit of the 3rd order hypergeometric differential equation of type $(2,3)$, which was also studied in \cite{Okamoto-Kimura}. The curve is explicitly given as follows:
\begin{equation}
    \Sigma_{(2,3)} ~:~ 4y^3 -2 x y^2 + 2 m_\infty y -t = 0.
    \label{eq:23curve}
\end{equation}

Again, it is a genus $0$ curve with two punctures at $\infty_\pm$, and $\Res_{x=\infty_{\pm}} y \, dx = \pm m_\infty$ is also satisfied.
According to \cite[Theorem 4.6]{YM-Takei20} the free energy for $g \geq 2$ is given by

\begin{align}
F_0(m_\infty,t) & = 
\frac{m_\infty^2}{4} \log(- 2t) \equiv 0, \\
F_1(m_\infty,t) & = - \frac{1}{8} \log t \equiv 0, \\
F_g(m_\infty,t) & = 0 \quad (g\geq 2).
\end{align}

Thus, if we believe Conjecture \ref{conj:1}, we may expect to see no degenerations at all phases in the spectral network.

Again, we can choose values for the parameters and check the numerical evidence. Using the values $t\approx-1.4-1.1i$, $m_\infty\approx -1.8-1.6i$, we obtained the networks in Figure \ref{fig:23curvefirsthalf}.
We show values of $\vartheta$ surrounding what appear to be degenerations, but by the general rules for spectral networks and BPS states predicted from physics (e.g. \cite{GMN12}) and studied by others, these do not count as degenerate due to the ``type" of the trajectories colliding. 
Thus, if we agree with the physics definition of $\Omega$ for this particular family of networks, our expectation is true, and we have verified numerically Conjecture \ref{conj:1} in this case as well.  

\newpage 

\begin{figure}[h]
    \centering
    \begin{subfigure}[t]{.2\textwidth}
        \centering
        \includegraphics[width=\linewidth]{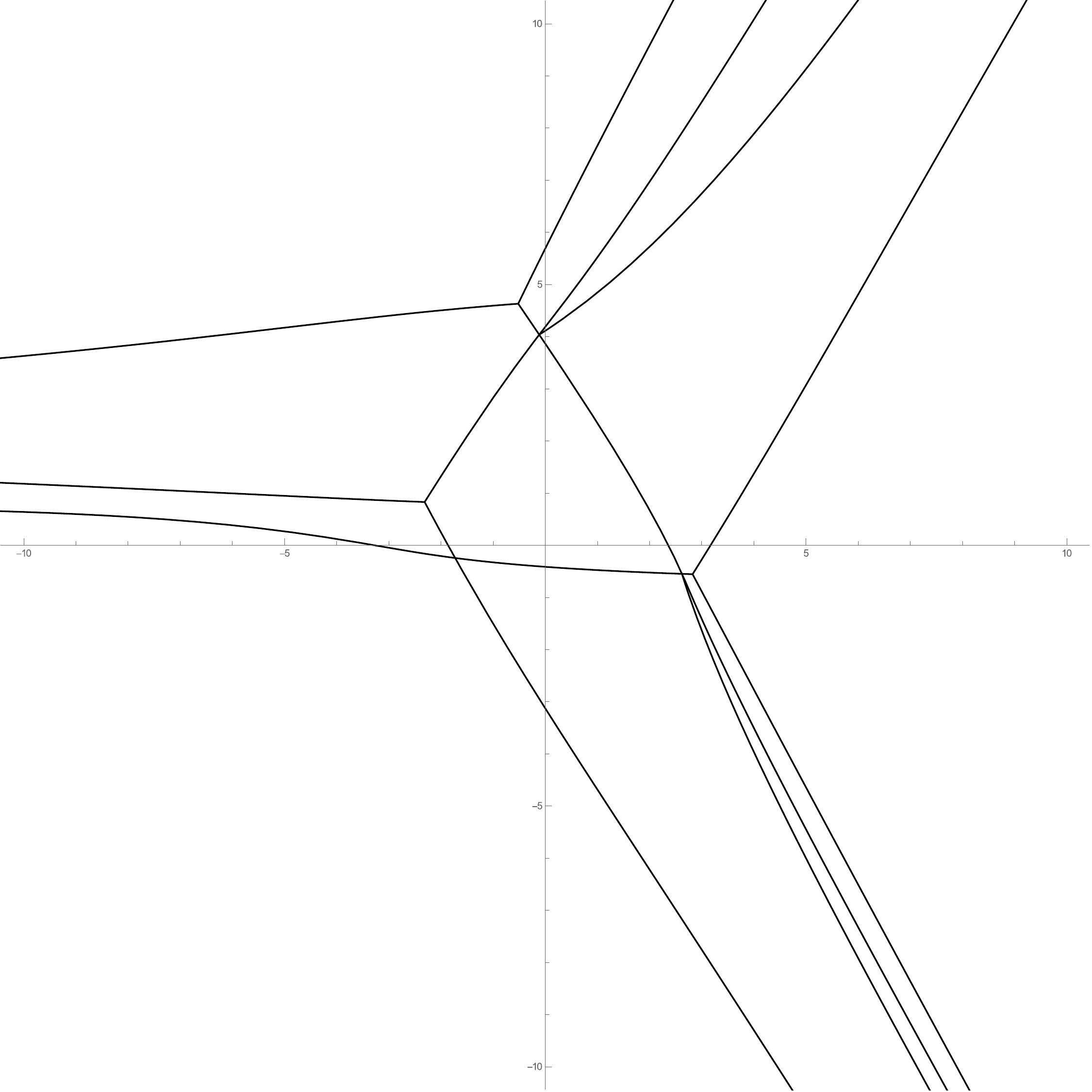}      
        \caption{$\vartheta \approx 0$}
      \label{fig:14curve-1}
    \end{subfigure}
    \hspace{0.5cm}               
    \begin{subfigure}[t]{.2\textwidth}
        \centering
        \includegraphics[width=\linewidth]{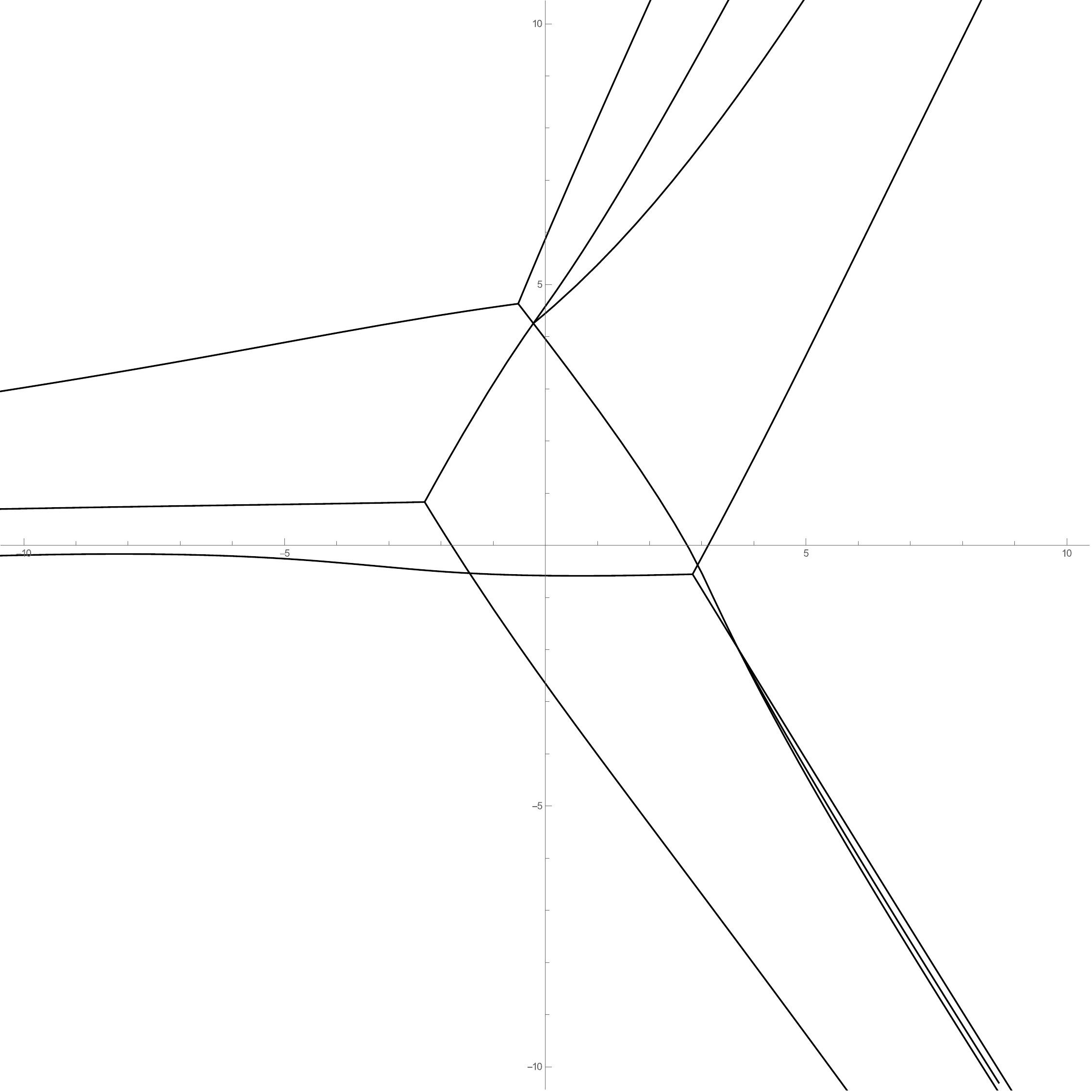}
        \caption{$\vartheta\approx0.09$}
      \label{fig:14curve-25}
    \end{subfigure}
    \hspace{0.5cm}             
    \begin{subfigure}[t]{.2\textwidth}
        \centering
        \includegraphics[width=\linewidth]{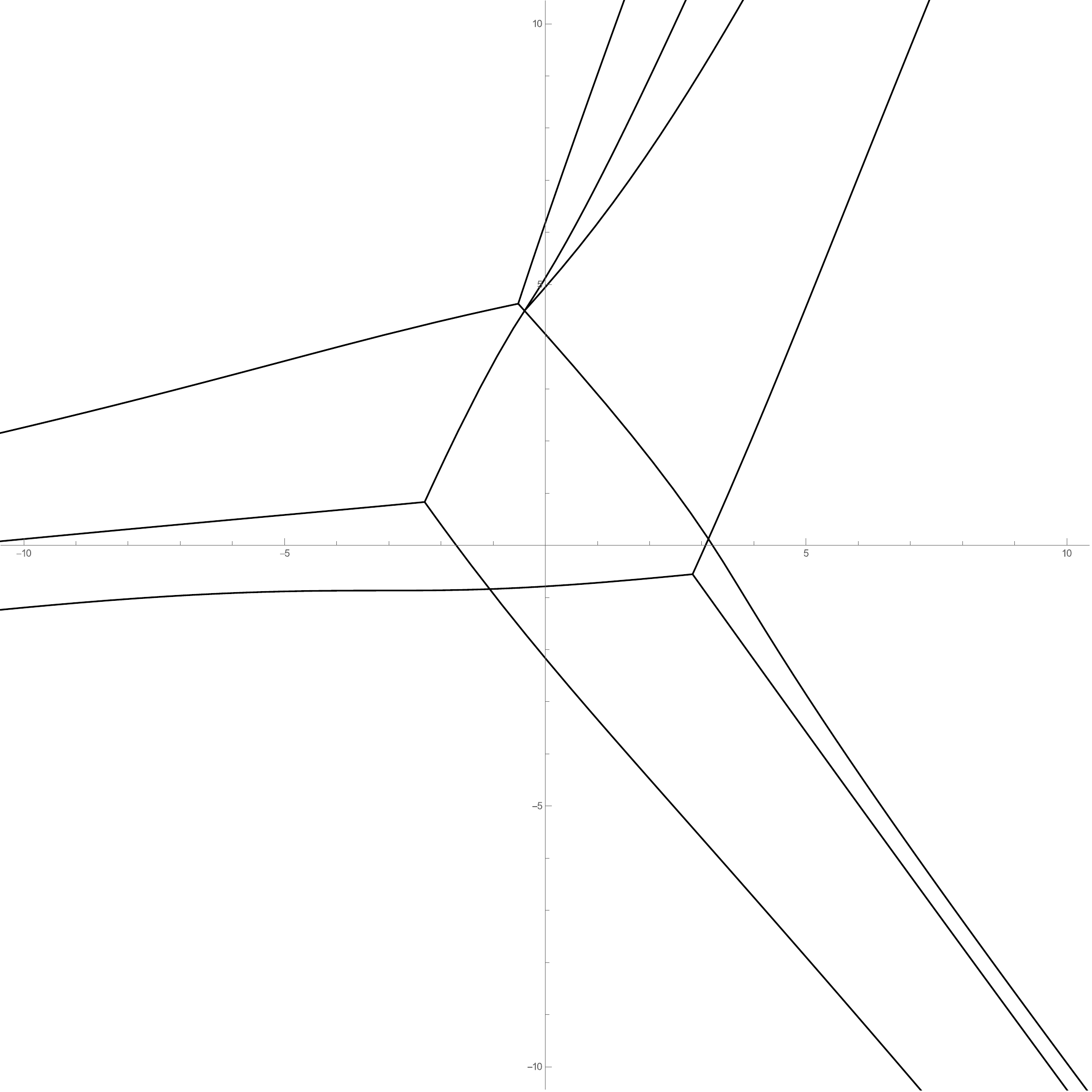}
        \caption{$\vartheta\approx0.2$}
      \label{fig:14curve-3}
    \end{subfigure}
     \hspace{0.5cm}              
    \begin{subfigure}[t]{.2\textwidth}
        \centering
        \includegraphics[width=\linewidth]{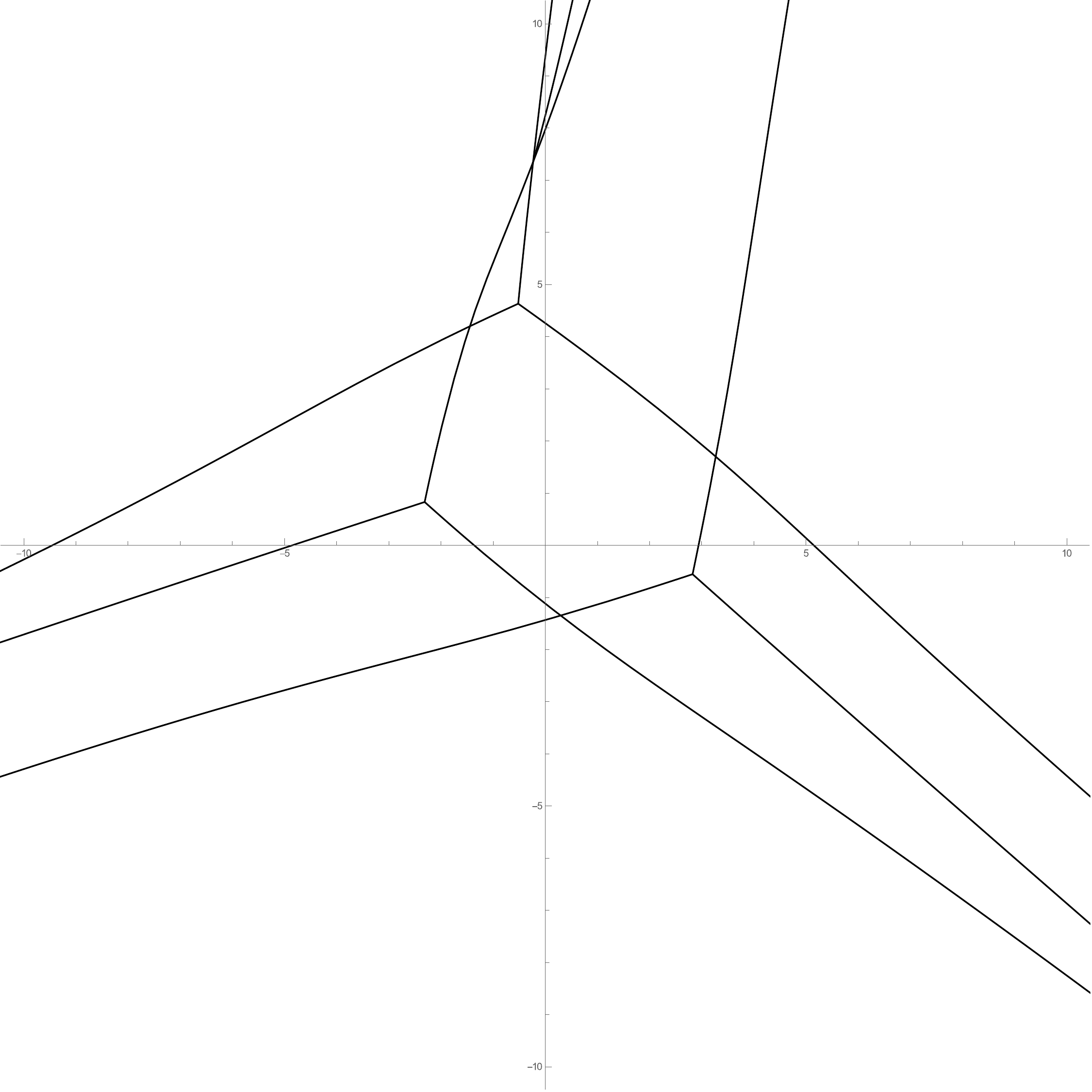}
        \caption{$\vartheta\approx0.53$}
      \label{fig:14curve-4}
    \end{subfigure}
\\[+1.em]
    \begin{subfigure}[t]{.2\textwidth}
        \centering
        \includegraphics[width=\linewidth]{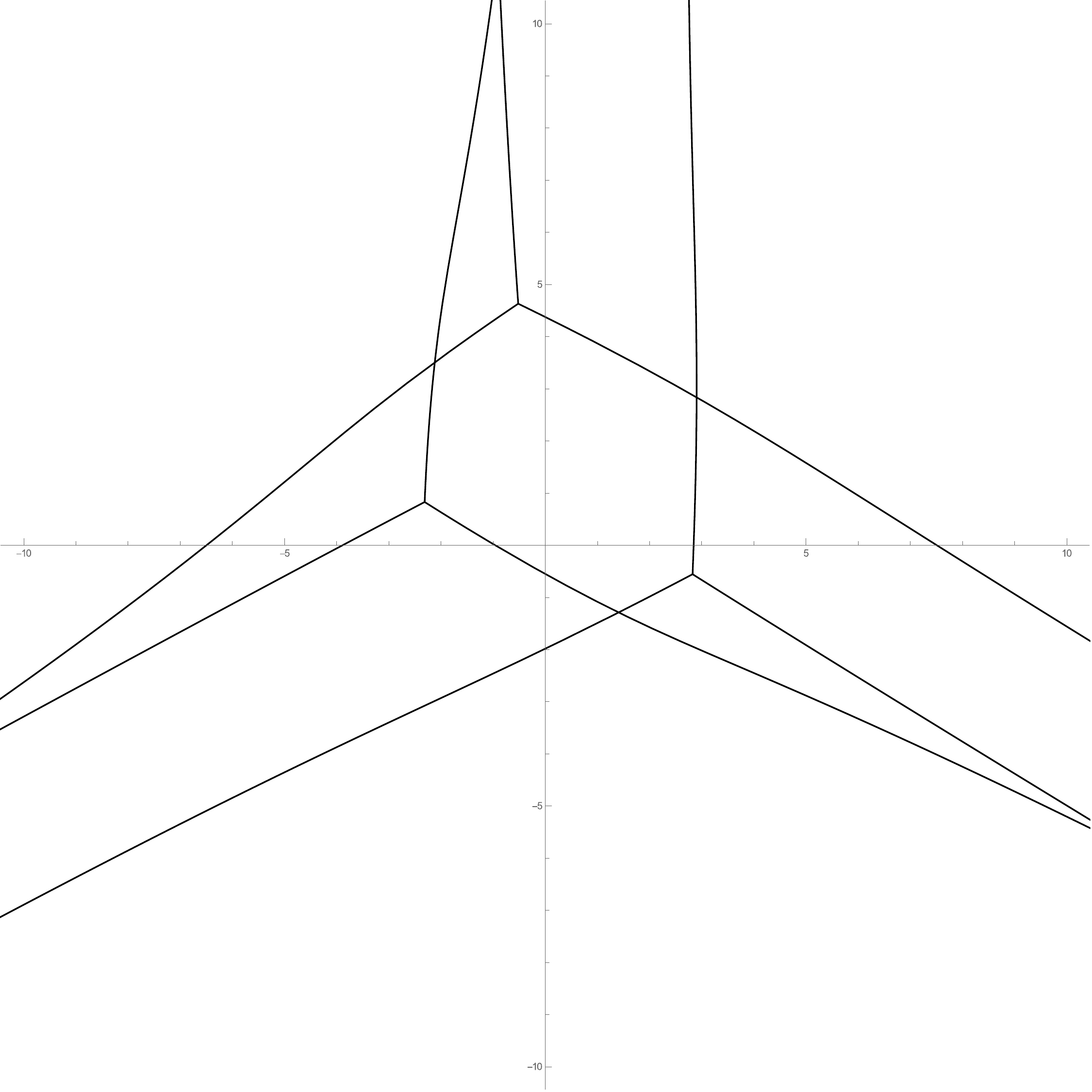}
        \caption{$\vartheta\approx 0.78$}
      \label{fig:14curve-5}
    \end{subfigure}
    \hspace{0.5cm}             
    \begin{subfigure}[t]{.2\textwidth}
        \centering
        \includegraphics[width=\linewidth]{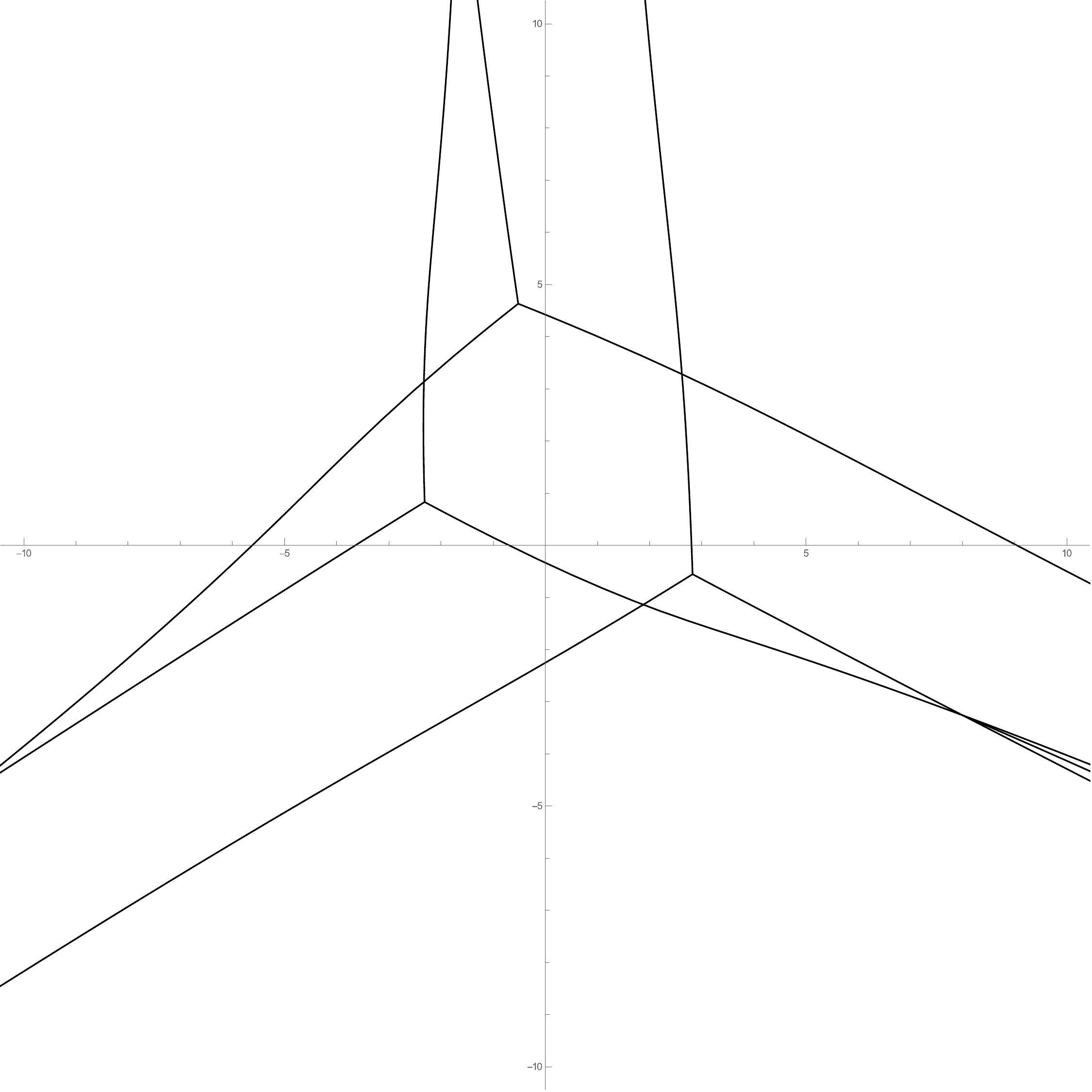}
        \caption{$\vartheta\approx 0.89$}
      \label{fig:14curve-6}
    \end{subfigure}
    \hspace{0.5em}
        \begin{subfigure}[t]{.2\textwidth}
        \centering
        \includegraphics[width=\linewidth]{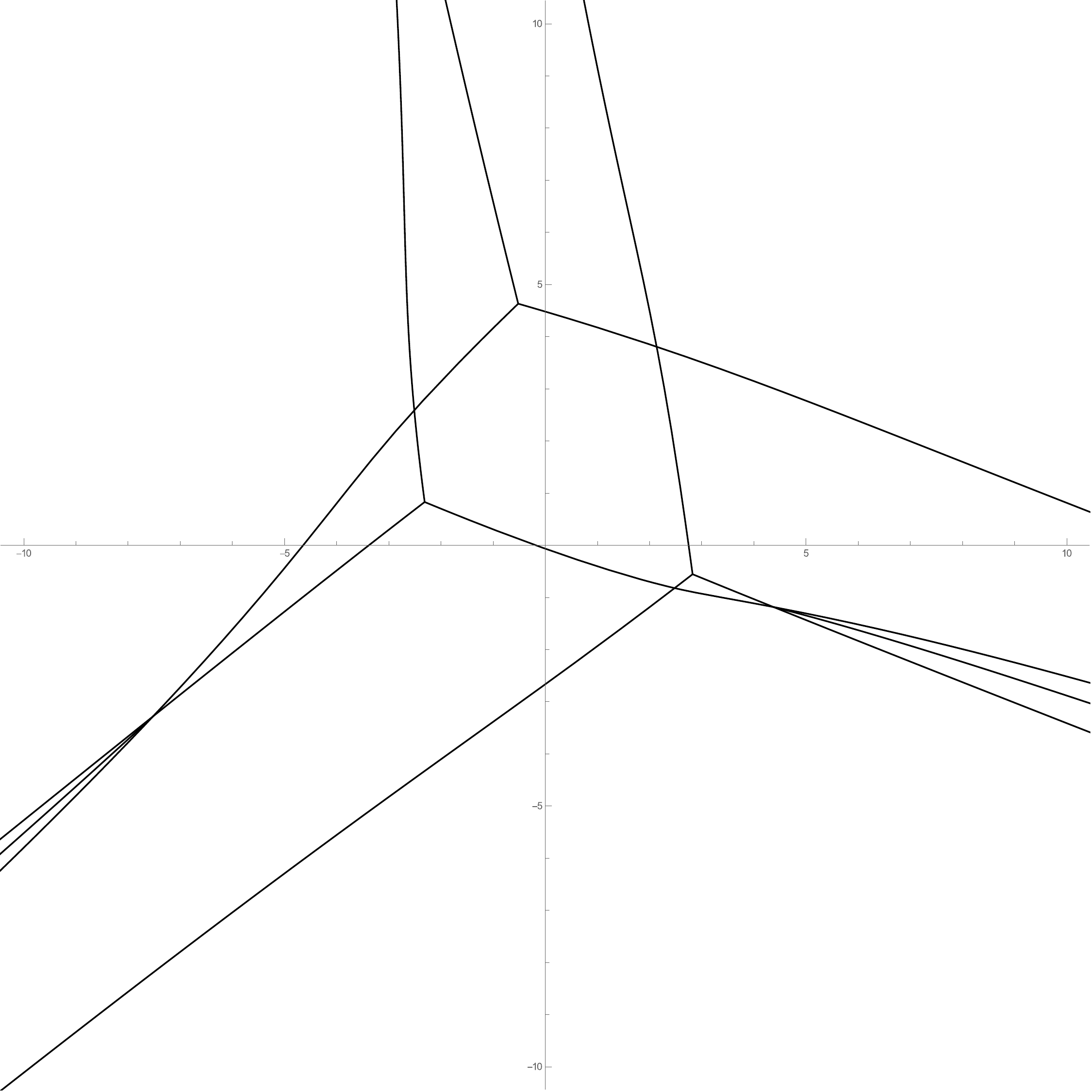}
        \caption{$\vartheta\approx 1.04$}
      \label{fig:14curve-7}
    \end{subfigure}
    \hspace{0.5cm}             
    \begin{subfigure}[t]{.2\textwidth}
        \centering
        \includegraphics[width=\linewidth]{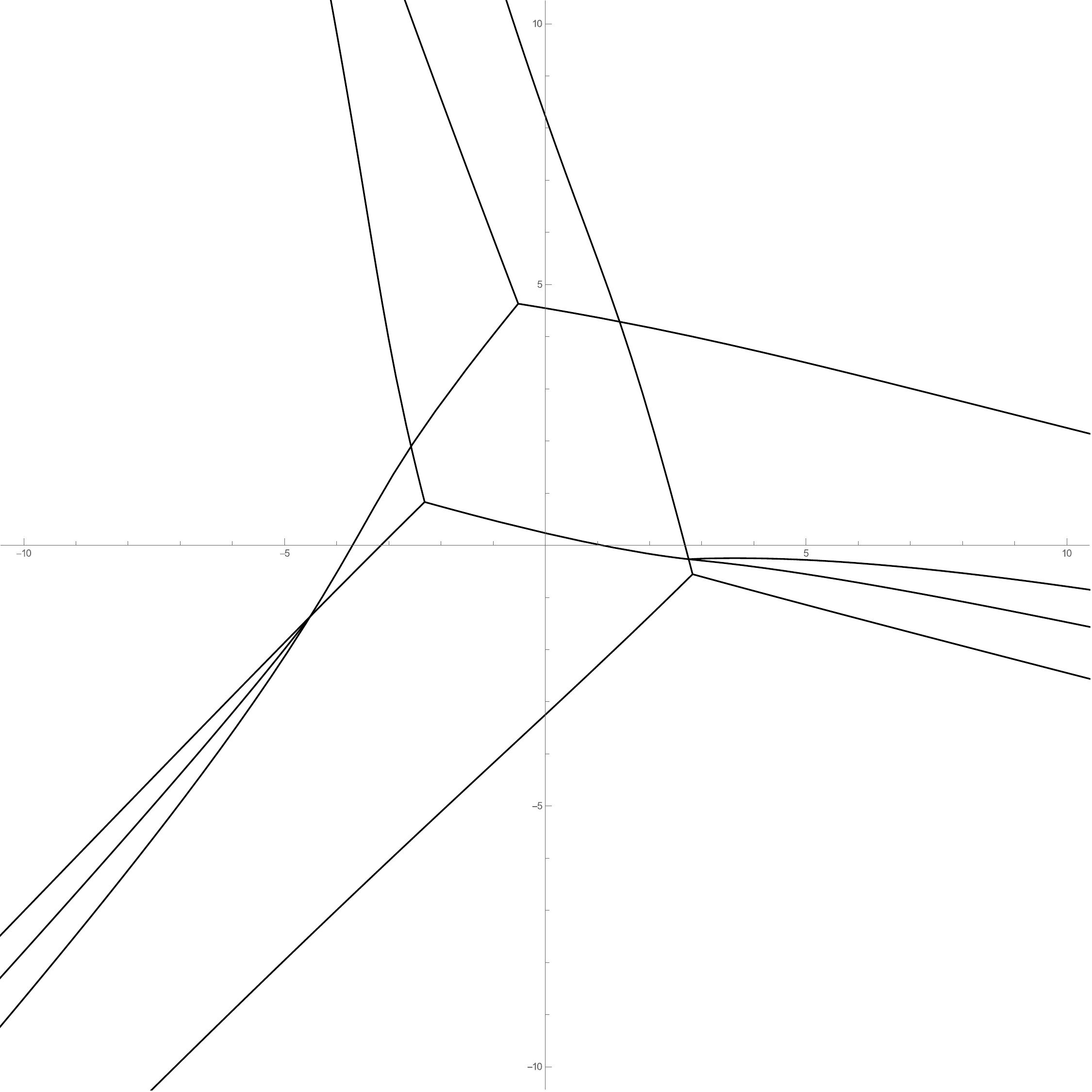}
        \caption{$\vartheta \approx 1.22$}
      \label{fig:14curve-8}
    \end{subfigure}
\\[+1.em]
        \begin{subfigure}[t]{.2\textwidth}
        \centering
        \includegraphics[width=\linewidth]{bps-rhp-14-param-retrelimtiml-1pt4,_-1pt4,_-1pt4,_-1pt4,_1pt04.pdf}
        \caption{$\vartheta\approx1.04$}
      \label{fig:14curve-9}
    \end{subfigure}
    \hspace{0.5cm}             
    \begin{subfigure}[t]{.2\textwidth}
        \centering
        \includegraphics[width=\linewidth]{bps-rhp-14-param-retrelimtiml-1pt4,_-1pt4,_-1pt4,_-1pt4,_1pt22.pdf}
        \caption{$\vartheta\approx 1.22$}
      \label{fig:14curve-10}
    \end{subfigure}
   \hspace{0.5cm}
    \begin{subfigure}[t]{.2\textwidth}
        \centering
        \includegraphics[width=\linewidth]{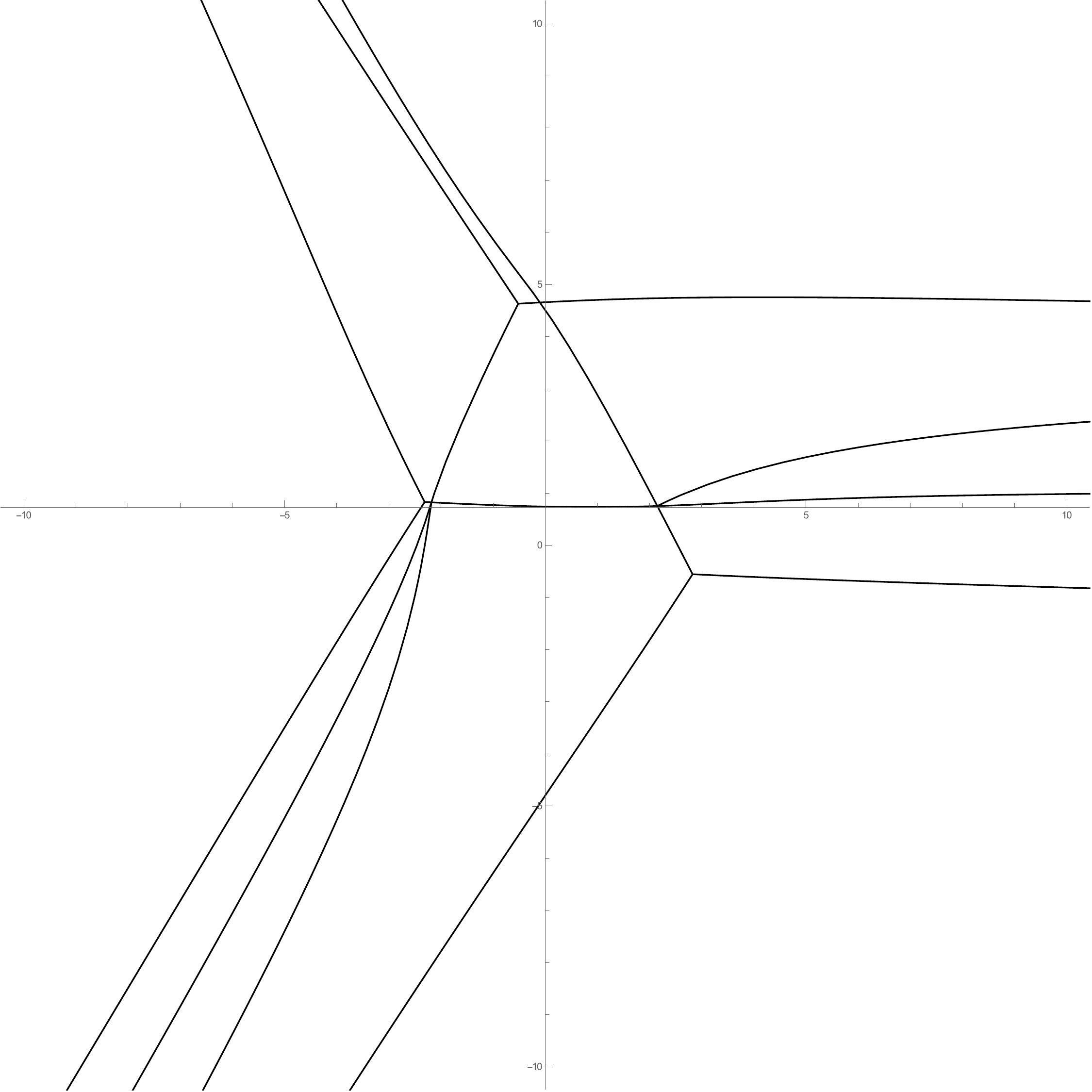}
           \caption{$\vartheta\approx1.55$}
      \label{fig:14curve-11}
    \end{subfigure}
    \hspace{0.5em}
    \begin{subfigure}[t]{.2\textwidth}
        \centering
        \includegraphics[width=\linewidth]{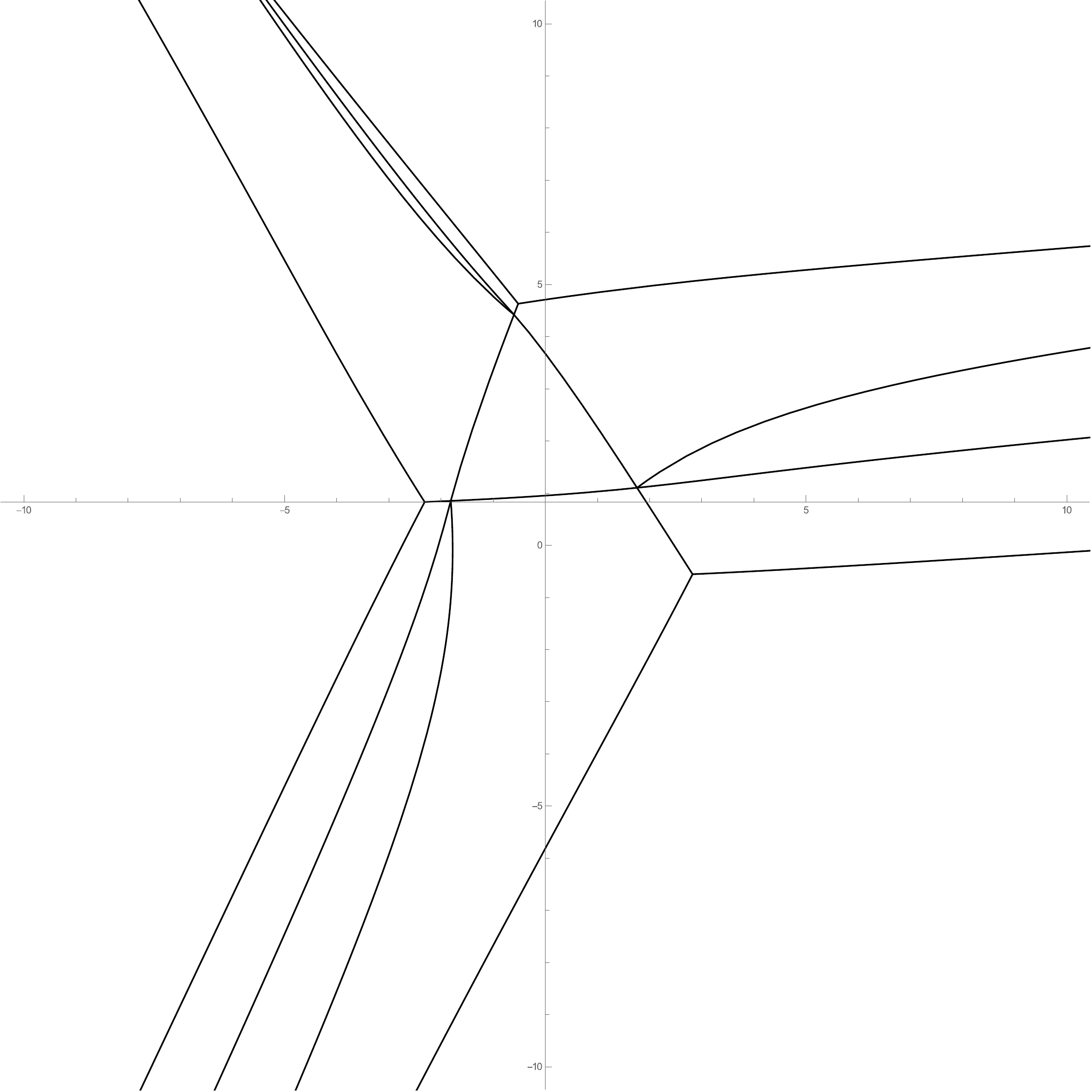}
        \caption{$\vartheta\approx 1.69$}
      \label{fig:14curve-12}
    \end{subfigure}
\\[+1.em]
    \begin{subfigure}[t]{.2\textwidth}
        \centering
        \includegraphics[width=\linewidth]{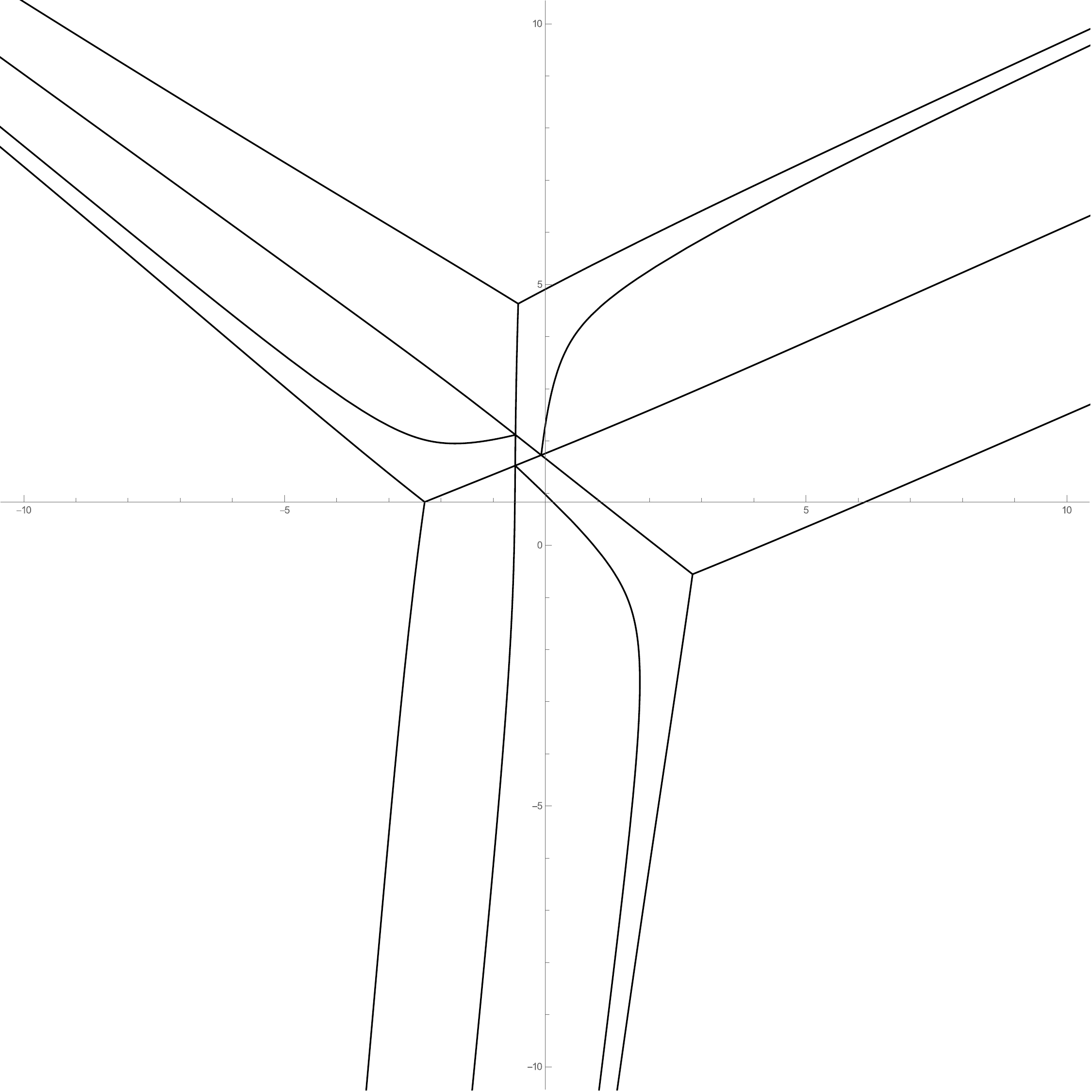}
        \caption{$\vartheta \approx 2.2$}
      \label{fig:14curve-13}
    \end{subfigure}
     \hspace{0.5cm}                
    \begin{subfigure}[t]{.2\textwidth}
        \centering
        \includegraphics[width=\linewidth]{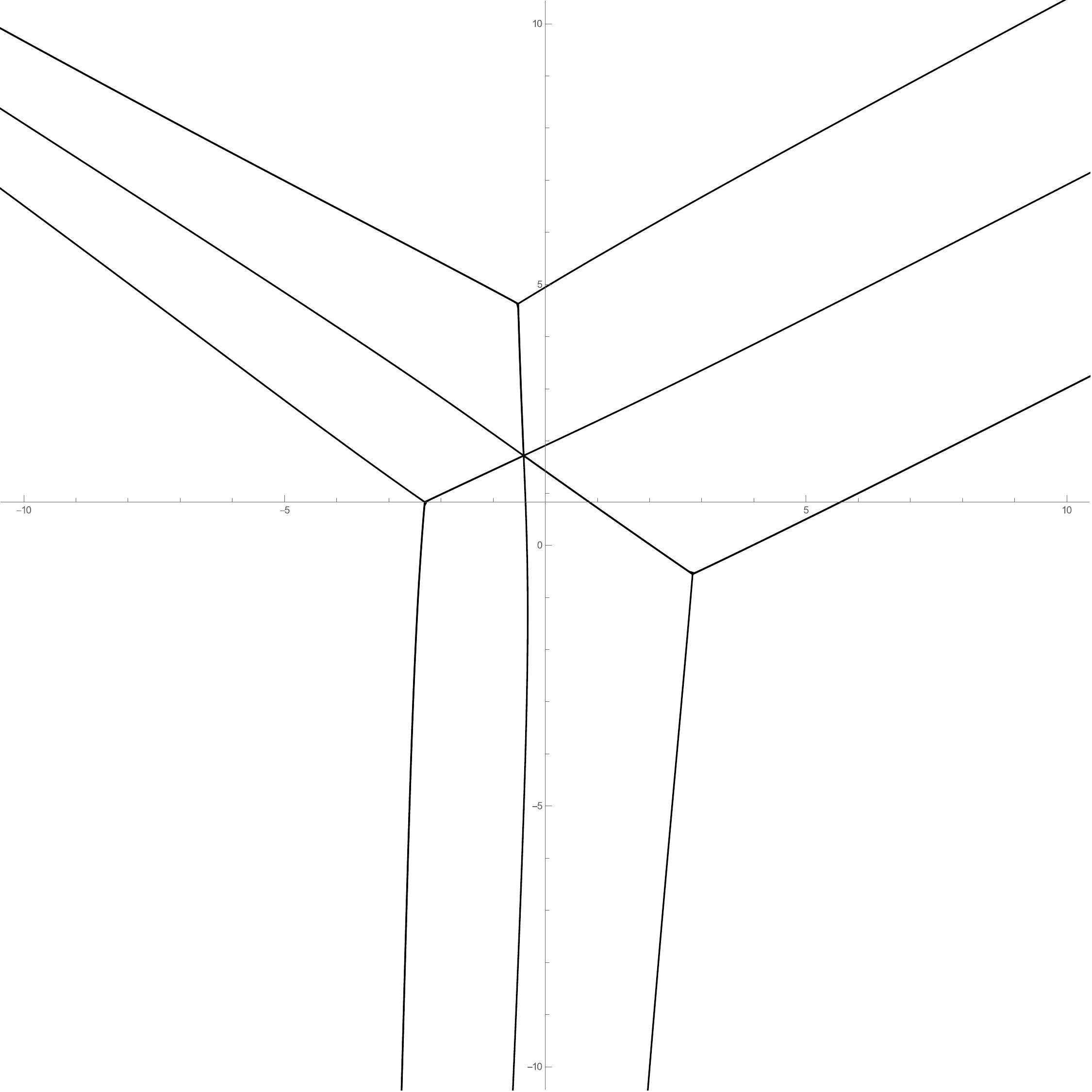}
        \caption{ ``Three-string web", $\vartheta \approx 2.287$}
      \label{fig:14curve-14}
    \end{subfigure}
    \hspace{0.5cm}             
    \begin{subfigure}[t]{.2\textwidth}
        \centering
        \includegraphics[width=\linewidth]{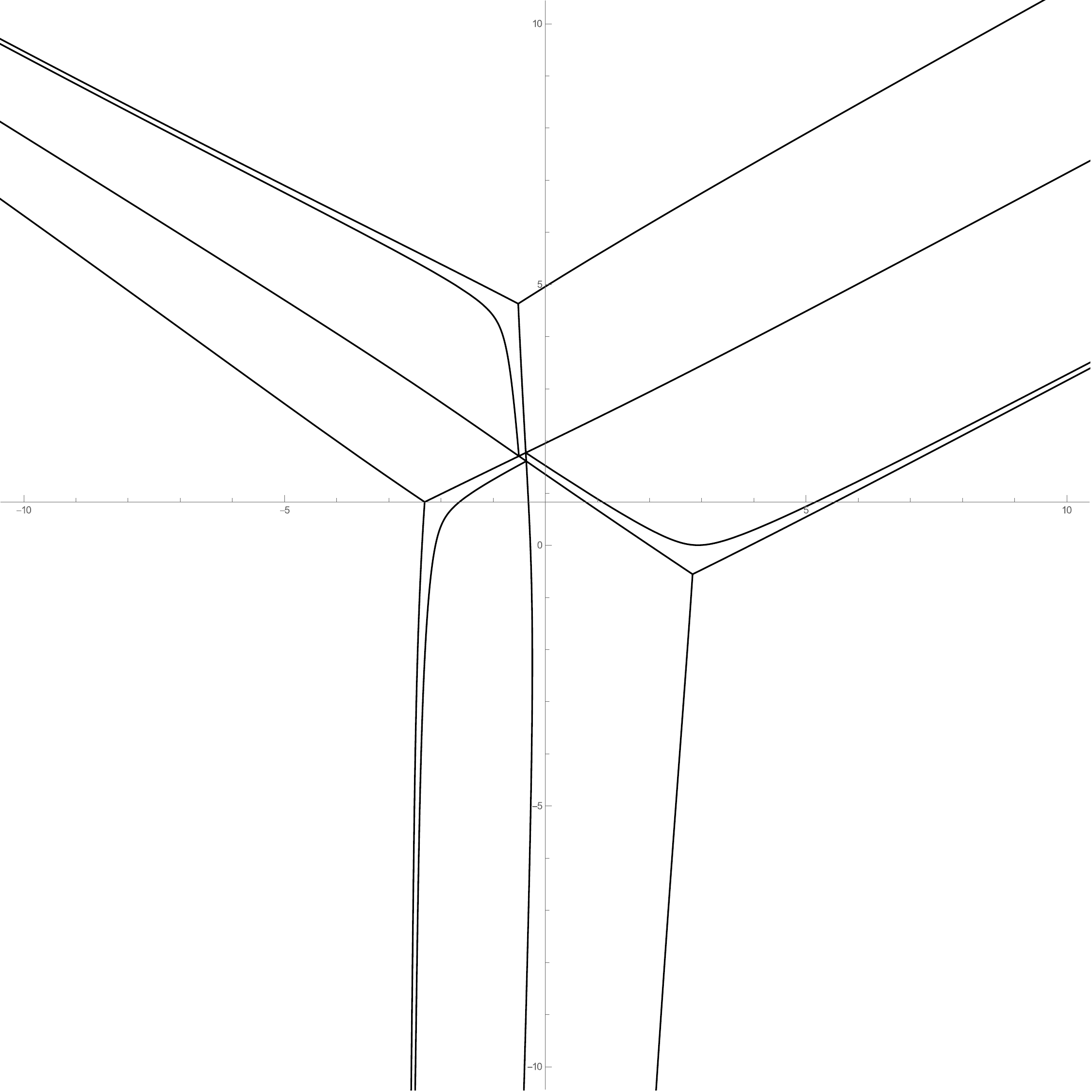}
        \caption{$\vartheta \approx 2.31$}
      \label{fig:14curve-15}
    \end{subfigure}
    \hspace{0.5cm}             
    \begin{subfigure}[t]{.2\textwidth}
        \centering
        \includegraphics[width=\linewidth]{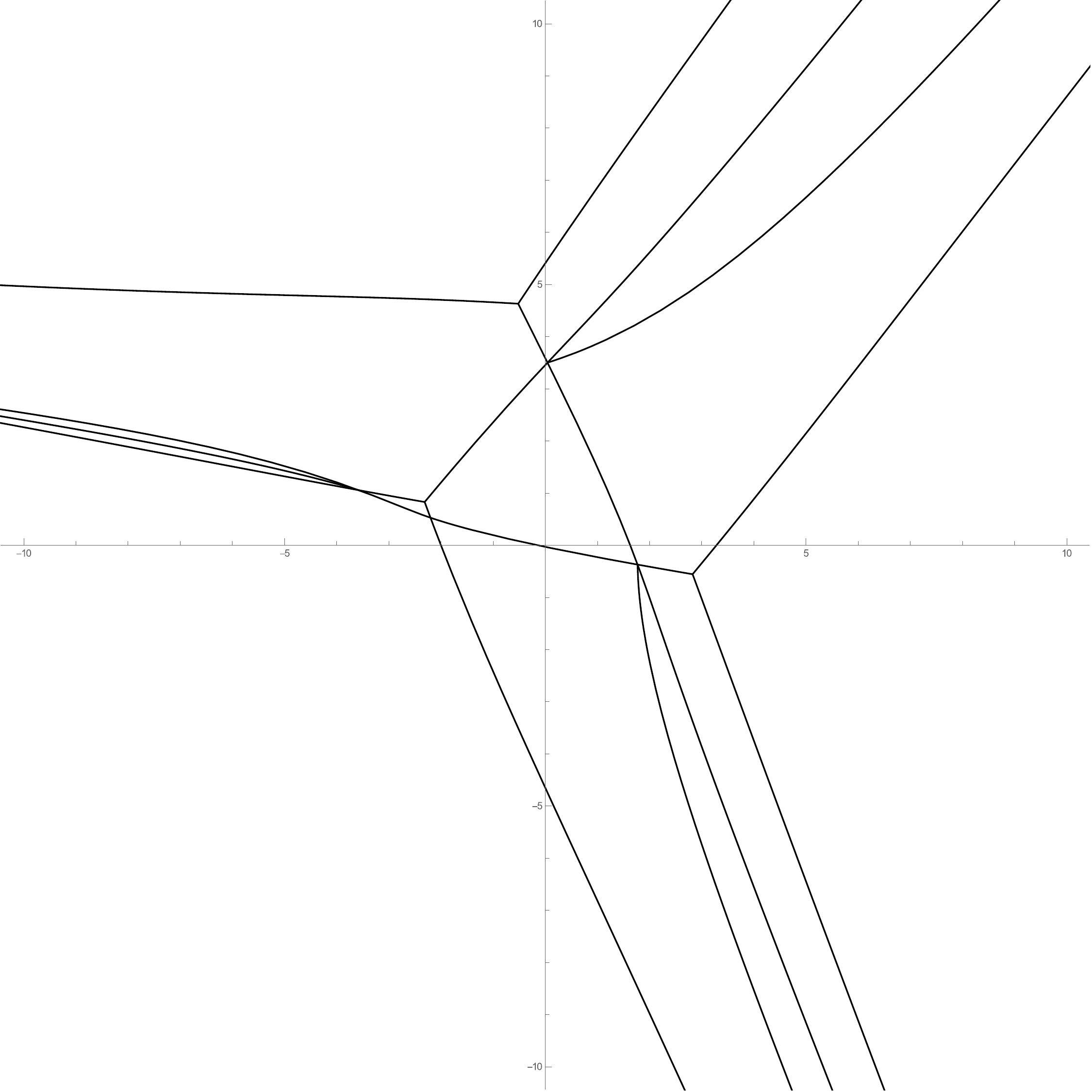}
        \caption{$\vartheta \approx 2.94$}
      \label{fig:14curve-17}
    \end{subfigure}
 \caption{Spectral networks at various $\vartheta$ for the $(1,4)$ spectral curve with $t \approx -1.4-1.4i$ and $m_\infty \approx -1.4-1.4i$. Exactly one degeneration occurs.}
  \label{fig:14curvefirsthalf}
\end{figure}

\vspace{3cm}

\newpage 
\vspace{2cm}
\begin{figure}[h]
    \centering
    \begin{subfigure}[t]{.20\textwidth}
        \centering
        \includegraphics[width=\linewidth]{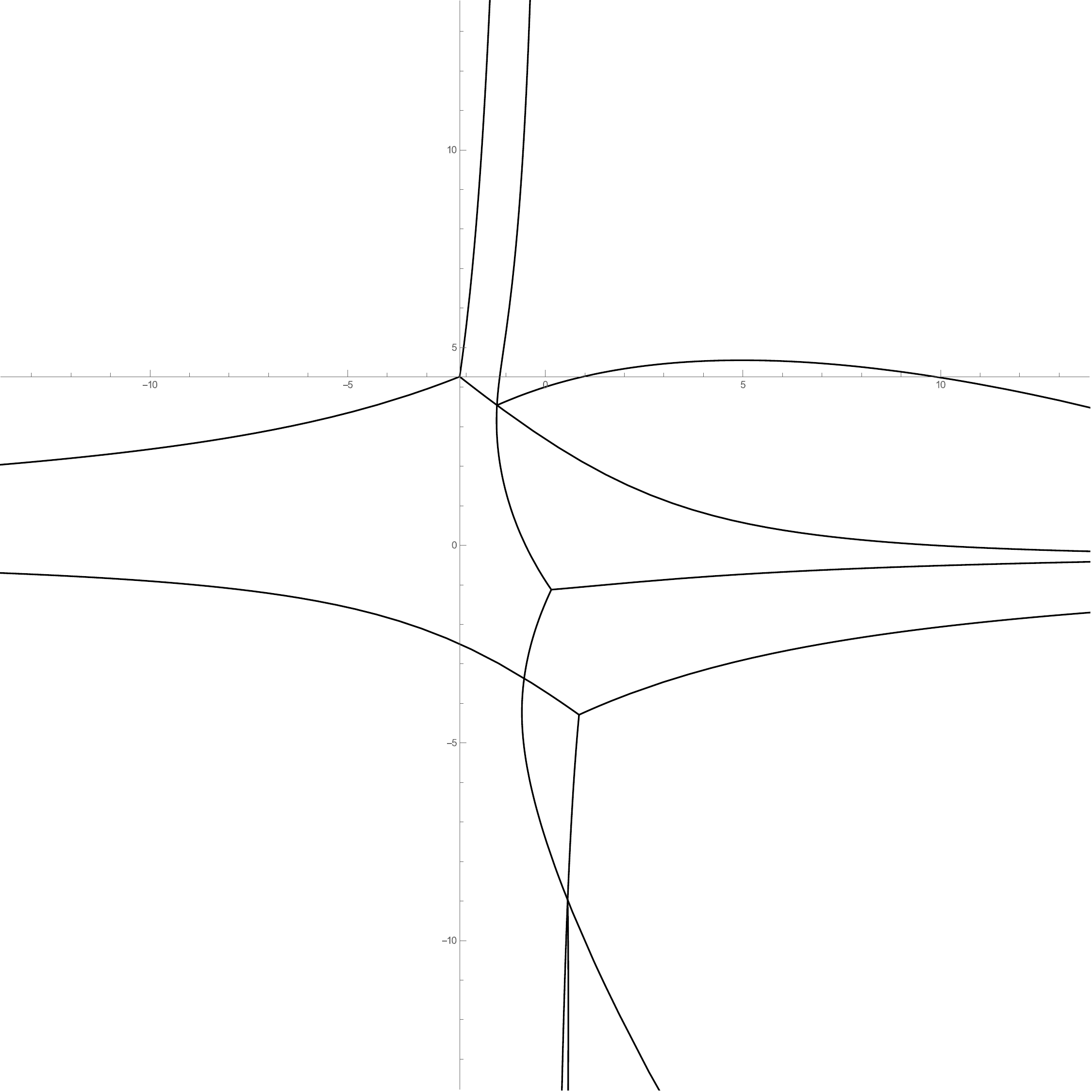}
           \caption{$\vartheta\approx0$}
      \label{fig:23curve-1}
    \end{subfigure}
    \hspace{0.5cm}              
    \begin{subfigure}[t]{.20\textwidth}
        \centering
        \includegraphics[width=\linewidth]{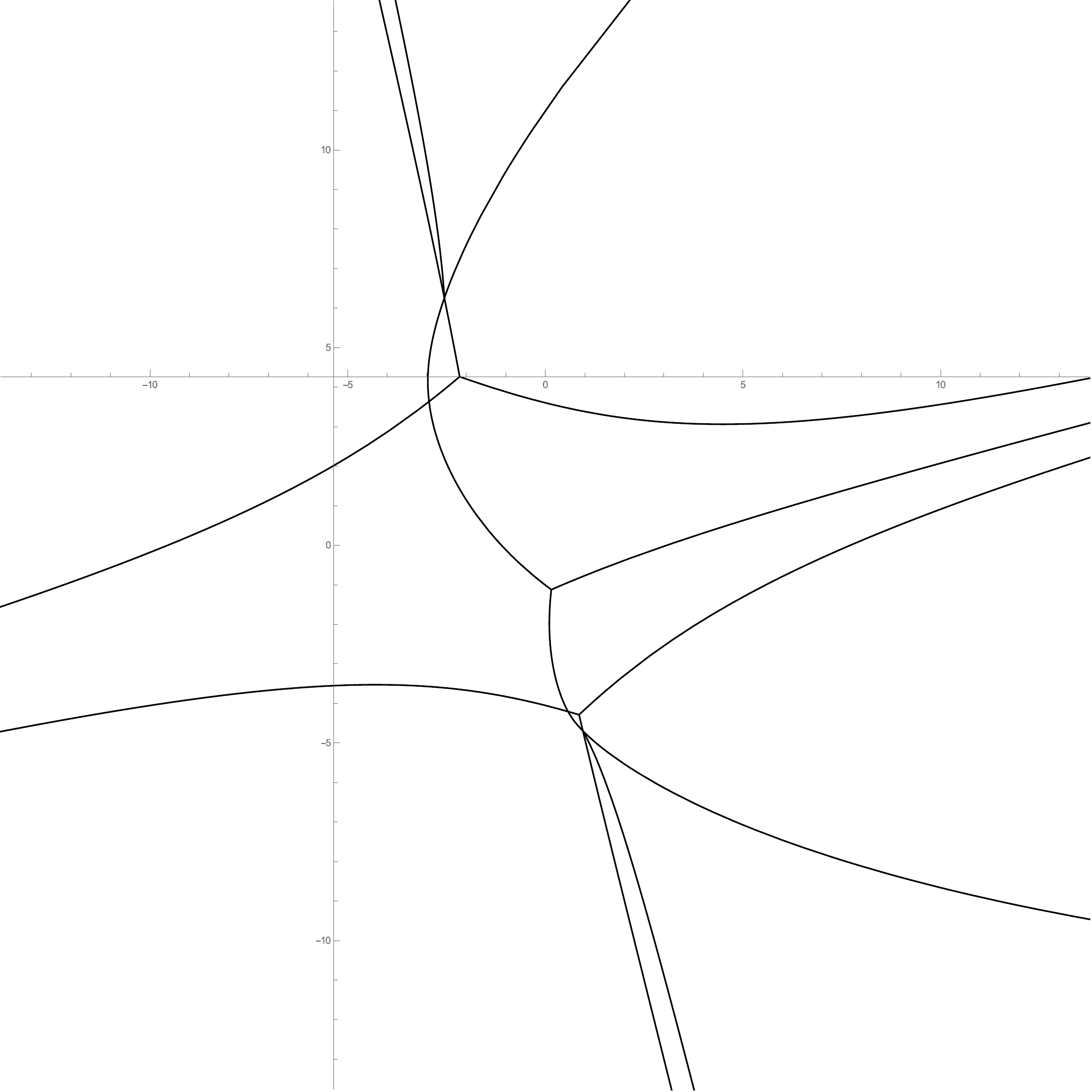}
        \caption{$\vartheta\approx0.49$}
      \label{fig:23curve-2}
    \end{subfigure}
    \hspace{0.5cm}              
    \begin{subfigure}[t]{.20\textwidth}
        \centering
        \includegraphics[width=\linewidth]{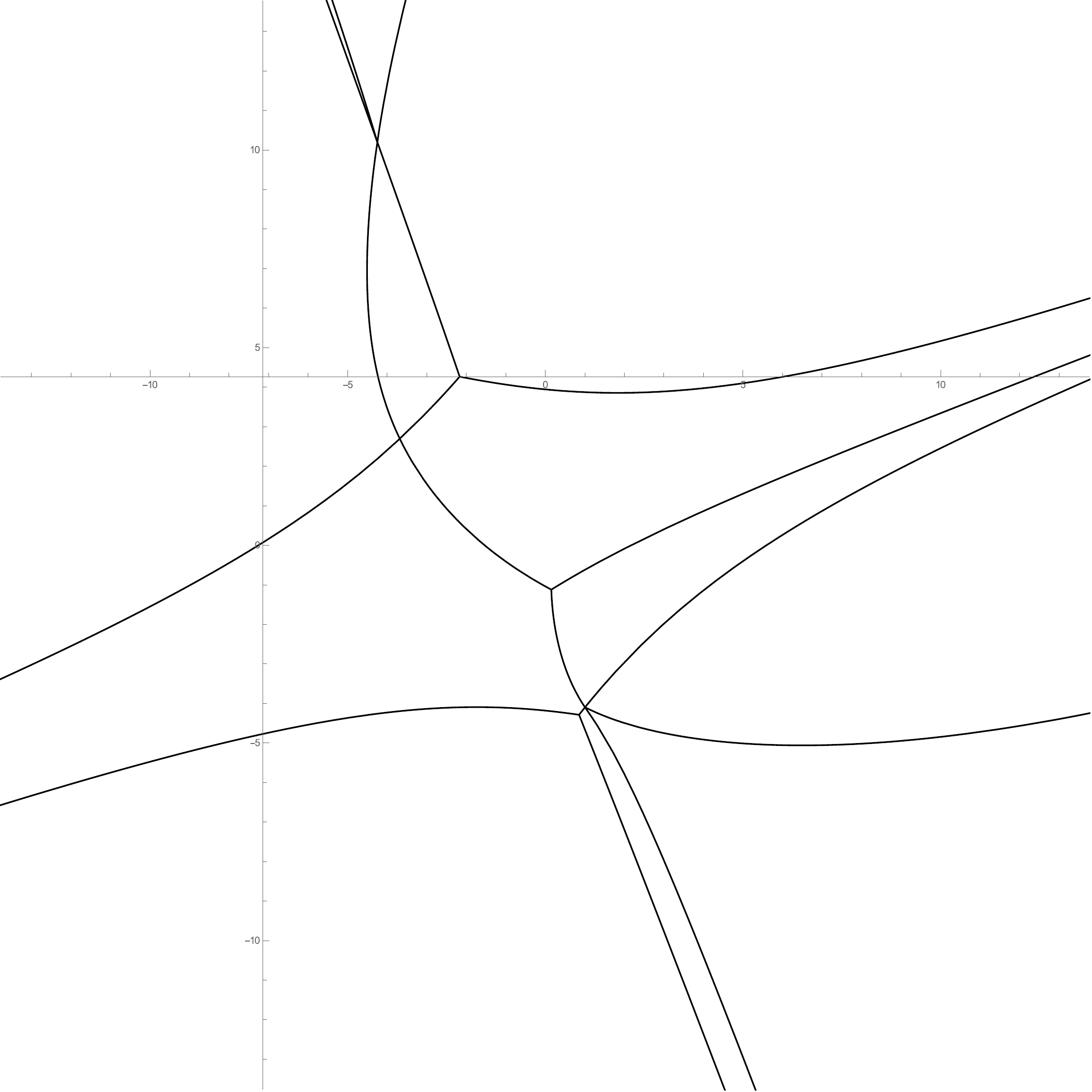}
        \caption{$\vartheta\approx0.71$}
      \label{fig:23curve-3}
    \end{subfigure}
    \\[+1.em]           
    \begin{subfigure}[t]{.20\textwidth}
        \centering
        \includegraphics[width=\linewidth]{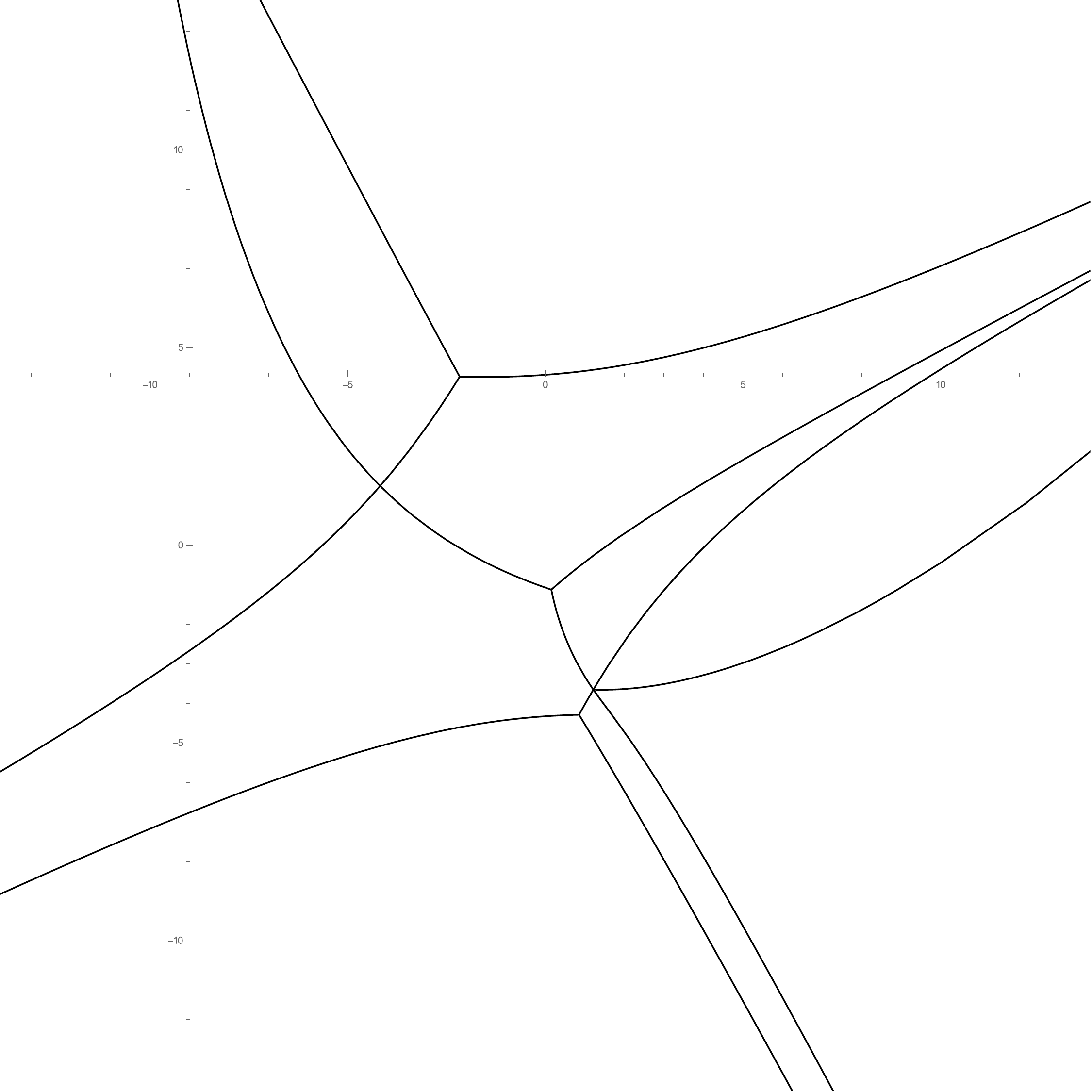}
        \caption{$\vartheta\approx0.96$}
      \label{fig:23curve-4}
    \end{subfigure}
\hspace{0.5cm}
    \begin{subfigure}[t]{.20\textwidth}
        \centering
        \includegraphics[width=\linewidth]{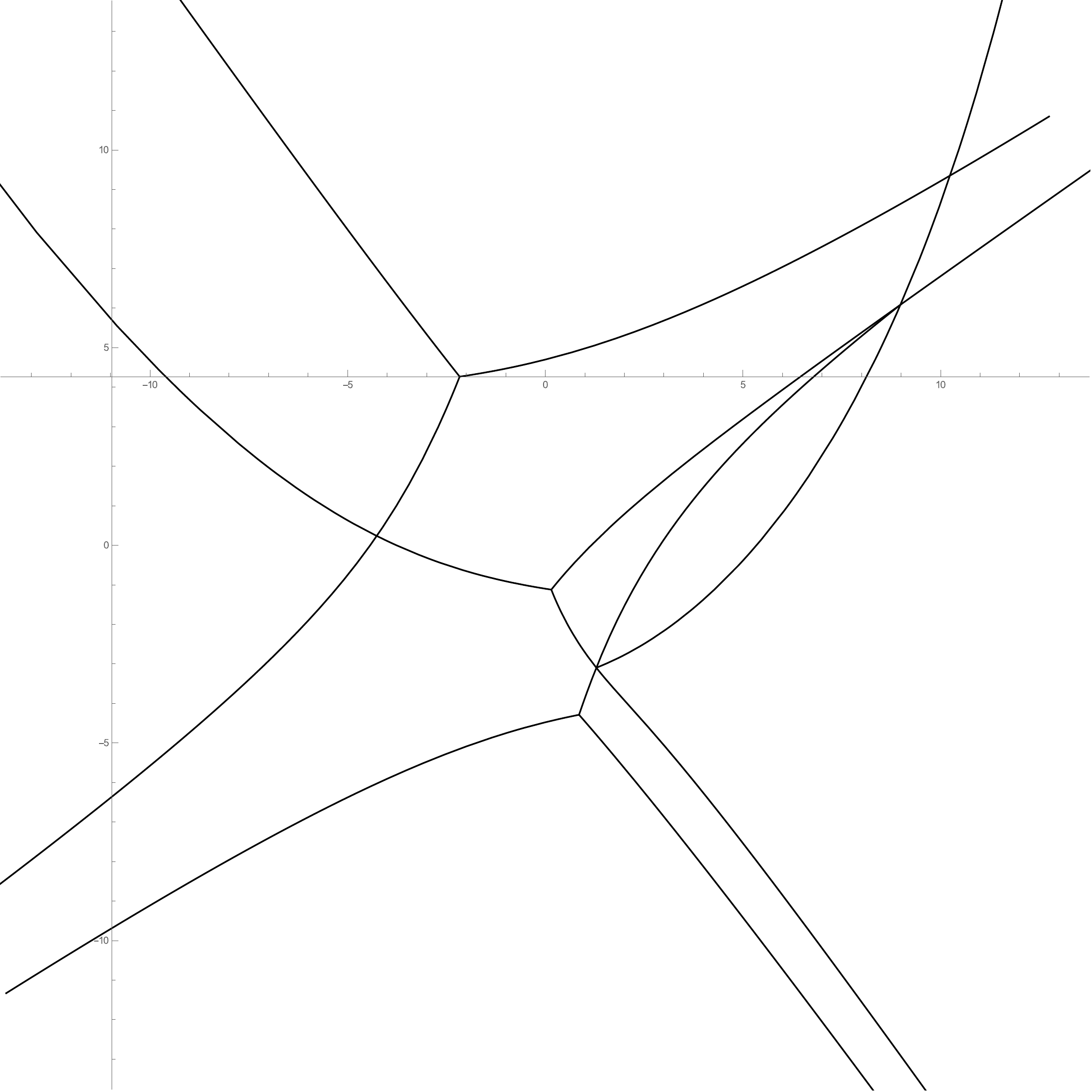}
        \caption{$\vartheta\approx 1.22$}
      \label{fig:23curve-5}
    \end{subfigure}
    \hspace{0.5cm}              
    \begin{subfigure}[t]{.20\textwidth}
        \centering
        \includegraphics[width=\linewidth]{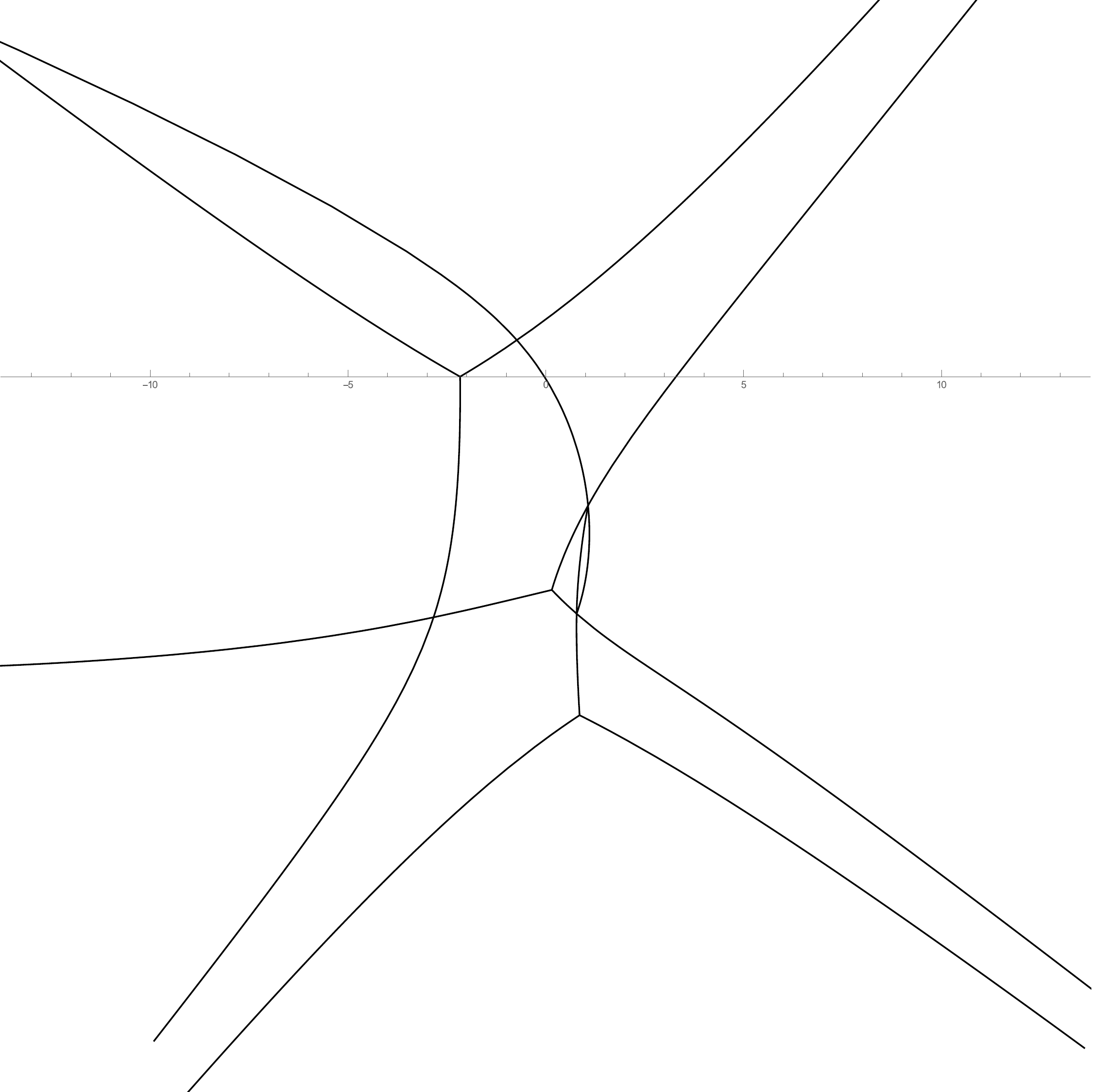}
        \caption{$\vartheta \approx 1.8$}
      \label{fig:23curve-6}
    \end{subfigure}
    \\[+1.em]  
    
  \begin{subfigure}[t]{.20\textwidth}
        \centering
        \includegraphics[width=\linewidth]{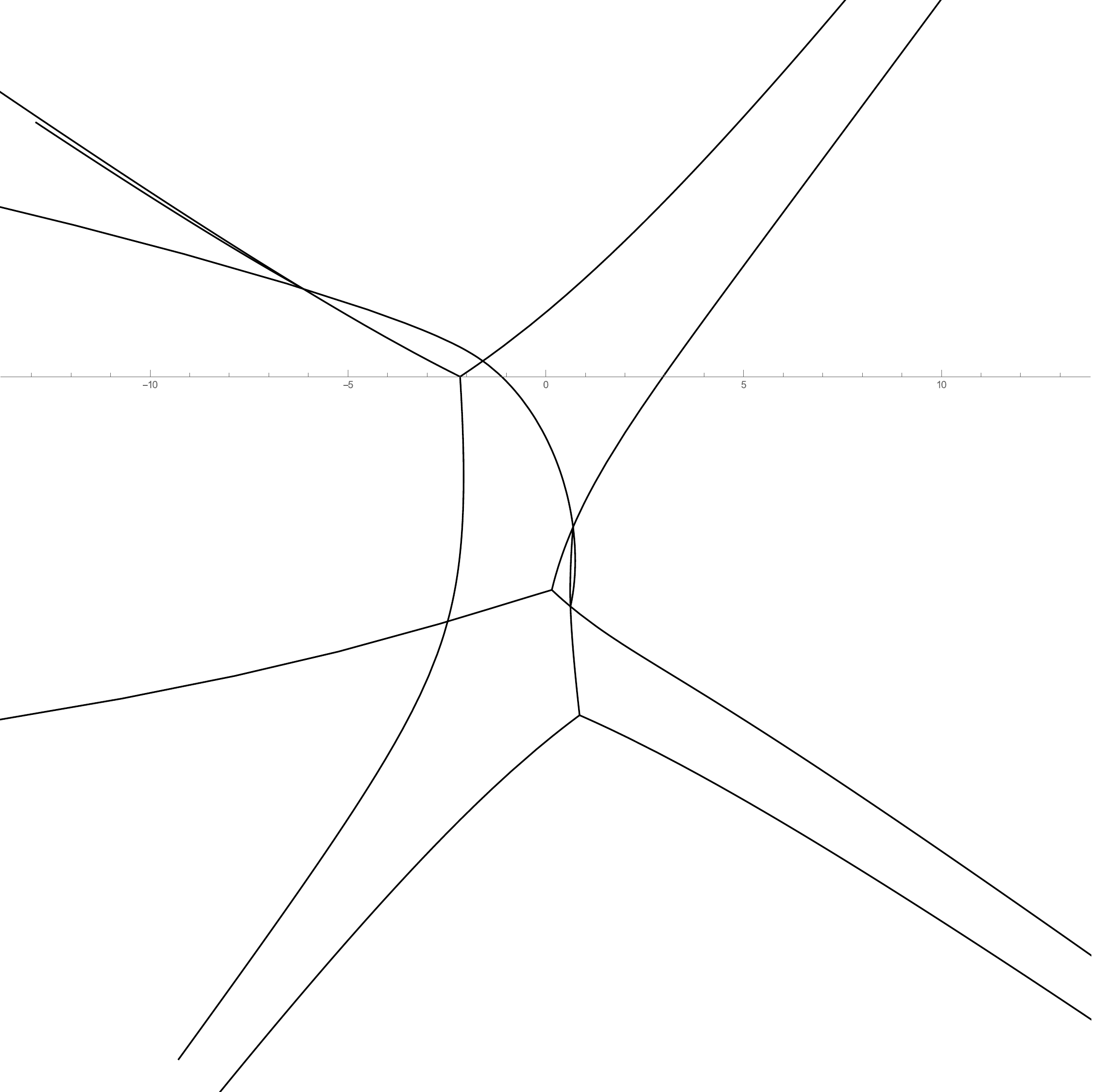}
           \caption{$\vartheta\approx 1.88$}
      \label{fig:23curve-7}
    \end{subfigure}
    \hspace{0.5cm}
    \begin{subfigure}[t]{.20\textwidth}
        \centering
        \includegraphics[width=\linewidth]{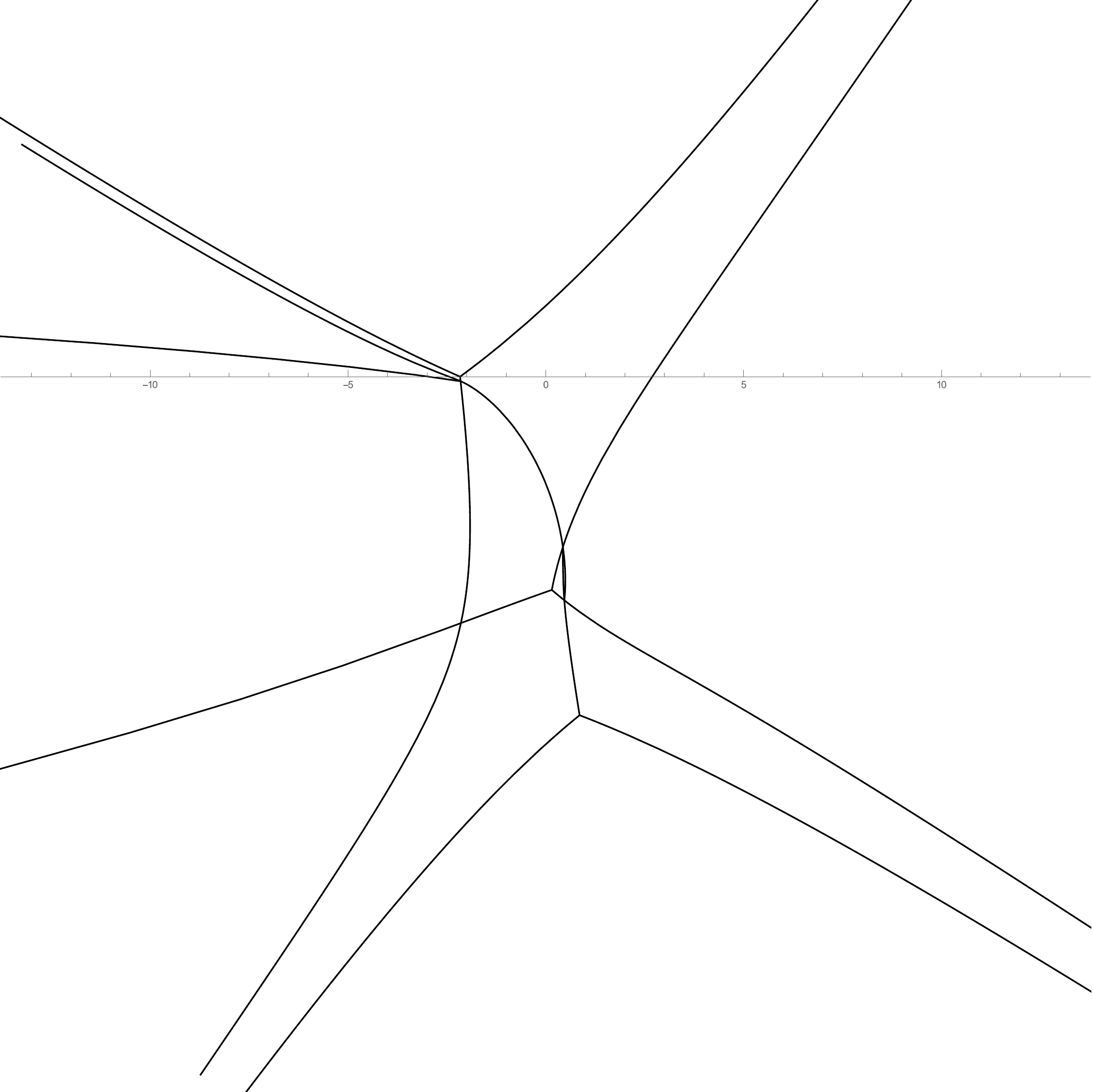}
        \caption{$\vartheta\approx1.95$}
      \label{fig:23curve-8}
    \end{subfigure}
\hspace{0.5cm}
        \begin{subfigure}[t]{.20\textwidth}
        \centering
        \includegraphics[width=\linewidth]{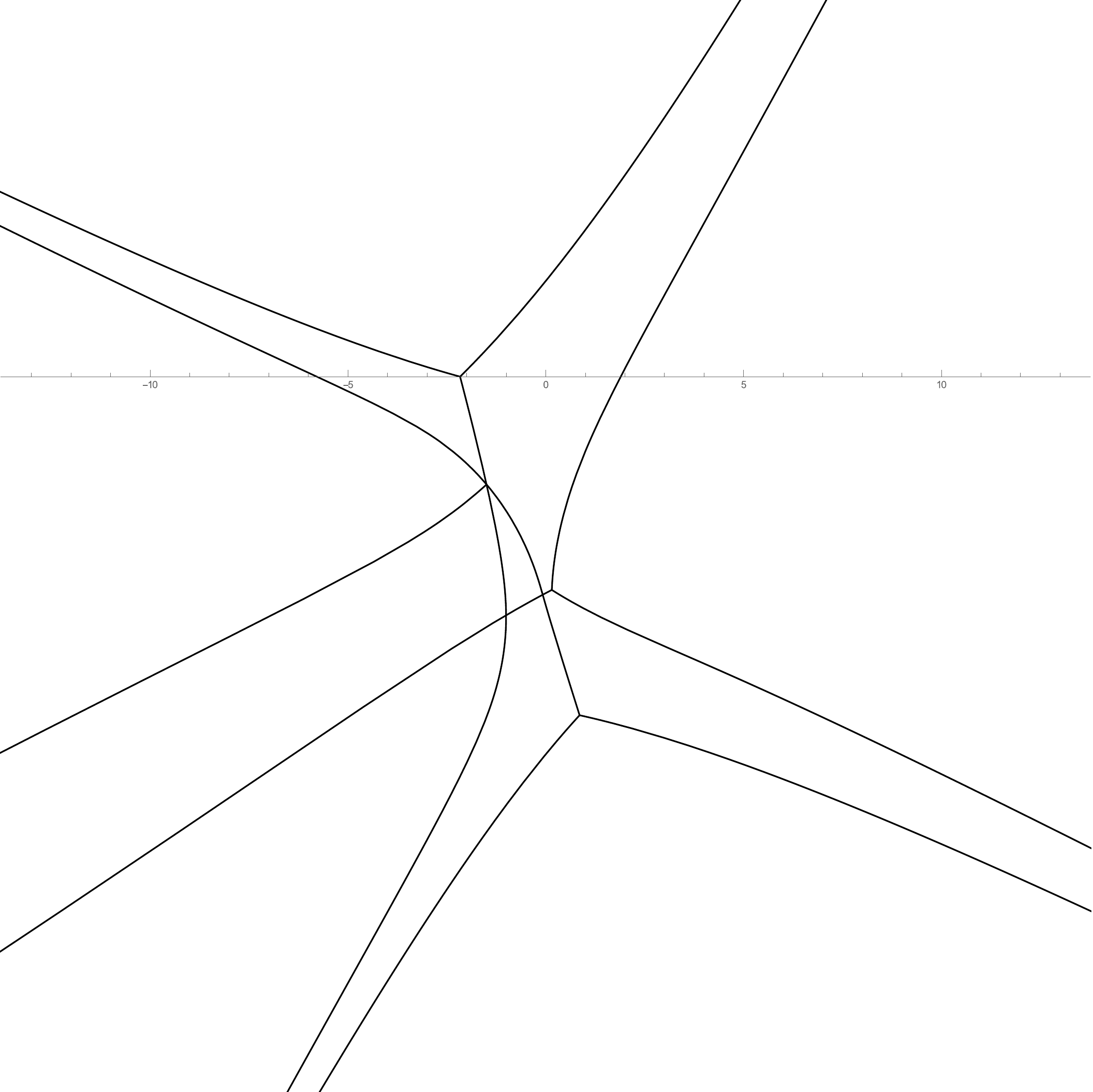}
        \caption{$\vartheta\approx2.17$}
      \label{fig:23curve-9}
    \end{subfigure}
\\[+1.em]           
    \begin{subfigure}[t]{.20\textwidth}
        \centering
        \includegraphics[width=\linewidth]{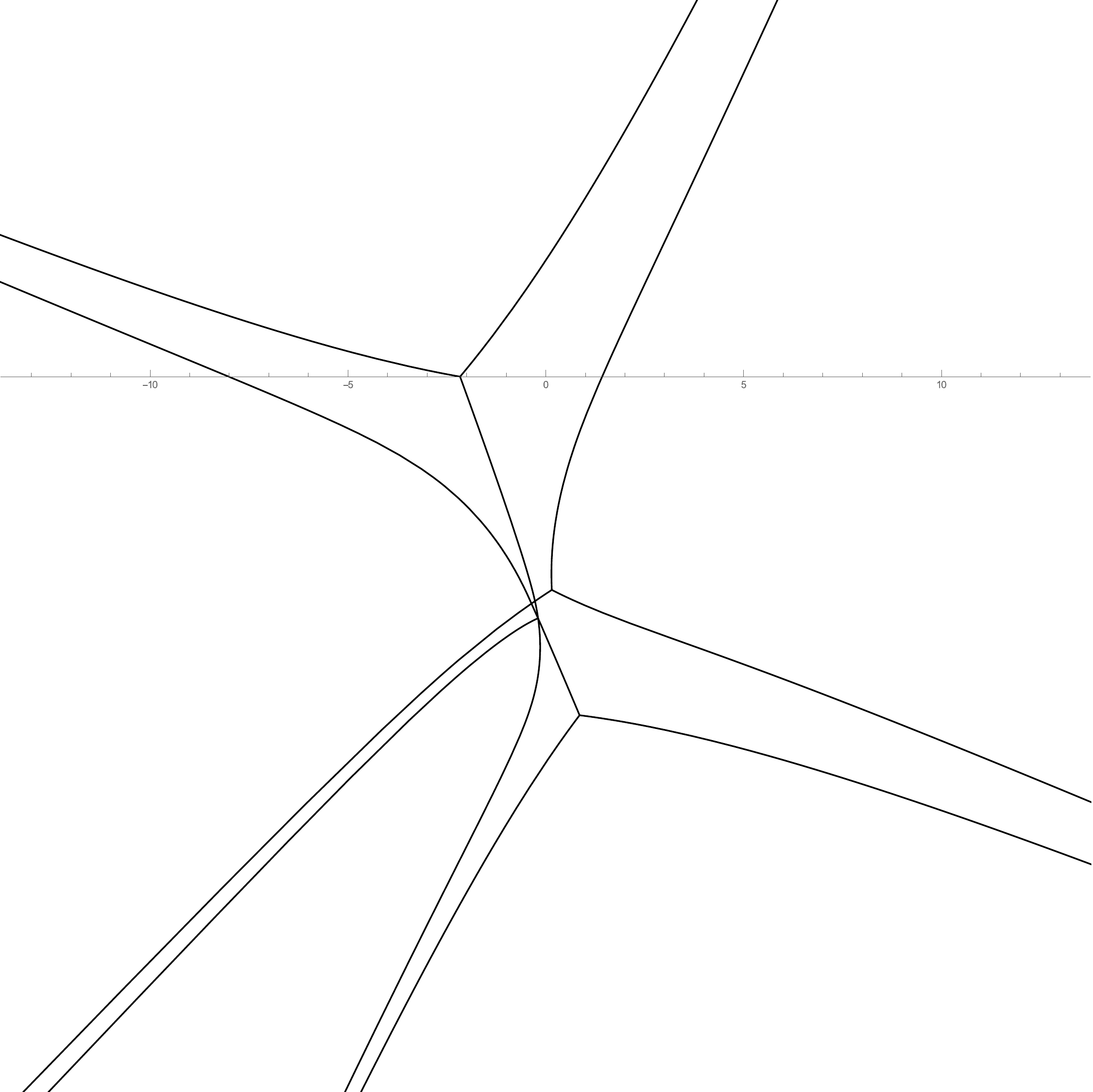}
        \caption{$\vartheta\approx2.31$}
      \label{fig:23curve-10}
    \end{subfigure}
    \hspace{0.5cm}              
    \begin{subfigure}[t]{.20\textwidth}
        \centering
        \includegraphics[width=\linewidth]{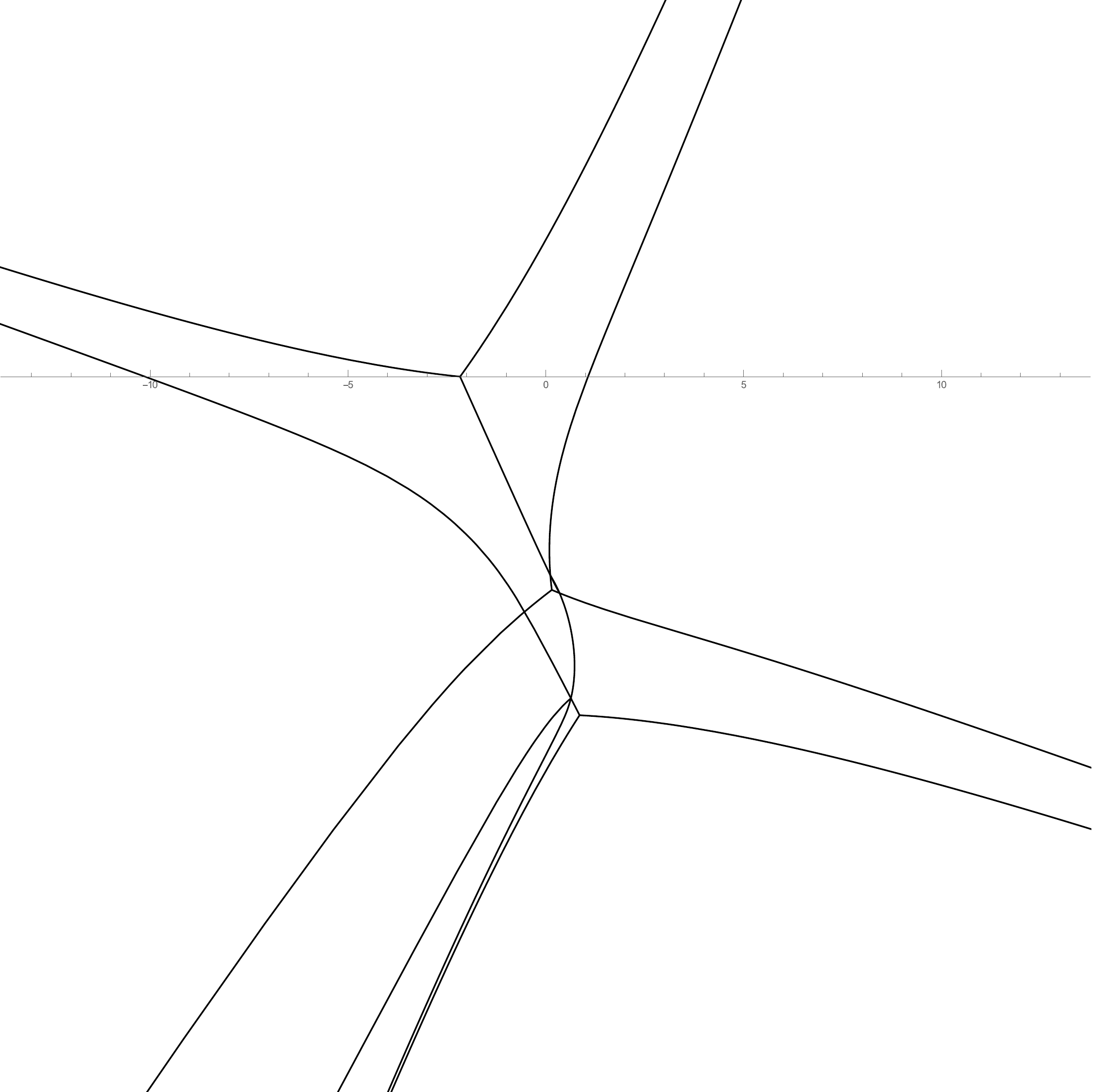}
        \caption{$\vartheta \approx 2.42$}
      \label{fig:23curve-11}
    \end{subfigure}
    \hspace{0.5cm}              
    \begin{subfigure}[t]{.20\textwidth}
        \centering
        \includegraphics[width=\linewidth]{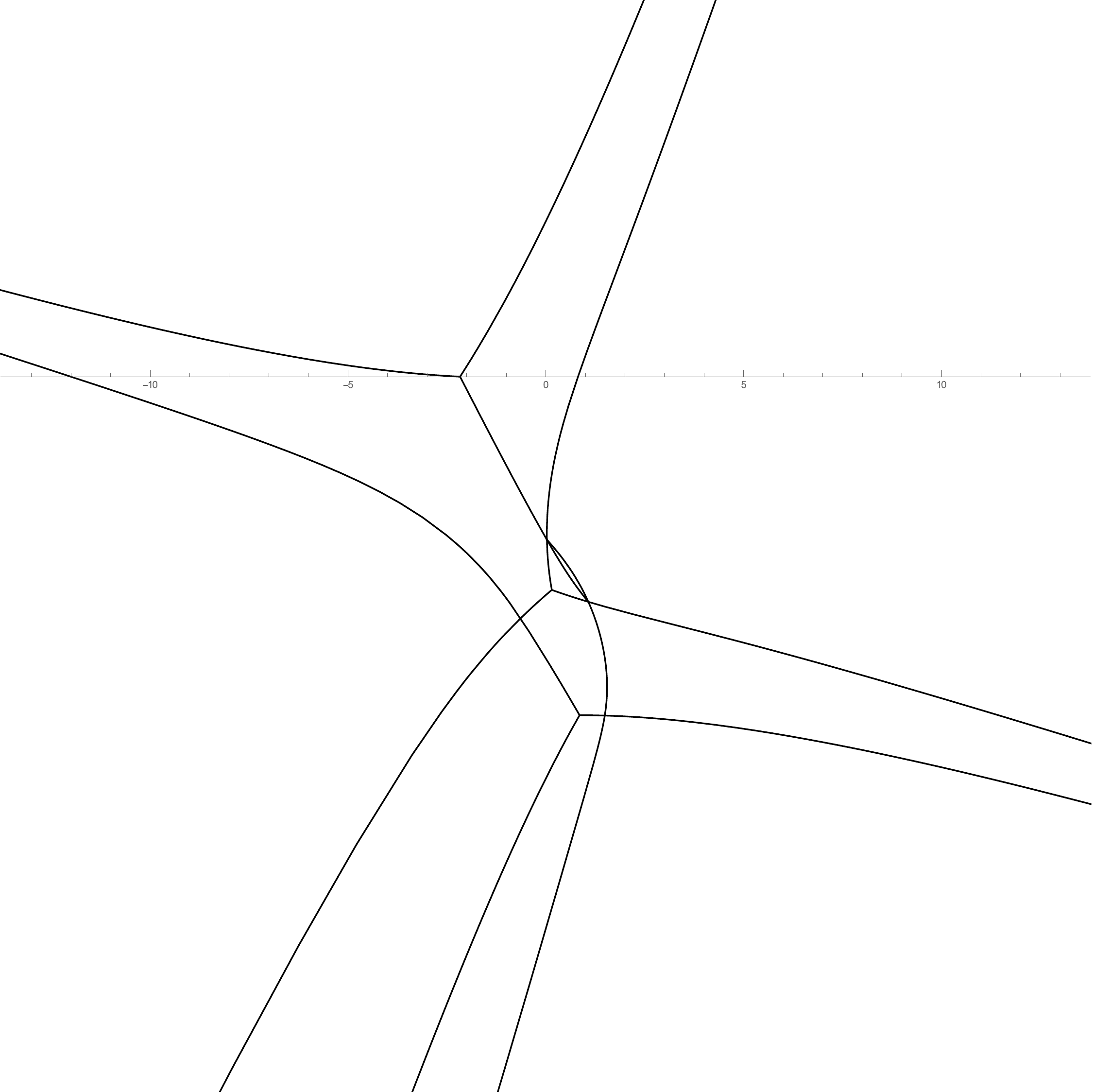}
        \caption{$\vartheta \approx 2.5$}
      \label{fig:23curve-12}
    \end{subfigure}
\\[+1.em]
    \begin{subfigure}[t]{.20\textwidth}
        \centering
        \includegraphics[width=\linewidth]{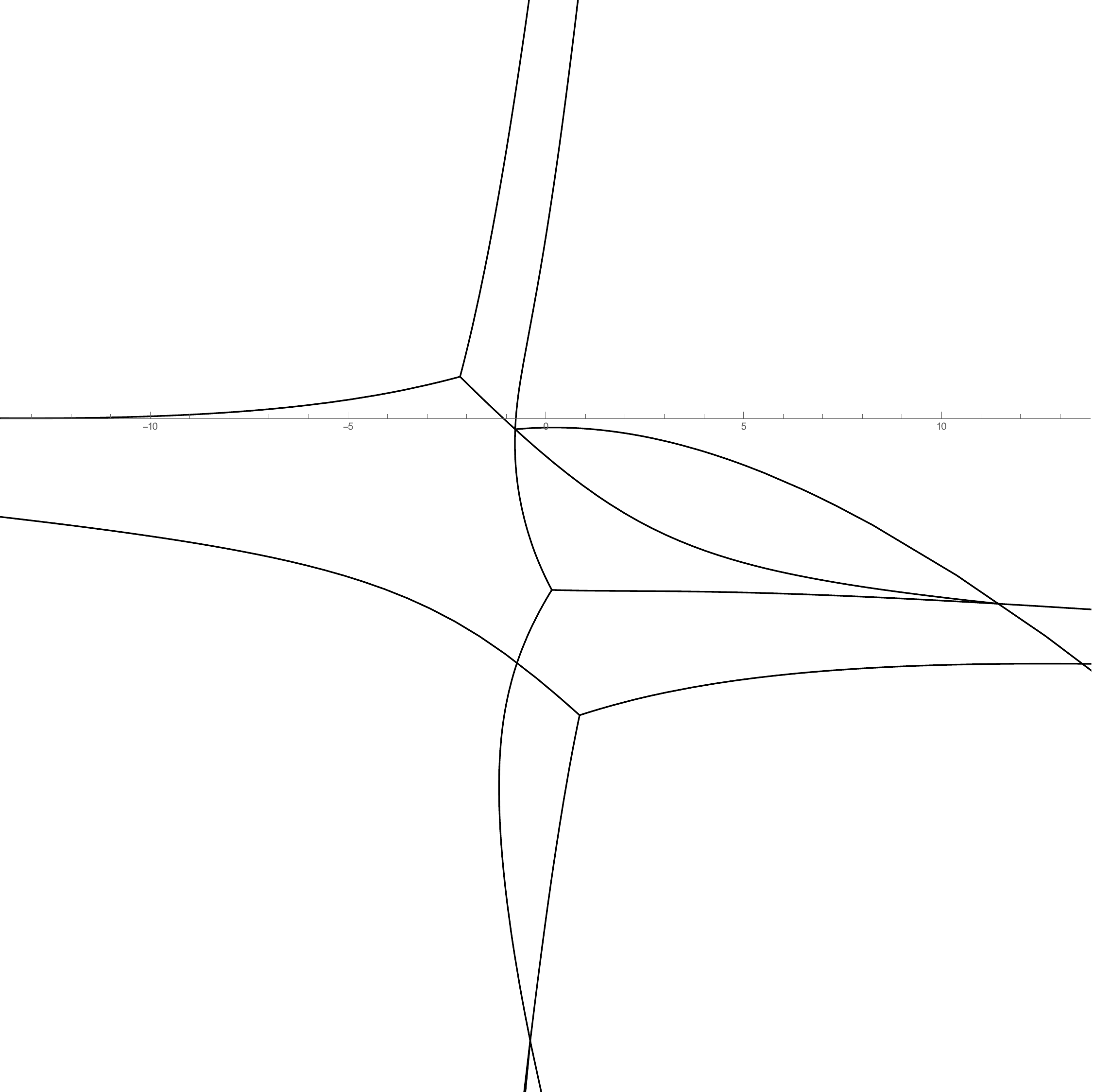}
        \caption{$\vartheta \approx 2.97$}
      \label{fig:23curve-13}
    \end{subfigure}
    \hspace{0.5cm}              
    \begin{subfigure}[t]{.20\textwidth}
        \centering
        \includegraphics[width=\linewidth]{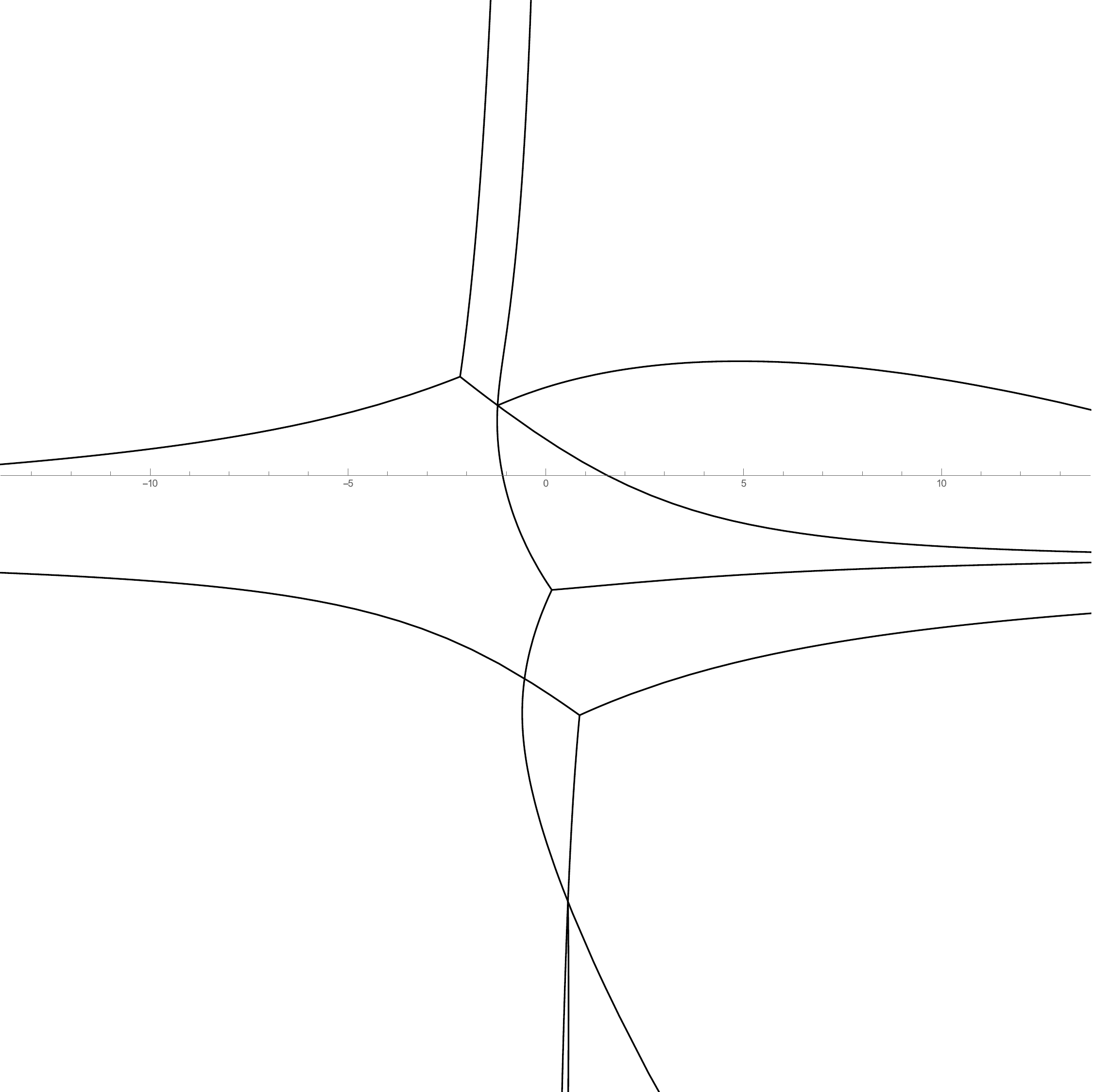}
        \caption{$\vartheta \approx 3.14$}
      \label{fig:23curve-14}
    \end{subfigure}
    \caption{Spectral networks at various $\vartheta$ for the $(2,3)$ spectral curve with $t \approx -1.44-1.1i$ and $m_\infty \approx -1.8-1.6i$.}
    \label{fig:23curvefirsthalf}
\end{figure}

\end{document}